\documentclass[12pt]{article}
\usepackage{graphicx}

\def\hybrid{\topmargin 0pt      \oddsidemargin 0pt
        \headheight 0pt \headsep 0pt
       \voffset-1cm
        \textwidth 6.25in       
       \textheight 9.5in       
        \marginparwidth 0.0in
        \parskip 5pt plus 1pt   \jot = 1.5ex}
\catcode`\@=11
\def\marginnote#1{}

\newcount\hour
\newcount\minute
\newtoks\amorpm
\hour=\time\divide\hour by60
\minute=\time{\multiply\hour by60 \global\advance\minute by-\hour}
\edef\standardtime{{\ifnum\hour<12 \global\amorpm={am}%
        \else\global\amorpm={pm}\advance\hour by-12 \fi
        \ifnum\hour=0 \hour=12 \fi
        \number\hour:\ifnum\minute<10 0\fi\number\minute\the\amorpm}}
\edef\militarytime{\number\hour:\ifnum\minute<10 0\fi\number\minute}

\def\draftlabel#1{{\@bsphack\if@filesw {\let\thepage\relax
   \xdef\@gtempa{\write\@auxout{\string
      \newlabel{#1}{{\@currentlabel}{\thepage}}}}}\@gtempa
   \if@nobreak \ifvmode\nobreak\fi\fi\fi\@esphack}
        \gdef\@eqnlabel{#1}}
\def\@eqnlabel{}
\def\@vacuum{}
\def\draftmarginnote#1{\marginpar{\raggedright\scriptsize\tt#1}}

\def\draftlabel#1{{\@bsphack\if@filesw {\let\thepage\relax
   \xdef\@gtempa{\write\@auxout{\string
      \newlabel{#1}{{\@currentlabel}{\thepage}}}}}\@gtempa
   \if@nobreak \ifvmode\nobreak\fi\fi\fi\@esphack}
        \gdef\@eqnlabel{#1}}
\def\@eqnlabel{}
\def\@vacuum{}
\def\draftmarginnote#1{\marginpar{\raggedright\scriptsize\tt#1}}

\def\draft{\oddsidemargin -.5truein
        \def\@oddfoot{\sl preliminary draft \hfil
        \rm\thepage\hfil\sl\today\quad\militarytime}
        \let\@evenfoot\@oddfoot \overfullrule 3pt
        \let\label=\draftlabel
        \let\marginnote=\draftmarginnote
   \def\@eqnnum{(\theequation)\rlap{\kern\marginparsep\tt\@eqnlabel}%
\global\let\@eqnlabel\@vacuum}  }

\def\numberbysection{\@addtoreset{equation}{section}
        \def\theequation{\thesection.\arabic{equation}}}

\def\underline#1{\relax\ifmmode\@@underline#1\else
        $\@@underline{\hbox{#1}}$\relax\fi}

\def\titlepage{\@restonecolfalse\if@twocolumn\@restonecoltrue\onecolumn
     \else \newpage \fi \thispagestyle{empty}\c@page\z@
        \def\thefootnote{\fnsymbol{footnote}} }

\def\endtitlepage{\if@restonecol\twocolumn \else  \fi
        \def\thefootnote{\arabic{footnote}}
        \setcounter{footnote}{0}}  
\relax

\numberbysection
\hybrid

\newfont{\Bbb}{msbm10 scaled 1\@ptsize00}
\newfont{\Bbbb}{msbm7 scaled 1\@ptsize00}
\newcommand{\CC}{\mbox{\Bbb C}}

\newcommand{\DDD}{\raise-1pt\hbox{$\mbox{\Bbbb D}$}}

\newcommand{\RR}{\mbox{\Bbb R}}

\newcommand{\UUU}{\raise-1pt\hbox{$\mbox{\Bbbb U}$}}

\newcommand{\ZZ}{\mbox{\Bbb Z}}
\newcommand{\z}{\raise-1pt\hbox{$\mbox{\Bbbb Z}$}}

\newcommand{\sss}{\raise-1pt\hbox{$\mbox{\Bbbb S}$}}

\def\beq{\begin{equation}}
\def\eeq{\end{equation}}
\def\p{\partial}
\def\t{\tau}

\def\t{{\sf t}}

\newtheorem{theorem}{Theorem}[section]

\newtheorem{lemma-definition}{Lemma-Definition}[section]

\newtheorem{example}{Example}[section]
\newtheorem{remark}{Remark}[section]

\newtheorem{proposition}{Proposition}[section]

\def\normord{ {\scriptstyle {{\bullet}\atop{\bullet}}} }
\def\lbr{\left <}
\def\rbr{\right >}

\def\square{\hfill
{\vrule height6pt width6pt depth1pt} \break \vspace{.01cm}}

\begin{document}

\begin{titlepage}

\title{Integrable hierarchies with zero dispersion \\ and 
elliptic curves}

\author{A. Savchenko\thanks{
Skolkovo Institute of Science and Technology, 143026, Moscow, Russia
and National Research University Higher School of Economics,
20 Myasnitskaya Ulitsa,
Moscow 101000, Russia,
e-mail: ksanoobshhh@gmail.com}
\and
A.~Zabrodin\thanks{
National Research University Higher School of Economics,
20 Myasnitskaya Ulitsa,
Moscow 101000, Russia and
NRC ``Kurchatov institute'', Moscow, Russia;
e-mail: zabrodin@itep.ru}}

\date{May 2026}
\maketitle

\vspace{-7cm} \centerline{ \hfill ITEP-TH-14/26}\vspace{7cm}

\vspace{0.1cm}

\begin{abstract}

We consider integrable hierarchies such as KP, modified KP, 
2D Toda lattice, BKP (small and large), DKP, Pfaff-Toda and 
their multi-component generalizations. 
We work in the framework of 
the bilinear formalism in which the universal
dependent variable is a tau-function satisfying bilinear equations 
of the Hirota-Miwa type. Our principal interest in this paper is
the dispersionless versions of the hierarchies.
In the limit of zero dispersion 
the main object is an $F$-function, which
is the limit of properly re-scaled logarithm of the tau-function.
We show that in all the cases there exists
an algebraic curve built into the structure of the hierarchy.
We call it the {\it dynamical curve}.
For the KP, modified KP and Toda lattice hierarchies, as well
as for their multi-component generalizations, the curve 
is rational (of genus 0) and can be uniformized by rational or
trigonometric functions. 
For hierarchies of the Pfaff type (DKP and Pfaff-Toda) the dynamical 
curve is in general a smooth elliptic curve (of genus 1), with its
modular parameter being a dynamical variable. It is also shown that
the large BKP hierarchy admits two different 
dispersionless versions. In one of them the dynamical curve degenerates
to a rational curve while in the other one it remains to be elliptic.
We show that a reformulation of the hierarchies based on uniformization 
of the dynamical curves by elliptic (or trigonometric) 
functions makes their structure nice and clear, 
especially in the multi-component case.

\end{abstract}

\end{titlepage}

\vspace{3cm}

%

\tableofcontents

\newpage

\section{Introduction}

The main purpose of this article is to provide, from a unified 
viewpoint and within a unified approach, a detailed 
account of the new developments in the theory of integrable hierarchies
(such as KP, mKP, BKP, DKP,  Pfaff-Toda
and their multi-component generalizations 
in the zero dispersion limit) related to their algebraic structures
and initiated in our previous works \cite{SZ24}-\cite{Z24a}.
The main new phenomenon is a rather unexpected emergence of rational or 
elliptic curves in this context. Since parameters of these curves
are dynamical variables, we call them {\it dynamical curves}. 
It turns out that
a special change of variables motivated by uniformization of 
the dynamical curve by means of trigonometric or elliptic
functions makes the structure of the hierarchy in question 
nice and clear, especially in the multi-component case. 

In this article, we have tried to systematize somehow the rather
scattered results mentioned above. 
Along with revisiting already known results
(which have been obtained here
by more rigorous and systematic methods than it was done 
in the earlier works), some new interesting manifestations 
of the dynamical curve phenomenon are found on this basis.

\subsection{Algebraic curves in integrable systems: a bird's-eye view}

Emergence of complex algebraic curves (Riemann surfaces) in 
mathematical physics, and, in particular, in the
theory of integrable systems, both classical and quantum, is of course not
a new phenomenon. Nowadays, there are several seemingly unrelated 
branches of the theory, where algebraic
curves and the related mathematical apparatus play a pivotal role.
To put the subject of this article in a broader context, some
of them are listed below. 

\begin{itemize}
\item[a)] The most striking manifestation of effectiveness 
of algebraic-geometrical methods in the theory of integrable systems
is the construction of periodic and quasi-periodic solutions
to the classical nonlinear differential equations and their 
integrable difference analogues. The literature on these issues 
is enormously vast, see, e.g., reviews \cite{DMN76,Dubrovin81},
original papers \cite{Krichever77,Krichever77a} and 
references therein. In a nutshell, this theory claims the
following: given an algebraic curve $\Gamma$ of any finite genus 
${\sf g}>0$ and some additional data 
(such as marked points $P_{\alpha}\in \Gamma$ and
local parameters in their neighborhoods), one can construct 
quasi-periodic solutions to nonlinear integrable equations
and their hierarchies. In general, the solutions are expressed
through the Riemann (or Prim) theta-functions associated
with the curve. The curve $\Gamma$ (which often can be realized as
spectral curve of some differential or difference operator) 
is an integral of motion, and the dynamics
takes place in its Jacobian.

\item[b)] Another branch of the theory 
is the universal Whitham hierarchy of dynamical systems on algebraic
curves in Krichever's formulation \cite{Krichever92}.
In the nutshell, this is a method aimed to (at least,
approximate) description
of more general solutions to the same integrable equations
in terms of the algebraic-geometrical solutions. The theory
can be formulated as a dynamical system for certain meromorphic
differentials on an algebraic curve $\Gamma$ but now the curve 
is no longer an integral of motion. It becomes a dynamical variable,
i.e., depends on the independent variables (``times'') 
of the hierarchy. Figuratively speaking, the curve 
(which again can be of arbitrary
finite genus ${\sf g}$) ``breathes'', 
i.e., its moduli depend on the times.
In particular, the Whitham theory for curves of genus ${\sf g}=0$ is
precisely the theory of integrable hierarchies in the zero
dispersion limit \cite{TT95}.

\item[c)] The previous two issues are related to 
classical integrability.
In the theory of quantum integrable models (and closely 
related models of statistical
mechanics on the 2D lattice), there is an intrinsic
variable called {\it spectral parameter}, which lives on an
algebraic curve \cite{book1,Slavnov,Baxter}. In the most popular 
(and simple) examples, the curve
is in general of genus ${\sf g}\leq 1$ (a smooth rational or 
elliptic curve), which is an integral
of motion. However, there are rather exotic 
models, such as the chiral Potts model \cite{chiralPotts},
in which the spectral parameter may live on certain curves of 
genus ${\sf g}>1$.

\item[d)] A closely related example
of algebraic curves in quantum integrable models is their
appearance as {\it vacuum curves} of quantum $L$-operators
\cite{Krichever81,KZ99}. In the cases when they have been described
explicitly, their genus is ${\sf g}\leq 1$,
and they are, generally speaking, singular curves (for example,
a ``bouquet'' of elliptic or rational curves, see 
\cite{KZ99}).
The vacuum curves, as well as spectral curves, are integrals 
of motion.

\item[e)] Elliptic curves (${\sf g}=1$) are known to 
emerge in the theory
of dispersionless integrable hierarchies of the Pfaff type,
such as DKP and Pfaff-Toda. This is a rather unexpected 
phenomenon because
dispersionless hierarchies, in accordance with the Whitham theory 
in Kri\-che\-ver's formulation, were usually associated with rational
curves of genus $0$ rather than $1$.

\end{itemize}

The last relatively new item is just what the present paper is devoted to.
The story has began with the observation made by Takasaki in
\cite{Takasaki07,Takasaki09} that the dispersionless versions
of integrable hierarchies of the Pfaff type 
contain a hidden (in general smooth) elliptic 
curve naturally built in the structure
of the hierarchy. We suggest to call it the dynamical curve,
since its parameters depend on the hierarchical times.

\subsection{Dynamical curves: elliptic and rational}

First of all,
a few words about the Pfaff hierarchies are in order.
Along with the well-known hierarchies of integrable 
Kadomtsev-Petviashvili (KP) and 2D Toda type equations, 
there are more general (and still less studied) Pfaffian versions of them, 
so named because some of their exact solutions are expressed not in terms of determinants (like for KP and Toda), but in terms of Pfaffians. 
One such hierarchy, first briefly mentioned in 
the work by Jimbo and Miwa in 1983 \cite{JM83}, 
was rediscovered several times later and is now known in the literature 
under more than one name (DKP, coupled KP, etc.), 
see \cite{HO}--\cite{Kodama}. 
The Pfaff-Toda hierarchy was introduced by Takasaki in \cite{Takasaki09}.
Recently, in our papers \cite{SZ24,SZ25a} multi-component generalizations
of these hierarchies were suggested.

The theory of their dispersionless versions
was further developed in \cite{AZ14,AZ15}, where it was
shown that passing to new variables motivated by uniformization
of the elliptic curve via elliptic functions, makes the structure
of the hierarchy nice and clear (at the cost of having to deal with 
non-elementary functions). 

The advantage of the elliptic parametrization
is especially evident in studying multi-component hierarchies
\cite{DJKM81}-\cite{TZ25}. Their original formulation contains many
seemingly different equations which are
hardly amenable to order and control. As is shown in \cite{SZ24,SZ25a},
the elliptic parametrization reduces all this set of many 
equations to just a single one, which is proved to be 
equivalent to the whole hierarchy. It is important to note
that the elliptic modular parameter $\tau$ of elliptic functions
is a dynamical variable, which in general
depends on all hierarchical times.

Moreover, in \cite{Z24}
it was shown that this approach is applicable to hierarchies 
of the type A (i.e., 
``usual'', not of the Pfaff type, such as KP or mKP) in the
limit of zero dispersion, including their 
multi-component generalizations, and allows one to significantly clarify
their structure. In these cases, there is an underlying algebraic
curve, too, but instead of being elliptic it turns out 
to be rational and admits uniformization by elementary functions
(rational, trigonometric or hyperbolic). 
Such a parametrization
makes the structure of multi-component hierarchies
as clear as in the one-component case. From this point of view,
dispersionless
hierarchies of the Pfaff type look like ``elliptic deformations''
of the standard dispersionless KP or Toda hierarchies.
Or, equivalently, the latter can be thought of as a degeneration
of Pfaff hierarchies as the modular parameter $\tau$ tends
to infinity: $\tau \to +i\infty$.

\subsection{General structure of the paper and its main contents}

Our attention is directed to different integrable
hierarchies which we divide into two big groups: ``usual'' 
(of type $A$) KP, mKP, Toda hierarchies and their multi-component
generalizations, and hierarchies of the Pfaff type:
small BKP, large BKP, DKP, Pfaff-Toda as well as 
their multi-component generalizations. The main object for us
in each case is the corresponding tau-function $\tau ({\bf T})$
depending on
an infinite set ${\bf T}$ of independent variables (``times'').
This set may include both continuous ($\RR$- or $\CC$-valued) 
and discrete ($\ZZ$-valued) times, 
specific for each particular hierarchy. For example,
in the simplest case of the one-component KP hierarchy
$$
{\bf T}=\{t_1, t_2, t_3, \ldots \}, \quad t_i\in \CC ,
$$
while for the $N$-component KP and DKP hierarchies it is
$$
\begin{array}{l}
{\bf T}={\bf n} \cup {\bf t}_{1} \cup \ldots {\bf t}_{N},  \quad
\mbox{where}
\\ \\
{\bf t}_{\alpha}=\{t_{\alpha ,1}, t_{\alpha ,2}, t_{\alpha ,3}, \ldots \}, 
\quad t_{\alpha ,i}\in \CC ,
\\ \\
{\bf n}=\{ n_1, \ldots , n_N\}, \quad n_i \in \ZZ .
\end{array}
$$
The tau-function, as a function of all the times, serves
as a universal dependent variable.
Our starting point is the bilinear integral functional relation
for it. Schematically, its general form is
\beq\label{int0}
\frac{1}{2\pi i}
\sum_{\gamma =1}^N\oint_{C_{\infty}}e^{\xi_{\gamma}({\bf T}-{\bf T'}, z)}
e^{-\nabla_{\gamma}(z)} \tau ({\bf T}) \cdot 
e^{\nabla'_{\gamma}(z)} \tau ({\bf T'}) \, dz =0
\eeq
for hierarchies of the type $A$ (KP, modified KP and their
$N$-component generalizations) and
\beq\label{int01}
\begin{array}{l}
\displaystyle{
\phantom{+\,}
\frac{1}{2\pi i} 
\sum_{\gamma =1}^N\oint_{C_{\infty}}
e^{\xi_{\gamma}({\bf T}-{\bf T'}, z)}
e^{-\nabla_{\gamma}(z)} \tau ({\bf T}) \cdot 
e^{\nabla'_{\gamma}(z)} \tau ({\bf T'}) \, dz }
\\ \\
\displaystyle{
+\, \, \frac{1}{2\pi i} \sum_{\gamma =1}^N
\oint_{C_{\infty}}e^{-\xi_{\gamma}({\bf T}-{\bf T'}, z)}
e^{\nabla_{\gamma}(z)} \tau ({\bf T}) \cdot 
e^{-\nabla'_{\gamma}(z)} \tau ({\bf T'}) \, dz} 
\\ \\
\, =\,\, 
\varepsilon ({\bf T}, {\bf T'})\,
\tau ({\bf T})\tau ({\bf T'})
\end{array}
\eeq
for the $N$-component hierarchies of the Pfaff type (DKP and large BKP). 
These relations hold for arbitrary
${\bf T}$ and ${\bf T'}$. The integration contour is a big circle
of radius $R\to \infty$. The other notations are: 
$\xi_{\gamma}({\bf T},z)$ is a certain linear function of the times,
$\nabla_{\gamma}(z)$ is a differential operator acting
to functions of ${\bf T}$, $\nabla_{\gamma}'(z)$ is the same
operator acting to functions of ${\bf T'}$. For example,
in the $N$-component KP hierarchy they are
\beq\label{int03}
\xi_{\gamma}({\bf T}, z)=n_{\gamma} \log z +
\sum_{k\geq 1}t_{\gamma ,k} z^{-k}, \quad
\nabla_{\gamma}(z)=\p_{n_{\gamma}}+
\sum_{k\geq 1}\frac{z^{-k}}{k}\, \p_{t_{\gamma ,k}}.
\eeq 
At last, $\varepsilon ({\bf T}, {\bf T'})$ in (\ref{int01}) 
can be $0$ or $1$ depending on values of the discrete variables
from the sets ${\bf T}$ and ${\bf T'}$.
More details on the notation are given in the
corresponding sections of the main text.

For each case, we give a representation of the tau-function as
an expectation value of certain operators made of free fermions
(except for the BKP tau-functions). This is done mostly for
illustrative purposes, assuming that the reader is to some extent
familiar with the fermionic operator approach developed 
by the Kyoto school (see, e.g., \cite{JM83,DJKM83,AZ13}).
This approach
provides a very clear and convenient language for analysis 
of integrable hierarchies. 
It seems to us that having in mind the fermionic picture helps
a lot to understand internal structures of the 
hierarchies and interrelations
between them.

In this paper we are mostly interested in the dispersionless limit
of the hierarchies mentioned above. 
Following the approach developed in \cite{TT95}, in order to 
pass to the limit, one should
introduce an extra parameter $\hbar \to 0$ 
and re-scale the times ${\bf T}$
as ${\bf T}\to {\bf T}/\hbar$. Next,
introduce the function $F$ of the 
re-scaled times
$F({\bf T}; \hbar )$ related to the 
tau-function by the
formula
\beq\label{d1b}
\tau \Bigl ({\bf T}/\hbar \Bigr )=
\exp \left ( \frac{1}{\hbar^2}\,
F({\bf T}; \hbar )\right )
\eeq
and consider the limit  
\beq \label{d1b1}
F=F({\bf T})=
\lim\limits_{\hbar \to 0}
F({\bf T}; \hbar ),
\eeq 
if it exists. (It is known that for a large class of solutions 
for $\tau ({\bf T})$, coming, for example, from the theory
of random matrices and logarithmic gases, the limit does exist.) 
The function $F$
plays the role of the 
tau-function in the dispersionless limit $\hbar \to 0$, meaning that
it serves as a universal dependent variable.
The $F$-function satisfies 
an infinite number of highly nonlinear differential equations which are
limiting cases of the bilinear equations for the 
tau-function. In the main text, the hierarchies with zero
dispersion will be abbreviated by adding 
the small letter ``d'' to their to their abbreviated names:
dKP, dBKP, dmKP, etc.

Unfortunately, it is not clear to us whether it is possible to
obtain equations  of the dispersionless hierarchies 
for the $F$-function directly from the general integral bilinear
relations of the form (\ref{int0}) or (\ref{int01}). 
A way out is to use for this purpose the
infinite set of bilinear Hirota-Miwa equations each of which
contains a finite number of terms. They are known to be
equivalent to the integral relations and can be derived from them with the
help of the so-called Miwa substitution first suggested in \cite{Miwa82}.
The bilinear equations obtained in this way are already 
suitable for taking the dispersionless limit. In the limit $\hbar \to 0$,
they convert into highly nonlinear equations for the $F$-function.
We call them dispersionless Hirota-Miwa equations. 

Note that
after the re-scaling $n_{\alpha}=t_{\alpha ,0}/\hbar$, 
the limit $\hbar \to 0$ makes the former discrete times $n_{\alpha}$
continuous (i.e., $ t_{\alpha ,0}\in \RR$),
and all the equations for the $F$-function 
are differential rather than
differential-difference or purely difference. The key point is that
combining them, one can obtain certain equations for some functions
containing second order partial derivatives of $F$ and an additional
complex variable $z$. To be more concrete, we outline here what
happens for the $N$-component DKP hierarchy. In this case the
functions mentioned above can be chosen as
\beq\label{ww2}
w_{\alpha}(z)=z^{-1}e^{\nabla_{\alpha}(z)\p_{\alpha}F}, \quad
w_{\alpha \beta}(z)=e^{\nabla_{\alpha}(z)\p_{\beta}F} \quad
(\alpha \neq \beta ),
\eeq
where $\p_{\alpha}\equiv \p_{t_{\alpha , 0}}$ and
$\displaystyle{\nabla_{\alpha}(z)=\p_{\alpha}+
\sum_{k\geq 1}\frac{z^{-k}}{k}\, \p_{t_{\alpha ,k}}}$ is the 
operator of the form (\ref{int03}). 
We prove the following key proposition.

\begin{proposition}
Let $F$ be a
solution to the whole set of the dispersionless Hirota-Miwa
equations of the dDKP hierarchy. Then
the functions $w_{\alpha}(z)$, $w_{\alpha \beta}(z)$
are constrained by the
following relation:
\beq\label{int04}
R_{\alpha \beta}^2 \Bigl (w_{\alpha}^2(z)w_{\alpha \beta}^2(z)+1
\Bigr ) -\Bigl (w_{\alpha}^2(z)+w_{\alpha \beta}^2(z)\Bigr )+
V_{\alpha \beta}w_{\alpha}(z)w_{\alpha \beta}(z)=0.
\eeq
\end{proposition}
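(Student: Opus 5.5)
The relation (\ref{int04}) has to be the two-component (indices $\alpha,\beta$) specialization of the dispersionless limit of the three-term Hirota--Miwa equations of the $N$-component DKP hierarchy. These are obtained from (\ref{int01}) by the Miwa substitution: put ${\bf T}-{\bf T}'$ equal to a combination of Miwa shifts $[z^{-1}]_\alpha$ and unit shifts of $n_\alpha,n_\beta$, and evaluate the contour integrals by residues.

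\textbf{Step 1: the dispersionless Hirota--Miwa equations.} I take the simplest bilinear equations involving one Miwa shift in component $\alpha$ together with shifts $n_\alpha\pm 1$, $n_\beta\pm 1$. Because of the Pfaff structure (the second sum in (\ref{int01}) and the term $\varepsilon$), these include shifts $n_\alpha\pm 2$ and $n_\alpha\pm1,n_\beta\pm1$, and they contain terms with a right-hand side. I rescale ${\bf T}\to{\bf T}/\hbar$ and substitute (\ref{d1b}). Every shift $e^{\hbar D}$ acting on $\exp(F/\hbar^2)$ gives, at leading order, exponentials of second derivatives $e^{D D' F}$. Each bilinear equation therefore becomes an algebraic relation among quantities of the form $e^{\nabla_\alpha(z)\p_\gamma F}$ and $e^{\p_\gamma\p_\delta F}$.

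\textbf{Step 2: eliminate the $z$-independent quantities.} I introduce
$$
R_{\alpha\beta}=e^{\p_\alpha\p_\beta F},\qquad
V_{\alpha\beta}=\mbox{a combination of } e^{\pm\p_\alpha^2F},\ e^{\pm\p_\beta^2F} \mbox{ and } \p_\alpha\p_\beta F,
$$
and rewrite the equations from Step 1 in terms of $w_\alpha(z)$, $w_{\alpha\beta}(z)$ and their counterparts with $\beta$ as the distinguished index. I expect two independent limiting relations, each bilinear in the unknowns, of the form
$$
w_\alpha^2-R_{\alpha\beta}^2 w_\alpha^2 w_{\alpha\beta}^2 = \ldots, \qquad
w_{\alpha\beta}^2-R_{\alpha\beta}^2=\ldots .
$$
Adding these suitably, or eliminating an auxiliary quantity such as $e^{\nabla_\alpha(z)\p_{t_{\alpha,1}}F}$, should give the symmetric biquadratic (\ref{int04}). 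A check on the result is that it must be invariant under $w_\alpha\leftrightarrow w_{\alpha\beta}$ and under $(w_\alpha,w_{\alpha\beta})\to(w_\alpha^{-1},w_{\alpha\beta}^{-1})$. This is exactly the symmetry of the Euler-type biquadratic defining an elliptic curve, and it fixes the shape of (\ref{int04}) up to the two coefficients.

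\textbf{Step 3: identify $V_{\alpha\beta}$.} I expand (\ref{int04}) at $z\to\infty$, where $w_\alpha\sim z^{-1}e^{\p_\alpha^2F}\cdot\ldots$ and $w_{\alpha\beta}\to R_{\alpha\beta}$. The leading and subleading orders then determine $V_{\alpha\beta}$ explicitly in terms of second derivatives of $F$ and confirm that $R_{\alpha\beta}^2$ is the correct coefficient.

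\textbf{Main obstacle.} The delicate step is Step 2. The Pfaff equations with nonzero right-hand side $\varepsilon\,\tau\tau'$ produce additive terms that do not factor nicely in the limit, and I must keep the subleading contributions consistently so that the limit is nontrivial. To handle this I would first do the one-component dDKP case, where the known Takasaki elliptic relation serves as a template. I would then reduce the multi-component case to it by restricting to the subhierarchy in $t_{\alpha,k}$, $n_\alpha$, $n_\beta$, and fix normalizations by the $z\to\infty$ expansion.
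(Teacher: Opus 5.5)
\textbf{There is a genuine gap: the key derivation (your Step 2) is never carried out, and several supporting claims are wrong.}

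Step 2 is where the proof has to happen, and as written it is a placeholder. You say what you expect the limiting relations to look like, then appeal to the symmetries $w_\alpha\leftrightarrow w_{\alpha\beta}$ and $(w_\alpha,w_{\alpha\beta})\to(w_\alpha^{-1},w_{\alpha\beta}^{-1})$ of the answer. This has two problems.
\begin{itemize}
\item It is circular. Nothing in the hierarchy tells you in advance that the relation between $w_\alpha(z)$ (which vanishes at $z=\infty$) and $w_{\alpha\beta}(z)$ (which tends to $R_{\alpha\beta}$) has these symmetries.
\item It would not fix the shape even if granted. Biquadratics with both symmetries also contain the term $(x+y)(1+xy)$.
\end{itemize}

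Two of your auxiliary claims are also wrong.
\begin{itemize}
\item The dDKP Hirota--Miwa equations (\ref{gen-H-M}) are homogeneous. Under the parity condition the right-hand side of the DKP bilinear identity vanishes. A nonzero $\varepsilon$ occurs only in the large BKP equations that intertwine even and odd $|{\bf n}|$. So the ``main obstacle'' you describe does not exist.
\item The relevant equations do not reduce to relations among $e^{\nabla_\alpha(z)\partial_\gamma F}$ and $e^{\partial_\gamma\partial_\delta F}$ alone. The order-$z^{-1}$ term of (\ref{int04}) as $z\to\infty$ forces $V_{\alpha\beta}R_\alpha=2R_{\alpha\beta}\,\partial_\beta\partial_{t_{\alpha,1}}F$. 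So $V_{\alpha\beta}$ necessarily involves $\partial_\beta\partial_{t_{\alpha,1}}F$, contrary to your guess. The equations you use must therefore carry $t_{\alpha,1}$-derivatives.
\end{itemize}

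\textbf{The missing idea is a concrete elimination.} In the paper, take the two four-point relations (\ref{40c1}) (the case $(P^+,P^-)=(4,0)$) and (\ref{31c1}) (the case $(3,1)$). Put three points in component $\alpha$ and one in $\beta$, then send one $\alpha$-point and the $\beta$-point to infinity. This gives (\ref{E6}) and (\ref{E7}), two relations in the finite points $a,b$. Both are linear in $G(a,b)=(a^{-1}-b^{-1})e^{\nabla_\alpha(a)\nabla_\alpha(b)F}$. Eliminating $G$ gives $\Phi(a)=\Phi(b)$, where
\[
\Phi=R_{\alpha\beta}^2\bigl(w_\alpha w_{\alpha\beta}+(w_\alpha w_{\alpha\beta})^{-1}\bigr)-w_\alpha/w_{\alpha\beta}-w_{\alpha\beta}/w_\alpha .
\]
Hence $\Phi$ is $z$-independent. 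Calling it $-V_{\alpha\beta}$ and letting $z\to\infty$ gives (\ref{V}).

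Your one-finite-point route can also be made to work, but only by keeping subleading terms. Let $b\to\infty$ in (\ref{E6}) and (\ref{E7}). The leading order is an identity, and order $b^{-1}$ gives
\[
R_\alpha\bigl(R_{\alpha\beta}^2-w_\alpha^2\bigr)=R_{\alpha\beta}w_\alpha w_{\alpha\beta}\,(p_\alpha-q),\qquad R_\alpha w_{\alpha\beta}\bigl(1-R_{\alpha\beta}^2w_\alpha^2\bigr)=R_{\alpha\beta}w_\alpha\,(p_\alpha+q),
\]
with $p_\alpha(z)=z-\nabla_\alpha(z)\partial_{t_{\alpha,1}}F$ and $q=\partial_\beta\partial_{t_{\alpha,1}}F$. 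Eliminating the non-exponential quantity $p_\alpha(z)$, which enters linearly, gives (\ref{int04}) directly with the correct $V_{\alpha\beta}$.

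Either way, you must exhibit the specific pair of equations and the quantity being eliminated. Without that, the proposal does not prove the statement.
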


\noindent
Here $R_{\alpha \beta}$ and $V_{\alpha \beta}$ are some functions
depending on the times (but not on $z$), which are expressed through
second order partial derivatives of the $F$-function in a known way. 
Equation (\ref{int04}), being quadratic in each of the
variables $x=w_{\alpha}$, $y=w_{\alpha \beta}$,
defines an elliptic curve $P(x,y)=0$, where
$
P(x,y)=R^2(x^2y^2 +1)-(x^2 +y^2)+Vxy
$.
It is the dynamical curve for the dDKP hierarchy.
In general, the curve is smooth, with the elliptic modulus being
a function of the times. The functions (\ref{ww2}) are meromorphic
functions on this curve, and $z^{-1}$ plays
the role of a local parameter in a neighborhood of infinity.

It is well known that any algebraic curve defined by the equation 
$P(x,y)=0$
with a bi-quadratic polynomial $P(x,y)$ can be uniformized by
elliptic functions (or Jacobi's theta-functions).
In the explicit form, the uniformization of the curve (\ref{int04})
is given by
\beq\label{uni111}
w_{\alpha}(z)=
\frac{\theta_1(u_{\alpha}(z)-\eta_{\alpha})}{\theta_4(u_{\alpha}(z)-
\eta_{\alpha})}, \qquad
w_{\alpha \beta}(z)=\epsilon_{\beta \alpha}
\frac{\theta_1(u_{\alpha}(z)-\eta_{\beta})}{\theta_4(u_{\alpha}(z)-
\eta_{\beta})},
\eeq
where $\epsilon_{\beta \alpha}=\pm 1$ is some sign factor and
$\theta_{i}(u)=\theta_{i}(u|\tau )$ are Jacobi's theta-functions
with a modular parameter $\tau$ to be fixed using some additional
arguments (it turns out to be a dynamical variable, i.e., 
in general depends on the times). The functions $u_{\alpha}(z)$ 
are expanded in series 
of the form $\displaystyle{u_{\alpha}(z)=\eta_{\alpha}({\bf t}) + \sum_{k\geq 1}c_k^{(\alpha )} 
({\bf t})z^{-k}}$ as $z\to \infty$,
and the coefficients $\eta_{\alpha}({\bf t}),
c_{k}^{(\alpha )}({\bf t})$ are 
new dynamical variables.
The advantage of this change of variables is obvious: instead of
$N^2$ functions $w_{\alpha},\, w_{\alpha \beta}$ with complicated
relations between them
it is enough to 
deal with only $N$ ones. Formulas (\ref{uni111}) should be
supplemented with the following expressions for $R_{\alpha \beta}$ and
$V_{\alpha \beta}$:
\beq\label{RV111}
R_{\alpha \beta}=\epsilon_{\beta \alpha}
\frac{\theta_1(\eta_{\alpha \beta})}{\theta_4(\eta_{\alpha \beta})}, 
\qquad
V_{\alpha \beta}=
2\epsilon_{\beta \alpha} \frac{\theta_4^2(0)\, \theta_2(\eta_{\alpha \beta})
\, \theta_3(\eta_{\alpha \beta})}{\theta_2(0)\, 
\theta_3(0)\,\theta_4^2(\eta_{\alpha \beta})}, \quad
\eta_{\alpha \beta}\equiv \eta_{\alpha}-\eta_{\beta}.
\eeq
The uniformization means that after substitution of (\ref{uni111}) and
(\ref{RV111}) into (\ref{int04}), the latter equation 
holds identically.

The main advantage of the  elliptic
parametrization is that a plethora of equations that arise in the
traditional approach (which are hardly suitable for putting 
them in order) are
replaced by only one equation with a clear structure.
We prove the following theorem.

\begin{theorem}
The whole $N$-component dDKP hierarchy is equivalent to the
single equation
\beq\label{E111}
\epsilon_{\beta \alpha}(a^{-1}-b^{-1})^{\delta_{\alpha \beta}} e^{\nabla_{\alpha}(a)\nabla_{\beta}(b)F}
= \frac{\theta_1(u_{\alpha}(a)- 
u_{\beta}(b)|\tau )}{\theta_4(u_{\alpha}(a)-
u_{\beta}(b)|\tau )}.
\eeq
The modular parameter $\tau =\tau ({\bf t})$ of the theta-functions
is a dynamical variable. Its dependence on times is implicitly 
determined by the equation
\beq\label{E11b}
e^{2\p_{\alpha}\p_{\beta}F}+e^{-2\p_{\alpha}\p_{\beta}F}-
e^{-2\p_{\alpha}^2F}(\p_{\beta}\p_{t_{\alpha ,1}}F)^2 =
\frac{\theta_2^2 (0|\tau )}{\theta_3^2 (0|\tau )}+
\frac{\theta_3^2 (0|\tau )}{\theta_2^2 (0|\tau )}.
\eeq
\end{theorem}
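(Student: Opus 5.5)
We plan to prove the theorem in two directions. First we show that every solution of the dispersionless Hirota--Miwa equations of the dDKP hierarchy satisfies (\ref{E111}) with some $\tau$. Then we show conversely that (\ref{E111}) implies all of those equations.

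\emph{Necessity.} We start from Proposition 1.1. For each fixed pair $\alpha\neq\beta$, the relation (\ref{int04}) is a biquadratic curve in $(x,y)=(w_\alpha,w_{\alpha\beta})$. Such a curve has a symmetric uniformization by $x=\theta_1(u-\eta_\alpha)/\theta_4(u-\eta_\alpha)$, $y=\epsilon\,\theta_1(u-\eta_\beta)/\theta_4(u-\eta_\beta)$. The proof of this uses the standard addition formula for $\mathrm{sn}$: the relation between $\mathrm{sn}(v)$ and $\mathrm{sn}(v+\eta)$ is exactly of the form (\ref{int04}), with $R$ and $V$ given by (\ref{RV111}), and $R^2$ playing the role of $k\,\mathrm{sn}^2$. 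From this we read off two things.
\begin{itemize}
\item The modulus. Comparing invariants fixes $k=\theta_2^2(0)/\theta_3^2(0)$. The invariant combination of $R_{\alpha\beta},V_{\alpha\beta}$ that equals $k+k^{-1}$ should be $(V^2-4(1-R^2)^2)/(4R^2)$ or a similar expression.
\item The identification with $F$. We expand (\ref{ww2}) as $z\to\infty$: the leading terms give $w_\alpha(\infty)$ and $w_{\alpha\beta}(\infty)=e^{\p_\alpha\p_\beta F}$. Then $R_{\alpha\beta}=\epsilon\,e^{-\p_\alpha\p_\beta F}$, and $V$ involves $e^{-\p^2_\alpha F}\p_\beta\p_{t_{\alpha,1}}F$, by the explicit formulas behind Proposition 1.1.
\end{itemize}
Substituting these into the invariant yields (\ref{E11b}).

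Two consistency checks are needed here. The modular parameter obtained must be the same for all pairs $(\alpha,\beta)$. Also, $u_\alpha(z)$ defined from the pair $(\alpha,\beta)$ must not depend on $\beta$. We would settle both using the three-index Hirota--Miwa relations: they force the cross-ratio-type invariants to agree, so a single $\tau$ and shifts $\eta_\alpha$ work simultaneously. This gives the diagonal specialization of (\ref{E111}) at $b=\infty$.

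To get (\ref{E111}) for general $a,b$, we apply the same argument to the dispersionless Hirota--Miwa equation with two Miwa shifts, i.e. the equation in which $e^{\nabla_\alpha(a)\nabla_\beta(b)F}$ appears together with $e^{\nabla_\alpha(a)\p_\gamma F}$ and $e^{\nabla_\beta(b)\p_\gamma F}$. Using the uniformization already obtained for the one-point functions, this equation becomes an elliptic addition identity, and its unique solution with the correct behaviour as $b\to\infty$ is the right-hand side of (\ref{E111}). For $\alpha=\beta$, the factor $(a^{-1}-b^{-1})$ absorbs the diagonal zero of $\theta_1$.

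\emph{Sufficiency.} Conversely, suppose $F$ satisfies (\ref{E111}). Each dispersionless Hirota--Miwa equation is obtained by taking the limit of the three- or four-term bilinear relations with Miwa shifts at points $a,b,c$. After substituting (\ref{E111}), every such relation becomes a theta-function identity: a Fay-type trisecant (Weierstrass three-term) identity for the ratios $\theta_1/\theta_4$. We then verify these identities directly, for instance by checking that both sides are elliptic with the same poles and residues.

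\emph{Main obstacle.} We expect the hardest step to be the consistency in the necessity part. One must show that a single $\tau$ and a single family $u_\alpha(z)$ serve all pairs at once, and that the integration data (the $\eta_\alpha$ and $\epsilon_{\beta\alpha}$) are well defined, with the correct $t$-dependence of $\tau$. Our first attempt would be to derive the $\beta$-independence of $u_\alpha$ from the three-component dispersionless equations. We would then check the sign factors $\epsilon$ through the antisymmetry $\epsilon_{\beta\alpha}=-\epsilon_{\alpha\beta}$, using the oddness of $\theta_1$.
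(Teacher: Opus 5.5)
\textbf{The gap is in your sufficiency direction: it never reaches the whole hierarchy.} You verify only the 4-point relations (\ref{g1}), (\ref{g2}) and their three-term degenerations. Your justification is that every dispersionless Hirota--Miwa equation is a limit of these, and that is not so. The hierarchy is the family (\ref{gen-H-M}) for \emph{all} $P^\pm$ with $P^++P^-$ even. Each member comes directly from the integral bilinear identity with its own Miwa substitution, and a many-point relation is not a degeneration of the 4-point ones. To pass from finitely many relations to all of them you would need a separate generation theorem. In the dispersionless setting, the only available proof of such a statement is precisely the general identity you are missing.

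That identity is the heart of the paper's proof. Substituting (\ref{E111}) into (\ref{gen-H-M1}) for arbitrary $P^\pm$ gives (\ref{id1(ddkp)}). The shift $v_k\to v_k+\tau/2$, using $\theta_1(x+\tau/2)/\theta_4(x+\tau/2)=\theta_4(x)/\theta_1(x)$, turns it into (\ref{id2(ddkp)}). Up to a common factor, the left-hand side of (\ref{id2(ddkp)}) is the sum of residues of
\[
f(u)=\prod_{i=1}^{P^+}\frac{\theta_4(u-u_i)}{\theta_1(u-u_i)}\prod_{j=1}^{P^-}\frac{\theta_4(u-v_j)}{\theta_1(u-v_j)} .
\]
The ratio $\theta_4/\theta_1$ is $\tau$-periodic and changes sign under $u\to u+1$. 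Hence $f$ is elliptic with periods $1,\tau$ exactly when $P^++P^-$ is even, and the sum of its residues vanishes. This single argument covers every $(P^+,P^-)$ at once; checking a few Fay/trisecant-type identities one by one does not.

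\textbf{Your necessity direction essentially follows the paper, but several formulas are wrong.} The route itself matches: the curve (\ref{E9}) from degenerate 4-point relations, uniformization, and the modulus read off from an invariant of the curve.
\begin{itemize}
\item From $w_{\alpha\beta}(\infty)=e^{\partial_\alpha\partial_\beta F}$ you get $R_{\alpha\beta}=e^{+\partial_\alpha\partial_\beta F}$, not $e^{-\partial_\alpha\partial_\beta F}$. The sign $\epsilon_{\beta\alpha}$ belongs in the theta-function expression for $R_{\alpha\beta}$, not here.
\item Also $V_{\alpha\beta}=2e^{(\partial_\beta-\partial_\alpha)\partial_\alpha F}\partial_\beta\partial_{t_{\alpha,1}}F$.
\item The discriminant of (\ref{E9}) in $w_{\alpha\beta}$ is $4R^2\bigl(w_\alpha^4-(R^2+R^{-2}-V^2/(4R^2))\,w_\alpha^2+1\bigr)$. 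So the correct invariant is $k+k^{-1}=R^2+R^{-2}-V^2/(4R^2)$. Your candidate $(V^2-4(1-R^2)^2)/(4R^2)$ equals $2-(k+k^{-1})$.
\item With the correct invariant, $V/(2R)=e^{-\partial_\alpha^2F}\partial_\beta\partial_{t_{\alpha,1}}F$, and (\ref{E11b}) follows immediately. This is equivalent to the paper's derivation via (\ref{S1}) and (\ref{E10}).
\end{itemize}

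\textbf{Consistency across pairs, which you rightly flag.} The pairwise curves alone cannot force $\tau_{\alpha\beta}=\tau_{\alpha\gamma}$, because $w_{\alpha\beta}(z)$ and $w_{\alpha\gamma}(z)$ are unrelated functions. What ties them together is an equation such as (\ref{E7}). It expresses the same function $(a^{-1}-b^{-1})e^{\nabla_\alpha(a)\nabla_\alpha(b)F}$ once through the $(\alpha,\beta)$ uniformization and once through the $(\alpha,\gamma)$ one. Comparing the two for all $a,b$ is what forces a common $\tau$ and a common $u_\alpha$ up to an additive constant. The paper itself defers this part to \cite{SZ25b}.
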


Moreover, we also prove similar statements for the
so-called large BKP hierarchy, including its multi-component
generalization. 
This hierarchy was originally suggested
by Kac and van de Leur in \cite{KL98} under another name,
was further studied in \cite{OST12,OST16,vLO15} and called there
``large BKP hierarchy'', as opposed to the 
``small BKP hierarchy'' \cite{DJKM82}-\cite{Z21}.
It is closely connected with the KP hierarchy of type D.
Recently this hierarchy was rediscovered under the name
of the $B$-Toda hierarchy \cite{KZ22,PZ23}, and its 
previously unknown close connection with the 2D Toda hierarchy
was established. Its multi-component generalization suggested
in the present paper within the framework of the bilinear
approach seems to be new.

In short, the new results on the large BKP hierarchy obtained in the 
present work are as follows. We have shown that it admits two
essentially different dispersionless versions: version I and version II. 
The simpler version I was already discussed in \cite{Z24a} for the
one-component hierarchy. We have extended the analysis to the
multi-component setup. 
The result is that the version I of the limit corresponds to 
a degeneration of the elliptic curve to a rational one
in the limit $\tau \to +i0$. 
The equation of the dynamical curve (\ref{int04}) acquires the form
\beq\label{split3}
R^2_{\alpha \beta} (w_{\alpha}w_{\alpha \beta}+1)^2=
(w_{\alpha} +w_{\alpha \beta} )^2,
\eeq
i.e., the curve becomes singular and splits into two rational components
$
R_{\alpha \beta} (w_{\alpha}w_{\alpha \beta}+1)=\pm
(w_{\alpha} +w_{\alpha \beta} )$,
which can be uniformized by means of elementary (hyperbolic) functions:
\beq\label{uni112}
w_{\alpha}(z)=
\tanh (u_{\alpha}(z)-\eta_{\alpha}), \qquad
w_{\alpha \beta}(z)=\epsilon_{\beta \alpha}
\tanh (u_{\alpha}(z)-\eta_{\beta}),
\eeq
with $R_{\alpha \beta}=
\epsilon_{\beta \alpha}\tanh (\eta_{\alpha} -\eta_{\beta})$.

\begin{theorem}
The whole $N$-component 
large BKP hierarchy in the dispersionless 
version I is equivalent to the single equation
\beq\label{E112}
\epsilon_{\beta \alpha}(a^{-1}-b^{-1})^{\delta_{\alpha \beta}} e^{\nabla_{\alpha}(a)\nabla_{\beta}(b)F}
= \tanh (u_{\alpha}(a)- u_{\beta}(b)).
\eeq
\end{theorem}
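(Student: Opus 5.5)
Two directions are needed: that the dispersionless Hirota--Miwa equations of version I imply (\ref{E112}) for suitable functions $u_\alpha(z)$, and conversely that (\ref{E112}) implies all of them. The strategy is the same as for the dDKP theorem, with $\theta_1/\theta_4$ replaced by its degeneration $\tanh$ as $\tau\to +i0$.

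\textbf{Step 1: the degenerate curve.} Start from the version~I analogue of the proposition on the dynamical curve, namely the relation (\ref{split3}) between $w_\alpha(z)$ and $w_{\alpha\beta}(z)$. Near $z=\infty$ the expansions of $w_\alpha$ and $w_{\alpha\beta}$ fix the sign, so we stay on one rational component,
\[
R_{\alpha\beta}(w_\alpha w_{\alpha\beta}+1)=\epsilon\,(w_\alpha+w_{\alpha\beta}) .
\]
This is the addition formula for $\tanh$. Define $\eta_\alpha$ through the constant term of $w_\alpha$ at infinity, and set $u_\alpha(z)-\eta_\alpha:=\operatorname{artanh} w_\alpha(z)$. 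Then $u_\alpha(z)=\eta_\alpha+O(z^{-1})$, and (\ref{split3}) forces (\ref{uni112}) with $R_{\alpha\beta}=\epsilon_{\beta\alpha}\tanh(\eta_\alpha-\eta_\beta)$. The consistency $R_{\alpha\beta}=-R_{\beta\alpha}$, i.e. $\epsilon_{\alpha\beta}=\epsilon_{\beta\alpha}$ up to the convention, should follow from the symmetry of $\p_\alpha\p_\beta F$.

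\textbf{Step 2: from one-point to two-point functions.} Use the general three-point dispersionless Hirota--Miwa equations, i.e. the limits of the Miwa-substituted bilinear relations with shifts at two points $a,b$ in components $\alpha,\beta$. These express
\[
E_{\alpha\beta}(a,b):=\epsilon_{\beta\alpha}(a^{-1}-b^{-1})^{\delta_{\alpha\beta}}e^{\nabla_\alpha(a)\nabla_\beta(b)F}
\]
through $w_\alpha(a)$, $w_{\alpha\beta}(a)$, $w_\beta(b)$, $w_{\beta\alpha}(b)$ by a relation of the same bi-linear-fractional type. Substituting (\ref{uni112}) and using the $\tanh$ addition theorem, the relation should factor, and its solution branch is $E_{\alpha\beta}(a,b)=\tanh(u_\alpha(a)-u_\beta(b))$. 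The branch is fixed by the $a\to\infty$ or $b\to\infty$ asymptotics, where $E$ reduces to $w_{\beta\alpha}(b)$ or $w_\alpha$-type quantities. This gives (\ref{E112}).

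\textbf{Step 3: the converse.} Assume (\ref{E112}) holds with some $u_\alpha$ having the stated expansions. Each dispersionless Hirota--Miwa equation of version~I is a relation among the quantities $e^{\nabla_\alpha(a_i)\nabla_\beta(a_j)F}$ for finitely many points. Substituting (\ref{E112}), every such equation becomes an identity for $\tanh(u_i-u_j)$: a degenerate Fay/Riemann three-term identity,
\[
\tanh(x-y)\tanh(y-z)\tanh(z-x)=-\tanh(x-y)-\tanh(y-z)-\tanh(z-x),
\]
or its four-point analogue. These hold identically, so the whole hierarchy follows. Symmetry of mixed derivatives is automatic from $E_{\alpha\beta}(a,b)=-E_{\beta\alpha}(b,a)$, because $\tanh$ is odd.

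\textbf{Main obstacle.} The hard step is making the version~I limit precise. Its defining equations are not literally the $\tau\to+i0$ limit of the dDKP ones, so I would rederive them from the large BKP bilinear relation with the version~I scaling. I would then verify directly that the four-term Pfaffian-type equations reduce to the $\tanh$ identities, rather than relying on the degeneration $\theta_1/\theta_4\to\tanh$ heuristically. The same care is needed for the sign bookkeeping in $\epsilon_{\beta\alpha}$.
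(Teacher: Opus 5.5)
The forward direction in your Step~2 matches the paper's: take the three-point equation (\ref{dmL1}) at $a_3=\infty$ and split it into the index cases (\ref{LIa})--(\ref{LIVa}). The converse, however, has a genuine gap.

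The version~I hierarchy consists of two infinite families:
\begin{itemize}
\item the equations (\ref{dL3a}) for all $P^\pm$ with $P^++P^-$ odd, with right-hand side $1$;
\item the dDKP equations (\ref{gen-H-M}) in each parity sector, with $P^++P^-$ even and right-hand side $0$.
\end{itemize}
Substituting (\ref{E112}) turns each of them into the $(P^++P^-)$-term identity (\ref{dL3d}), with right-hand side $\frac{1}{2}(1-(-1)^{P^++P^-})$. It does not turn them into a three-term or four-point identity. Checking the $\tanh$ identities you list proves only the lowest members. Nothing in the paper reduces the dispersionless hierarchy to its three- and four-point equations; that reduction is essentially what the theorem asserts.

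What is missing is a proof of (\ref{dL3d}) for arbitrary $P^\pm$. The paper rewrites it, via $\tanh(u-v)=(\tanh u-\tanh v)/(1-\tanh u\tanh v)$, as the rational identity (\ref{L10a}). It then sums the residues of $g(w)=\frac{1}{1-w^2}\prod_i\frac{1-ww_i}{w-w_i}\prod_k\frac{w-v_k}{1-wv_k}$; the extra poles at $w=\pm1$ produce exactly $\frac{1}{2}(1-(-1)^{P^++P^-})$. An equivalent route is to integrate the $i\pi$-periodic function $\prod_i\coth(u-u_i)\prod_k\tanh(u-v_k)$ around a period strip. Its residues at $u_i$ and at $v_k+i\pi/2$ are the terms of (\ref{dL3d}). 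It tends to $1$ and $(-1)^{P^++P^-}$ as ${\rm Re}\,u\to\pm\infty$, which gives the right-hand side for both parities at once.

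Step~1 should also be replaced, for three reasons.
\begin{itemize}
\item Taking (\ref{split3}) as input is circular. In the paper the degenerate curve appears only a posteriori, in the remark after the theorem, as a consequence of (\ref{E112}).
\item It cannot produce the ordering sign in (\ref{uni112}). The curve, and the component you select, are symmetric under $w_\alpha\leftrightarrow w_{\alpha\beta}$. With the sign fixed at $z=\infty$, they give $w_{\alpha\beta}=-\tanh(u_\alpha-\eta_\beta)$ and $R_{\alpha\beta}=\tanh(\eta_\beta-\eta_\alpha)$ for every ordered pair. With one common set of $\eta$'s this forces $R_{\beta\alpha}=-R_{\alpha\beta}$, which is the relation you wrote. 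But $R_{\alpha\beta}=e^{\p_\alpha\p_\beta F}$ is symmetric, and $\epsilon_{\alpha\beta}=-\epsilon_{\beta\alpha}$ for $\alpha\neq\beta$. In fact (\ref{split3}) as printed is correct only for $\alpha<\beta$; in general the cross term is $2\epsilon_{\alpha\beta}(R_{\alpha\beta}^2-1)w_\alpha w_{\alpha\beta}$.
\item Since $w_\alpha(\infty)=0$, the constant term of $w_\alpha$ cannot define $\eta_\alpha$.
\end{itemize}

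Follow the paper's route instead:
\begin{itemize}
\item Set $w_\alpha=\tanh(u_\alpha-\eta_\alpha)$. Then (\ref{LIa}) is literally the $\tanh$ subtraction formula.
\item Comparing (\ref{LIa}) with (\ref{LIIa}) gives $\epsilon_{\beta\alpha}w_{\alpha\beta}=\tanh(u_\alpha-\eta_\beta)$. This is where the ordering sign enters. At $b\to\infty$ it gives the correct rational relation $R_{\alpha\beta}(1-\epsilon_{\beta\alpha}w_\alpha w_{\alpha\beta})=w_{\alpha\beta}-\epsilon_{\beta\alpha}w_\alpha$.
\item (\ref{LIIIa}) and (\ref{LIVa}) are linear in $e^{\nabla_\alpha(a)\nabla_\beta(b)F}$, so they give (\ref{E112}) directly, with no branch to choose.
\item For $N\geq3$, making $\eta_\beta$ independent of $\alpha$ also requires (\ref{dmL1}) with three distinct indices.
\end{itemize}
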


The zero dispersion version II is performed in a
somewhat sophisticated way and is essentially ``elliptic'', i.e., 
can be parametrized by means of non-degenerate 
elliptic functions. To be more precise, the result for the 
$N$-component large BKP hierarchy turns out to be basically the
same as the $(N+1)$-component dispersionless DKP hierarchy.

Amusingly, the similar approach to the $N$-component dispersionless
KP and mKP hierarchies (with $N\geq 2$) 
formally leads to equations that correspond
to another degeneration
of elliptic functions ($\tau \to +i\infty$ rather than
$\tau \to +i0$). Compared to the BKP case, it consists in replacing the
$\tanh$-function by the $\sin$-function.
So, one can say that the two cases are
connected by the (limiting form of) 
modular transformation $\tau \leftrightarrow -1/\tau$.
The explicit formulas are:
\beq\label{uni113}
w_{\alpha}(z)=
\sin (u_{\alpha}(z)-\eta_{\alpha}), \qquad
w_{\alpha \beta}(z)=\epsilon_{\beta \alpha}
\sin (u_{\alpha}(z)-\eta_{\beta})
\eeq
instead of (\ref{uni112}) and
\beq\label{t4a}
\epsilon_{\beta \alpha}(a^{-1}-b^{-1})^{\delta_{\alpha \beta}}
e^{\nabla_{\alpha}(a)\nabla_{\beta}(b)F}=
\sin (u_{\alpha}(a)-u_{\beta}(b))
\eeq
instead of (\ref{E112}).

\subsection{Organization of the paper}

Sections 2--4 are devoted to hierarchies of type A.
In Section 2 we consider the simplest and most 
familiar examples of the KP and
mKP hierarchies. The procedure of passing to the 
zero dispersion limit 
is discussed there in detail. Although some key structures that 
appear in what follows in more complicated cases (such as the curve)
in these examples 
are ``in their infancy'' and rather implicit, it nevertheless
seems to be instructive to start with such simple and well known 
matters.
Section 3 is devoted to the same hierarchies of the type A
(KP and mKP), but in their multi-component incarnation.
Now the algebraic curve appears on the scene in an explicit way.
However, it is still only rational. Nevertheless, its uniformization 
by means of trigonometric functions helps a lot in clarifying the
structure of the hierarchies for $N>2$.
In Section 4 these results are translated to the language of
the (dispersionless) 
$N$-component Toda lattice hierarchy, which is known to be
equivalent to the $2N$-component KP.

Sections 5 and 6 are devoted to hierarchies of the Pfaff type:
one-component (Section 5) and multi-component (Section 6).
The presentation begins with the simplest case of the small BKP
hierarchy, which can be attributed to this class only somewhat 
conditionally, on the grounds that some of 
its solutions are expressed through Pfaffians. 
Like for the one-component KP case, the curve does not appear
in this example (or appears in a trivial way, being defined by
an equation of the form $x-y=0$). The second example is already
rather serious and non-trivial: we have tried to show in detail
how the elliptic curve emerges in the dispersionless DKP hierarchy,
and how to benefit from its uniformization by elliptic functions.
The third example in this section is the large BKP hierarchy,
which in some sense can be regarded as a difference analogue 
of its small sister.
It is closely related to DKP, but contains more equations and
admits two different zero dispersion limits. In one of them
(the simplest one)
the additional equations imposed to the same 
$F$-function cause the elliptic curve to degenerate. This leads
to parametrization in terms of hyperbolic functions, which in the
one-component case is not that too meaningful and informative. 
The analysis of the other limit requires involvement of multi-component
hierarchies and is postponed to the next section.
Section 6 is devoted to the multi-component DKP 
and large BKP hierarchies, for which the elliptic parametrization
considerably clarifies and simplifies their structure.

Section 7 contains the conclusion.
Some unsolved problems are also mentioned there.

Lastly, there are four appendices. Appendix A contains the basic
definitions and formulas of the free fermions operator approach.
In Appendix B the definitions and main properties of the 
Jacobi's theta-functions necessary in the main text are given.
In Appendix C it is explained how uniformization of rational and
elliptic curves, by trigonometric and elliptic 
functions respectively, works in practice.
In Appendix D it is shown how the approach to the $N$-component dmKP
hierarchy can be specified to the case $N=1$.

\section{The simplest examples: 
rational curves in one-component hierarchies}

\subsection{A warm-up exercise: the dKP hierarchy}
\label{section:dKP}

We start with the simplest possible case of the dKP hierarchy, 
where any algebraic curve
is not appearing yet\footnote{Better to say, 
the curve exists in this case, too,
but is the most trivial rational curve: if desired, it formally 
can be defined by the equation $P(x,y)=0$ with $P(x,y)=x-y$.}. 
Nevertheless, it seems instructive 
to start with this particular case and work out some 
important details of calculations 
(which will be encountered later in more complex and meaningful cases)
using this simple example. 
Most of the technical details are common for 
more general and complicated cases addressed later.

\subsubsection{The KP hierarchy}
\label{section:KP}

In the KP hierarchy,
the set of independent variables is ${\bf t}=\{t_1, t_2, t_3, \ldots \}$.
The universal dependent variable is the 
tau-function $\tau ({\bf t})$ that satisfies the integral bilinear
relation
\beq\label{kp1}
\oint_{C_{\infty}} dz \, e^{\xi ({\bf t}-{\bf t'}, z)} 
\tau \Bigl ({\bf t}-[z^{-1}]\Bigr )
\tau \Bigl ({\bf t'}+[z^{-1}]\Bigr ) \, =0.
\eeq
Here
\beq\label{xi1}
\xi (z, {\bf t})=\sum_{k\geq 1} t_k z^k, \quad
{\bf t}\pm [z^{-1}]=\Bigl \{ t_1 \pm z^{-1}, t_2 \pm \frac{1}{2}
z^{-2}, t_3 \pm \frac{1}{3} z^{-3}, \ldots \Bigr \}.
\eeq
The integration contour $C_{\infty}$ 
is a big circle of radius
$R\to \infty$ such that all singularities 
coming from the exponential function
$e^{\xi ({\bf t}-{\bf t'} z)}$
are outside it and all singularities 
coming from the $\tau$-factors are inside it (the 
$\tau$-factors, as functions of $z$, are assumed to be regular in some
neighborhood of infinity.) This equation is valid for any choice of
the times ${\bf t}$, ${\bf t'}$. 
The simplest solution is $\tau ({\bf t})=1$.
Note that the tau-function is defined up to a transformation of the 
form
$$
\tau ({\bf t})\longrightarrow e^{\ell \, ({\bf t})}\tau ({\bf t}),
$$
where $\ell ({\bf t})$ is an arbitrary linear function of the times.

In the operator approach (outlined in Appendix A) the tau-function is
represented as the following vacuum expectation value:
\beq\label{vackp}
\tau ({\bf t})=\lbr 0 \bigl |e^{J({\bf t})}g \bigr |0\rbr ,
\eeq
where $g$ is a neutral Clifford group element of the form
$$
g=\exp \Bigl (\sum_{i,j\in \z}A_{ij}\psi_i \psi^{*}_j\Bigr ).
$$

To proceed,
one should choose ${\bf t}-{\bf t}'$ in (\ref{kp1})
in such a way that the integral
could be evaluated by means of residue calculus. 
This is equivalent to the Miwa change of
variables first introduced by Miwa in \cite{Miwa82}.
Namely, we employ the substitution of the following
general form:
\beq\label{kp2}
\begin{array}{l}
\displaystyle{
{\bf t}-{\bf t}' =\sum_{i=1}^{P^+}[a_i^{-1}]-
\sum_{k=1}^{P^-}[b_k^{-1}].}
\end{array}
\eeq
Here $P^+, P^-$ are non-negative integer numbers,
$a_i , b_k \in \CC$ are arbitrary 
parameters\footnote{Technically 
it is much easier to make the calculations
assuming that they are
distinct and, if necessary, consider various limits
when some of them coincide afterwards.} 
(the Miwa variables) 
belonging to a neighborhood of infinity. 
After this substitution the factor $e^{\xi ({\bf t}-{\bf t'}, z)}$ becomes
\beq\label{kp3}
\begin{array}{lll}
e^{\xi ({\bf t}-{\bf t'}, z)} & = &
\displaystyle{\exp \left ( \sum_{i=1}^{P^+}\sum_{k\geq 1}
\frac{1}{k}(z/a_i)^k -\sum_{j=1}^{P^-}\sum_{k\geq 1}
\frac{1}{k}(z/b_j)^k\right )}
\\ && \\
& = &
\displaystyle{\exp \left ( -\sum_{i=1}^{P^+}\log 
\Bigl (1-\frac{z}{a_i}\Bigr ) +
\sum_{k=1}^{P^-}\log \Bigl (1-\frac{z}{b_k}\Bigr )\right )}
\\ && \\
&=& \displaystyle{
\left ( \prod_{i=1}^{P^+} \frac{a_i}{a_i-z}\right )
\left ( \prod_{k=1}^{P^-} \frac{b_k-z}{b_k}\right ).}
\end{array}
\eeq
So, the Miwa change generates simple 
poles at the points $a_i$ instead of an essential 
singularity at $\infty$, and
the integral in (\ref{kp1})
can be calculated by residue calculus. 
Calculating the residues, one should take into account
that all the points $a_i$ are {\it outside} the contour, so 
the residues at these points should be taken with the sign
``$-$''; and there may be also a contribution from $\infty$.
From (\ref{kp3}) it is clear that if
\beq\label{kp4}
P^+ -P^- \geq 2,
\eeq
the residue at $\infty$ is zero. Below in this section we assume that
inequality (\ref{kp4}) holds. We also assume that 
all $a_i$'s are distinct, so the poles are simple. 
Calculating the residues and shifting the variables, we represent 
(\ref{kp1}) in the following form:
\beq\label{kp5}
\sum_{s=1}^{P^+}
\left ( \prod_{i=1, \neq s}^{P^+} (a_i-a_s)\right )^{-1}\!\!
\left ( \prod_{k=1}^{P^-} (b_k-a_s)\right )
\tau \Bigl ( {\bf t} +\sum_{i=1, \neq s}^{P^+}[a_i^{-1}]\Bigr )
\tau \Bigl ({\bf t}+[a_s^{-1}]+
\!\! \sum_{k=1}^{P^-}[b_k^{-1}]\Bigr )=0.
\eeq
The simplest non-trivial case is $(P^+, P^-)=(3,0)$:
\beq\label{kp5a}
\begin{array}{l}
(a_1-a_2)\tau \Bigl ({\bf t}+[a_1^{-1}]+[a_2^{-1}]\Bigr )
\tau \Bigl ({\bf t}+[a_3^{-1}]\Bigr )+
(a_2-a_3)\tau \Bigl ({\bf t}+[a_2^{-1}]+[a_3^{-1}]\Bigr )
\tau \Bigl ({\bf t}+[a_1^{-1}]\Bigr )
\\ \\
\phantom{aaaaaaaaaaaaa}
+\, (a_3-a_1)\tau \Bigl ({\bf t}+[a_3^{-1}]+[a_1^{-1}]\Bigr )
\tau \Bigl ({\bf t}+[a_2^{-1}]\Bigr )=0.
\end{array}
\eeq
This is the famous 3-term bilinear relation for the KP tau-function
first obtained 
by Miwa \cite{Miwa82}.
It holds for all $a_1, a_2, a_3$. Expanding the right-hand side
in inverse powers of $a_i\to \infty$, and equating the coefficients
to zero, one obtains an infinite set of partial differential
equations for the tau-function.

\begin{theorem}\cite{TT95,Shigyo13}\label{theorem:kp}
Equation (\ref{kp5a}) is equivalent to the whole KP hierarchy
defined by (\ref{kp1}).
\end{theorem}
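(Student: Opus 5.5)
The converse direction is the content: that (\ref{kp1}) implies (\ref{kp5a}) is exactly the residue computation above with $(P^+,P^-)=(3,0)$. I would then show that (\ref{kp5a}) implies the general Miwa relations (\ref{kp5}) for all $P^\pm$, and that these in turn imply (\ref{kp1}). The last step is the easy one. The function $e^{\xi({\bf t}-{\bf t}',z)}$ depends on ${\bf t}-{\bf t}'$ only, and after the substitution (\ref{kp2}) the differences $t_k-t'_k$ become $\frac1k\bigl(\sum_i a_i^{-k}-\sum_j b_j^{-k}\bigr)$. With unboundedly many Miwa variables, these power sums are algebraically independent coordinates. Therefore, if the left-hand side of (\ref{kp1}), expanded as a formal series in ${\bf t}-{\bf t}'$, vanishes after every substitution (\ref{kp2}) with $P^+-P^-\ge 2$, then all its Taylor coefficients vanish. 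The restriction $P^+-P^-\geq 2$ costs nothing, since one may add further points $a_i\to\infty$. Since (\ref{kp5}) is precisely (\ref{kp1}) after the substitution, (\ref{kp1}) holds identically.

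The core step is deducing (\ref{kp5}) from (\ref{kp5a}). I would introduce the normalized quantities
$$\psi_{\bf t}(a_1,\dots,a_m)=\frac{\tau\bigl({\bf t}+\sum_i[a_i^{-1}]\bigr)}{\tau({\bf t})}.$$
Dividing (\ref{kp5a}) by a suitable product of $\tau$'s rewrites it as the statement that the quantities
$$\Phi_{\bf t}(a,b)=(a-b)\frac{\tau({\bf t}+[a^{-1}]+[b^{-1}])\,\tau({\bf t})}{\tau({\bf t}+[a^{-1}])\,\tau({\bf t}+[b^{-1}])}$$
satisfy a cocycle-type additivity in which each shifted $\tau$ value is determined by the one-point and two-point data. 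From this I would prove the following by induction on $P^+$, with $P^-=0$ first.

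\emph{Claim.} The expression $\tau\bigl({\bf t}+\sum_{i=1}^{m}[a_i^{-1}]\bigr)\,\tau({\bf t})^{m-1}\Big/\prod_i\tau({\bf t}+[a_i^{-1}])$ equals a Cauchy-type determinant
$$\det\Bigl(\frac{\tau({\bf t}+[a_i^{-1}]+[a_j^{-1}])\tau({\bf t})}{(a_i-a_j)\,\tau({\bf t}+[a_i^{-1}])\tau({\bf t}+[a_j^{-1}])}\Bigr)$$
suitably regularized on the diagonal, i.e. a Fay-type addition formula.

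Relation (\ref{kp5}) with $P^-=0$ is then the Laplace expansion of a degenerate determinant, so it reduces to identities already available. For this induction I would apply (\ref{kp5a}) at shifted points ${\bf t}\to{\bf t}+[a_4^{-1}]+\dots$, so that the three-term relation is used with general base times; this is what makes the induction go through.

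To include $P^-\neq 0$, I would use the fact that each $b_k$ enters (\ref{kp5}) only through the shift ${\bf t}'={\bf t}-\dots-[b_k^{-1}]$ and a polynomial prefactor. Formally, $-[b^{-1}]$ can be produced as a limit, or by analytic continuation in the power sums from the $P^-=0$ case. Concretely, I would treat ${\bf t}'$ in (\ref{kp5}) with $P^-=0$ as an arbitrary point ${\bf t}'={\bf t}+{\bf s}$ and note that the $b$-prefactors arise from re-expanding in Miwa variables. Equivalently, I would check directly that (\ref{kp5}) for $(P^+,P^-)$ follows from the $(P^++1,P^-)$ relation by sending one $a$ to $b_k$. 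An alternative route, if the determinant argument becomes messy, is to take $a_3\to\infty$ in (\ref{kp5a}) and obtain the differential Fay identity. One then invokes its known equivalence with the Sato–Wilson/Lax formulation of KP (Takasaki–Takebe), and recovers (\ref{kp1}) from the wave functions $\psi=e^{\xi}\tau({\bf t}-[z^{-1}])/\tau$ in the standard way.

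The main obstacle is the induction from three to arbitrarily many Miwa points. One must verify that the three-term relation at all shifted base points really forces the multi-point addition formula, including the handling of coincidence limits and of the negative shifts $-[b^{-1}]$. This is where I would first try the determinant (Fay/Cauchy) formulation. If that fails, I would fall back on the differential Fay route, at the price of citing the Lax-equivalence theorem.
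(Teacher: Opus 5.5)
Your outer steps are sound, but the core step rests on a false lemma. The Claim already fails for $\tau\equiv 1$. There all two-point ratios equal $1$, so the left-hand side is $1$. For $m=2$ your determinant is $d(a_1)d(a_2)+(a_1-a_2)^{-2}$, so you would need $d(a_1)d(a_2)=1-(a_1-a_2)^{-2}$. That is impossible for any diagonal entries depending on $a_i$ alone: the right-hand side vanishes on $a_1-a_2=\pm1$, which would force $d\equiv 0$.

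The structural reason is that the $m$-point quantity at base ${\bf t}$ is not a function of the two-point data at the same base. Letting $a_3\to\infty$ loses nothing, by your cocycle remark. So (\ref{kp5a}) at base ${\bf t}$ is equivalent to $(a-b)\tau({\bf t}+[a^{-1}]+[b^{-1}])\tau({\bf t})=\bigl(\phi(a)-\phi(b)\bigr)\tau({\bf t}+[a^{-1}])\tau({\bf t}+[b^{-1}])$, with $\phi(a)=a-\partial_{t_1}\log\bigl(\tau({\bf t}+[a^{-1}])/\tau({\bf t})\bigr)$. Apply it also at base ${\bf t}+[a_3^{-1}]$, where $\phi$ becomes $\phi(a)-\partial_{t_1}\log\frac{\phi(a)-\phi(a_3)}{a-a_3}$. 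Writing $\tau_{i\dots}$ for values at ${\bf t}+[a_i^{-1}]+\dots$, one gets
\[
\frac{\tau_{123}\,\tau^{2}}{\tau_{1}\tau_{2}\tau_{3}}=\frac{1}{\prod_{i<j}(a_j-a_i)}\,\det\left(\begin{array}{ccc}1&1&1\\ \phi(a_1)&\phi(a_2)&\phi(a_3)\\ \phi^2(a_1)-\partial_{t_1}\phi(a_1)&\phi^2(a_2)-\partial_{t_1}\phi(a_2)&\phi^2(a_3)-\partial_{t_1}\phi(a_3)\end{array}\right).
\]
So $t_1$-derivatives enter unavoidably. A genuine Cauchy determinant of two-point functions exists only for mixed shifts $+[a_i^{-1}]-[b_j^{-1}]$. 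This is also why the one-line residue argument of Proposition \ref{proposition:dkp-eq} has no verbatim dispersionful analogue: $\phi$ is not a base-point-independent coordinate.

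Your architecture survives once the determinant is replaced by a Wronskian in $t_1$; that is the missing idea. Put $u_a=e^{-at_1}\tau({\bf t}+[a^{-1}])/\tau({\bf t})$. For an ordered set $S$ of points, put $\Omega(S)=\prod_{i<j}(a_i-a_j)\,e^{-t_1\sum_{j\in S}a_j}\,\tau\bigl({\bf t}+\sum_{j\in S}[a_j^{-1}]\bigr)/\tau({\bf t})$. The relation above, taken at base ${\bf t}+\sum_{j\in S}[a_j^{-1}]$, is exactly
\[
\Omega(S,a,b)\,\Omega(S)=\Omega(S,a)\,\partial_{t_1}\Omega(S,b)-\Omega(S,b)\,\partial_{t_1}\Omega(S,a).
\]
This is the Desnanot--Jacobi identity for Wronskians, so induction on $|S|$ gives $\Omega(S)=W(u_{a_j}:j\in S)$. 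Insert this into (\ref{kp5}) with $P^-=0$ and divide by $\prod_i\tau({\bf t}+[a_i^{-1}])/\tau({\bf t})^{m-2}$. The relation becomes the expansion, along a row of ones, of the $m\times m$ determinant with rows $1,\;u_{a_j}^{-1}u_{a_j},\;u_{a_j}^{-1}\partial_{t_1}u_{a_j},\dots,u_{a_j}^{-1}\partial_{t_1}^{m-2}u_{a_j}$. That determinant has two equal rows, so it vanishes; this is the degenerate determinant you were after.

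Two smaller points:
\begin{itemize}
\item The density step needs only $P^-=0$, since for $P\ge d$ the power sums $\sum_{i\le P}a_i^{-k}$, $k\le d$, are algebraically independent. The whole $P^-\neq 0$ discussion can therefore be dropped.
\item The mechanism you propose there runs backwards: setting an $a_i$ equal to $b_k$ cancels a negative shift rather than creating one.
\end{itemize}

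Finally, your fallback through the differential Fay identity is valid. But it is precisely the route of \cite{TT95}, which the paper cites instead of giving a proof, so by itself it outsources the core step rather than proving it.
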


\noindent
The direct proof can be found in \cite{Shigyo13}.
Below we shall see that it is drastically simplifies for 
the hierarchy with zero dispersion (Proposition
\ref{proposition:dkp-eq}).

\subsubsection{The $\hbar$-KP hierarchy}
\label{section:hbarKP}

As an intermediate step for passing to the dispersionless version,
we include into play an extra parameter $\hbar$ (a formal dispersion
parameter)
and re-scale the times ${\bf t}$ as
$t_{k}\to t_{k}/\hbar$ for all $k \geq 1$.
Introduce the function 
$F({\bf t}; \hbar )$ related to the tau-function 
by the formula
\beq\label{kp6}
\tau \Bigl (\hbar^{-1}{\bf t}  \Bigr )=
\exp \left ( \frac{1}{\hbar^2}\,
F({\bf t}; \hbar )\right ).
\eeq
The $F$-function satisfies 
an infinite number of highly nonlinear differential equations which are
obtained from the bilinear equations for the 
tau-function rewritten in terms of the
$F$-function\footnote{This form of the KP hierarchy is sometimes
called the $\hbar$-KP hierarchy, see, e.g., \cite{TT95,NZ15}.}.
Of course for any $\hbar \neq 0$ the $\hbar$-KP hierarchy is 
equivalent to the original one corresponding to $\hbar =1$.

To represent the equations from Section \ref{section:KP}
in the form suitable for the
dispersionless limit, we introduce the operators
\beq\label{kp7}
D(z)=\sum_{k\geq 1}\frac{z^{-k}}{k}\,
\p_{t_{k}}, \qquad \Delta_{\hbar}(z)=\frac{e^{\hbar D(z)}-1}{\hbar},
\eeq
then 
\beq\label{kp8}
\tau \Bigl ({\bf t}/\hbar \pm [a^{-1}]\Bigr )=
\exp \left (\frac{1}{\hbar^2}\, e^{\pm \hbar D(a)}
F ({\bf t}; \hbar )\right ),
\eeq
and, more generally,
\beq\label{kp9}
\tau \left (\frac{{\bf t}}{\hbar} +
\sum_{i=1}^{P^+}[a_i^{-1}]-
\sum_{k=1}^{P^-}[b_k^{-1}]\right )
\Bigr )=
\exp \left (\frac{1}{\hbar^2}\, e^{\hbar 
\sum_{i=1}^{P^+}D(a_i)- \hbar \sum_{k=1}^{P^-}D(b_i)}
F ({\bf t}; \hbar )\right ).
\eeq
So, the $\hbar$-version of the general equation (\ref{kp5}) reads
\beq\label{kp5x}
\sum_{s=1}^{P^+}
\left ( \prod_{i=1, \neq s}^{P^+} (a_i-a_s)\right )^{-1}\!\!
\left ( \prod_{k=1}^{P^-} (b_k-a_s)\right )
\exp \left (\frac{1}{\hbar^2}\Bigl (
e^{\hbar (S^+-D(a_s))} + e^{\hbar (S^-+D(a_s))}\Bigr )F\right )=0,
\eeq
where the operators $S^{\pm}$ are
\beq\label{Spm}
S^+ = \sum_{i=1}^{P^+} D(a_i), \qquad
S^- = \sum_{k=1}^{P^-} D(b_k).
\eeq
In particular, equation (\ref{kp5a}) can be rewritten as
\beq\label{kp5b}
(a_1-a_2)e^{\Delta_{\hbar}(a_1)\Delta_{\hbar} (a_2)F} + 
(a_2-a_3)e^{\Delta_{\hbar} (a_2)\Delta_{\hbar}(a_3)F} +
(a_3-a_1)e^{\Delta_{\hbar}(a_3)\Delta_{\hbar}(a_1)F}=0.
\eeq
Letting here $a_3\to \infty$, we obtain the equation
\beq\label{kp5d}
(a_1-a_2)e^{\Delta_{\hbar}(a_1)\Delta_{\hbar} (a_2)F}=
{\sf p}(a_1)-{\sf p}(a_2),
\eeq
where
$$
{\sf p}(z)=z-\Delta_{\hbar}(z)\p_{t_1}F.
$$
It is equivalent to (\ref{kp5b}), which, in its turn, is equivalent
to (\ref{kp5x}).

\subsubsection{The dispersionless limit}
\label{section:lessKP} 

As is well known, there exists a large class of solutions to the
$\hbar$-KP hierarchy for which the $F$-function is regular 
at $\hbar =0$ and admits an 
expansion in powers of $\hbar$ as $\hbar \to 0$ of the 
following form:
\beq\label{exp1}
F({\bf t}; \hbar )=F_0 ({\bf t}) + \hbar F_1 ({\bf t})
+ \hbar^2 F_2 ({\bf t}) + O(\hbar^3), \qquad \hbar \to 0.
\eeq
The dispersionless limit corresponds to $\hbar =0$, 
when only the leading
term 
$$
F_0 =F_0({\bf t})=
\lim\limits_{\hbar \to 0}
F({\bf t}; \hbar )
$$
of the series survives.
Noting that 
$
\displaystyle{
\lim_{\hbar \to 0}\Delta_{\hbar}(z)=D(z),}
$
we immediately obtain the dispersionless version of equation 
(\ref{kp5a}):
\beq\label{kp11}
(a_1-a_2)e^{D(a_1)D(a_2)F_0} + (a_2-a_3)e^{D(a_2)D(a_3)F_0} +
(a_3-a_1)e^{D(a_3)D(a_1)F_0}=0.
\eeq
In the general case, the $\hbar \to 0$ limit of (\ref{kp5x})
is as follows:
\beq\label{kp10}
\sum_{i=1}^{P^+}
\left (\prod_{k=1, \neq i}^{P^+} (a_k-a_i)e^{D(a_i)D(a_k)F_0}\right )^{-1}
\left (\prod_{j=1}^{P^-} (b_j-a_i)e^{D(a_i)D(b_j)F_0}\right )=0.
\eeq
Equation (\ref{kp11}) is the simplest non-trivial case 
$(P^+, P^-)=(3,0)$ of it.
Letting $a_3\to \infty$ in (\ref{kp11}), we have:
\beq\label{kp12}
(a_1-a_2)e^{D(a_1)D(a_2)F_0} =p(a_1)-p(a_2),
\eeq
where
\beq\label{kp13}
p(z)=z-D(z)\p_{t_1}F_0.
\eeq
In fact (\ref{kp12}) is equivalent to (\ref{kp11}): summing 
equations of the form (\ref{kp12}) written for the pairs of points
$\{a_1, a_2\}$, $\{a_2, a_3\}$ and
$\{a_3, a_1\}$, we get (\ref{kp11}). 

\begin{proposition}\label{proposition:dkp-eq}
Equation (\ref{kp12}) itself 
is equivalent to the whole hierarchy
(\ref{kp10}).
\end{proposition}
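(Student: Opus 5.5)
Equation (\ref{kp12}) immediately gives closed expressions for all the exponentials that enter (\ref{kp10}):
$$
e^{D(a)D(b)F_0}=\frac{p(a)-p(b)}{a-b}
$$
for any pair of Miwa variables. The plan is to substitute this into (\ref{kp10}) term by term. The factors $(a_k-a_i)$ and $(b_j-a_i)$ cancel against the denominators, and (\ref{kp10}) becomes
\beq\label{planpkp}
\sum_{i=1}^{P^+}\frac{\prod_{j=1}^{P^-}\bigl(p(b_j)-p(a_i)\bigr)}{\prod_{k\neq i}\bigl(p(a_k)-p(a_i)\bigr)}=0 .
\eeq
Denote $x_i=p(a_i)$ and $y_j=p(b_j)$. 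Then (\ref{planpkp}) is the identity
$$
\sum_i \mathrm{Res}_{x=x_i}\,\frac{\prod_j (y_j-x)}{\prod_k (x_k-x)}\,dx=0 ,
$$
up to an overall sign. It holds because the rational differential $\prod_j(y_j-x)\prod_k(x_k-x)^{-1}dx$ has poles only at the $x_i$ and at $\infty$. Its residue at $\infty$ vanishes exactly when $P^+-P^-\geq 2$, which is assumption (\ref{kp4}). In the same way this is the partial fraction identity $\sum_i x_i^m/\prod_{k\ne i}(x_i-x_k)=0$ for $0\le m\le P^+-2$. So (\ref{kp12}) implies every equation (\ref{kp10}).

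For the converse, I would take the special case $(P^+,P^-)=(3,0)$, i.e. (\ref{kp11}), which belongs to the hierarchy (\ref{kp10}). I would then let $a_3\to\infty$ and expand $e^{D(a_3)D(a_i)F_0}=1+a_3^{-1}\p_{t_1}D(a_i)F_0+O(a_3^{-2})$. Collecting terms, the $O(1)$ part is $(a_1-a_2)e^{D(a_1)D(a_2)F_0}$ plus $a_2-a_1$ plus the subleading corrections $-D(a_2)\p_{t_1}F_0+D(a_1)\p_{t_1}F_0$. The $O(a_3)$ terms cancel, and what remains is exactly (\ref{kp12}) with $p$ as in (\ref{kp13}). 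Hence the two formulations are equivalent.

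The main points needing care are the following. First, the substitution step relies on $p(a_i)$ being pairwise distinct for distinct $a_i$; this holds near infinity since $p(z)=z+O(z^{-1})$. Second, the degenerate limits when Miwa variables coincide follow by continuity. I expect no genuine obstacle beyond keeping track of signs in the residue identity.
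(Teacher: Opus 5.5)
Your proposal is correct and follows the paper's proof: substituting (\ref{kp12}) into (\ref{kp10}) reduces it to the same rational-function residue identity, whose residue at infinity vanishes because $P^+-P^-\geq 2$. Your converse direction, specializing to the $(3,0)$ case (\ref{kp11}) and letting $a_3\to\infty$, is exactly how the paper derives (\ref{kp12}) in the text just before the proposition, so nothing is missing.
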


\noindent
{\it Proof.}
Plugging (\ref{kp12}) into (\ref{kp10}), we 
have the identity
$$
\sum_{i=1}^{P^+}
\frac{\prod\limits_{j=1}^{P^-} 
(p(b_j)-p(a_i))}{\prod\limits_{k=1, \neq i}^{P^+} (p(a_k)-p(a_i))}
=0
$$
which does hold for all $P^{+}$, $P^{-}$ such that 
$P^{+}-P^- \geq 2$ because its left-hand side is the sum of residues
of the rational function
$$
f(p)=(-1)^{P^+ -P^-}\, 
\frac{\prod\limits_{j=1}^{P^-}
(p-p(b_j))}{\prod\limits_{k=1}^{P^+}(p-p(a_k))}
$$
(for $P^{+}-P^- \geq 2$ there is no residue at $\infty$).
\square

\subsubsection{The $F_1$-function}
\label{section:F1}

Later, in the more complicated case of the 
large BKP hierarchy, we will need some information about 
the next-to-leading term of the series
(\ref{exp1}), $F_1$. 

\begin{proposition}\label{proposition:F1KP}
In the case of the KP hierarchy, the function $F_1$ satisfies
the homogeneous linear equation
\beq\label{exp3}
\sum_{s=1}^{P^+}A_s 
e^{(S^- - S^+ +D(a_s))D(a_s) F_0}(S^- \! - \! S^+ +D(a_s))D(a_s)F_1=0,
\eeq
where 
$$
A_s=
\left (\prod_{i=1, \neq s}^{P^+} (a_i-a_s)\right )^{-1}\!\!
\left ( \prod_{k=1}^{P^-} (b_k-a_s)\right )
$$
and
the operators $S^{\pm}$ are given in (\ref{Spm}).
\end{proposition}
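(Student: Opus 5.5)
The plan is to expand the $\hbar$-version (\ref{kp5x}) of the general bilinear relation in powers of $\hbar$, with $F=F_0+\hbar F_1+\hbar^2F_2+O(\hbar^3)$ as in (\ref{exp1}). The leading order should reproduce (\ref{kp10}), and the next order should give (\ref{exp3}) once a term that is a derivative of (\ref{kp10}) is discarded. Throughout, write $D_s\equiv D(a_s)$ and
$$
E_s=(S^- - S^+ + D_s)D_s F_0 .
$$

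First I will expand the exponent in the $s$-th term of (\ref{kp5x}):
$$
\frac{1}{\hbar^2}\Bigl(e^{\hbar(S^+-D_s)}+e^{\hbar(S^-+D_s)}\Bigr)F
=\frac{2F}{\hbar^2}+\frac{(S^++S^-)F}{\hbar}
+\frac12\Bigl((S^+-D_s)^2+(S^-+D_s)^2\Bigr)F
+\frac{\hbar}{6}\Bigl((S^+-D_s)^3+(S^-+D_s)^3\Bigr)F+O(\hbar^2).
$$
The key step is to separate the $s$-independent pieces, which give a common nonzero exponential factor that can be divided out of the whole sum, from the $s$-dependent pieces. The terms $2F/\hbar^2$ and $(S^++S^-)F/\hbar$ are entirely $s$-independent. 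This holds even after inserting the expansion of $F$, so $F_2$ enters only through the common factor.

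At order $\hbar^0$, the quadratic bracket equals
$$
\tfrac12\bigl(S^{+2}+S^{-2}\bigr)+(S^--S^++D_s)D_s ,
$$
so its $s$-dependent part acting on $F_0$ is exactly $E_s$. Dividing out the common factor and letting $\hbar\to0$ gives $\sum_s A_s e^{E_s}=0$, which is (\ref{kp10}).

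At order $\hbar^1$ there are two sources of $s$-dependence. One is the quadratic operator acting on $\hbar F_1$, which gives $(S^--S^++D_s)D_sF_1$. The other is the cubic bracket acting on $F_0$. A short expansion, in which the $D_s^3$ terms cancel, shows that the $s$-dependent part of the cubic bracket is
$$
\tfrac12\bigl(3S^{-2}D_s-3S^{+2}D_s+3(S^++S^-)D_s^2\bigr)\cdot\tfrac13
=\tfrac12(S^++S^-)(S^--S^++D_s)D_s ,
$$
and the remainder $\tfrac16(S^{+3}+S^{-3})$ is $s$-independent. Hence, after dividing out the common exponential, the order-$\hbar$ coefficient of (\ref{kp5x}) reads
$$
\sum_{s=1}^{P^+} A_s e^{E_s}\Bigl[(S^--S^++D_s)D_sF_1+\tfrac12(S^++S^-)E_s\Bigr]=0 .
$$

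The remaining step, which is the only non-mechanical one, is to eliminate the $F_0$-term. The coefficients $A_s$ do not depend on the times, and $(S^++S^-)$ is a first-order differential operator in ${\bf t}$. Applying $\tfrac12(S^++S^-)$ to the already established identity $\sum_s A_s e^{E_s}=0$ therefore gives
$$
\sum_s A_s e^{E_s}\,\tfrac12(S^++S^-)E_s=0 .
$$
Subtracting this from the order-$\hbar$ coefficient leaves precisely (\ref{exp3}).

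I expect the main obstacle to be only bookkeeping. One must make sure that the cubic $F_0$-contribution, including the cross terms between $S^\pm$ and $D_s$, indeed factors as $(S^++S^-)$ acting on $E_s$, so that no stray $F_0$-terms remain. One must also check that $F_2$ genuinely drops out at this order, and that the operators commute, which holds since they are constant-coefficient.
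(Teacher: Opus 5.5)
Your proposal is correct and is essentially the paper's own proof. Like the paper, you expand (\ref{kp5x}) to first order in $\hbar$ and identify the order-$\hbar$ $F_0$-contribution as $\tfrac{\hbar}{2}R_sF_0$ with $R_s=(S^++S^-)(S^--S^++D(a_s))D(a_s)$, which vanishes after summation over $s$ because it is $\tfrac12(S^++S^-)$ applied to (\ref{kp10}); your explicit cubic-term computation and the check that $F_2$ enters only through the common $s$-independent factor fill in details the paper leaves implicit.
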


\noindent
{\it Proof.}
To prove the  equation for $F_1$, we should
expand the general equation (\ref{kp5x}) up to the next-to-leading
order as $\hbar \to 0$. The procedure is straightforward and the
result is
\beq\label{exp2}
\begin{array}{l}
\displaystyle{
\sum_{s=1}^{P^+}A_s 
\exp \left [ (S^- \! -\!  S^+ \! +\! D(a_s))D(a_s) F_0 + 
\frac{\hbar}{2}R_s F_0 \right. }
\\ \\
\displaystyle{
\phantom{aaaaaaaaaaa}\left. \vphantom{\frac{\hbar}{2}}+\, \hbar 
(S^-\!  - \! S^+ \! +\! D(a_s))
D(a_s) F_1  + O(\hbar^2) \right ]=0,}
\end{array}
\eeq
where $R_s$ is the operator
$$
R_s=
(S^+ \! +\!  S^-)(D(a_s)+ S^-\! - \! S^+) D(a_s).
$$
Expanding (\ref{exp2}) up to terms of order $\hbar$, we get:
$$
\begin{array}{l}
\displaystyle{
\sum_{s=1}^{P^+}A_s 
e^{(S^- - S^+ +D(a_s))D(a_s) F_0}+ \frac{\hbar}{2}
\sum_{s=1}^{P^+}A_s 
e^{(S^- - S^+)D(a_s) F_0} R_s F_0}
\\ \\
\phantom{aaaaaaa}
\displaystyle{
+\, \hbar \sum_{s=1}^{P^+}A_s 
e^{(S^- - S^+ +D(a_s))D(a_s) F_0}(S^- - S^+ +D(a_s))D(a_s)F_1}=0.
\end{array}
$$
The first term is zero because it is the left-hand side of
equation (\ref{kp10}). The second term vanishes, too, because
it is the result of acting to the
first one  by the operator $S^+ + S^-$.  The rest is just the 
linear equation (\ref{exp2}).
\square

\begin{remark}
Let $v$ be any parameter on which a general solution $F_0$ 
to the dispersionless hierarchy depends: 
$F_0=F_0({\bf t}; v )$. Then the function
$F_1=\p_{v} F_0$ solves equation (\ref{exp3}) since its
left-hand side is just the $v$-derivative of (\ref{kp10}).
Presumably, it is a general solution for $F_1$, i.e., all solutions
of (\ref{exp3}) are of this form with a suitable parameter $v$.
\end{remark}

\subsection{The dmKP hierarchy}
\label{section:dmKP}

Next we consider the mKP hierarchy and its dispersionless limit (dmKP).
Our aim in this (still rather simple) example 
is to show how a less trivial rational curve emerges.

\subsubsection{The mKP hierarchy}
\label{section:mKP}

In the mKP hierarchy,
the independent variables are the same ${\bf t}=\{t_1, t_2, t_3, \ldots \}$
and a discrete variable $m\in \ZZ$.
The tau-function $\tau (m, {\bf t})$ satisfies the integral bilinear
relation
\beq\label{mkp1}
\oint_{C_{\infty}} \frac{dz}{z^2} \, z^{m-m'}
e^{\xi ({\bf t}-{\bf t'}, z)} 
\tau \Bigl (m-1, {\bf t}-[z^{-1}]\Bigr )
\tau \Bigl (m'+1, {\bf t'}+[z^{-1}]\Bigr ) \, =0,
\eeq
which is valid for all ${\bf t}$, ${\bf t}'$ and $m, m'\in \ZZ$ such that
$m-m'\geq 2$. The simplest solution is $\tau (m, {\bf t})=1$.

In the fermionic approach, the tau-function is defined as the
following expectation value:
\beq\label{taummkp1}
\tau (m, {\bf t})=
\lbr m\bigl | \, e^{J({\bf t})} g \bigr | \, m \rbr,
\eeq
where 
\beq\label{g01}
g=\exp \Bigl (\sum_{i,j\in \z} A_{ij} \psi_{i}\psi_j^*\Bigr )
\eeq
is a Clifford group element with zero charge. For the
simplest solution $g=1$.

The analog of the substitution (\ref{kp2}) in this case is
\beq\label{mkp2}
\left \{\begin{array}{l}
\displaystyle{
m-m'=P^+ -P^-,}
\\ \\
\displaystyle{
{\bf t}-{\bf t}' =\sum_{i=1}^{P^+}[a_i^{-1}]-
\sum_{k=1}^{P^-}[b_k^{-1}],}
\end{array}\right.
\eeq
where we again assume that all the points $a_i, b_j$ are 
distinct.
Then we have:
$$
z^{P^+ -P^-} e^{\xi ({\bf t}-{\bf t'}, z)} =
\prod_{i=1}^{P^+} \left (z^{-1}-a_i^{-1}\right )^{-1}
\prod_{j=1}^{P^-} \left (z^{-1}-b_j^{-1}\right )
$$
and the integral in (\ref{mkp1}) is reduced to sum of residues at
the simple poles $z=a_i$ (note that the substitution (\ref{mkp2}) 
implies that the residue at $\infty$ vanishes). The result is
\beq\label{mkp3}
\begin{array}{l}
\displaystyle{
\sum_{s=1}^{P^+} \prod_{i=1, \neq s}^{P^+} E^{-1}(a_s, a_i) \,
\prod_{k=1}^{P^-} E(a_s, b_k)\,
\tau \Bigl (m+P^+ \! -\! 1, {\bf t}+\sum_{i\neq s}^{P^+}
[a_i^{-1}]\Bigr )}
\\ \\
\phantom{aaaaaaaaaaaaaaaaaaaaaaaaa}\displaystyle{\times \, 
\tau \Bigl (m+P^- \! +\! 1, {\bf t}+[a_s^{-1}]+ \sum_{k=1}^{P^-}
[b_k^{-1}]\Bigr )=0},
\end{array}
\eeq
where the convenient short-hand notation
\beq\label{mkp4}
E(a,b)=a^{-1}-b^{-1}
\eeq
is used. 
Note that although this equation was derived under assumption that
all the points $a_i$ are distinct and finite, it still holds in  
degenerate cases, too, when some of the points merge or tend to infinity.
In such cases, some terms of the equation may
become singular, if one considers them separately. In the full
expression, the singularities 
can be resolved, and, as a result, 
derivatives of the tau-functions with respect to 
continuous times arise in this way.

If $P^+ - P^{-}=2$, then the $m$-variables 
in both tau-functions in (\ref{mkp3}) are the same, and so 
equation (\ref{mkp3}) for $\tau (m, {\bf t})$ coincides (for each $m$)
with the generating bilinear equation (\ref{kp5}) for the KP hierarchy.
The Hirota-Miwa equations specific for the mKP hierarchy are obtained 
if $P^+ - P^{-}>2$.

For $(P^+, P^-)=(3,1)$ we have the equation
\beq\label{mkp5}
\begin{array}{l}
\phantom{+\,}E^{-1}(a_1, a_2)E^{-1}(a_1, a_3)E(a_1, b_1)
\tau \Bigl (m, {\bf t}+[a_2^{-1}]+[a_3^{-1}]\Bigr )
\tau \Bigl (m, {\bf t}+[a_1^{-1}]+[b_1^{-1}]\Bigr )
\\ \\
+\, E^{-1}(a_2, a_1)E^{-1}(a_2, a_3)E(a_2, b_1)
\tau \Bigl (m, {\bf t}+[a_1^{-1}]+[a_3^{-1}]\Bigr )
\tau \Bigl (m, {\bf t}+[a_2^{-1}]+[b_1^{-1}]\Bigr )
\\ \\
+\,E^{-1}(a_3, a_1)E^{-1}(a_3, a_2)E(a_3, b_1)
\tau \Bigl (m, {\bf t}+[a_1^{-1}]+[a_2^{-1}]\Bigr )
\tau \Bigl (m, {\bf t}+[a_3^{-1}]+[b_1^{-1}]\Bigr )\, = \, 0
\end{array}
\eeq
(which is equivalent to (\ref{kp5a}) and to the whole KP hierarchy),
while for $(P^+, P^-)=(3,0)$ the equation is
\beq\label{mkp6}
\begin{array}{l}
\phantom{+\,}E^{-1}(a_1, a_2)E^{-1}(a_1, a_3)
\tau \Bigl (m+1, {\bf t}+[a_2^{-1}]+[a_3^{-1}]\Bigr )
\tau \Bigl (m, {\bf t}+[a_1^{-1}]\Bigr )
\\ \\
+\, E^{-1}(a_2, a_1)E^{-1}(a_2, a_3)
\tau \Bigl (m+1, {\bf t}+[a_1^{-1}]+[a_3^{-1}]\Bigr )
\tau \Bigl (m, {\bf t}+[a_2^{-1}]\Bigr )
\\ \\
+\,E^{-1}(a_3, a_1)E^{-1}(a_3, a_2)
\tau \Bigl (m+1, {\bf t}+[a_1^{-1}]+[a_2^{-1}]\Bigr )
\tau \Bigl (m, {\bf t}+[a_3^{-1}]\Bigr )\, = \, 0.
\end{array}
\eeq

\begin{theorem}\cite{Shigyo13}\label{theorem:mkp}
Equation (\ref{mkp6}) is equivalent to the whole mKP hierarchy
defined by (\ref{mkp1}).
\end{theorem}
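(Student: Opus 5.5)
Direction (\ref{mkp1})$\Rightarrow$(\ref{mkp6}) is immediate: (\ref{mkp6}) is the case $(P^+,P^-)=(3,0)$ of (\ref{mkp3}), which follows from (\ref{mkp1}) by the Miwa substitution (\ref{mkp2}) and residue calculus. The substance is the converse. Here I would follow the strategy used for Theorem \ref{theorem:kp} in \cite{Shigyo13}: show that (\ref{mkp6}) implies every relation (\ref{mkp3}) with $P^+-P^-\geq 2$. Then I would argue that this family is equivalent to (\ref{mkp1}). The reason is that both sides of (\ref{mkp1}) are analytic in ${\bf t}-{\bf t}'$, and Miwa-type shifts $\sum[a_i^{-1}]-\sum[b_k^{-1}]$ with $P^\pm$ arbitrary and $m-m'=P^+-P^-$ are dense enough to recover all Taylor coefficients.

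The first step is to extract the KP part. In (\ref{mkp6}) I would let one point, say $a_3\to\infty$, and take the next order in $a_3^{-1}$. This relates the tau-functions at $m$ and $m+1$ and should reproduce the $(P^+,P^-)=(3,1)$ equation (\ref{mkp5}) for each fixed $m$. The mechanism would be a "merging" argument: write (\ref{mkp6}) at shifted arguments ${\bf t}\to{\bf t}+[b^{-1}]$ and combine two copies so that the $m+1$ tau-functions cancel. By Theorem \ref{theorem:kp}, (\ref{mkp5}) is equivalent to (\ref{kp5a}), so $\tau(m,\cdot)$ is a KP tau-function for each $m$. That gives all relations with $P^+-P^-=2$.

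Next I would induct on $P^+$ with $d=P^+-P^-\geq 2$ fixed, mimicking Shigyo's induction. For $d=3$, (\ref{mkp6}) is the base. For $P^->0$ the general relation follows from the base by the standard telescoping/partial-fraction identity in the Miwa variables. The prefactors $\prod E^{-1}\prod E$ are exactly the residues of $\prod_i(z^{-1}-a_i^{-1})^{-1}\prod_k(z^{-1}-b_k^{-1})$. I would add one pair $(a,b)$ at a time, writing the new equation as a linear combination of previously established ones plus a bilinear expression that vanishes by the $d=2$ (KP) relations.

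For larger $d$ I would increase $d$ by iterating (\ref{mkp6}) in the charge variable. The chain $m\to m+1\to m+2$ combined with the KP equations should yield the $d=4$ relations, and so on. Alternatively, I would use the fermionic picture: (\ref{mkp6}) together with KP says that $\tau(m+1)$ is a Darboux/Bäcklund transform of $\tau(m)$, so the wave functions $\psi_m$ satisfy $\psi_{m+1}\propto(\partial-\text{const})\psi_m$. With that, (\ref{mkp1}) can be verified directly through the bilinear identity for wave functions.

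The main obstacle is this last step, promoting the single $d=3$ relation to all charge differences $m-m'\geq 2$. The combinatorics of the prefactors is routine, but showing that the adjoint-wave-function relation closes under composition of successive charge shifts needs care. My first attempt would be the wave-function route: introduce $w_m(z)=z^m e^{\xi}\tau(m,{\bf t}-[z^{-1}])/\tau(m,{\bf t})$ and the adjoint $w^*_m$. From (\ref{mkp6}) I would derive the linear relation $w_{m+1}=(\partial_{t_1}-v_m)w_m$ with $v_m=\partial_{t_1}\log(\tau(m+1)/\tau(m))$. Iterating gives $w_m=L\,w_{m'}$ with $L$ a differential operator of order $m-m'$, and I would deduce (\ref{mkp1}) from the KP bilinear identity for $m'$.
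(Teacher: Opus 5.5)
The gap is in your first step, the claim that (\ref{mkp6}) yields KP for each $\tau(m,\cdot)$. Both of your routes depend on it.

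Letting $a_3\to\infty$ in (\ref{mkp6}) needs no next-order expansion, because the limit is already nontrivial. It also does not produce (\ref{mkp5}). With $x_i=a_i^{-1}$, $U=\tau(m+1,\cdot)$ and $V=\tau(m,\cdot)$ it gives
\[
(x_1-x_2)\,U(t+[x_1]+[x_2])\,V(t)=x_1\,U(t+[x_1])\,V(t+[x_2])-x_2\,U(t+[x_2])\,V(t+[x_1]).
\]
This still couples $m$ and $m+1$; in fact it is equivalent to (\ref{mkp6}).

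Nor can you ``combine two copies so that the $m+1$ tau-functions cancel.'' Every term of every shifted copy or limit has exactly one factor $U$, so no linear combination yields an identity quadratic in $V$ alone. Your induction step, where a bilinear expression in $\tau(m+1,\cdot),\tau(m,\cdot)$ is supposed to vanish ``by the KP relations,'' faces the same obstruction. The KP relations hold for arbitrary unrelated pairs of KP tau-functions, so they cannot account for a $U\!\cdot\!V$ identity.

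What works is a nonlinear compatibility argument:
\begin{enumerate}
\item Compute $U(t+[x_1]+[x_2]+[x_3])$ from the displayed relation at base point $t+[x_3]$ (pair $x_1,x_2$) and at base point $t+[x_2]$ (pair $x_1,x_3$).
\item Equate the two expressions and multiply by $V(t)$.
\item Eliminate the $U(t+[x_i]+[x_j])$ using the relation at base $t$.
\end{enumerate}
The $U(t+[x_1])$ terms cancel, and what remains factorizes as
\[
V(t)\,U(t+[x_2]+[x_3])\,\Bigl(x_1(x_2-x_3)V^{[1]}V^{[23]}+x_2(x_3-x_1)V^{[2]}V^{[13]}+x_3(x_1-x_2)V^{[3]}V^{[12]}\Bigr)=0 .
\]
Here $V^{[i]}=V(t+[x_i])$ and $V^{[ij]}=V(t+[x_i]+[x_j])$. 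The bracket is $-x_1x_2x_3$ times the left-hand side of (\ref{kp5a}) for $\tau(m,\cdot)$, and then Theorem \ref{theorem:kp} applies.

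Dividing out the prefactor requires $\tau(m,\cdot)\not\equiv 0$ and $\tau(m+1,\cdot)\not\equiv 0$. The theorem tacitly needs some such nondegeneracy anyway. If only one $\tau(m,\cdot)$ is nonzero, (\ref{mkp6}) is empty, while (\ref{mkp1}) with $m-m'=2$ still demands KP.

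Once KP for every $\tau(m,\cdot)$ is secured this way, your last paragraph is a complete proof. The induction on $P^\pm$, the telescoping and charge-iteration steps, and the density argument can all be dropped. The steps are:
\begin{itemize}
\item Take the displayed relation at base point $t-[x_1]-[x_2]$, with $x_2=z^{-1}$, and expand to first order in $x_1$. This gives exactly $w_{m+1}=(\partial_{t_1}-v_m)w_m$ with your $v_m$.
\item Iterate to get $w_M=L\,w_{M'}$ for $M\geq M'$, where $L$ acts only on $t$.
\item Then $\oint w_M(t,z)\,w^*_{M'}(t',z)\,dz=L_t\oint w_{M'}(t,z)\,w^*_{M'}(t',z)\,dz=0$ by KP for $\tau(M',\cdot)$.
\item Multiply by $\tau(M,t)\tau(M',t')$ and set $M=m-1$, $M'=m'+1$; this is (\ref{mkp1}).
\end{itemize}
The paper itself gives no argument for this theorem, only a reference to Shigyo's work. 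So the route above, with the corrected first step, stands on its own.
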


\noindent
The direct proof is given in \cite{Shigyo13}.

\subsubsection{The dispersionless limit}
\label{section:ddmKP}

As before, the passage to the dispersionless version of the 
hierarchy consists in the re-scaling
$t_{k}\to t_{k}/\hbar$ for all $k \geq 1$,
$m\to \tilde t_0/\hbar$ and letting $\hbar \to 0$: 
$$
F_0=F(\tilde t_0, {\bf t})=
\lim\limits_{\hbar \to 0} \Bigl (
\hbar^2 \log 
\tau \Bigl (\hbar^{-1}\tilde t_0, \hbar^{-1}{\bf t}  \Bigr )\Bigr ).
$$
(The tilde above $t_0$ is introduced here in order to 
distinguish it from similar variables that will appear later.)
Note that in the dispersionless limit the variable $\tilde t_0$ becomes 
continuous, and the former difference equations containing it become
differential\footnote{
To simplify the notation, in what follows, 
we will denote the extended set of times $\{\tilde t_0, {\bf t}\}$ simply 
as ${\bf t}$.}. 
It is convenient to introduce the differential operator
\beq\label{mkp7}
\tilde \nabla (z)=\p_{\tilde t_0}+D(z)=
\p_{\tilde t_0}+\sum_{k\geq 1}\frac{z^{-k}}{k}\,
\p_{t_{k}},
\eeq
then 
\beq\label{mkp8}
\tau \Bigl (\tilde t_0/\hbar \pm 1, {\bf t}/\hbar \pm [a^{-1}]\Bigr )=
\exp \left (\frac{1}{\hbar^2}\, e^{\pm \hbar \tilde \nabla (a)}F 
({\bf t}; \hbar )\right ).
\eeq
The limit of equation ({\ref{mkp3}) is
\beq\label{mkp10}
\sum_{s=1}^{P^+}
\left (\prod_{i=1, \neq s}^{P^+} E(a_s, a_i)
e^{\tilde \nabla (a_i)\tilde \nabla (a_s)F}\right )^{-1}
\left (\prod_{k=1}^{P^-}E(a_s, b_k)e^{\tilde \nabla (a_s) 
\tilde \nabla (b_k)F}\right )=0.
\eeq
(Here and below in this section we, for simplicity 
of the notation, write $F$ instead of $F_0$. This can not lead
to a misunderstanding since $F_1$ will not appear in this section.)
In particular, the limiting forms of equations (\ref{mkp5}), (\ref{mkp6})
are:
\beq\label{mkp5a}
\begin{array}{l}
E(a_1, a_2)E(a_3, b_1) e^{\tilde \nabla (a_1)\tilde \nabla (a_2)F
+\tilde \nabla (a_3)\tilde \nabla (b_1)F}
+ E(a_2, a_3)E(a_1, b_1) e^{\tilde \nabla (a_2)\tilde \nabla (a_3)F
+\tilde \nabla (a_1)\tilde \nabla (b_1)F}
\\ \\
\phantom{aaaaaaaaaaaaaaaaaaaaa}
+ \, E(a_3, a_1)E(a_2, b_1) e^{\tilde \nabla (a_1)\tilde \nabla (a_3)F
+\tilde \nabla (a_2)\tilde \nabla (b_1)F}=0,
\end{array}
\eeq
\beq\label{mkp6a}
\begin{array}{l}
E(a_1, a_2)e^{\tilde \nabla (a_1)\tilde \nabla (a_2)F}
+ E(a_2, a_3) e^{\tilde \nabla (a_2)\tilde \nabla (a_3)F}
+ E(a_3, a_1)e^{\tilde \nabla (a_1)\tilde \nabla (a_3)F}=0,
\end{array}
\eeq
These equations hold for all $a_1, a_2, a_3, b_1 \in \CC$.

\subsubsection{The dynamical curve: first appearance}

To represent equations (\ref{mkp5a}), (\ref{mkp6a}) 
in a more visual form, we introduce
the notation
\beq\label{mkp11}
g(a,b) =E(a,b)e^{\tilde \nabla (a)\tilde \nabla (b)F},
\quad
\tilde w(z)=z^{-1}e^{\tilde \nabla (z)\p_{t_0}F}=g(z, \infty ).
\eeq
Letting $b\to \infty$ in (\ref{mkp5a}) and putting $a_1=a$, $a_2=b$,
$a_3=c$, we write the system of equations as
\beq\label{mkp12}
\left \{
\begin{array}{l}
g(a,b)\tilde w(c)+g(b,c)\tilde w(a)+g(c,a)\tilde w(b)=0,
\\ \\
g(a,b)+g(b,c)+g(c,a)=0.
\end{array}
\right.
\eeq
The limit $c\to \infty$ in the first equation yields:
\beq\label{mkp13}
\left \{
\begin{array}{l}
g(a,b)=\tilde R_0^{-1}
\tilde w(a)\tilde w(b)\Bigl (\tilde p(b)-\tilde  
p(a)\Bigr ), 
\\ \\
g(a,b)=\tilde w(a)-\tilde w(b),
\end{array}
\right.
\eeq
where $\tilde R_0=e^{\tilde \p_0^2 F}$, 
$\tilde \p_0 =\p_{\tilde t_0}$ and
\beq\label{mkp14}
\tilde p(z)=z-\tilde \nabla (z)\p_{t_1}F.
\eeq

\begin{remark}
The first equation in this system is equivalent
to the dKP hierarchy. Indeed, substituting $\tilde w(z)$ from
(\ref{mkp11}), one can see that terms with
$\tilde t_0$-derivatives cancel and 
this equation becomes the same as
(\ref{kp12}). This means that the $F$-function with any fixed 
$\tilde t_0$ is a solution of the dKP hierarchy 
(as a function of all other times). Moreover, the second
equation in (\ref{mkp13}) can be equivalently rewritten
as
\beq\label{mkp13d}
(a-b)e^{D(a)D(b)F}= e^{\tilde \p_0^2F}\Bigl (\tilde w^{-1}(a)-
\tilde w^{-1}(b)\Bigr ).
\eeq
Summing such equations for the pairs $(a,b)$, $(b,c)$, $(c,a)$,
one excludes the $\tilde t_0$-derivatives and thus 
obtains the dKP hierarchy
in the form (\ref{kp11}).
\end{remark}

Differential equations of the dmKP hierarchy are obtained from
expansion of the second equation in (\ref{mkp13}) in inverse powers
of $a,b \to \infty$. For example, the simplest such equation reads
\beq\label{L7a}
F_{\tilde 02}-2F_{11}-F_{\tilde 01}^2=0,
\eeq
where we denote $F_{mn}\equiv \p_{t_m}\p_{t_n}F$ (in particular,
$F_{\tilde 01}=\p_{\tilde t_0 t_1}F$, and similarly for $F_{\tilde 02}$).

Now we are ready to show how the dynamical curve appears in this
simple example.
Equating the right-hand sides of equations (\ref{mkp13}), we have:
$$
p(a)-\tilde R_0 \tilde w^{-1}(a)=p(b)-\tilde R_0 \tilde w^{-1}(b)
$$
from which it follows that $p(z)-\tilde R_0 \tilde w^{-1}(z)\equiv C$ does 
not depend on $z$. Letting $z\to \infty$, 
we conclude that $C=0$, i.e.,
it holds
\beq\label{mkp15}
\tilde w(z) \tilde p(z)-\tilde R_0=0.
\eeq
Set $x=\tilde w(z) $, $y=\tilde p(z)$, then the equation is
$P(x,y)=0$ with the polynomial $P(x,y)=xy-\tilde R_0$ of degree 2.
This polynomial equation
defines a rational smooth algebraic curve. 
The curve is equipped with 
the local parameter $z^{-1}$ in a neighborhood of infinity.
This is the simplest example of how dynamical curves 
arise in dispersionless hierarchies.
In this particular case, there is no evident benefit from the curve.
Its key role will be revealed later in less trivial examples.

\section{Multi-component dKP and dmKP:
rational dynamical curve and 
tri\-go\-no\-met\-ric uniformization}

\subsection{The multi-component KP and mKP hierarchies}
\label{section:mmKP}

Our next example is the multi-component mKP hierarchy introduced
in \cite{Z19}\footnote{Here we deal with a slightly more general 
version of it.}.
In the $N$-component mKP hierarchy the independent variables are
$N$ infinite sets 
\beq\label{mmkp1}
{\bf t}_{\alpha}=\{t_{\alpha , 1}, t_{\alpha , 2}, t_{\alpha , 3}, 
\ldots \, \}, \quad \alpha =1, \ldots , N
\eeq
of continuous ``times'' (in general complex numbers) 
and two finite sets of integer variables
$$
{\bf m}=\{m_1, \ldots , m_N\}, \quad
{\bf n}=\{n_1, \ldots , n_N\}, \quad m_{\alpha},  n_{\alpha}
\in \ZZ 
$$
such that
\beq\label{mmkp1a}
|{\bf n}|\equiv \sum_{\alpha =1}^N n_{\alpha} =0.
\eeq
In what follows we abbreviate the full set of continuous times as
$
{\bf t}=\{{\bf t}_1, {\bf t}_2, \ldots , {\bf t}_N\}.
$

In the fermionic approach, the tau-function is defined as the
following expectation value:
\beq\label{taummkp}
\tau ({\bf n}, {\bf m}, {\bf t})=
\lbr {\bf n}\! +\! {\bf m}\bigl | \, e^{J({\bf t})} 
g \bigr | \, {\bf m}\rbr,
\eeq
where 
\beq\label{g0}
g=\exp \Bigl (\sum_{i,j\in \z} \sum_{\alpha , \beta} 
A_{ij}^{(\alpha \beta )}
\psi_{i}^{(\alpha )}\psi_j^{*(\beta )}\Bigr )
\eeq
is a neutral Clifford group element.

The integral bilinear relation for the tau-function has the form
\beq\label{mmkp2}
\begin{array}{l}
\displaystyle{
\sum_{\gamma =1}\epsilon_{\gamma}({\bf n}\! +\! {\bf m})
\epsilon_{\gamma}({\bf n'}\! +\! {\bf m'})
\oint_{C_{\infty}} \frac{dz}{z^2}\, z^{n_{\gamma}-n_{\gamma}' +
m_{\gamma}-m_{\gamma}'}\,
e^{\xi ({\bf t}_{\gamma}-{\bf t'}_{\gamma}, z)}}
\\ \\
\phantom{aaaaaaaaaaaaaaa}\displaystyle{
\times \, \tau \Bigl ({\bf n}\! -\! {\bf e}_{\gamma}, 
{\bf m}, {\bf t}-[z^{-1}]_{\gamma} \Bigr )
\tau \Bigl (
{\bf n'}\! +\! {\bf e}_{\gamma}, {\bf m'}, 
{\bf t'}+[z^{-1}]_{\gamma} \Bigr )=0.}
\end{array}
\eeq
Hereafter, ${\bf e}_{\gamma}$ is the $N$-component vector whose 
$\gamma$th component is 1 and all other are 0, 
$\epsilon_{\gamma}({\bf n})$ is the following sign factor:
\beq\label{epsilon}
\epsilon_{\gamma}({\bf n})=
(-1)^{n_{\gamma +1}+\ldots +n_N}.
\eeq
The notation ${\bf t}\pm [z^{-1}]_{\gamma}$ means the set of times
in which the times ${\bf t}_{\gamma}$ are shifted in the standard way as
$\{ t_{\gamma , 1}\pm z^{-1}, \,  t_{\gamma , 2}\pm \frac{1}{2}z^{-1},\,
t_{\gamma , 3}\pm \frac{1}{3}z^{-1}, \, \ldots \}$, and the other times
are intact. The equation (\ref{mmkp2}) is valid for all 
${\bf t}, \, {\bf t'}$ and integer ${\bf n}, \, {\bf n'}$,  
${\bf m}, \, {\bf m'}$ such that 
\beq\label{mmkp3}
|\, {\bf n}\, |=1,
\quad  |\, {\bf n'}\, |=-1
\quad \mbox{and \, $m_{\gamma}\geq m_{\gamma}'$ for all $\gamma$}.
\eeq
Note that at ${\bf m}={\bf m'}$ equation (\ref{mmkp2}) states that
$\tau ({\bf n}, {\bf m}, {\bf t})$ as a function
of ${\bf n}, {\bf t}$ is, at any fixed ${\bf m}$, a tau-function
of the $N$-component KP hierarchy.

The transition to Miwa variables (an analog of the substitution (\ref{kp2}))
is as follows:
\beq\label{m1}
\left \{ \begin{array}{l}
\displaystyle{
{\bf n}-{\bf n}' +{\bf m}-{\bf m}'
=\sum_{i=1}^{P^+}{\bf e}_{\alpha_i}-
\sum_{j=1}^{P^-}{\bf e}_{\beta_j},  
}
\\ 
\displaystyle{
{\bf m}-{\bf m'}=\sum_{k=1}^{Q}{\bf e}_{\gamma_k}, 
}
\\ 
\displaystyle{
{\bf t}-{\bf t}' =\sum_{i=1}^{P^+}[a_i^{-1}]_{\alpha_i}-
\sum_{j=1}^{P^-}[b_j^{-1}]_{\beta_j}.}
\end{array} \right.
\eeq
Here $\alpha_i, \beta_j, \gamma_k$ are arbitrary indices
from the set $\{1, \ldots , N\}$ (they may enter with multiplicities, i.e.,
the cases when, say, $\alpha_i =\alpha_j$ for $i\neq j$ are allowed), and
$a_i , b_k \in \CC$ are arbitrary 
parameters belonging to a neighborhood of infinity 
(and, again, we allow the
cases when some of them coincide). If some of these points tend to infinity,
a nonzero residue at infinity arises. As is easy to see, the two procedures
(taking residue at some $a_i$ and tending $a_i$ to infinity,  
obtaining a non-zero residue there) commute.
Therefore, we can start, without any loss of generality, with the case
when all complex variables are distinct and 
finite and, if necessary, tend 
some of them to infinity afterwards. The numbers $P^+, P^-$ and $Q$ 
are arbitrary non-negative integer numbers such that
\beq\label{mmkp4}
P^+-P^-=2+Q.
\eeq
The latter condition guarantees that if all the points $a_i$ are
finite, after the substitution of (\ref{m1}) into 
(\ref{mmkp2}) any contribution from infinity vanishes. 
We will refer to the case when all $a_i, b_j$ are distinct and
condition (\ref{mmkp4}) is satisfied as {\it non-degenerate case}, and
to the corresponding Hirota-Miwa equations as {\it non-degenerate} ones.
The case $Q=0$ corresponds to the $N$-component KP hierarchy.

Applying the residue calculus to (\ref{mmkp2}) 
after the substitution (\ref{m1}), one obtains the 
following general equation of the 
Hirota-Miwa type:
\beq\label{mmkp5}
\begin{array}{c}
\displaystyle{
\sum_{s=1}^{P^+} \prod_{i=1, \neq s}^{P^+} E^{-1}_{\alpha_i \alpha_s}(a_s, a_i)
\prod_{k=1}^{P^-} E_{\beta_k \alpha_s}(a_s, b_k)\,
\tau \Bigl ({\bf n}\! -\! {\bf e}_{\alpha_s},
{\bf m}+\sum_{j=1}^Q {\bf e}_{\gamma_j}, {\bf t}-[a_s^{-1}]_{\alpha_s}\Bigr )
}
\\ \\
\displaystyle{
\times \, 
\tau \Bigl ({\bf n}\! -\! \sum_{i\neq s}^{P^+}
{\bf e}_{\alpha_i}+ \sum_{j=1}^{P^-}{\bf e}_{\beta_j}
+\sum_{j=1}^Q {\bf e}_{\gamma_j}, {\bf m}, 
{\bf t}-\sum_{i\neq s}^{P^+}[a_i^{-1}]_{\alpha_i}+
\sum_{j=1}^{P^-}[b_j^{-1}]_{\beta_j}\Bigr )=0,}
\end{array}
\eeq
where
\beq\label{E1}
E_{\alpha \beta}(a,b)=\epsilon_{\alpha \beta} 
\Bigl (a^{-1}-b^{-1}\Bigr )^{\delta_{\alpha \beta}},
\eeq
and $\epsilon_{\alpha \beta}$ is a sign factor equal to 1 if $\alpha \leq \beta$
and $-1$ otherwise. Note the obvious property of this function:
$$
E_{\alpha \beta}(a,b)=-E_{ \beta \alpha}(b,a).
$$
Equation (\ref{mmkp5}) contains $P^+$ 
bilinear terms, each of which is product of two tau-functions with
various shifts of the arguments. The coefficients are rational functions
of $a_i , b_k$. We will call it
the (non-degenerate) $(P^+ +P^-)$-point relation, according to the total
number of the points. Note that the number of terms in non-degenerate
relations may be either equal or less than the number of points. 
Besides, in the non-degenerate 
case, when all the points $a_i$ are distinct, the relation does not contain any derivatives of the tau-function with respect to the 
continuous times.

\begin{remark}
The general relation (\ref{mmkp5}) 
still holds in 
all degenerate cases, when some of the points merge or tend to infinity.
In such cases, some terms of equation (\ref{mmkp5}) 
become singular, if one considers them separately. However, in the full
expression the singularities 
can be resolved, and, as a result, 
derivatives of the tau-functions with respect to 
continuous times arise in this way.
\end{remark}

The simplest possible case $(P^+, P^-, Q)=(2,0,0)$ is trivial: as it can be
easily seen, (\ref{mmkp5}) becomes an identity (of the form $0=0$). 
The simplest meaningful 
cases are:
\begin{itemize}
\item[I)] $\; (P^+, P^-, Q)=(3,1,0)$:

\beq\label{caseI}
\begin{array}{l}
{\bf n}- {\bf n'}={\bf e}_{\alpha_1}+{\bf e}_{\alpha_2}+{\bf e}_{\alpha_3}-
{\bf e}_{\beta_1}, \; {\bf m}={\bf m'},
\\ \\
{\bf t}- {\bf t'}=[a_1^{-1}]_{\alpha_1}+ [a_2^{-1}]_{\alpha_2}+
[a_3^{-1}]_{\alpha_3}-[b_1^{-1}]_{\beta_1},
\end{array}
\eeq

\item[II)] $\; (P^+, P^-, Q)=(3,0,1)$:
\beq\label{caseII}
\begin{array}{l}
{\bf n}- {\bf n'}={\bf e}_{\alpha_1}+{\bf e}_{\alpha_2}+{\bf e}_{\alpha_3},
 \; {\bf m}-{\bf m'}={\bf e}_{\beta_1},
\\ \\
{\bf t}- {\bf t'}=[a_1^{-1}]_{\alpha_1}+ [a_2^{-1}]_{\alpha_2}+
[a_3^{-1}]_{\alpha_3}.
\end{array}
\eeq
\end{itemize}

\noindent
Case I) corresponds to the Hirota-Miwa equation for the multi-component
KP hierarchy. Here we will not write down these equations 
explicitly because we are mostly interested in their dispersionless
versions.

\subsection{Multi-component dKP and dmKP hierarchies}
\label{section:dmmKP}

The next step is to pass to the dispersionless limit.
Re-scaling the variables as
$$
t_{\alpha , k}\to \frac{t_{\alpha , k}}{\hbar}, 
\quad n_{\alpha}\to \frac{t_{\alpha , 0}}{\hbar}, \quad
m_{\alpha}\to \frac{\tilde t_{\alpha , 0}}{\hbar},
$$
we introduce the differential operators
\beq\label{mmkp11}
\nabla_{\alpha}(z)=\p_{t_{\alpha ,0}}+\sum_{k\geq 1}\frac{z^{-k}}{k}\,
\p_{t_{\alpha ,k}}, \quad 
\p_{\alpha}=\p_{t_{\alpha ,0}}, \quad
\tilde \p_{\alpha}=\p_{\tilde t_{\alpha ,0}}.
\eeq
The function
\beq\label{mmkp11a}
F({\bf t}_0, \tilde {\bf t}_0, {\bf t})=
\lim_{\hbar \to 0}\left [\vphantom{\frac{a}{b}}
\, \hbar^2 \log \tau \Bigl 
(\hbar^{-1}{\bf t}_0, \hbar^{-1} \tilde {\bf t}_0, \hbar^{-1}{\bf t}\Bigr )\right ]
\eeq
plays the role of the tau-function in the limit
$\hbar \to 0$. The limiting form of the general Hirota-Miwa
equation (\ref{mmkp5}) can be obtained in basically the same way 
as this was done in Sections \ref{section:dKP}, \ref{section:dmKP}, so
we omit the details. The result is:
\beq\label{mmkp12}
\begin{array}{l}
\displaystyle{
\sum_{s=1}^{P^+}
\left (\prod_{i=1, \neq s}^{P^+} E_{\alpha_i \alpha_s}^{-1}(a_s, a_i)
e^{-\nabla_{\alpha_s} (a_s)\nabla_{\alpha_i} (a_i)F}\right )
\left (\prod_{k=1}^{P^-}E_{\beta_k \alpha_s}(a_s, b_k)
e^{\nabla_{\alpha_s} (a_s)\nabla_{\beta_k} (b_k)F}\right )}
\\ \\
\phantom{aaaaaaaaaaaaaaaaaaaaaaaaaaaa}\displaystyle{
\times \, \prod_{j=1}^Q e^{\nabla_{\alpha_s} (a_s)(\p_{\gamma_j} -
\tilde \p_{\gamma_j})F}=0.}
\end{array}
\eeq
In case I)  (with $b_1=\infty$) this general 
relation converts (after
extracting a common multiplier) into the equation
\beq\label{caseI1}
\begin{array}{l}
\;\, \epsilon_{\beta_1 \alpha_1} E_{\alpha_2 \alpha_3}(a_3, a_2)
a_1^{-\delta_{\alpha_1 \beta_1}} \, e^{\nabla_{\alpha_2}(a_2)
\nabla_{\alpha_3}(a_3)F + \nabla_{\alpha_1}(a_1)\p_{\beta_1}F}
\\ \\
+\, \epsilon_{\beta_1 \alpha_2} E_{\alpha_3 \alpha_1}(a_1, a_3)
a_2^{-\delta_{\alpha_2 \beta_1}} \, e^{\nabla_{\alpha_3}(a_3)
\nabla_{\alpha_1}(a_1)F + \nabla_{\alpha_2}(a_1)\p_{\beta_1}F}
\\ \\
+\, \epsilon_{\beta_1 \alpha_3} E_{\alpha_1 \alpha_2}(a_2, a_1)
a_3^{-\delta_{\alpha_3 \beta_1}} \, e^{\nabla_{\alpha_1}(a_1)
\nabla_{\alpha_2}(a_2)F + \nabla_{\alpha_3}(a_1)\p_{\beta_1}F}=0,
\end{array}
\eeq
which is the simplest non-trivial Hirota-Miwa equation of the
dKP hierarchy.
In case II) the equation is
\beq\label{caseII1}
\begin{array}{l}
\;\, E_{\alpha_2 \alpha_3}(a_3, a_2)
\, e^{\nabla_{\alpha_2}(a_2)
\nabla_{\alpha_3}(a_3)F + \nabla_{\alpha_1}(a_1)(\p_{\beta_1}-
\tilde \p_{\beta_1})F}
\\ \\
+\,E_{\alpha_3 \alpha_1}(a_1, a_3)
 \, e^{\nabla_{\alpha_3}(a_3)
\nabla_{\alpha_1}(a_1)F + \nabla_{\alpha_2}(a_1)(\p_{\beta_1}-\tilde 
\p_{\beta_1})F}
\\ \\
+\, E_{\alpha_1 \alpha_2}(a_2, a_1)
 \, e^{\nabla_{\alpha_1}(a_1)
\nabla_{\alpha_2}(a_2)F + \nabla_{\alpha_3}(a_1)(\p_{\beta_1}-
\tilde \p_{\beta_1})F}=0.
\end{array}
\eeq
These are 3-point non-degenerate relations. Our next task is to
derive from them (degenerate) 2-point relations by considering different
ways of degeneration. 

For each of the equations (\ref{caseI1}),  
(\ref{caseII1}) four essentially different types of 2-point degeneration 
are possible. In all of them we put
$$
a_1=a, \quad a_2=b, \quad a_3 = \infty ,
$$
and after that the possible types of degeneration are obtained by
different choices of the $\alpha_i$'s and $\beta_1$.
They are:
\begin{itemize}
\item[1)] 
$\alpha_1=\alpha_2=\alpha_3=\beta_1=\alpha$; 
\item[2)] $\alpha_1=\alpha_2=\alpha_3=\alpha$, 
$\beta_1=\beta \neq \alpha$;
\item[3)] $\alpha_1=\alpha$,
$\alpha_2=\alpha_3=\beta_1 =\beta \neq \alpha$;
\item[4)] $\alpha_1=\alpha_3=\alpha$,
$\alpha_2=\beta_1 =\beta \neq \alpha$.
\end{itemize}
The corresponding equations obtained from (\ref{caseI1}) 
are as follows:

\beq\label{mmkp16-1}
\hspace{-3.5cm}\begin{array}{ll}
\mbox{I$\vphantom{a}_{1}$:}&
(a^{-1}-b^{-1})e^{\nabla_{\alpha}(a)\nabla_{\alpha}(b)F
+\p_{\alpha}^2 F}=
e^{\nabla_{\alpha}(a)\p_{\alpha}F +\nabla_{\alpha}(b)\p_{\alpha}F }
\Bigl (a^{-1}-b^{-1}
\\ &\\
& \phantom{aaaaaaaaaaaa}+\, 
(ab)^{-1} (\nabla_{\alpha}(a)-\nabla_{\alpha}(b))
\p_{t_{\alpha , 1}}F\Bigr ),
\end{array}
\eeq

\beq\label{mmkp16-2}
\hspace{-0.2cm}\begin{array}{ll}
\mbox{I$\vphantom{a}_{2}$:}&
(a^{-1}-b^{-1})e^{\nabla_{\alpha}(a)\nabla_{\alpha}(b)F +
\p_{\alpha}\p_{\beta}F}=
a^{-1}e^{\nabla_{\alpha}(a)\p_{\alpha}F +\nabla_{\alpha}(b)\p_{\beta}F }
-b^{-1}e^{\nabla_{\alpha}(b)\p_{\alpha}F +\nabla_{\alpha}(a)\p_{\beta}F }
\\ & \\
&\mbox{(here $\alpha , \beta$ can be arbitrary, including the 
case $\beta =\alpha$)},
\end{array}
\eeq

\beq\label{mmkp16-3}
\begin{array}{ll}
\mbox{I$\vphantom{a}_{3}$:}&
e^{\nabla_{\alpha}(a)\nabla_{\beta}(b)F}=b^{-1} e^{-\p_{\beta}^2 F +
\nabla_{\beta}(b)\p_{\beta}F +\nabla_{\alpha}(a)\p_{\beta} F}
\Bigl (b-\nabla_{\beta}(b) \p_{t_{\beta , 1}}F +
\nabla_{\alpha}(a) \p_{t_{\beta , 1}}F\Bigr )
\\ & \\
&\mbox{(here $\alpha \neq \beta$),}
\end{array}
\eeq

\beq\label{mmkp16-4}
\hspace{-1.2cm}\begin{array}{ll}
\mbox{I$\vphantom{a}_{4}$:}&
e^{\nabla_{\alpha}(a)\nabla_{\beta}(b)F+\p_{\alpha}\p_{\beta}F}=
e^{\nabla_{\beta}(b)\p_{\alpha}F +\nabla_{\alpha}(a)\p_{\beta} F}+
(ab)^{-1}
e^{\nabla_{\alpha}(a)\p_{\alpha}F +\nabla_{\beta}(b)\p_{\beta} F}
\\ & \\
&\mbox{(here $\alpha \neq \beta$)}.
\end{array}
\eeq

\noindent
The equations obtained from (\ref{caseII1}) 
are as follows:

\beq\label{mmkp17-12}
\begin{array}{ll}
\mbox{II$\vphantom{a}_{1,2}$:}&
(a^{-1}-b^{-1})e^{\nabla_{\alpha}(a)\nabla_{\alpha}(b)F +
\p_{\alpha}\p_{\beta}F}=
a^{-1}e^{\nabla_{\alpha}(a)\p_{\alpha}F +\nabla_{\alpha}(b)
(\p_{\beta}-\tilde \p_{\beta})F }\phantom{aaaaaaaaaaaa}
\\ & \\
&\phantom{aaaaaaaaaaaaaaaaaaaaaaaaaaaaaa}
-\, b^{-1}e^{\nabla_{\alpha}(b)\p_{\alpha}F +\nabla_{\alpha}(a)
(\p_{\beta}-\tilde \p_{\beta})F }
\\ & \\
&\mbox{(here $\alpha , \beta$ can be arbitrary, including the 
case $\beta =\alpha$)},
\end{array}
\eeq

\beq\label{mmkp17-3}
\begin{array}{ll}
\mbox{II$\vphantom{a}_{3}$:}&
-b^{-1}
e^{\nabla_{\beta}(b)\p_{\beta}F+ \nabla_{\alpha}(a)
(\p_{\beta}-\tilde \p_{\beta})F}+
\epsilon_{\beta \alpha}
e^{\nabla_{\alpha}(a)\p_{\beta}F+ \nabla_{\beta}(b)
(\p_{\beta}-\tilde \p_{\beta})F}\phantom{aaaaaaaaaaaa}
\\ & \\
& \phantom{aaaaaaaaaaaaaaaaaaaaaaaaaaaaaa}+\, \epsilon_{\alpha \beta}
e^{\nabla_{\alpha}(a)\nabla_{\beta}(b)F +\p_{\beta}
(\p_{\beta}-\tilde \p_{\beta})F}=0,
\\ & \\
&\mbox{(here $\alpha \neq \beta$),}
\end{array}
\eeq

\beq\label{mmkp17-4}
\begin{array}{ll}
\mbox{II$\vphantom{a}_{4}$:}&
\epsilon_{\beta \alpha}
e^{\nabla_{\alpha}(a)\nabla_{\beta}(b)F +\p_{\alpha}(\p_{\beta}-
\tilde \p_{\beta})F}
=a^{-1} e^{\nabla_{\alpha}(a)\p_{\alpha}F +
\nabla_{\beta}(b) (\p_{\beta}-\tilde \p_{\beta})F}\phantom{aaaaaaaaaaaa}
\\ & \\
& \phantom{aaaaaaaaaaaaaaaaaaaaaaaaaaaaaa}-\, \epsilon_{\alpha \beta}
e^{\nabla_{\beta}(b)\p_{\alpha}F +
\nabla_{\alpha}(a) (\p_{\beta}-\tilde \p_{\beta})F}
\\ & \\
&\mbox{(here $\alpha \neq \beta$)}.
\end{array}
\eeq

In order to write down these equations in a more compact 
and suggestive form,
we introduce the differential operator
\beq\label{mmkp17}
\bar \p_{\alpha}=\p_{\alpha}-\tilde \p_{\alpha}
\eeq
and the following functions:
\beq\label{mmkp15}
\begin{array}{ll}
w_{\alpha}(z)=z^{-1}e^{\nabla_{\alpha}(z)\p_{\alpha}F}, \quad
& w_{\alpha \beta}(z)=e^{\nabla_{\alpha}(z)\p_{\beta}F}
\; \; \; (\alpha \neq \beta),
\\ & \\
\bar w_{\alpha}(z)=e^{\nabla_{\alpha}(z)\bar \p_{\alpha}F},
\quad
& \bar w_{\alpha \beta}(z)=e^{\nabla_{\alpha}(z)\bar \p_{\beta}F} 
\; \; \; (\alpha \neq \beta).
\end{array}
\eeq
The function
\beq\label{mmkp18}
v_{\alpha \beta}(z)=\frac{w_{\alpha \beta}(z)}{w_{\alpha}(z)},
\quad \alpha \neq \beta ,
\eeq
will be useful, too. We also need the functions
\beq\label{mmkp20}
\begin{array}{ll}
p_{\alpha}(z)=z-\nabla_{\alpha}(z)\p_{t_{\alpha , 1}}F, \quad
&
p_{\alpha \beta}(z)=-\nabla_{\alpha}(z)\p_{t_{\beta , 1}}F
\; \; \; (\alpha \neq \beta).
\end{array}
\eeq
Besides, the following notation for certain $z$-independent 
quantities will be used:
\beq\label{mmkp19}
\begin{array}{ll}
R_{\alpha}=e^{\p^2_{\alpha}F}, \quad
& R_{\alpha \beta}=e^{\p_{\alpha}\p_{\beta}F}=w_{\alpha \beta}(\infty ),
\; \; \; (\alpha \neq \beta),
\\ & \\
\bar R_{\alpha}=e^{\p_{\alpha}\bar \p_{\alpha}F},
\quad
& \bar R_{\alpha \beta}(z)=e^{\p_{\alpha}\bar \p_{\beta}F}
=\bar w_{\alpha \beta}(\infty ) 
\; \; \; (\alpha \neq \beta).
\end{array}
\eeq

In this notation, equations (\ref{mmkp16-1})--(\ref{mmkp16-4}) 
acquire the following form:

\beq\label{mmkp16-1a}
\hspace{-1.5cm}\begin{array}{ll}
\mbox{I$\vphantom{a}_{1}$:}&
(a^{-1}-b^{-1})e^{\nabla_{\alpha}(a)\nabla_{\alpha}(b)F}=R_{\alpha}^{-1}
w_{\alpha}(a)w_{\alpha}(b)
\Bigl (p_{\alpha}(b)-p_{\alpha}(a)\Bigr ),
\phantom{\quad \alpha \neq \beta}
\end{array}
\eeq

\beq\label{mmkp16-2a}
\hspace{2mm}\begin{array}{ll}
\mbox{I$\vphantom{a}_{2}$:}&
(a^{-1}-b^{-1})e^{\nabla_{\alpha}(a)\nabla_{\alpha}(b)F}=R_{\alpha \beta}^{-1}
\Bigl (w_{\alpha}(a) w_{\alpha \beta}(b)-w_{\alpha}(b) w_{\alpha \beta}(a)
\Bigr ), \quad \alpha \neq \beta ,
\end{array}
\eeq

\beq\label{mmkp16-3a}
\hspace{-3cm}\begin{array}{ll}
\mbox{I$\vphantom{a}_{3}$:}&
e^{\nabla_{\alpha}(a)\nabla_{\beta}(b)F}=R_{\alpha}^{-1}
w_{\alpha \beta}(a)w_{\beta}(b)
\Bigl (p_{\beta}(b)-p_{\alpha \beta}(a)\Bigr ), \quad \alpha \neq \beta ,
\end{array}
\eeq

\beq\label{mmkp16-3aprime}
\hspace{-2.8cm}\begin{array}{ll}
\mbox{I${}'\vphantom{a}_{3}$:}&
e^{\nabla_{\alpha}(a)\nabla_{\beta}(b)F}=R_{\beta}^{-1}
w_{\beta \alpha}(b)w_{\alpha}(a)
\Bigl (p_{\alpha}(a)-p_{\beta \alpha}(b)\Bigr ), \quad \alpha \neq \beta ,
\end{array}
\eeq

\beq\label{mmkp16-4a}
\hspace{-2.9cm}\begin{array}{ll}
\mbox{I$\vphantom{a}_{4}$:}&
e^{\nabla_{\alpha}(a)\nabla_{\beta}(b)F}=R_{\alpha \beta}^{-1}\,
\Bigl (w_{\alpha \beta}(a)w_{\beta \alpha}(b)+w_{\alpha}(a)
w_{\beta}(b) \Bigr ), \quad \alpha \neq \beta .
\end{array}
\eeq

\noindent
Note that equations I$\vphantom{a}_{3}$ and 
I${}'\vphantom{a}_{3}$ are obtained from each other 
by the simultaneous permutations
$\alpha \leftrightarrow \beta$, $a\leftrightarrow b$.

Equations
(\ref{mmkp17-12})--(\ref{mmkp17-4}) acquire the form

\beq\label{mmkp17-1a}
\hspace{-0.3cm}\begin{array}{ll}
\mbox{II$\vphantom{a}_{1}$:}&
(a^{-1}-b^{-1})e^{\nabla_{\alpha}(a)\nabla_{\alpha}(b)F} =
\bar R_{\alpha}^{-1}\Bigl ( w_{\alpha}(a) \bar w_{\alpha}(b)-
w_{\alpha}(b) \bar w_{\alpha}(a)\Bigr ), \phantom{\quad \alpha \neq \beta ,}
\end{array}
\eeq

\beq\label{mmkp17-1aa}
\begin{array}{ll}
\mbox{II$\vphantom{a}_{2}$:}&
(a^{-1}-b^{-1})e^{\nabla_{\alpha}(a)\nabla_{\alpha}(b)F} =
\bar R_{\alpha \beta}^{-1}\Bigl ( w_{\alpha}(a) \bar w_{\alpha \beta}(b)-
w_{\alpha}(b) \bar w_{\alpha \beta}(a)\Bigr ),
\quad \alpha \neq \beta ,
\end{array}
\eeq

\beq\label{mmkp17-3a}
\hspace{-0.4cm}\begin{array}{ll}
\mbox{II$\vphantom{a}_{3}$:}&
\epsilon_{\beta \alpha}
e^{\nabla_{\alpha}(a)\nabla_{\beta}(b)F} =\bar R_{\beta}^{-1} \, \Bigl (
\epsilon_{\beta \alpha}w_{\alpha \beta}(a)\bar w_{\beta}(b)-
w_{\beta}(b)\bar w_{\alpha \beta}(a)\Bigr ), \quad \alpha \neq \beta ,
\end{array}
\eeq

\beq\label{mmkp17-3aprime}
\hspace{-0.4cm}\begin{array}{ll}
\mbox{II${}'\vphantom{a}_{3}$:}&
\epsilon_{\beta \alpha}
e^{\nabla_{\alpha}(a)\nabla_{\beta}(b)F} =\bar R_{\alpha}^{-1} \, \Bigl (
w_{\alpha}(a)\bar w_{\beta \alpha}(b)-\epsilon_{\alpha \beta}
w_{\beta \alpha}(b)\bar w_{\alpha}(a)\Bigr ), \quad \alpha \neq \beta ,
\end{array}
\eeq

\beq\label{mmkp17-4a}
\hspace{-0.2cm}\begin{array}{ll}
\mbox{II$\vphantom{a}_{4}$:}&
\epsilon_{\beta \alpha}
e^{\nabla_{\alpha}(a)\nabla_{\beta}(b)F}=
\bar R_{\alpha \beta}^{-1}\, \Bigl (
w_{\alpha}(a)\bar w_{\beta}(b) -\epsilon_{\alpha \beta}
\bar w_{\alpha \beta}(a)\bar w_{\beta \alpha}(b)\Bigr ), \quad
\alpha \neq \beta ,
\end{array}
\eeq

\beq\label{mmkp17-4aprime}
\hspace{-0.2cm}\begin{array}{ll}
\mbox{II${}'\vphantom{a}_{4}$:}&
\epsilon_{\beta \alpha}
e^{\nabla_{\alpha}(a)\nabla_{\beta}(b)F}=
\bar R_{ \beta \alpha}^{-1}\, \Bigl (\epsilon_{\beta \alpha}
w_{\alpha \beta}(a)\bar w_{\beta \alpha}(b) -
\bar w_{\alpha}(a)w_{\beta}(b)\Bigr ), \quad
\alpha \neq \beta .
\end{array}
\eeq

\noindent
Equations II$\vphantom{a}_{3}$ and 
II${}'\vphantom{a}_{3}$, as well as II$\vphantom{a}_{4}$ and 
II${}'\vphantom{a}_{4}$, are obtained from each other 
by the simultaneous permutations
$\alpha \leftrightarrow \beta$, $a\leftrightarrow b$.

\subsection{The dynamical curve: second appearance}

Now we are ready to recover the dynamical curve. 

\begin{theorem}\label{theorem:curve2}
For any distinct $\alpha , \beta =1, \ldots , N$ the functions 
$p_{\alpha}(z)$, $p_{\alpha \beta}(z)$ defined in (\ref{mmkp20})
satisfy the equation
\beq\label{mmkp24}
p_{\alpha \beta}(z)p_{\alpha}(z) -p_{\alpha \beta}(z)
p_{\beta \alpha}(\infty )
-p_{\alpha}(z)p_{\alpha \beta}(\infty )+
p_{\alpha \beta}(\infty)p_{\beta \alpha}(\infty )+
\frac{R_{\alpha}R_{\beta}}{R^2_{\alpha \beta}}=0.
\eeq
\end{theorem}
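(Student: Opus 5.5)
The plan is to obtain (\ref{mmkp24}) purely algebraically from the degenerate 2-point relations I$_3$, I${}'_3$ and I$_4$. All three express the same quantity $e^{\nabla_{\alpha}(a)\nabla_{\beta}(b)F}$ with $\alpha\neq\beta$. Equating them pairwise, the dependence on $a$ and on $b$ separates, and the resulting separation constants are fixed by letting $a$ or $b$ tend to infinity.

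\emph{Step 1: compare I$_3$ with I$_4$.} First I fix the normalization in I$_3$ by going back to (\ref{mmkp16-3}), whose prefactor is $e^{-\p_\beta^2F}=R_\beta^{-1}$. I expect the $R_{\alpha}^{-1}$ printed in (\ref{mmkp16-3a}) to be $R_\beta^{-1}$, and likewise $R_\alpha^{-1}$ in I${}'_3$; this must be confirmed before going on. Then I divide both I$_3$ and I$_4$ by $w_{\alpha\beta}(a)w_{\beta}(b)$ and use $v_{\alpha\beta}=w_{\alpha\beta}/w_\alpha$. This gives
$$
R_\beta^{-1}\bigl(p_\beta(b)-p_{\alpha\beta}(a)\bigr)=R_{\alpha\beta}^{-1}\bigl(v_{\beta\alpha}(b)+v_{\alpha\beta}^{-1}(a)\bigr).
$$
Hence $R_\beta^{-1}p_{\alpha\beta}(a)+R_{\alpha\beta}^{-1}v_{\alpha\beta}^{-1}(a)$ equals a function of $b$ alone, so it is a constant $-C_1$. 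To evaluate $C_1$ I let $a\to\infty$. Since $w_\alpha(a)=O(a^{-1})$ while $w_{\alpha\beta}(\infty)=R_{\alpha\beta}$, we have $v_{\alpha\beta}^{-1}(\infty)=0$, and therefore $C_1=-R_\beta^{-1}p_{\alpha\beta}(\infty)$.

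\emph{Step 2: compare I${}'_3$ with I$_4$.} I repeat the same manipulation with the roles of $\alpha,a$ and $\beta,b$ interchanged, dividing by $w_\alpha(a)w_{\beta\alpha}(b)$. This shows that $R_{\alpha\beta}^{-1}v_{\alpha\beta}(a)-R_\alpha^{-1}p_\alpha(a)$ equals $-R_\alpha^{-1}p_{\beta\alpha}(b)-R_{\alpha\beta}^{-1}v_{\beta\alpha}^{-1}(b)$. Both sides are therefore a constant $C_2$. Letting $b\to\infty$ and using $v_{\beta\alpha}^{-1}(\infty)=0$ gives $C_2=-R_\alpha^{-1}p_{\beta\alpha}(\infty)$.

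\emph{Step 3: eliminate $v_{\alpha\beta}$.} Step 2 gives
$$
v_{\alpha\beta}(z)=R_{\alpha\beta}R_\alpha^{-1}\bigl(p_\alpha(z)-p_{\beta\alpha}(\infty)\bigr),
$$
and Step 1 gives
$$
v_{\alpha\beta}^{-1}(z)=-R_{\alpha\beta}R_\beta^{-1}\bigl(p_{\alpha\beta}(z)-p_{\alpha\beta}(\infty)\bigr).
$$
Multiplying these two and using $v\cdot v^{-1}=1$ yields
$$
\bigl(p_\alpha(z)-p_{\beta\alpha}(\infty)\bigr)\bigl(p_{\alpha\beta}(z)-p_{\alpha\beta}(\infty)\bigr)+\frac{R_\alpha R_\beta}{R_{\alpha\beta}^2}=0.
$$
Expanding the product gives exactly (\ref{mmkp24}). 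As a consistency check, the leading behaviour $v_{\alpha\beta}(z)\sim zR_{\alpha\beta}/R_\alpha$ agrees with $p_\alpha(z)\sim z$.

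The main obstacle is bookkeeping rather than ideas. I must verify the prefactors $R_\alpha$ versus $R_\beta$ and the sign factors $\epsilon_{\alpha\beta}$ in I$_3$, I${}'_3$ and I$_4$ directly from (\ref{caseI1}). I must also justify the limits at infinity, namely that $w_\alpha(z)\to0$ like $z^{-1}$ while $w_{\alpha\beta}(z)$ stays finite, so that the separation constants are evaluated correctly. If a sign convention turns out different, it only shifts the constants, and they are fixed in the same way by the $z\to\infty$ limits.
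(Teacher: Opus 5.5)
Your argument is correct, and it reaches (\ref{mmkp24}) by a somewhat different route than the paper.

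The paper first equates the diagonal relations I$_1$ and I$_2$ to get (\ref{mmkp23}), i.e.\ $v_{\alpha\beta}$ as an affine function of $p_\alpha$. There the separation constant has to be found by cancelling the $O(z)$ growth of $p_\alpha(z)$ against that of $R_\alpha R_{\alpha\beta}^{-1}v_{\alpha\beta}(z)$ at subleading order. It then substitutes this into the chain (\ref{mmkp21}) built from I$_3$, I${}'_3$, I$_4$ and opens the brackets.

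You never use I$_1$ or I$_2$. Comparing I$_4$ with I$_3$ and with I${}'_3$, you get $v^{-1}_{\alpha\beta}$ affine in $p_{\alpha\beta}$ and $v_{\alpha\beta}$ affine in $p_\alpha$. Each separation constant is fixed by the trivial limits $v^{-1}_{\alpha\beta}(\infty)=v^{-1}_{\beta\alpha}(\infty)=0$, and $v_{\alpha\beta}v^{-1}_{\alpha\beta}=1$ then finishes in one line. In fact the $b$-side of your Step 1 already gives (\ref{mmkp23}) with $\alpha\leftrightarrow\beta$, so the single comparison of I$_3$ with I$_4$ suffices.

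What each route buys:
\begin{itemize}
\item Yours shows the curve follows from the mixed-index relations alone, with cleaner limits.
\item The paper's additionally shows that the diagonal relations give the same expression (\ref{v}) for $v_{\alpha\beta}$. It reuses this compatibility to parametrize the $w$-functions in (\ref{w}).
\end{itemize}

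Your correction of the prefactor in I$_3$ to $R_\beta^{-1}$, and in I${}'_3$ to $R_\alpha^{-1}$, is right. It is what (\ref{mmkp16-3}) actually gives, and it is what the paper's own (\ref{mmkp21}) uses.

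One caution about your closing hedge. The term $R_\alpha R_\beta/R^2_{\alpha\beta}$ comes from $v_{\alpha\beta}v^{-1}_{\alpha\beta}=1$, not from a separation constant. So a sign factor in I$_3$ that changed under $\alpha\leftrightarrow\beta$ would flip its sign, and no $z\to\infty$ limit would absorb that. With the equations as printed there is no such factor, so this does not arise.
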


\noindent
{\it Proof.}
The idea of the proof is to 
note that the left-hand sides of some of the equations 
obtained above are the same. This allows one to eliminate
$e^{\nabla_{\alpha}(a)\nabla_{\beta}(b)F}$ from a part of the equations
thus obtaining some constraints on the $w$- and $p$-functions.
First of all, equating the right-hand sides
of equations (\ref{mmkp16-1a}) and (\ref{mmkp16-2a}), we obtain the relation
\beq\label{mmkp22}
p_{\alpha}(a)-
\frac{R_{\alpha}}{R_{\alpha \beta}} v_{\alpha \beta}(a)=
p_{\alpha}(b)-
\frac{R_{\alpha}}{R_{\alpha \beta}} v_{\alpha \beta}(b),
\eeq
where the function $v_{\alpha \beta}(z)$ is defined in (\ref{mmkp18}).
Since the left-hand side of this relation depends only on $a$ while
the right-hand side depends only on $b$, it follows that
$p_{\alpha}(z)-
R_{\alpha}R^{-1}_{\alpha \beta} v_{\alpha \beta}(z)\equiv
C_{\alpha \beta}$ is a $z$-independent quantity.
Letting $z$ to $\infty$, we can express it through derivatives
of the $F$-function:
\beq\label{mmkp23}
p_{\alpha}(z)-
\frac{R_{\alpha}}{R_{\alpha \beta}} v_{\alpha \beta}(z)=
p_{\beta \alpha}(\infty )=-\p_{\beta}\p_{t_{\alpha ,1}}F, \quad
\alpha \neq \beta .
\eeq
Next, from
(\ref{mmkp16-3a}), (\ref{mmkp16-3aprime}) and
(\ref{mmkp16-4a}) we have:
\beq\label{mmkp21}
\begin{array}{lll}
1+v_{\alpha \beta}(a)v_{\beta \alpha}(b)&=&
\displaystyle{
\frac{R_{\alpha \beta}}{R_{\alpha}}\, 
v_{\beta \alpha}(b)\Bigl (p_{\alpha}(a)-p_{\beta \alpha}(b)\Bigr )}
\\ && \\
&=&
\displaystyle{\frac{R_{\beta \alpha}}{R_{\beta}}\, 
v_{\alpha \beta}(a)\Bigl (p_{\beta}(b)-p_{\alpha \beta}(a)\Bigr )},
\end{array}
\eeq
Plugging 
\beq\label{v}
v_{\alpha \beta}(z)=\frac{R_{\alpha \beta}}{R_{\alpha}}
\, \Bigl (p_{\alpha}(z)-p_{\beta \alpha}(\infty )\Bigr )
\eeq
from (\ref{mmkp23}) into (\ref{mmkp21}), we obtain the following relation
containing the $p$-functions only:
$$
\begin{array}{c}
\displaystyle{
\frac{R_{\alpha}R_{\beta}}{R^2_{\alpha \beta}}+
\Bigl (p_{\alpha}(a)-p_{\beta \alpha}(\infty )\Bigr )
\Bigl (p_{\beta}(b)-p_{\alpha \beta}(\infty )\Bigr )}
\\ \\
=\, \Bigl (p_{\alpha}(a)-p_{\beta \alpha}(b )\Bigr )
\Bigl (p_{\beta}(b)-p_{\alpha \beta}(\infty )\Bigr )
=\, \Bigl (p_{\beta}(b)-p_{\alpha \beta}(a)\Bigr )
\Bigl (p_{\alpha}(a)-p_{\beta \alpha}(\infty )\Bigr ).
\end{array}
$$
After opening the brackets and some cancellations, 
we obtain the constraint (\ref{mmkp24})
for the functions $p_{\alpha}(z)$ and $p_{\alpha \beta}(z)$
that we are looking for.
\square

Putting $x=p_{\alpha \beta}(z)$, $y=p_{\alpha}(z)$, we rewrite
equation (\ref{mmkp24}) in the form 
\beq\label{curve2}
P(x,y)=xy +Ax +By +C=0.
\eeq
It defines 
a rational algebraic curve which is the dynamical curve for the
multi-component dmKP hierarchy. 
The functions $p_{\alpha \beta}(z)$ 
and $p_{\alpha}(z)$ are rational functions
on this curve, and
$z^{-1}$ is the local parameter in a neighborhood
of infinity.

\subsection{The trigonometric parametrization}
\label{section:trig}

Being a rational curve of degree 2, the curve (\ref{mmkp24}) can be 
uniformized by trigonometric functions. (See Appendix C for details.)
As we shall see, the uniformization of this curve 
allows one to reduce the plethora of equations from 
Section \ref{section:dmmKP} to a system of just
two equations 
having a nice compact form.

The key step is to introduce new dependent variables
$\eta_{\alpha}({\bf t})$, $c_{\alpha , k}
({\bf t})$ combined into
generating functions
\beq\label{t1}
u_{\alpha}(z)=\eta_{\alpha}({\bf t}) +\sum_{k\geq 1}c_{\alpha , k}
({\bf t})z^{-k}
\eeq
and treat $u_{\alpha}(z)$ as the uniformizing variable
$u$ from Appendix C. 
According to equations (\ref{curve1a})--(\ref{curve1c}) we uniformize 
the curve (\ref{mmkp24}) (i.e., (\ref{curve2})), as follows:
\beq\label{t2}
p_{\alpha}(z)=\gamma_{\alpha}
\frac{\cos (u_{\alpha}(z)-\eta_{\alpha})}{\sin (u_{\alpha}(z)
-\eta_{\alpha})},  \qquad
p_{\alpha \beta}(z)=\gamma_{\beta}
\frac{\cos (u_{\alpha}(z)-\eta_{\beta})}{\sin (u_{\alpha}(z)
-\eta_{\beta})},  
\eeq
so
\beq\label{t2a}
p_{\alpha \beta}(\infty )=\gamma_{\beta}
\frac{\cos (\eta_{\alpha}-\eta_{\beta})}{\sin (\eta_{\alpha}
-\eta_{\beta})}. 
\eeq
Together with
\beq\label{t3}
R_{\alpha}=\gamma_{\alpha}, \qquad
R_{\alpha \beta}=\epsilon_{\alpha \beta}
\sin (\eta_{\alpha}-\eta_{\beta})
\eeq
these substitutions convert equation (\ref{mmkp24}) into
an identity (see Appendix C for details). Here the $\gamma_{\alpha}$'s
are dynamical variables: $\gamma_{\alpha}=\gamma_{\alpha}({\bf t})$.

Now we can obtain the trigonometric parametrization of the
$w$-functions. From (\ref{v}) we have:
$$
\begin{array}{lll}
v_{\alpha \beta}& =& \displaystyle{\frac{w_{\alpha \beta}}{w_{\alpha}}=
\epsilon_{\beta \alpha} \sin (\eta_{\alpha}-\eta_{\beta})
\Bigl (\cot (u_{\alpha}-\eta_{\alpha}) +
\cot (\eta_{\alpha}-\eta_{\alpha})\Bigr )}
\\ && \\
& =& \displaystyle{
\epsilon_{\beta \alpha} \, 
\frac{\sin (u_{\alpha}-
\eta_{\beta})}{\sin (u_{\alpha}-\eta_{\alpha})}},
\end{array}
$$
where $v_{\alpha \beta}=v_{\alpha \beta}(z)$, etc. Therefore,
we can put
\beq\label{w}
w_{\alpha}(z)=\sin (u_{\alpha}(z)-\eta_{\alpha}), \qquad
w_{\alpha \beta}(z)=\epsilon_{\beta \alpha}
\sin (u_{\alpha}(z)-\eta_{\beta}).
\eeq

It is easy to check that with this parametrization equations
(\ref{mmkp16-1a}) and (\ref{mmkp16-4a}) acquire the form
\beq\label{t4}
\epsilon_{\beta \alpha}(a^{-1}-b^{-1})^{\delta_{\alpha \beta}}
e^{\nabla_{\alpha}(a)\nabla_{\beta}(b)F}=
\sin (u_{\alpha}(a)-u_{\beta}(b)).
\eeq
Expending both sides of (\ref{t4}) at $\beta =\alpha$ in powers
of $z^{-1}$ and comparing the coefficients at the leading term,
we obtain:
\beq\label{t5}
R_{\alpha}({\bf t})=\gamma_{\alpha}({\bf t})=
c_{\alpha , 1}({\bf t})
\eeq
($c_{\alpha , 1}$ is the coefficient at $z^{-1}$ 
in the series (\ref{t1})).

It remains to find the trigonometric parametrization for the
$\bar w$-functions. 
To this end, substitute first the equation
$\mbox{II}{}_1$ (\ref{mmkp17-1a}) into (\ref{t4}) at $\beta =\alpha$.
After some algebra, this leads to the relation
$$
\bar R_{\alpha}^{-1} \, 
\frac{\bar w_{\alpha}(a)}{w_{\alpha}(a)} -\cot (u_{\alpha}(a)
-\eta_{\alpha})=
\bar R_{\alpha}^{-1} \, 
\frac{\bar w_{\alpha}(b)}{w_{\alpha}(b)} -\cot (u_{\alpha}(b)
-\eta_{\alpha})
$$
from which it follows that
\beq\label{t8}
\bar R_{\alpha}^{-1} \, 
\frac{\bar w_{\alpha}(z)}{w_{\alpha}(z)} -\cot (u_{\alpha}(z)
-\eta_{\alpha})
=U_{\alpha}
\eeq
does not depend on $z$. Letting $z\to \infty$, we find:
\beq\label{t7}
U_{\alpha}=e^{-\p_{\alpha}^2F}\bar \p_{\alpha} \p_{t_{\alpha ,1}}F.
\eeq
In a similar way, substituting 
$\mbox{II}{}_2$ (\ref{mmkp17-1aa}) into (\ref{t4}),
we find that
\beq\label{t8a}
\bar R_{\alpha \beta}^{-1} \, 
\frac{\bar w_{\alpha \beta}(z)}{w_{\alpha}(z)} -\cot (u_{\alpha}(z)
-\eta_{\alpha})
=U_{\alpha \beta}
\eeq
does not depend on $z$. Letting $z\to \infty$, we have:
\beq\label{t7a}
U_{\alpha \beta}=e^{-\p_{\alpha}^2F}\bar \p_{\beta} \p_{t_{\alpha ,1}}F.
\eeq
Now one can see that relations (\ref{t8}), (\ref{t8a}) are 
identically
satisfied if one puts
\beq\label{t9}
\bar w_{\alpha}(z)=\sin (u_{\alpha}(z)-\bar \eta_{\alpha}),
\qquad
\bar w_{\alpha \beta}(z)=\sin (u_{\alpha}(z)-\bar \eta_{\beta})
\eeq
and
\beq\label{t9a}
\begin{array}{l}
\bar R_{\alpha}=\bar w_{\alpha}(\infty )=
\sin (\eta_{\alpha}-\bar \eta_{\alpha}),
\qquad
\bar R_{\alpha \beta}=\bar w_{\alpha \beta}(\infty )=
\sin (\eta_{\alpha}-\bar \eta_{\beta})
\\ \\
\displaystyle{
U_{\alpha}=\frac{\cos (\eta_{\alpha}-\bar \eta_{\alpha})}{\sin(
\eta_{\alpha}-\bar \eta_{\alpha})}, \qquad
\phantom{aaaaaaaa}
U_{\alpha \beta}=\frac{\cos (\eta_{\alpha}-\bar \eta_{\beta})}{\sin(
\eta_{\alpha}-\bar \eta_{\beta})}.}
\end{array}
\eeq
Here $\bar \eta_{\alpha}=\bar \eta_{\alpha}({\bf t})$ are 
additional dependent variables.

Comparing equations 
(\ref{mmkp17-1aa}) and (\ref{mmkp16-2a}), we conclude, in a similar way,
that
\beq\label{t12}
\frac{R_{\alpha \beta}}{\bar R_{\alpha \beta}}\,
\frac{\bar w_{\alpha \beta}(z)}{w_{\alpha}(z)}-
\frac{w_{\alpha \beta}(z)}{w_{\alpha}(z)}=D_{\alpha \beta}
\eeq
does not depend on $z$. The limit $z\to \infty$ yields:
\beq\label{t11}
D_{\alpha \beta}=e^{-\p_{\alpha}^2F +\p_{\alpha}\p_{\beta}F}
(\bar \p_{\beta}-\p_{\beta})\p_{t_{\alpha , 1}})F.
\eeq
Plugging into (\ref{t12}) the trigonometric expressions 
(\ref{t3}), (\ref{w}), (\ref{t9}), (\ref{t9a}), one can see
that the left-hand side is equal to $\epsilon_{\alpha \beta}$,
so we conclude that
\beq\label{t13}
D_{\alpha \beta}=\epsilon_{\alpha \beta}.
\eeq
The mutual compatibility of the rest of the equations from
the lists (\ref{mmkp16-1a})--(\ref{mmkp16-4a}) and
(\ref{mmkp17-1a})--(\ref{mmkp17-4aprime}) can be checked
in a similar way. 

Summarizing, we have proved the following statement.

\begin{theorem}
In the trigonometric parametrization (\ref{t2}), (\ref{w}), 
(\ref{t9}), (\ref{t9a}) 
the set of 11 equations
$$
\mbox{I$\vphantom{a}_{1}$}, \; 
\mbox{I$\vphantom{a}_{2}$}, \;
\mbox{I$\vphantom{a}_{3}$}, \;
\mbox{I${}'\vphantom{a}_{3}$}, \;
\mbox{I$\vphantom{a}_{4}$}, \;
\mbox{II$\vphantom{a}_{1}$}, \; 
\mbox{II$\vphantom{a}_{2}$}, \;
\mbox{II$\vphantom{a}_{3}$}, \;
\mbox{II${}'\vphantom{a}_{3}$}, \;
\mbox{II$\vphantom{a}_{4}$},
\mbox{II${}'\vphantom{a}_{4}$}
$$
(see (\ref{mmkp16-1a})--(\ref{mmkp16-4a}) and
(\ref{mmkp17-1a})--(\ref{mmkp17-4aprime})) reduces to the following
two compact equations 
\beq\label{t15}
\left \{
\begin{array}{l}
\epsilon_{\beta \alpha}(a^{-1}-b^{-1})^{\delta_{\alpha \beta}}
e^{\nabla_{\alpha}(a)\nabla_{\beta}(b)F}=
\sin \Bigl (u_{\alpha}(a)-u_{\beta}(b)\Bigr ),
\\ \\
e^{\nabla_{\alpha}(a)\bar \p_{\beta}F}=\sin \Bigl (u_{\alpha}(a)
-\bar \eta_{\beta}\Bigr ),
\end{array} \right.
\eeq
where $u_{\alpha}(z)$ is of the form (\ref{t1}).
\end{theorem}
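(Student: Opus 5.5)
The plan is to show that, once the parametrization is fixed, each of the eleven equations either becomes an instance of one of the two equations in (\ref{t15}) or becomes an identity. For the converse, both equations in (\ref{t15}) will follow from the equations already established above.

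The first step is to express everything in terms of $u_\alpha$. Setting $b\to\infty$ in the first equation of (\ref{t15}) gives $e^{\nabla_\alpha(a)\p_\beta F}$ for $\beta\neq\alpha$. The prefactor $z^{-1}$ in the case $\beta=\alpha$ is handled through the $z^{-1}$ term $c_{\alpha,1}=R_\alpha$ of (\ref{t5}). This recovers (\ref{w}) together with $R_{\alpha\beta}=\epsilon_{\alpha\beta}\sin(\eta_\alpha-\eta_\beta)$ from (\ref{t3}). Similarly, setting $a\to\infty$ in the second equation of (\ref{t15}) recovers (\ref{t9a}). The $p$-functions are then the $t_{\beta,1}$-coefficients in the expansion of the first equation of (\ref{t15}) as $b\to\infty$. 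They come out as $\gamma_\beta\cot(u_\alpha(z)-\eta_\beta)$, which is (\ref{t2}). Consequently every object appearing in I$_1$--II$'_4$ is a trigonometric expression in $u_\alpha(a)$, $u_\beta(b)$, $\eta$ and $\bar\eta$.

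The second step treats the I-equations. The left-hand side of each is, by the first equation of (\ref{t15}), $\pm\sin(u_\alpha(a)-u_\beta(b))$. The right-hand side becomes a bilinear expression in sines and cotangents. For example, I$_4$ becomes
\[
\sin(\eta_\alpha-\eta_\beta)\sin(u_\alpha-u_\beta)=\sin(u_\alpha-\eta_\beta)\sin(u_\beta-\eta_\alpha)+\sin(u_\alpha-\eta_\alpha)\sin(u_\beta-\eta_\beta)
\]
up to signs, which is the three-term identity $\sin(x-y)\sin(z-w)+\sin(x-z)\sin(w-y)+\sin(x-w)\sin(y-z)=0$. I$_2$ and I$_1$ reduce to the same identity, I$_1$ in its degenerate cotangent form, as already noted for (\ref{t4}). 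I$_3$ and I$'_3$ combine $\sin(u-\eta)\bigl(\cot(u'-\eta)-\cot(u-\eta')\bigr)$ using $\cot x-\cot y=\sin(y-x)/(\sin x\sin y)$. The II-equations are handled the same way. They are bilinear in $w$ and $\bar w$, and after the substitution (\ref{t9}) each one is again the three-term identity, now involving the extra point $\bar\eta_\beta$.

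The converse is the main obstacle. In that direction the work is to show that the eleven equations force the existence of the $u_\alpha$ and $\bar\eta_\beta$, and that the uniformizations obtained separately for different pairs $(\alpha,\beta)$ are consistent. The derivation preceding the statement already achieves this pairwise. The curve (\ref{mmkp24}), the constants $U_\alpha$, $U_{\alpha\beta}$ and $D_{\alpha\beta}$ of (\ref{t8}), (\ref{t8a}) and (\ref{t12}), and the single $u_\alpha(z)$ that uniformizes simultaneously for all $\beta$ via (\ref{v}) together give (\ref{w}) and (\ref{t9}). The delicate point is to check that one common choice of $\eta_\beta$ works in every pair $(\alpha,\beta)$, and that the sign factors are consistent. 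I would first fix $\eta_\alpha$ from $R_{\alpha\beta}$ with one reference index. I would then use I$_4$, which couples $u_\alpha$ and $u_\beta$, to propagate this choice to all pairs. From the resulting parametrization, the first equation of (\ref{t15}) is exactly I$_4$ and (\ref{t4}), and the second equation is the definition (\ref{t9}).
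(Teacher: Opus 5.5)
Your first two steps are correct, and they already prove the statement as formulated.

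\textbf{Why the first two steps suffice.} The theorem is stated \emph{in} the parametrization (\ref{t2}), (\ref{w}), (\ref{t9}), (\ref{t9a}), so one common set of $u_\alpha$, $\eta_\alpha$, $\bar\eta_\alpha$ is part of the hypothesis. After substitution, each of the eleven equations becomes the first line of (\ref{t15}) for the corresponding pair. This follows from the three-term sine identity, or from its cotangent degeneration in the cases I$_1$, I$_3$, I$'_3$. The second line of (\ref{t15}) is (\ref{t9}) itself. Your first step shows conversely that (\ref{t15}) regenerates the whole parametrization, with $\gamma_\alpha=c_{\alpha,1}=R_\alpha$.

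\textbf{Comparison with the paper.} The paper argues in the opposite direction. It eliminates $e^{\nabla_\alpha(a)\nabla_\beta(b)F}$ between equations with the same left-hand side and separates variables. It then fixes the resulting $z$-independent quantities ((\ref{mmkp23}), $U_\alpha$, $U_{\alpha\beta}$, $D_{\alpha\beta}$) by letting $z\to\infty$, and uniformizes the curve (\ref{mmkp24}). For the remaining equations it only asserts compatibility ``in a similar way''. The paper's route shows where the parametrization comes from. Yours replaces the case-by-case eliminations by one identity and actually covers all eleven equations.

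\textbf{First correction: drop your last paragraph.} It is unnecessary, and the mechanism you propose would not work. Cross-pair consistency of the $\eta$'s is already built into the hypothesis. Moreover, I$_4$ cannot propagate it. It involves only the indices $\alpha,\beta$ and is satisfied by each pair's own uniformization. So nothing in it forces the $\eta_\gamma$ fixed through a reference index to agree with the one read off from the pair $(\beta,\gamma)$. The needed condition $\eta_{\alpha\beta}+\eta_{\beta\gamma}=\eta_{\alpha\gamma}$ on the pairwise differences is a relation among mixed second derivatives of $F$ in three components. It requires a three-index Hirota--Miwa relation, e.g.\ (\ref{caseI1}) with three distinct $\alpha_i$, and none of the eleven equations is of that kind.

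\textbf{Second correction: use the equations as they come out of the degenerations, not as printed.}
\begin{itemize}
\item By (\ref{mmkp16-3}), I$_3$ carries $R_\beta^{-1}$, and correspondingly I$'_3$ carries $R_\alpha^{-1}$.
\item By (\ref{mmkp17-4}), the last product in II$_4$ is $w_{\beta\alpha}(b)\bar w_{\alpha\beta}(a)$.
\end{itemize}
With the printed versions the identity does not close. Likewise, (\ref{t15}) gives $R_{\alpha\beta}=\epsilon_{\beta\alpha}\sin(\eta_\alpha-\eta_\beta)$. This is the sign the paper itself uses when computing $v_{\alpha\beta}$, not the one printed in (\ref{t3}).
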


\noindent
The reduction of a plethora of equations to just two is 
the main advantage of introducing the dynamical curve and its 
uniformization. 
Another advantage is given by the following proposition.

\begin{proposition}
Solutions to
the two equations (\ref{t15})
solve the whole dispersionless multi-component 
mKP hierarchy. 
\end{proposition}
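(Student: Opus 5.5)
We need to show that if $F$ satisfies (\ref{t15}) then the general dispersionless relation (\ref{mmkp12}) holds for all admissible data $P^+-P^-=2+Q$. The plan is to substitute (\ref{t15}) into (\ref{mmkp12}) and reduce it to a residue identity for a trigonometric function, by direct analogy with the proof of Proposition \ref{proposition:dkp-eq}.

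\textbf{Substitution.} The first equation of (\ref{t15}) gives
$$
E_{\alpha_i\alpha_s}(a_s,a_i)\,e^{\nabla_{\alpha_s}(a_s)\nabla_{\alpha_i}(a_i)F}
=\epsilon\cdot\sin(u_s-u_i),
$$
where $u_s\equiv u_{\alpha_s}(a_s)$. The sign here is fixed by $E_{\alpha\beta}(a,b)=-E_{\beta\alpha}(b,a)$ together with the $\epsilon$-factors in (\ref{E1}), and I will track all such signs at the end. Similarly, for $v_k\equiv u_{\beta_k}(b_k)$,
$$
E_{\beta_k\alpha_s}(a_s,b_k)\,e^{\nabla_{\alpha_s}(a_s)\nabla_{\beta_k}(b_k)F}=\pm\sin(u_s-v_k).
$$
The second equation of (\ref{t15}) gives
$$
e^{\nabla_{\alpha_s}(a_s)(\p_{\gamma_j}-\tilde\p_{\gamma_j})F}=\sin(u_s-\bar\eta_{\gamma_j}).
$$
After substitution, (\ref{mmkp12}) becomes, up to an overall sign,
$$
\sum_{s=1}^{P^+}\frac{\prod_{k=1}^{P^-}\sin(u_s-v_k)\prod_{j=1}^{Q}\sin(u_s-\bar\eta_{\gamma_j})}{\prod_{i\neq s}\sin(u_s-u_i)}=0 .
$$

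\textbf{Residue identity.} This is the classical trigonometric analogue of the rational identity. Consider the $\pi$-periodic (or antiperiodic) function
$$
f(u)=\frac{\prod_k\sin(u-v_k)\prod_j\sin(u-\bar\eta_{\gamma_j})}{\prod_{i=1}^{P^+}\sin(u-u_i)} .
$$
Since $P^+=P^-+Q+2$, the number of sine factors in the numerator is $M=P^+-2$. Pass to the variable $\zeta=e^{2iu}$, using $\sin(u-c)=(\zeta-e^{2ic})/(2i\,e^{i(u+c)}\,)$ up to factors. Then $f(u)\,du$ becomes a rational differential in $\zeta$, proportional to
$$
\zeta^{(P^+-M)/2-1}\,\frac{\prod(\zeta-\cdot)}{\prod(\zeta-\cdot)}\,d\zeta=\frac{\prod(\zeta-\cdot)}{\prod(\zeta-\cdot)}\,d\zeta ,
$$
since $(P^+-M)/2-1=0$. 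The residues of this differential at $\zeta=e^{2iu_s}$ reproduce the terms of the sum. There is no residue at $\zeta=0$, since the differential is regular there, and none at $\zeta=\infty$, since the degree difference is 2. Hence the sum of residues vanishes, which is exactly the required identity. This uses $P^+-P^-=2+Q$ in the same essential way as in Proposition \ref{proposition:dkp-eq}.

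\textbf{Degenerate cases and the expected obstacle.} The degenerate cases (coinciding points, points at infinity) follow by continuity and limits, as remarked after (\ref{mmkp5}). The main obstacle I expect is the sign bookkeeping. One must check that the product of the factors $\epsilon_{\alpha_i\alpha_s}$ arising from the substitution factorizes as $\sigma\cdot\sigma_s$, with an $s$-independent $\sigma$ and $\sigma_s$ matching the signs in (\ref{mmkp12}). I would handle this by ordering the indices so that $\epsilon_{\alpha\beta}=\mathrm{sgn}(\beta-\alpha)$ (with $+$ on the diagonal) and then verifying the claim on elementary transpositions. A secondary point is branch choice for half-integer powers of $\zeta$. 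To avoid it, work directly with $\cot$-type partial fractions on the cylinder $u\in\CC/\pi\ZZ$, noting that $f$ is bounded as $\mathrm{Im}\,u\to\pm\infty$ with equal limits whose difference gives the residue sum.
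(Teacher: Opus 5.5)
Your argument is correct and is essentially the paper's proof. Substituting (\ref{t15}) into (\ref{mmkp12}) gives (\ref{t16}), whose left-hand side is the sum of residues of the $\pi$-periodic function $f(u)$ over a period strip. That sum vanishes because $f$ decays as ${\rm Im}\,u\to\pm\infty$ when $P^+-P^-=Q+2$. Your passage to $\zeta=e^{2iu}$ is the same argument on the Riemann sphere, and there is no branch issue, since the exponential prefactors combine exactly into $\zeta$.

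The sign bookkeeping you flag is vacuous. Since $E_{\beta\alpha}(a,b)=\epsilon_{\beta\alpha}(a^{-1}-b^{-1})^{\delta_{\alpha\beta}}$, each factor $E_{\alpha_i\alpha_s}(a_s,a_i)\,e^{\nabla_{\alpha_s}(a_s)\nabla_{\alpha_i}(a_i)F}$ in (\ref{mmkp12}) is literally the left-hand side of the first equation of (\ref{t15}) with $(\alpha,a,\beta,b)=(\alpha_s,a_s,\alpha_i,a_i)$. The same holds for the factors with $(b_k,\beta_k)$. So each equals the corresponding sine with no extra sign.
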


\noindent
{\it Proof.}
It is enough to check that imposing the two equations (\ref{t15}), one
solves the general Hirota-Miwa equation (\ref{mmkp12}).
Indeed, 
the substitution of (\ref{t15}) into (\ref{mmkp12})
leads to the relation
\beq\label{t16}
\sum_{s=1}^{P^+}
\frac{\prod\limits_{k=1}^{P^-}\sin \Bigl (u_{\alpha_s}(a_s)-
u_{\beta_k}(b_k)\Bigr )}{\prod\limits_{i=1, \neq s}^{P^+}\sin \Bigl (u_{\alpha_s}(a_s)- u_{\alpha_i}(a_i)\Bigr )}\,
\prod_{j=1}^Q \sin 
\Bigl (u_{\alpha_s}(a_s)- \bar \eta_{\gamma_j}\Bigr )\, =\, 0,
\eeq
which is satisfied identically since the left-hand side 
is the sum of residues of the $\pi$-periodic function
$$
f(u)=\frac{\prod\limits_{k=1}^{P^-}\sin \Bigl (u-u_{\beta_k}(b_k)
\Bigr )}{\prod\limits_{i=1}^{P^+}\sin \Bigl (u-u_{\alpha_i}(a_i)
\Bigr )}\, \prod_{j=1}^Q \sin 
\Bigl (u- \bar \eta_{\gamma_j}\Bigr )
$$
in the fundamental domain (the strip $0\leq {\rm Re} \, u <\pi$).
It is indeed equal to zero because the 
contour integral over the boundary of the fundamental domain 
vanishes:
the integrals along the vertical lines 
${\rm Re}\, u=0$ and ${\rm Re}\, u=\pi$
cancel due to the periodicity, and the integrals over segments
at infinity vanish because the function tends to zero
as $u\to \pm i\infty$ (recall that $P^{+}-P^- =Q+2$, so 
the denominator contains two
extra $\sin$-functions).
\square

\begin{remark}
It is instructive to follow how the 
trigonometric parametrization  
obtained in the previous subsection works in the one-component
case $N=1$ considered in Section \ref{section:dmKP}. 
This (not so simple) question is discussed in detail in Appendix D.
\end{remark}

\section{Multi-component Toda lattice
hierarchy and its dispersionless version}

In the $N$-component Toda lattice (TL) 
hierarchy\footnote{Compared to the hierarchy dealt with 
in \cite{TZ25}, we consider here its somewhat more general
version with two sets of discrete variables rather than one.}
the independent variables are
$2N$ infinite sets of ``times'',
\beq\label{T1}
\begin{array}{l}
{\bf t}=\{{\bf t}_1, {\bf t}_2, \ldots , {\bf t}_N\}, \qquad
{\bf t}_{\alpha}=\{t_{\alpha , 1}, t_{\alpha , 2}, t_{\alpha , 3}, 
\ldots \, \},
\\ \\
\bar {\bf t}=\{\bar {\bf t}_1, \bar {\bf t}_2, \ldots , 
\bar {\bf t}_N\}, \qquad
\bar {\bf t}_{\alpha}=\{\bar t_{\alpha , 1}, \bar t_{\alpha , 2}, 
\bar t_{\alpha , 3}, \ldots \, \},
\end{array}
\qquad \alpha = 1, \ldots , N
\eeq
and two finite sets of discrete variables
$$
{\bf n}=\{n_1, \ldots , n_N\}, \quad
\bar {\bf n}=\{\bar n_1, \ldots , \bar n_N\}, \quad n_{\alpha}, \bar n_{\alpha}
\in \ZZ 
$$
such that
\beq\label{T2}
|{\bf n}|=-|\bar {\bf n}|.
\eeq
The universal
dependent variable is the tau-function
$\tau ({\bf n}, \bar {\bf n}, {\bf t},
\bar {\bf t})$. In the fermionic approach it is defined
as the following expectation value:
\beq\label{T3}
\tau ({\bf n}, \bar {\bf n}, {\bf t}, \bar {\bf t})=
\lbr {\bf n}\bigl | e^{J({\bf t})} g e^{-\bar J(\bar {\bf t})} 
\bigr |-\bar {\bf n}\rbr ,
\eeq
where $g$ is a neutral Clifford group element of the form (\ref{g0}).

\subsection{Multi-component TL hierarchy}
\label{section:multi-Toda}

As is shown in \cite{TZ25},
the tau-function satisfies an infinite
number of bilinear equations which can be encoded in a single 
integral bilinear functional relation of the form
\beq\label{T4}
\begin{array}{l}
\displaystyle{
\sum_{\gamma =1}^N \epsilon_{\gamma}({\bf n})
\epsilon_{\gamma}({\bf n'})
\oint_{C_{\infty}} \frac{dz}{z^2}\, z^{n_{\gamma}-n_{\gamma}'}\,
e^{\xi ({\bf t}_{\gamma}-{\bf t'}_{\gamma}, z)}}
\\ \\
\phantom{aaaaaaaaaaaaaaa}\displaystyle{
\times \, \tau \Bigl ({\bf n}\! -\! {\bf e}_{\gamma}, 
\bar {\bf n}, {\bf t}-[z^{-1}]_{\gamma} , \bar {\bf t}\Bigr )
\tau \Bigl (
{\bf n'}\! +\! {\bf e}_{\gamma}, \bar {\bf n'}, 
{\bf t'}+[z^{-1}]_{\gamma} , \bar {\bf t}'\Bigr )}
\\ \\
\displaystyle{
=\, \sum_{\gamma =1}^N \epsilon_{\gamma}(\bar {\bf n})
\epsilon_{\gamma}(\bar {\bf n'})
\oint_{C_{\infty}} \frac{dz}{z^2}\, z^{\bar n_{\gamma}-\bar n_{\gamma}'}\,
e^{\xi (\bar {\bf t}_{\gamma}-\bar {\bf t'}_{\gamma}, z)}}
\\ \\
\phantom{aaaaaaaaaaaaaaa}\displaystyle{
\times \, \tau \Bigl ({\bf n}\! , 
\bar {\bf n}-\! {\bf e}_{\gamma}, {\bf t} , \bar {\bf t}
-[z^{-1}]_{\gamma}\Bigr )
\tau \Bigl (
{\bf n'}, \bar {\bf n'}\! +\! {\bf e}_{\gamma}, 
{\bf t'} , \bar {\bf t}' +[z^{-1}]_{\gamma}\Bigr )}.
\end{array}
\eeq
For any $N\geq 1$, after setting 
$\bar {\bf n}'=\bar {\bf n}$, $\bar {\bf t}'=\bar {\bf t}$
in (\ref{T4}),
the bar-variables do not
participate in the equation entering as parameters. 
Then the right-hand side of (\ref{T4}) vanishes identically
and the rest becomes the integral bilinear
equation for the tau-function of the $N$-component 
KP hierarchy (equation (\ref{mmkp2}) with 
${\bf m}={\bf m}'$). In this sense the latter hierarchy
can be regarded as a subhierarchy of the multi-component TL. 
On the other hand, the following theorem fully proven in \cite{TZ25}
states 
that the $2N$-component KP
is in fact equivalent to the $N$-component TL. 

\begin{theorem}\cite{UT84,TZ25}
The $N$-component Toda lattice hierarchy is equivalent 
to the $2N$-component KP hierarchy. The equivalence is established
by the following relation for their tau-functions:
\beq\label{tautau}
\tau^{\rm TL} ({\bf n}, \bar {\bf n}, {\bf t}, \bar {\bf t})=
(-1)^{\frac{1}{2}|\bar {\bf n}| (|\bar {\bf n}|-1)}
\tau^{\rm KP}(\tilde {\bf n}, \tilde {\bf t}),
\eeq
where the sets of the variables $\tilde {\bf n}, \tilde {\bf t}$ 
are 
$$\tilde {\bf n}=\{n_1, \ldots , n_N, \bar n_1, \ldots , \bar n_N\},
\quad
 \tilde {\bf t}=\{{\bf t}_1, \ldots , {\bf t}_N, 
 \bar {\bf t}_1, \ldots , \bar {\bf t}_N\}.
 $$
\end{theorem}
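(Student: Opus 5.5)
The plan is to show that the integral bilinear relation (\ref{T4}) for $\tau^{\rm TL}$ turns into the integral relation (\ref{mmkp2}) with ${\bf m}={\bf m}'$ for a $2N$-component tau-function. The converse direction is then obtained by running the same identities backwards.

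\textbf{Step 1: identify the components.} Index the $2N$ components by $\gamma=1,\ldots,N$ (unbarred) and $\gamma=N+1,\ldots,2N$ (barred). The discrete variables are $\tilde {\bf n}$ with $|\tilde {\bf n}|=|{\bf n}|+|\bar {\bf n}|=0$, which is exactly condition (\ref{T2}) and matches the KP constraint (\ref{mmkp1a}) in the $2N$-component setting. Move the right-hand side of (\ref{T4}) to the left. Each barred integral then looks like a KP term for component $N+\gamma$, except for two differences. First, the overall sign is $-1$. Second, the sign factors are $\epsilon_\gamma(\bar{\bf n})\epsilon_\gamma(\bar{\bf n}')$ rather than the $2N$-component factors $\epsilon_{N+\gamma}(\tilde{\bf n})\epsilon_{N+\gamma}(\tilde{\bf n}')$.

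\textbf{Step 2: sign bookkeeping, which I expect to be the main obstacle.} By (\ref{epsilon}), for the first $N$ components
$$
\epsilon_\gamma(\tilde{\bf n})=\epsilon_\gamma({\bf n})\,(-1)^{|\bar{\bf n}|},
$$
while the barred components satisfy $\epsilon_{N+\gamma}(\tilde{\bf n})=\epsilon_\gamma(\bar{\bf n})$. The shifts ${\bf n}\mp{\bf e}_\gamma$ in the tau-function arguments change $|\bar{\bf n}|$ only in the barred terms. The prefactor $(-1)^{\frac12|\bar{\bf n}|(|\bar{\bf n}|-1)}$ in (\ref{tautau}) therefore contributes a relative sign $(-1)^{|\bar{\bf n}|-1}$ or similar between barred and unbarred terms. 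I will check directly that:
\begin{itemize}
\item this relative sign, combined with the $(-1)^{|\bar{\bf n}|+|\bar{\bf n}'|}$ coming from the unbarred $\epsilon$'s, reproduces the minus sign of moving the right-hand side to the left;
\item the admissibility conditions line up, namely $|\tilde{\bf n}|=1$ and $|\tilde{\bf n}'|=-1$ after the shifts.
\end{itemize}
The parity of $|\bar{\bf n}|-|\bar{\bf n}'|$ is the delicate point. I will handle it by a case check, writing $|\bar{\bf n}|-|\bar{\bf n}'|=k$.

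\textbf{Step 3: match the full relation sets.} One subtlety is that (\ref{T4}) is stated for all ${\bf n},{\bf n}'$ subject to (\ref{T2}), whereas (\ref{mmkp2}) requires $|\tilde{\bf n}|=1$ and $|\tilde{\bf n}'|=-1$. After the $\pm{\bf e}_\gamma$ shifts these conventions coincide up to relabeling. Once this is arranged, (\ref{T4}) is term-by-term the $2N$-component KP relation, so the set of solutions is the same.

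\textbf{Step 4: cross-check with fermions.} Alternatively, the fermionic representations make the equivalence transparent. Treat $\bar\psi^{(\alpha)}$ as components $N+\alpha$ and use $e^{-\bar J(\bar{\bf t})}$ as the time flow for those components. The state $|-\bar{\bf n}\rangle$ becomes a charged vacuum for the extra components once particles and holes are exchanged. The sign (\ref{tautau}) arises from reordering the fermions needed to build this vacuum, and I would use it as an independent check on Step 2.
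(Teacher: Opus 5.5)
Your proposal is correct and follows the paper's own route. The paper likewise relabels $n_{N+\mu}=\bar n_\mu$, ${\bf t}_{N+\mu}=\bar{\bf t}_\mu$, splits the $2N$-component KP sum into an unbarred and a barred half, and removes the residual relative sign $(-1)^{|\bar{\bf n}|-|\bar{\bf n}'|+1}$ using the prefactor $(-1)^{\frac12|\bar{\bf n}|(|\bar{\bf n}|-1)}$.

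You do not need a parity case split. The barred $\epsilon$'s already agree, $\epsilon_{N+\mu}(\tilde{\bf n})=\epsilon_\mu(\bar{\bf n})$, so the only mismatch is the factor $(-1)^{|\bar{\bf n}|+|\bar{\bf n}'|}$ in the unbarred half. The identity $(-1)^{\frac12 k(k-1)+\frac12(k-1)(k-2)}=(-1)^{k-1}$ then gives the barred terms the extra factor $(-1)^{|\bar{\bf n}|+|\bar{\bf n}'|-1}$ uniformly, which cancels that mismatch.
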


\noindent
Let us present the main points of this identification
(for more details see \cite{TZ25}). 
For the case of $M$-component KP hierarchy the integral
bilinear equation reads:
\beq\label{T5}
\begin{array}{l}
\displaystyle{
\sum_{\gamma =1}^M \epsilon_{\gamma}({\bf n})
\epsilon_{\gamma}({\bf n'})
\oint_{C_{\infty}} \frac{dz}{z^2}\, z^{n_{\gamma}-n_{\gamma}'}\,
e^{\xi ({\bf t}_{\gamma}-{\bf t'}_{\gamma}, z)}}
\\ \\
\phantom{aaaaaaaaaaaaaaa}\displaystyle{
\times \, \tau \Bigl ({\bf n}\! -\! {\bf e}_{\gamma}, 
{\bf t}-[z^{-1}]_{\gamma} \Bigr )
\tau \Bigl (
{\bf n'}\! +\! {\bf e}_{\gamma}, 
{\bf t'}+[z^{-1}]_{\gamma} \Bigr )}=0.
\end{array}
\eeq
(Here $\tau ({\bf n}, {\bf t})$ is the tau-function
of the $M$-component 
KP hierarchy.) 
It is valid for all ${\bf t}$, ${\bf t'}$ and ${\bf n}$, ${\bf n'}$
such that $|{\bf n}|=1$ and $|\bar {\bf n}'|=-1$.
Let us re-denote the variables in (\ref{T5}) with $M=2N$ in 
the following special way. Let the index $\mu$ run from $1$ to $N$ and set
\beq\label{re}
n_{N+\mu}=\bar n_{\mu}, \qquad 
{\bf t}_{N+\mu}=\bar {\bf t}_{\mu}.
\eeq
Divide the sum over $\gamma$ in (\ref{T5}) in two: 
one from $1$ to $N$ and
the other from $N+1$ to $2N$. Then, after the obvious redefinition of the
tau-function, the first sum in (\ref{T5}) becomes
almost equal to the left-hand side of (\ref{T4}) while the second one in
(\ref{T5}) is almost the right-hand side of (\ref{T4}). 
(``Almost'' because
it still remains to identify the sign factors.) 
The comparison of the sign factors shows that
what comes from (\ref{T5}) as the left-hand side of (\ref{T4}) contains
an extra sign factor $(-1)^{|\bar {\bf n}|-|\bar {\bf n}'|+1}$. As is easy
to see, it can be eliminated after multiplying the tau-function by 
the sign factor $(-1)^{\frac{1}{2}|\bar {\bf n}| (|\bar {\bf n}|-1)}$,
i.e., we identify the tau-functions as in (\ref{tautau}).
Note that on the KP side there is the restriction
$|{\bf n}|+|\bar {\bf n}|=0$,
while on the TL side it is
$|{\bf n}|=-|\bar {\bf n}|$ which is the same.

Summarizing, we see that 
the relation between multi-component TL and KP
hierarchies is two-fold. On the one hand, the latter is a subhierarchy 
of the former (``a half'' of it). On the other hand, the $N$-component
TL can be regarded, after a renaming of the variables,
as the $2N$-component KP. 
This fact makes it possible not to consider 
the Toda lattice separately but just 
to translate any statement about it to the language of the latter
hierarchy.

\subsection{The dispersionless limit}
\label{section:disp-Toda}

According to the multi-point TL-KP equivalence 
explained in section \ref{section:multi-Toda}
(see, in particular, (\ref{tautau}) and (\ref{re})),
it remains to rewrite the first equation in (\ref{t15}),
i.e.,
\beq\label{dt1}
\epsilon_{\beta \alpha}(a^{-1}-b^{-1})^{\delta_{\alpha \beta}}
e^{\nabla_{\alpha}(a)\nabla_{\beta}(b)F}=
\sin \Bigl (u_{\alpha}(a)-u_{\beta}(b)\Bigr )
\eeq
in terms of the variables of the TL hierarchy.
So, we introduce the bar-counterpart of the vector field (\ref{mmkp11}):
\beq\label{n1a}
\bar \nabla_{\alpha}(z)=\bar \p_{\alpha}+\sum_{k\geq 1}\frac{z^{-k}}{k}\,
\p_{\bar t_{\alpha , 1}},
\eeq
where
$
\bar \p_{\alpha}\equiv \p_{\bar t_{\alpha , 0}}.
$
Also, we need to take into account that the $F$-functions of the two
hierarchies slightly differ because of the sign factor in (\ref{tautau}).
Writing it as
$$
(-1)^{\frac{1}{2}|\bar {\bf n}|(|\bar {\bf n}|-1)}=
e^{\frac{i\pi}{2}|\bar {\bf n}|(|\bar {\bf n}|-1)},
$$
we have:
$$
\frac{i\pi}{2}|\bar {\bf n}|(|\bar {\bf n}|-1)=
\frac{i\pi}{2}\left (\sum_{\mu , \nu}\bar n_{\mu}\bar n_{\nu}-
\sum_{\mu}\bar n_{\mu}\right )=
\frac{i\pi}{2\hbar^2}\left (\sum_{\mu , \nu}\bar t_{\mu ,0}\bar t_{\nu ,0}-
\hbar \sum_{\mu}\bar t_{\mu ,0}\right ),
$$
from which one can see that
the relation between the two $F$-functions
is as follows:
\beq\label{FF}
F^{\rm TL}({\bf t}, \bar {\bf t})
= \frac{i\pi}{2}\sum_{\mu , \nu}\bar t_{\mu ,0}\bar t_{\nu ,0} +
F^{\rm KP}(\tilde {\bf t}),
\eeq
where the sets of times are
$$
{\bf t}=\{{\bf t}_1, \ldots , {\bf t}_{N}\}, 
\qquad
\bar {\bf t}=\{ \bar {\bf t}_1, \ldots , 
\bar {\bf t}_{N}\}, \qquad
\tilde {\bf t}=\{{\bf t}_1, \ldots , {\bf t}_{2N}\},
$$
with ${\bf t}_{\mu}=\tilde {\bf t}_{\mu}$,
$\bar {\bf t}_{\mu}=\tilde {\bf t}_{\mu +N}$ for $\mu =1, \ldots , N$,
and
$$
{\bf t}_{\gamma}=\{ t_{\gamma , 0}, t_{\gamma , 1}, \ldots , \}, 
\qquad
\bar {\bf t}_{\gamma}=\{ \bar t_{\gamma , 0}, \bar t_{\gamma , 1}, \ldots , \}.
$$
Therefore, translating equation (\ref{dt1}) into the Toda language, 
we should take into account that 
$$
e^{\nabla_{\alpha +N}(a)\nabla_{\beta +N}(b)F^{\rm KP}} \to -\, 
e^{\bar \nabla_{\alpha}(a)\bar \nabla_{\beta}(b)F^{\rm TL}},
\quad \alpha , \beta =1, \ldots , N.
$$
This can be done by introducing, along with the $u_{\alpha}$'s,
the functions
$\bar u_{\alpha}(z)=-u_{\alpha +N}(z)$, with the expansions 
near $\infty$ being of the form
\beq\label{baru}
\bar u_{\alpha}(z)=\bar \eta_{\alpha}({\bf t}, \bar {\bf t})
+\sum_{k\geq 1}\bar c_k^{(\alpha )}({\bf t}, \bar {\bf t}) z^{-k}.
\eeq
The final result is the following system of equations:
\beq\label{final}
\left \{
\begin{array}{rcl}
E_{\beta \alpha}(a,b) e^{\nabla_{\alpha}(a)\nabla_{\beta}(b)F}
&=& \displaystyle{ \sin \Bigl (u_{\alpha}(a)- 
u_{\beta}(b)\Bigr ),}
\\ && \\
e^{\nabla_{\alpha}(a)\bar \nabla_{\beta}(b)F}& =& \displaystyle{
\sin \Bigl (u_{\alpha}(a)+ \bar u_{\beta}(b))\Bigr ),}
\\ && \\
E_{\beta \alpha}(a,b) e^{\bar \nabla_{\alpha}(a)\bar \nabla_{\beta}(b)F}
& =& \displaystyle{\sin \Bigl (\bar u_{\alpha}(a)- 
\bar u_{\beta}(b))\Bigr )}.
\end{array} \right.
\eeq
Here $\alpha , \beta =1, \ldots , N$.
Our definition of the $\bar u_{\alpha}$'s is such that the bar-counterpart
of equations (\ref{dt1}) has exactly the same form, with 
$\bar u_{\alpha}$-functions
instead of $u_{\alpha}$'s. However, because of this
the equation
that mixes the variables with and without bar (the second line in (\ref{final}))
contains $u_{\alpha}+\bar u_{\beta}$ rather than
$u_{\alpha}-\bar u_{\beta}$. 
In particular, at $N=1$ the system (\ref{final}) is
\beq\label{final2}
\left \{
\begin{array}{rcl}
(a^{-1}-b^{-1})e^{\nabla (a)\nabla(b)F}
&=& \displaystyle{ \sin \Bigl (u(a)- 
u(b)\Bigr ),}
\\ && \\
e^{\nabla(a)\bar \nabla (b)F}& =& \displaystyle{
\sin \Bigl (u(a)+ \bar u(b))\Bigr ),}
\\ && \\
(a^{-1}-b^{-1})e^{\bar \nabla(a)\bar \nabla (b)F}
& =& \displaystyle{\sin \Bigl (\bar u(a)- 
\bar u(b))\Bigr )}.
\end{array} \right.
\eeq

\section{One-component hierarchies of Pfaff type}

\subsection{The simplest case: 
BKP and its dispersionless limit}
\label{section:smallBKP}

The BKP hierarchy \cite{JM83}, \cite{DJKM82}-\cite{Z21} 
is also known as the ``small BKP hierarchy'' to distinguish it from
the more general ``large BKP hierarchy'' considered in the
next subsection. The independent variables are
$$
\t =\{t_1, t_3, t_5, \ldots \}.
$$
(We use the letter $\t $ instead of ${\bf t}$ 
for this set that contains only times 
with odd indices.)

\subsubsection{The small BKP hierarchy}

The tau-function of the small BKP hierarchy,
$\tau (\t )$, satisfies the integral bilinear equation of the form:
\beq\label{BB1}
\frac{1}{2\pi i}\oint_{C_{\infty}}\, \frac{dz}{z}\,
e^{\xi_{o} (\t -\t ',z)}
\tau \Bigl (\t -2[z^{-1}]_{o}\Bigr )
\tau \Bigl (\t ' +2[z^{-1}]_{o}\Bigr )=\tau (\t )\tau (\t '),
\eeq
valid for all $\t , \, \t '$. We use the 
following notation:
\beq\label{xio}
\begin{array}{l}
\displaystyle{
\xi _{o} (\t -\t ',z)=\sum_{k \geq 1, \, {\rm odd}}
\frac{z^{-k}}{k}\, \t _k,}
\\ \\ 
\t \pm 2[z^{-1}]_{o} =\Bigl \{t_1 \pm 2z^{-1}, \,
t_3 \pm \frac{2}{3} z^{-3}, \, t_5 \pm \frac{2}{5} z^{-5},
\ldots \, \Bigr \}.
\end{array}
\eeq
The simplest solution to (\ref{BB1}) is $\tau (\t )=1$.

The Miwa substitution is 
$\displaystyle{
\t -\t '=2\sum_{i=1}^P [a_i^{-1}]_{o},}
$
where the points $a_i\in \CC$ belonging to a neighborhood
of infinity are assumed to be distinct\footnote{Note that 
there is no need to include terms like $-[b_i^{-1}]_{o}$
in the right-hand side because 
$-[b_i^{-1}]_{o}=[-b_i^{-1}]_{o}$.}.
We have:
\beq\label{BB2}
e^{\xi_{o} (\t -\t ',z)}=\prod_{i=1}^P \frac{a_i +z}{a_i-z},
\eeq
and the integral in (\ref{BB1}) can be evaluated by the residue
calculus taking into account 
that there is a non-zero residue at $\infty$.
As a result, equation (\ref{BB1}) converts into
\beq\label{BB3}
\sum_{s=1}^P \Bigl (\prod_{i=1, \neq s}A^{-1}(a_s, a_i)\Bigr )
\tau \Bigl (\t +2\sum_{j=1, \neq s} \!\!\! [a^{-1}_j ]_{o}\Bigr )
\tau \Bigl (\t +2\, [a^{-1}_s ]_{o}\Bigr )=
s_P \tau \bigl (\t \bigr )
\tau \Bigl (\t +2\sum_{j=1}[a^{-1}_j ]_{o}\Bigr ),
\eeq
where
\beq\label{A}
A(a,b)=\frac{a^{-1}-b^{-1}}{a^{-1}+b^{-1}} \, =\, 
\frac{b-a}{b+a}\, =\, \frac{E(a,\, b)}{E(a, -b)}
\eeq
and
$$
s_P =\frac{1}{2}\Bigl (1-(-1)^P\Bigr )=\left \{
\begin{array}{l}
0 \quad \mbox{for even $P$}
\\ 
1 \quad \mbox{for odd $\, P$}.
\end{array}\right.
$$
So, for $P$ even the right-hand side of (\ref{BB3}) is $0$.
At $P=1$ and $P=2$ (\ref{BB3}) 
is a trivial identity. The simplest meaningful
choice is $P=3$, in which case we obtain the well known 4-term
bilinear equation first found by Miwa in \cite{Miwa82}
(equation (\ref{BB6}) below).

\begin{remark}
The equation 
(\ref{BB3}) with odd $P$ can be obtained as a particular case
of the one with even $P$ simply by letting $a_P\to \infty$.
That is why we will not consider the case of odd $P$ separately
and assume that $P$ is even. 
\end{remark}

For even $P$, equation
(\ref{BB3}) acquires the simpler form
\beq\label{BB3a}
\sum_{s=1}^P \Bigl(\prod_{i=1, \neq s}A^{-1}(a_s, a_i)\Bigr )
\tau \Bigl (\t +2\sum_{j=1, \neq s} \!\!\! [a^{-1}_j ]_{o}\Bigr )
\tau \Bigl (\t +2\, [a^{-1}_j ]_{o}\Bigr )=0
\eeq
with $0$ in the right-hand side. 
For even $P\geq 4$ it contains
$P$ bilinear terms, with the coefficients being rational
functions of the $a_i$'s.
The simplest case is $P=4$:
\beq\label{BB5}
\begin{array}{l}
\phantom{+}
\Bigl (A(a_1,a_2)A(a_1,a_3)A(a_1,a_4)\Bigr )^{-1}
\tau \Bigl (\t +2[a_2^{-1}]_{o} +2[a_3^{-1}]_{o} +2[a_4^{-1}]_{o}\Bigr )
\tau \Bigl (\t +2[a_1^{-1}]_{o}\Bigr )
\\ \\
+\, 
\Bigl (A(a_2,a_1)A(a_2,a_3)A(a_2,a_4)\Bigr )^{-1}
\tau \Bigl (\t +2[a_1^{-1}]_{o} +2[a_3^{-1}]_{o} +2[a_4^{-1}]_{o}\Bigr )
\tau \Bigl (\t +2[a_2^{-1}]_{o}\Bigr )
\\ \\
+\, 
\Bigl (A(a_3,a_2)A(a_3,a_1)A(a_3,a_4)\Bigr )^{-1}
\tau \Bigl (\t +2[a_1^{-1}]_{o} +2[a_2^{-1}]_{o} +2[a_4^{-1}]_{o}\Bigr )
\tau \Bigl (\t +2[a_3^{-1}]_{o}\Bigr )
\\ \\
+\, 
\Bigl (A(a_4,a_1)A(a_4,a_2)A(a_4,a_3)\Bigr )^{-1}
\tau \Bigl (\t +2[a_1^{-1}]_{o} +2[a_2^{-1}]_{o} +2[a_3^{-1}]_{o}\Bigr )
\tau \Bigl (\t +2[a_4^{-1}]_{o}\Bigr )\, =\, 0.
\end{array}
\eeq
Letting $a_4\to\infty$, we get the 3-point equation
\beq\label{BB6}
\begin{array}{l}
\phantom{+}
\Bigl (A(a_1,a_2)A(a_1,a_3)\Bigr )^{-1}
\tau \Bigl (\t +2[a_2^{-1}]_{o} +2[a_3^{-1}]_{o}\Bigr )
\, \tau \Bigl (\t +2[a_1^{-1}]_{o}\Bigr )
\\ \\
+\, 
\Bigl (A(a_2,a_2)A(a_2,a_3)\Bigr )^{-1}
\tau \Bigl (\t +2[a_1^{-1}]_{o} +2[a_3^{-1}]_{o} \Bigr )
\, \tau \Bigl (\t +2[a_2^{-1}]_{o}\Bigr )
\\ \\
+\, 
\Bigl (A(a_3,a_2)A(a_3,a_3)\Bigr )^{-1}
\tau \Bigl (\t +2[a_1^{-1}]_{o} +2[a_2^{-1}]_{o} \Bigr )
\, \tau \Bigl (\t +2[a_3^{-1}]_{o}\Bigr )
\\ \\
-\, 
\tau \Bigl (\t +2[a_1^{-1}]_{o} +2[a_2^{-1}]_{o} +2[a_3^{-1}]_{o}\Bigr )
\, \tau \bigl (\t \bigr )\, =\, 0
\end{array}
\eeq
first obtained by Miwa in \cite{Miwa82}.
Similarly to Theorem \ref{theorem:kp}, the following statement
holds true:

\begin{theorem}\cite{Shigyo13}
\label{theorem:bkp}
Equation (\ref{BB6}) is equivalent to the whole BKP hierarchy
defined by (\ref{BB1}).
\end{theorem}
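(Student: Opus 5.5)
We have to show that the 4-term relation (\ref{BB6}) implies the general relation (\ref{BB3a}) for every even $P$ (odd $P$ then follows by letting $a_P\to\infty$, as in the Remark above). The converse direction is immediate, since (\ref{BB6}) is the case $P=4$, $a_4\to\infty$ of (\ref{BB3a}). The overall strategy mirrors Theorem \ref{theorem:kp}. We regard the tower of relations (\ref{BB3a}) as equivalent to the integral relation (\ref{BB1}), because expanding in the Miwa variables recovers all coefficients of $\t-\t'$. It therefore suffices to derive (\ref{BB3a}) for all even $P$ from the $P=4$ case.

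\textbf{Step 1: reduce to the wave function.} Introduce the BKP wave function
$$\psi(\t,z)=e^{\xi_o(\t,z)}\tau(\t-2[z^{-1}]_o)/\tau(\t),$$
and rewrite (\ref{BB6}) as an addition formula for it. In Miwa shifts, the functions $\tau(\t+2[a^{-1}]_o+2[b^{-1}]_o)\tau(\t)/(\tau(\t+2[a^{-1}]_o)\tau(\t+2[b^{-1}]_o))$ play the role of a ``two-point function''. From them, (\ref{BB6}) says that a certain antisymmetric Pfaffian-type combination of these functions vanishes. The key expected statement is a Pfaffian identity, which I would prove by induction on $P$. The ratio
$$\tau\bigl(\t+2\textstyle\sum_i[a_i^{-1}]_o\bigr)\tau(\t)\Big/\prod_i\tau\bigl(\t+2[a_i^{-1}]_o\bigr)$$
equals $\prod_{i<j}A(a_i,a_j)$ times the Pfaffian of the matrix of the normalized two-point functions. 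This is the Wick-type factorization familiar from neutral-fermion tau-functions.

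\textbf{Step 2: the induction.} Apply (\ref{BB6}) at the shifted point $\t+2\sum_{j\ge 4}[a_j^{-1}]_o$. The base shift is legitimate because (\ref{BB6}) holds for all $\t$. This expresses the $P$-fold shifted tau through the $(P-2)$- and $(P-1)$-fold ones. The recursion is exactly the Laplace-type expansion of a Pfaffian along its first row. Once the Pfaffian formula is established, substitute it into (\ref{BB3a}). The left side then becomes an algebraic identity in the normalized two-point functions. That identity is the expansion of a Pfaffian of order $P+?$ with two identical rows, so it vanishes. Equivalently, it reduces to the sum-of-residues identity for the rational function
$$\frac{1}{z}\prod_i\frac{a_i+z}{a_i-z}.$$
This is the analogue of the residue argument in Proposition \ref{proposition:dkp-eq}.

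\textbf{Main obstacle.} The hardest step is the inductive consistency. Applying (\ref{BB6}) with different choices of which three points to keep must give the same $P$-fold shifted tau-function, and this independence is where a genuine compatibility condition could appear. I would first try to show it by symmetry. The Pfaffian is antisymmetric in a controlled way, and the prefactor $\prod A(a_i,a_j)$ compensates, so the result is symmetric in all $a_i$ automatically. If this becomes messy, the fallback is to follow Shigyo's direct argument. That argument expands (\ref{BB6}) in $a_3^{-1}$ to obtain the linear auxiliary equations for $\psi$. It then proves that these generate the BKP Lax hierarchy, and the hierarchy in turn implies (\ref{BB1}).
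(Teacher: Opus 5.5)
Your route differs from what the paper relies on: it gives no proof and defers to Shigyo, whose argument, close to your fallback, works with differential consequences of (\ref{BB6}). Your plan is a Wick-type Pfaffian formula for all Miwa-shifted $\tau$'s, proved by induction from (\ref{BB6}) at shifted base points, after which (\ref{BB3a}) becomes a degenerate Pfaffian expansion. It can be made to work, but as written its central lemma is false.

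Under $\tau\to c\tau$, a symmetry of (\ref{BB1}) and (\ref{BB6}), your left-hand side $\tau(\t+2\sum_i[a_i^{-1}]_o)\tau(\t)/\prod_{i=1}^P\tau(\t+2[a_i^{-1}]_o)$ scales as $c^{2-P}$. But $\prod A$ times a Pfaffian of your scale-invariant two-point functions does not scale, so the formula already fails for $P=4$. Also, for odd $P$, which the induction cannot avoid, a Pfaffian of odd order is zero.

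With $\tau_S=\tau(\t+2\sum_{i\in S}[a_i^{-1}]_o)$ and $A_{ij}=A(a_i,a_j)$, the correct statement is
\[
\frac{\tau_S}{\tau_{\emptyset}}=\Bigl(\prod_{i<j}A_{ij}\Bigr)^{-1}{\rm Pf}\,(M_{ij}),\qquad M_{ij}=A_{ij}\,\frac{\tau_{\{i,j\}}}{\tau_{\emptyset}},
\]
where for odd $|S|$ you adjoin an index $\infty$ with $M_{i\infty}=\tau_{\{i\}}/\tau_{\emptyset}$ (consistent with $A(a,\infty)=1$). For $|S|=3$ this is exactly (\ref{BB6}). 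For $|S|=4$ it is (\ref{BB5}) with $a_{2,3,4}\to -a_{2,3,4}$ at a shifted base point. In your normalization it reads $\tau_S\tau_{\emptyset}^{P-1}/\prod_i\tau_{\{i\}}=(\prod_{i<j}A_{ij})^{-1}{\rm Pf}(A_{ij}N_{ij})$, so both the power of $\tau(\t)$ and the placement of the $A$'s were off.

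Second, the induction step is not a Laplace expansion. Relation (\ref{BB6}) at the base point $\t+2\sum_{j\in J}[a_j^{-1}]_o$, with $J=\{4,\ldots ,P\}$, is quadratic: it links $\tau_{J\cup\{1,2,3\}}\tau_J$ to the products $\tau_{J\cup\{i,j\}}\tau_{J\cup\{k\}}$. Insert the induction hypothesis (at the same base point $\t$) and cancel the $A$-prefactors. It then becomes the quadratic Pfaffian identity ${\rm Pf}(abcd\,J)\,{\rm Pf}(J)={\rm Pf}(ab\,J)\,{\rm Pf}(cd\,J)-{\rm Pf}(ac\,J)\,{\rm Pf}(bd\,J)+{\rm Pf}(ad\,J)\,{\rm Pf}(bc\,J)$ with $\{a,b,c,d\}=\{1,2,3,\infty\}$ (for odd $|J|$, its companion identity for odd base sets). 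You then divide by the invertible series $\tau_J$.

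Once this is in place, the ``compatibility'' you single out as the main obstacle never arises. $\tau_{\{1,\ldots ,P\}}$ is a given function, and a single choice of three points already proves the formula.

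Your last step is right. Substituting into (\ref{BB3a}) gives $(\prod_{i<j}A_{ij})^{-1}\sum_s(-1)^{s-1}M_{s\infty}\,{\rm Pf}(M_{\hat s})$, where $M_{\hat s}$ is the matrix on $\{1,\ldots ,P,\infty\}\setminus\{s\}$. This is the expansion of a Pfaffian of order $P+2$ with two equal rows, hence zero.

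However, the residue identity for $z^{-1}\prod_i(a_i+z)/(a_i-z)$ is not an equivalent form of this. It is (\ref{BB3a}) only for $\tau=$ const (and, with $p_i$ in place of $a_i$, the dispersionless (\ref{db5})), whereas here the $M_{ij}$ are arbitrary. Repaired this way, your argument is a purely algebraic proof that bypasses the differential-Fay/Lax machinery of the cited one.
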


\begin{remark}
At $a_3=\infty$ the left-hand side of (\ref{BB6}) 
vanishes identically.
To obtain something non-trivial from this, one should expand 
the equation in 
powers of $a_3^{-1}\to 0$ and keep the first nonvanishing 
term (of order $a_3^{-1}$). As a result, one 
obtains an equation containing derivatives with respect to the
times.
\end{remark}

\subsubsection{The dispersionless limit: the dBKP hierarchy}

To perform the dispersionless limit, we introduce the BKP
version of the differential operator (\ref{kp7}):
\beq\label{db1}
D^{o}(z)=\sum_{k\geq 1, \, {\rm odd}}
\frac{z^{-k}}{k} \, \p_{t_k}.
\eeq
Details of the limit are basically the same as in the KP case,
and, omitting them, we present only the result:
the limiting form of equation (\ref{BB3}) is
\beq\label{db2}
\sum_{s=1}^P \prod_{i=1, \neq s}^P\Bigl (  A^{-1}(a_s, a_i)
e^{-4D^{o}(a_s)D^{o}(a_i)F}\Bigr )=0.
\eeq
In particular,
the limit of the 3-point equation (\ref{BB6}) reads:
\beq\label{BB6a}
\begin{array}{l}
\phantom{a}(a_1-a_2)(a_3+a_1)(a_3+a_2)e^{4D^{o}(a_1)D^{o}(a_2) F} 
+
(a_2-a_3)(a_1+a_2)(a_1+a_3)e^{4D^{o}(a_2)D^{o}(a_3) F} 
\\ \\
\phantom{aaa}+\,
(a_3-a_1)(a_2+a_3)(a_2+a_1)e^{4D^{o}(a_3)D^{o}(a_1) F}
-\, 
\displaystyle{\prod_{i<j}^3 (a_i-a_j)
e^{4D^{o}(a_i)D^{o}(a_j)F}=0.}
\end{array}
\eeq

The next step is to obtain from it a 2-point relation letting
$a_3\to \infty$. As it was already mentioned, 
in this limit the left-hand side vanishes identically.
So, we should expand it in 
powers of $a_3^{-1}\to 0$ and keep the first 
nonvanishing term. We present the result in the form which is
analogous to the corresponding equation in the KP case 
(see (\ref{kp12})).
Set
\beq\label{db3}
p(z)=z-2D^{o}(z)\p_{t_1}F,
\eeq
then the 2-point equation obtained as a corollary of (\ref{BB6a}) is
\beq\label{db4}
\frac{a-b}{a+b}\, \, e^{4D^{o}(a)D^{o}(b)F}=
\frac{p(a)-p(b)}{p(a)+p(b)}
\eeq
(we have put $a_1=a$, $a_2=b$).

\begin{proposition}
The 2-point equation (\ref{db4}) 
is equivalent to the whole
hierarchy (\ref{db2}).
\end{proposition}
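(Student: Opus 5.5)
The plan mirrors the proof of Proposition \ref{proposition:dkp-eq}. There are two directions. First, (\ref{db4}) is a consequence of the hierarchy (\ref{db2}), since it was derived from the 3-point equation (\ref{BB6a}). That equation is itself the case $P=4$ of (\ref{db2}) with $a_4\to\infty$, after multiplying through by the common denominators. So the real content is the converse: (\ref{db4}) alone implies (\ref{db2}) for every even $P$.

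To prove the converse, I substitute (\ref{db4}) into (\ref{db2}). Since $A(a_s,a_i)=(a_i-a_s)/(a_i+a_s)$, the relation (\ref{db4}) gives
$$
A^{-1}(a_s,a_i)\,e^{-4D^{o}(a_s)D^{o}(a_i)F}=\frac{p_i+p_s}{p_i-p_s},\qquad p_i\equiv p(a_i).
$$
Equation (\ref{db2}) then becomes the purely algebraic identity
$$
\sum_{s=1}^{P}\prod_{i=1,\neq s}^{P}\frac{p_i+p_s}{p_i-p_s}=0 \qquad (P \ \mbox{even}).
$$
It has to hold for arbitrary distinct $p_1,\ldots ,p_P$. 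In the limit used for odd $P$, the corresponding right-hand side is $1$, consistent with $s_P$.

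I will prove this identity by residue calculus. Take the rational function
$$
f(p)=\frac{1}{p}\prod_{i=1}^{P}\frac{p_i+p}{p_i-p}.
$$
At $p=p_s$ the residue is $-\frac{2p_s}{p_s}\prod_{i\neq s}\frac{p_i+p_s}{p_i-p_s}$, which is $-2$ times the $s$-th term of the sum. At $p=0$ the residue is $1$. At infinity, $f(p)\sim (-1)^P/p$, so the residue there is $-(-1)^P$. The residues must sum to zero, which gives
$$
-2\sum_s(\ldots)+1-(-1)^P=0.
$$
Hence the sum equals $s_P$: it is $0$ for even $P$ and $1$ for odd $P$. This reproduces exactly the structure of (\ref{BB3}) in the limit, so (\ref{db2}) holds identically once (\ref{db4}) is imposed.

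The main obstacle is bookkeeping rather than conceptual. I need to confirm that the $\hbar\to 0$ limit of (\ref{BB3}) really has the coefficients as written in (\ref{db2}), with the factor $4$ in the exponent coming from the shifts $2[a^{-1}]_o$. I also need the sign conventions to match, so that the substitution produces exactly $(p_i+p_s)/(p_i-p_s)$ and not its inverse. A minor point is that the $a_i$, and hence the $p_i$, should be distinct and nonzero near infinity, so that all poles are simple; degenerate cases then follow by continuity.
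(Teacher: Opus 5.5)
Your proposal is correct and is essentially the paper's proof. Substituting (\ref{db4}) into (\ref{db2}) reduces it to the identity (\ref{db5}), which the paper likewise proves as a sum of residues of $f(p)=p^{-1}\prod_{i=1}^{P}(p+p_i)/(p-p_i)$, with the contributions at $0$ and $\infty$ cancelling for even $P$; your orientation $(p_i+p_s)/(p_i-p_s)$, which is what direct substitution gives, differs from the paper's $(p_s+p_i)/(p_s-p_i)$ only by an overall factor $(-1)^{P-1}$, so nothing changes.
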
 

\noindent
{\it Proof.}
Plugging (\ref{db4}) into
(\ref{db2}), we arrive at the relation
\beq\label{db5}
\sum_{s=1}^P\prod_{i=1, \neq s}^P 
\frac{p_s+p_i}{p_s-p_i}=0, \qquad p_i\equiv p(a_i),
\eeq
which is satisfied identically (for all $p_i\neq 0$) 
since the left-hand side is
proportional (for even $P$ only!) to sum 
of residues of the function
$$
f(p)=\frac{1}{p}\, \prod_{i=1}^P \frac{p+p_i}{p-p_i},
$$
including the residues at zero and at infinity (which for even $P$
cancel each other).
\square

So, we see that, like in the dKP case, the BKP dynamical 
curve is rather meaningless:
it is a rational curve (just the Riemann sphere) 
defined by the linear equation
of the form $x-y=0$.

To conclude this section, 
we comment on how equation (\ref{db4}) for the small dBKP is
related to equation (\ref{kp12}) for the dKP hierarchy. 
There is a well known relation between KP and BKP tau-functions:
the latter is square root of the former, in which one should 
put $t_{2k}=0$ for all $k\geq 1$ and restrict oneself by 
a class of KP-solutions that satisfy certain conditions 
(for details see, e.g,
\cite{Z21} and \cite{KZ21}). In the dispersionless limit 
these conditions for the dKP $F$-function
$F^{\rm KP}({\bf t})$ are as follows:
\beq\label{db6}
\p_{t_{2k}}F^{\rm KP}(t_1, t_2, t_3, t_4, \ldots )
\Bigr |_{\, {\bf t}_{\rm even}=0}=0.
\quad \mbox{for all $k\geq 1$ and all times ${\bf t}_{\rm odd}=
{\sf t}$ }.
\eeq
Here ${\bf t}_{\rm even}$ (${\bf t}_{\rm odd}$) 
is the set of all ``even'' times
$t_2, t_4, t_6, \ldots $ (respectively, ``odd'' times
$t_1, t_3, t_5, \ldots $). To avoid a misunderstanding,
we emphasize that these conditions in no way 
mean that the function does not depend on the ``even'' variables:
for example, second order derivatives $\p_{t_{2k}}^2 F^{\rm KP}$ 
and higher ones may be
non-zero. If these conditions are satisfied, the relation
between the $F$-functions is as follows:
\beq\label{db7}
F^{\rm KP}(t_1, 0, t_3, 0, t_5, 0, \ldots )=2
F^{\rm BKP}(t_1, t_3, t_5, \ldots ).
\eeq
Therefore,
$
D(z)F^{\rm KP}=2D^{o}(z)F^{\rm BKP}
$, so
$$
p(z)=z-D(z)\p_{t_1}F^{\rm KP}=z-2D^{o}(z)\p_{t_1}F^{\rm BKP}.
$$
Note that for this class of solutions the function $p(z)$ 
is odd: $p(-z)=-p(z)$.
Moreover, we have
\beq\label{kp12b}
D(a)\Bigl (D(b)-D(-b)\Bigr )F^{\rm KP}=4D^{o}(a)D^{o}(b)F^{\rm BKP}.
\eeq
Now, let us write down 
two copies of equation (\ref{kp12}) with $a_1=a$:
one for 
$a_2=b$, another for $a_2=-b$:
\beq\label{kp12a}
\begin{array}{l}
(a-b)e^{D(a)D(b)F^{\rm KP}} =p(a)-p(b),
\\ \\
(a+b)e^{D(a)D(-b)F^{\rm KP}} =p(a)+p(b)
\end{array}
\eeq
and divide one by another. Using (\ref{kp12b}), 
we get 
equation (\ref{db4}).

\subsection{One-component DKP and dDKP}

\subsubsection{One-component DKP}

The next case of interest is the DKP hierarchy, also known 
as the coupled KP hierarchy or Pfaff lattice 
(see \cite{HO}-\cite{Kodama}). 
Similarly to the mKP case,
the corresponding tau-function $\tau(n, \textbf{t})$ 
is a function of the continuous times 
${\bf t}=\{t_1, t_2, t_3, \ldots \}$
and a discrete variable $n \in \ZZ$. The 
tau-function satisfies the following bilinear equation:
\beq\label{dkp-bil-rel}
\begin{array}{l}
\displaystyle{
\oint_{C_{\infty}} \frac{dz}{z^2} \, z^{n - n'}
e^{\xi ({\bf t}-{\bf t'}, z)} 
\tau \Bigl( n - 1, {\bf t} - [z^{-1}] \Bigr)
\tau \Bigl( n' + 1, {\bf t'} + [z^{-1}] \Bigr) 
}
\\ \\
\phantom{aaaaa}
\displaystyle{+ \,
\oint_{C_{\infty}} \frac{dz}{z^2} \, z^{-(n-n')}
e^{-\xi ({\bf t}-{\bf t'}, z)} 
\tau \Bigl( n + 1, {\bf t} + [z^{-1}] \Bigr)
\tau \Bigl( n' - 1, {\bf t'} - [z^{-1}] \Bigr) \, =0.
}
\end{array}
\eeq
Here, the contour $C_{\infty}$ and $\xi(\textbf{t}, z)$ are the 
same as in the KP and mKP cases. This equation holds 
for all ${\bf t}, \, {\bf t'}$ and all $n,\, n'$ such that
\beq\label{dkp-parity}
n - n' \in 2\ZZ,
\eeq
i.e., $n, n'$ should be either both even or both 
odd\footnote{If $n - n' \in 2\z + 1$, 
a non-zero right-hand side arises in equation (\ref{dkp-bil-rel}),
see equation (\ref{L1}) below.
For arbitrary $n, n'$ it defines the large BKP hierarchy 
considered in Section 
\ref{section:largeBKP}.}.}
Note that the first line in (\ref{dkp-bil-rel})
is the same as the left-hand side of the corresponding equation 
(\ref{mkp1}) for the mKP hierarchy. The second line of
(\ref{dkp-bil-rel}) is obtained from the first  one 
by the interchange $(n, {\bf t}) \leftrightarrow (n',{\bf t'}).$

In terms of free fermions, the tau-function 
satisfying (\ref{dkp-bil-rel}) is represented as the expectation value
\beq\label{tau-dkp}
\tau (n, {\bf t})=
\lbr n \bigl | \, e^{J({\bf t})} g \bigr | \, 0 \rbr,
\eeq
where the Clifford group element $g$ is now of the following general form:
\beq\label{dkp-group-element}
g = 
\exp \Bigl( 
\sum_{i,j \in \z} A_{ij} \psi_{i}\psi_{j}^{*}
+
\sum_{i,j \in \z} B_{ij} \psi_{i}\psi_{j}
+
\sum_{i,j \in \z} C_{ij} \psi_{i}^{*}\psi_{j}^{*}
 \Bigr).
\eeq
Comparing with the mKP case, where $g$ carries the 
definite charge 0 (see (\ref{g01})),
the more general Clifford group element (\ref{dkp-group-element}) 
does not have a definite charge,
only the even parity of the charge is fixed.
The representation (\ref{tau-dkp}) makes it evident that
$\tau (n, {\bf t})=0$ for odd $n$. This fact allows one to
extend equation (\ref{dkp-bil-rel}) to all $n, n'$ simply setting
$\tau (n, {\bf t})=0$ for odd $n$, then this equation holds
for all $n, n'$ but is non-trivial only for odd $n, n'$, 
otherwise it becomes the identity $0=0$.

Contrary to the case of the KP, mKP and small BKP hierarchies,
where the simplest tau-function is just a constant, the simplest
solution to (\ref{dkp-bil-rel}) is not so obvious. It is given
in the following proposition.

\begin{proposition}\label{proposition:simplest}
The tau-function
\beq\label{simplest}
\tau(n, {\bf t}) = 
\exp \Bigl(\frac{1}{2}\sum_{k \geq 1} k t_{k}^{2}\Bigl)
\eeq
solves equation (\ref{dkp-bil-rel}).
\end{proposition}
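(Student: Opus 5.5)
The plan is to substitute the Gaussian tau-function (\ref{simplest}) directly into (\ref{dkp-bil-rel}). We then map the second contour integral onto the first by the inversion $z\to z^{-1}$. The whole relation should collapse to a sum of residues of one rational-times-exponential function at the two points $z=\pm 1$, and for $n-n'$ even that sum vanishes.

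\textbf{Step 1 (shifted tau-functions).} Using $\sum_{k\geq 1} z^{-2k}/k=-\log(1-z^{-2})$, we compute
$$
\tau\bigl(n\mp 1,{\bf t}\mp[z^{-1}]\bigr)=\tau({\bf t})\,e^{\mp\xi({\bf t},z^{-1})}\,(1-z^{-2})^{-1/2}.
$$
Here $\xi({\bf t},z)=\sum_k t_kz^k$. Note that $\tau$ does not depend on $n$, so the shifts in $n$ are irrelevant. Consequently, with $\Delta={\bf t}-{\bf t}'$ and $m=n-n'$, both products of tau-functions in (\ref{dkp-bil-rel}) equal $\tau({\bf t})\tau({\bf t}')(1-z^{-2})^{-1}$ times an exponential. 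The first integrand becomes
$$
\tau({\bf t})\tau({\bf t}')\,\frac{z^{m}}{z^2-1}\,e^{\xi(\Delta,z)-\xi(\Delta,z^{-1})},
$$
and the second becomes $\tau({\bf t})\tau({\bf t}')\,\frac{z^{-m}}{z^2-1}\,e^{-\xi(\Delta,z)+\xi(\Delta,z^{-1})}$.

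\textbf{Step 2 (inversion).} In the second integral we put $z=w^{-1}$, so that $dz/z^2=-dw$. The big circle $C_\infty$, traversed counterclockwise, goes to a small circle $C_0$ around the origin traversed clockwise. After reversing the orientation, the second integral equals $\oint_{C_0}G(w)\,dw$, and the first equals $-\oint_{C_\infty}G(z)\,dz$, where
$$
G(z)=\frac{z^{m}}{1-z^{2}}\,e^{\xi(\Delta,z)-\xi(\Delta,z^{-1})}.
$$
Hence the left-hand side of (\ref{dkp-bil-rel}) is $-\bigl(\oint_{C_\infty}-\oint_{C_0}\bigr)G\,dz$. This is $-2\pi i$ times the sum of the residues of $G$ in the annulus between the two circles.

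\textbf{Step 3 (residues).} Inside the annulus the only singularities of $G$ are the simple poles at $z=\pm1$. The essential singularities of the exponential sit at $0$ and $\infty$, and they are excluded by the choice of contours (equivalently, we treat $\Delta$ formally or as small). At $z=\pm1$ the exponent $\xi(\Delta,z)-\xi(\Delta,z^{-1})$ vanishes, so the residues are $-\tfrac12$ at $z=1$ and $\tfrac12(-1)^m$ at $z=-1$. Their sum $\tfrac12\bigl((-1)^m-1\bigr)$ is zero precisely under the parity condition (\ref{dkp-parity}), which proves the proposition. For odd $m$ the same computation produces the nonzero right-hand side $\propto\tau({\bf t})\tau({\bf t}')$ mentioned in the footnote.

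The only delicate point is Step 2: we must make sure that the contour prescription (all singularities of the exponential factors lie outside the integration region, and the tau-factors are regular there) legitimately turns the big circle into an annulus containing only $z=\pm1$. The algebra in Steps 1 and 3 is routine.
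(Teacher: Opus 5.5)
Your proof is correct and follows the paper's own argument, which the paper gives for the multi-component version (Proposition~\ref{proposition:simplest11}). In both, the second integral is mapped onto a small circle by $z\to z^{-1}$, and $(\oint_{C_\infty}-\oint_{C_0})$ is evaluated through the residues at $z=\pm 1$. This yields $\pi i\bigl(1-(-1)^{n-n'}\bigr)\tau({\bf t})\tau({\bf t}')$, which vanishes under the parity condition; your intermediate computations and signs agree with this.
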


\noindent
We postpone the proof of this proposition 
till Section \ref{section:multiDKP}, where a similar statement
is proved for the more general multi-component 
case.

Similarly to the mKP case, the most general Miwa substitution is
\beq\label{mkp2a}
\left \{\begin{array}{l}
\displaystyle{
n - n' = P^{+} - P^{-},}
\\ \\
\displaystyle{
{\bf t} - {\bf t}' =
\sum_{i = 1}^{P^{+}}[a_i^{-1}]
-\sum_{k = 1}^{P^{-}}[b_k^{-1}],}
\end{array}\right.
\eeq
where $P^{+} - P^{-}$ is even.
The points $a_{i}, b_{j}$ are again assumed 
to be distinct. With this substitution,
the residue calculus in (\ref{dkp-bil-rel}) leads to the following 
general multi-point Hirota-Miwa relation:
\beq\label{dkp-gen-H-M}
\begin{array}{l}
\displaystyle{
\sum_{s = 1}^{P^{+}} \prod_{i = 1, \neq s}^{P^{+}} E^{-1}(a_{s}, a_{i}) \,
\prod_{k = 1}^{P^{-}} E(a_{s}, b_{k})\,
\tau \Bigl( 
n + P^{+} \! -\! 1, {\bf t} + \sum_{i\neq s}^{P^{+}}
[a_{i}^{-1}] \Bigr)}
\\ \\ \phantom{aaaaaaaaaaaaaaaaaaaaaaaaa}
\displaystyle{\times \, \,
\tau \Bigl(
n + P^{-} \! +\! 1, {\bf t} + [a_{s}^{-1}]+ \sum_{k=1}^{P^{-}}
[b_{k}^{-1}] \Bigr) }
\\ \\ + \,
\displaystyle{
\sum_{s = 1}^{P^{-}} \prod_{i = 1, \neq s}^{P^{-}} E^{-1}(b_{s}, b_{i}) \,
\prod_{k = 1}^{P^{+}} E(b_{s}, a_{k})\,
\tau \Bigl( 
n + P^{-} \! -\! 1, {\bf t} + \sum_{i\neq s}^{P^{-}}
[b_{i}^{-1}] \Bigr)}
\\ \\ \phantom{aaaaaaaaaaaaaaaaaaaaaaaaa}
\displaystyle{\times \, 
\tau \Bigl(
n + P^{+} \! +\! 1, {\bf t} + [b_{s}^{-1}] + \sum_{k = 1}^{P^{+}}
[a_{k}^{-1}] \Bigr)}
=
0,
\end{array}
\eeq
where the function $E(a,b)$ is the same as in (\ref{mkp4}).
Clearly, the most significant difference compared to 
(\ref{mkp3}) is that not only each $a_{i}$ but also 
each $b_{j}$ produces now a term in (\ref{dkp-gen-H-M}):
the $a_i$'s (respectively, $b_j$'s) give rise to the first 
(respectively, second) sum in (\ref{dkp-gen-H-M}).
Since possible values of $n$ and $n'$ are restricted
by the parity condition (\ref{dkp-parity}), 
the total number of points (and thus the total number
of terms in (\ref{dkp-gen-H-M})), which is 
$P^+ +P^-$, is even. 

\begin{remark}
If $P^-=0$, then the second sum in (\ref{dkp-gen-H-M}) is absent
and the equation formally coincides with the 
general Hirota-Miwa relation (\ref{mkp3}) for the mKP hierarchy
(with $P^-=0$). However, the important difference is that
in the DKP case $P^+$ must be even due to the 
parity condition whereas for mKP there is no such restriction.
\end{remark}

Following the terminology from \cite{SZ25b}, 
we call (\ref{dkp-gen-H-M}) the non-degenarate 
$(P^+ +P^-)$-point relation.
The symmetry $(n, {\bf t})\leftrightarrow (n', {\bf t}')$
allows us to assume that $P^+ \geq P^-$ without loss of generality.
The simplest nontrivial cases are 4-point non-degenerate 
relations corresponding to the choices 
$(P^{+}, P^{-})=(4, 0)$, $(P^{+}, P^{-}) = (3, 1)$ and
$(P^{+}, P^{-}) = (2, 2)$. The first two possibilities 
lead to the following equations:

\noindent
\underline{$(P^{+}, P^{-})=(4, 0)$}:
\beq\label{dkp-H-M(4,0)}
\begin{array}{l}
\phantom{-}E(b, c)E(b, d)E(c, d)
\tau \Bigl( n + 2, {\bf t} + [b^{-1}] + [c^{-1}] + [d^{-1}]\Bigr)
\tau \Bigl( n, {\bf t} + [a^{-1}]\Bigr)
\\ \\
-\, E(a, c)E(a, d)E(c, d)
\tau \Bigl( n + 2, {\bf t} + [a^{-1}] + [c^{-1}] + [d^{-1}]\Bigr)
\tau \Bigl( n, {\bf t} + [b^{-1}]\Bigr)
\\ \\
-\,E(a, b)E(a, d)E(b, d)
\tau \Bigl( n + 2, {\bf t} + [a^{-1}] + [b^{-1}] + [d^{-1}]\Bigr)
\tau \Bigl( n, {\bf t} + [c^{-1}]\Bigr)
\\ \\
+\, E(a, b)E(a, c)E(b, c)
\tau \Bigl( n + 2, {\bf t} + [a^{-1}] + [b^{-1}] + [c^{-1}]\Bigr)
\tau \Bigl( n, {\bf t} + [d^{-1}]\Bigr)
\, =0,
\end{array}
\eeq

\noindent
\underline{$(P^{+}, P^{-})=(3, 1)$}:
\beq\label{dkp-H-M(3,1)}
\begin{array}{l}
\phantom{-}E(b, c)E(a, d)
\tau \Bigl( n, {\bf t} + [b^{-1}] + [c^{-1}]\Bigr)
\tau \Bigl( n, {\bf t} + [a^{-1}] + [d^{-1}]\Bigr)
\\ \\
-\, E(b, d)E(a, c)
\tau \Bigl( n, {\bf t} + [a^{-1}] + [c^{-1}]\Bigr)
\tau \Bigl( n, {\bf t} + [b^{-1}] + [d^{-1}]\Bigr)
\\ \\
-\,E(a, b)E(c, d)
\tau \Bigl( n, {\bf t} + [a^{-1}] + [b^{-1}] \Bigr)
\tau \Bigl( n, {\bf t} + [c^{-1}] + [d^{-1}]\Bigr)
\\ \\
+\, E(a, b)E(a, c)E(b, c)E(d, a)E(d, b)E(d, c) 
\\ \\ \phantom{aaaaaaaaaaa} \times
\tau \Bigl( n + 2, {\bf t} + [a^{-1}] + [b^{-1}] + 
[c^{-1}] + [d^{-1}]\Bigr)
\tau \Bigl( n-2, {\bf t}\Bigr)
\, =0.
\end{array}
\eeq 

\noindent
The choice 
$(P^{+}, P^{-}) = (2, 2)$ leads to the relation 
that coincide with (\ref{dkp-H-M(4,0)}) after rearranging the terms,
so there are only two essentially different 
non-degenerate 4-point relations.
Equation (\ref{dkp-H-M(4,0)}) 
coincides with the Hirota-Miwa relation for the mKP hierarchy
obtained from (\ref{mkp3}) at $P^+=4$, $P^-=0$.

\begin{remark}
Equation (\ref{dkp-H-M(3,1)}) is similar 
to the equation (\ref{mkp5}) for the KP tau-function
but differs from it by having the fourth term in the 
left-hand side which is absent
in the KP case.
\end{remark}

\subsubsection{Dispersionless limit of DKP:
uni\-for\-mi\-za\-ti\-on via elliptic functions}
\label{section:dDKP}

The dispersionless limit 
can be performed similarly to the mKP case
(Section \ref{section:ddmKP}). The tau-function is replaced
by the $F$-function defined in the standard way:
\beq\label{F00}
F = F(t_{0}, {\bf t}) =
\lim\limits_{\hbar \to 0}
\hbar^2 \log \tau \Bigl (\hbar^{-1} t_0, \hbar^{-1}{\bf t}  \Bigr ).
\eeq
The $\nabla$-operator is introduced by the formula 
\beq\label{nabla-dkp-def}
\nabla(z)
=
\p_{t_0} + \sum_{k \geq 1}\frac{z^{-k}}{k}\, \p_{t_{k}}
\eeq
(cf. (\ref{mkp7})).
The result of the limit for equation ({\ref{dkp-gen-H-M}}) is
\beq\label{dkp-gen-disp-H-M}
\begin{array}{l}
\displaystyle{
\sum_{s = 1}^{P^{+}}
\left(
\prod_{i = 1, \neq s}^{P^{+}} E(a_{s}, a_{i})
e^{ \nabla (a_{i}) \nabla (a_{s}) F}
\right )^{-1}
\left(
\prod_{k=1}^{P^{-}} E(a_{s}, b_{k}) 
e^{ \nabla (a_{s})  \nabla(b_{k})F}
\right)}
\\ \\\!\!\!  +\,\,
\displaystyle{
\sum_{s = 1}^{P^{-}}
\left(
\prod_{i = 1, \neq s}^{P^{-}} E(b_{s}, b_{i})
e^{ \nabla (b_{i}) \nabla (b_{s}) F}
\right )^{-1}`
\left(
\prod_{k = 1}^{P^{+}} E(b_{s}, a_{k}) 
e^{ \nabla (b_{s})  \nabla(a_{k})F}
\right) = 0.}
\end{array}
\eeq
An important difference compared to the mKP case 
(see (\ref{mkp10})) is presence of the second sum written in
the second line and having the same structure as the first one
but with the exchange $\{a_i\} \leftrightarrow \{b_k\}$.
The other difference is the parity condition:
$P^{+} + P^{-} \in 2\ZZ$, i.e., the total number of terms 
has to be even.

As in the mKP case, in order to recover the hidden algebraic
curve, we need to consider the dispersionless limits
of the two 4-point non-degenerate 
relations (\ref{dkp-H-M(4,0)}), (\ref{dkp-H-M(3,1)}). 
They are:
\beq\label{40c}
\begin{array}{l}
\phantom{-}
E(b,c) E(b,d) E(d,c)
e^{\bigl( \nabla(b) \nabla(c) + \nabla(b)
\nabla(d) + \nabla(c)\nabla(d) \bigr) F}
\\ \\
-\, 
E(a, c) E(a, d) E(d, c)
e^{\bigl( \nabla(a) \nabla(c) + \nabla(a)
\nabla(d) + \nabla(d)\nabla(c) \bigr) F}
\\ \\
-\, 
E(a, b) E(a, d) E(b, d)
e^{\bigl( \nabla(a) \nabla(b) + \nabla(a)
\nabla(d) + \nabla(b)\nabla(d) \bigr) F}
\\ \\
+\, 
E(a, b)E(a, c)E(b, c)
e^{\bigl( \nabla(a) \nabla(b) + \nabla(a)
\nabla(c) + \nabla(b)\nabla(c) \bigr) F}
\, =0
\end{array}
\eeq
and
\beq\label{31c}
\begin{array}{l}
\phantom{-}
E(b,c) E(a,d)
e^{\bigl ( \nabla(b)\nabla(c) + \nabla(a)\nabla(d)\bigr )F}
-\, 
E(b,d) E(a,c)
e^{\bigl ( \nabla(b)\nabla(d) + \nabla(a)\nabla(c)\bigr )F}
\\ \\
+\, 
E(a,b) E(c,d)
e^{\bigl ( \nabla(a)\nabla(b) + \nabla(c)\nabla(d)\bigr )F}
+\, E(a,b) E(a,c) E(b,c) E(d,a) E(d,b) E(d,c)
\\ \\
\times 
\,\, e^{\bigl ( 
\nabla(a) \nabla(b) +
\nabla(a) \nabla(c) +
\nabla(a) \nabla(d) +
\nabla(b) \nabla(c) + 
\nabla(b) \nabla(d) +
\nabla(c) \nabla(d)
\bigr )F}
\, =0.
\end{array}
\eeq

To proceed, it is convenient to introduce the
$g$-function:
\beq\label{g-def-dkp}
g(a,b) = E(a,b) e^{\nabla(a) \nabla(b) F},
\eeq
in terms of which equations (\ref{40c}), (\ref{31c})
can be written in the following more compact form:
\beq\label{eq1}
\begin{array}{l}
\displaystyle{
g(c,d)\Bigl (g(a,c)g(a,d)-g(b,c)g(b,d)\Bigr )
=\, g(a,b)\Bigl (g(a,d)g(b,d)-g(a,c)g(b,c)\Bigr ),
}
\end{array}
\eeq
and
\beq\label{eq2}
\begin{array}{l}
\displaystyle{
g(a,b)g(c,d)\Bigl (g(a,c)g(b,c)g(a,d)g(b,d)-1\Bigr )
=\, g(b,c)g(a,d)-g(b,d)g(a,c).
}
\end{array}
\eeq
Also, the functions
\beq\label{wp}
\begin{array}{l}
w(a)=g(a,c)\Bigr |_{c^{-1}\to 0}=a^{-1}e^{\nabla (a)\p_{t_0}F},
\\ \\
\displaystyle{
p(a)=-\p_{c^{-1}}\log g(a,c)\Bigr |_{c^{-1}\to 0}=
a-\nabla (a)\p_{t_1}F,
}
\end{array}
\eeq
will play an important role: they satisfy a polynomial equation
that defines the dynamical curve.

\begin{theorem}
The functions $w(z)$ and $p(z)$ defined by (\ref{wp}) satisfy
the equation
\beq\label{ddkp-curve-ep}
R^{2} \Bigl( w^{2}(z) + w^{-2}(z) \Bigr) 
= p^{2}(z) + V,
\eeq
where 
\beq\label{R111}
R=e^{\p_{t_0}^2F},
\quad
V = (\p_{t_0}\p_{t_{1}}F)^{2} +2 \p_{t_{1}}^{2} F -
\p_{t_0} \p_{t_{2}}F.
\eeq
\end{theorem}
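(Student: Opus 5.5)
We take two of the four points in the dispersionless 4-point relations (\ref{eq1}), (\ref{eq2}) to infinity. This yields 2-point relations in $a,b$. Equating the two resulting expressions for $g(a,b)$ separates variables, exactly as in the dmKP case (\ref{mkp13})--(\ref{mkp15}).

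\textbf{Limits of (\ref{eq1}).} Put $c,d\to\infty$. By (\ref{wp}), $g(a,c)\to w(a)$ and $g(c,d)=E(c,d)e^{\nabla(c)\nabla(d)F}$, so $g(c,d)\approx (c^{-1}-d^{-1})R$ to leading order. Expanding to first order in $c^{-1},d^{-1}$, with $g(a,c)=w(a)\bigl(1-c^{-1}p(a)+O(c^{-2})\bigr)$, both sides of (\ref{eq1}) become proportional to $(c^{-1}-d^{-1})$. The common factor cancels and we obtain
\[
g(a,b)=R^{-1}w(a)w(b)\bigl(p(b)-p(a)\bigr),
\]
the analogue of the first line of (\ref{mkp13}). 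This is essentially the KP-type relation (\ref{mmkp16-1a}) for $N=1$.

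\textbf{Limits of (\ref{eq2}).} Again put $c,d\to\infty$. The left side of (\ref{eq2}) is $g(a,b)g(c,d)\bigl(w^2(a)w^2(b)(1+\dots)-1\bigr)$. The right side, to first order, is $w(a)w(b)(d^{-1}-c^{-1})\bigl(p(a)-p(b)\bigr)$. Hence the leading order gives
\[
g(a,b)R\bigl(w^2(a)w^2(b)-1\bigr)=w(a)w(b)\bigl(p(b)-p(a)\bigr).
\]
Combining with the previous relation gives $w^2(a)w^2(b)-1=R^{-2}\,\cdot$(same factor). This looks degenerate, which signals that the naive leading order is not enough. The main obstacle is therefore this bookkeeping: the second-order terms in $c^{-1},d^{-1}$ must be kept, and this is where $\p_{t_2}$ and $\p^2_{t_1}$, hence $V$, will enter.

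\textbf{Handling the expansion.} I would first try the cleaner route of sending only $d\to\infty$ in (\ref{eq2}) and keeping $c$ finite. This gives a 3-point relation with $w(a),w(b),w(c)$ and $g(a,b),g(a,c),g(b,c)$. I would then substitute the KP-type formula for each $g$ and simplify. Writing $x_i=w(a_i)$, $y_i=p(a_i)$, the relation should reduce to an identity symmetric in $a,b,c$. It should factor as a product of differences times
\[
\Phi(a)-\Phi(b)\qquad\text{(or similar)},\qquad \Phi(z)=R^2\bigl(w^2(z)+w^{-2}(z)\bigr)-p^2(z).
\]
Separation of variables then forces $\Phi(z)$ to be independent of $z$. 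Thus $\Phi\equiv V$ for some $z$-independent quantity $V$.

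\textbf{Identifying $V$.} Expand $\Phi$ at $z\to\infty$ using $w(z)=z^{-1}R\,e^{D(z)\p_{t_0}F}$ and $p(z)=z-\p_{t_0}\p_{t_1}F-\tfrac{1}{2}z^{-1}\p_{t_1}\p_{t_2}F\ldots$. The term $R^2w^{-2}\sim z^2e^{-2z^{-1}F_{01}-z^{-2}F_{02}}$ cancels the $z^2$ in $p^2$ at orders $z^2$ and $z^1$; this requires consistency with the dispersionless equation (\ref{L7a})-type identity $F_{02}=2F_{11}+F_{01}^2$ plus its KP analogue. The constant term then gives $V=F_{01}^2+2F_{11}-F_{02}$. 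I expect the algebraic factorization in the previous step to be the hard part, and I would verify it symbolically in the small-$c^{-1}$ expansion first.
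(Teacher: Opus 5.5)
There is a genuine gap, and it starts with your very first limit. The formula you extract from (\ref{eq1}), $g(a,b)=R^{-1}w(a)w(b)\bigl(p(b)-p(a)\bigr)$, is the dKP/dmKP relation and does not hold in dDKP. It is what (\ref{eq2}) would give if its six-fold product term were dropped.

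If you actually carry out the expansion you describe, the bracket on the left of (\ref{eq1}) tends to $w^2(a)-w^2(b)\neq 0$. So the whole vanishing factor comes from $g(c,d)\simeq R\,(c^{-1}-d^{-1})$. On the right, $g(a,d)g(b,d)-g(a,c)g(b,c)\simeq (c^{-1}-d^{-1})\,w(a)w(b)\bigl(p(a)+p(b)\bigr)$. In your limit of (\ref{eq2}) the sign should also be $p(a)-p(b)$. The correct pair of 2-point relations is therefore
\[
R\bigl(w^2(a)-w^2(b)\bigr)=g(a,b)\,w(a)w(b)\bigl(p(a)+p(b)\bigr),\qquad
R\,g(a,b)\bigl(w^2(a)w^2(b)-1\bigr)=w(a)w(b)\bigl(p(a)-p(b)\bigr).
\]
The simplest solution $F=\frac{1}{2}\sum_k k t_k^2$ has $g(a,b)=(b-a)/(ab-1)$, $w(z)=z^{-1}$, $p(z)=z-z^{-1}$ and $R=1$. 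It satisfies both relations but not your formula.

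So the ``degeneracy'' you diagnose is an artifact of the wrong formula, and everything after it inherits the error. Your fallback 3-point route substitutes the false formula for every $g$. The decisive factorization into $\Phi(a)-\Phi(b)$ is only conjectured (``should factor \ldots\ or similar''). As a result, the separation of variables, which is the heart of the theorem, is never established.

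The repair is short. Multiply the two displayed relations and cancel $g(a,b)\neq 0$. This gives $R^2\bigl(w^2(a)-w^2(b)\bigr)\bigl(1-w^{-2}(a)w^{-2}(b)\bigr)=p^2(a)-p^2(b)$, that is,
\[
R^2\bigl(w^2(a)+w^{-2}(a)\bigr)-p^2(a)=R^2\bigl(w^2(b)+w^{-2}(b)\bigr)-p^2(b),
\]
which is the separation. This corrected route is genuinely simpler than the paper's. The paper eliminates $g(a,b)$ at finite $c,d$ by multiplying (\ref{eq1}) by (\ref{eq2}), obtaining (\ref{ddkp-sep-vars}). It then has to resolve a singular limit $c^{-1},d^{-1}\to 0$ through the second-order expansion (\ref{wp2}). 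Taking the limits first, each only to first order in $c^{-1}-d^{-1}$, avoids that.

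Your identification of $V$ lands on the right value, but the reasoning is off in two places. First, the $z^{-1}$ coefficient of $p(z)$ is $-\p_{t_1}^2F$, not $-\frac{1}{2}\p_{t_1}\p_{t_2}F$, and it is exactly this term that produces $2\p_{t_1}^2F$ in $V$. Second, the cancellation of the orders $z^2$ and $z$ in $R^2w^{-2}(z)-p^2(z)$ is automatic, and no equation of the type (\ref{L7a}) is involved. Imposing $\p_{t_0}\p_{t_2}F=2\p_{t_1}^2F+(\p_{t_0}\p_{t_1}F)^2$ would force $V=0$, whereas the simplest solution has $V=2$.
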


\noindent
{\it Proof.}
Multiplying equations (\ref{eq1}) and (\ref{eq2}) (the left-hand side
of the former by the left-hand side of the latter, and the same 
for their right-hand sides) and reorganizing the terms,
we get the relation
\beq\label{ddkp-sep-vars}
\begin{array}{l}
\displaystyle{
\frac
{\Bigl( 1 + g^{2}(a, c) g^{2}(a, d)\Bigl)g^{2}(d, c)
- \Bigl( g^{2}(a, c) + g^{2}(a, d) \Bigr)}
{g(a, c) g(a, d)}}
\\ \\ =
\displaystyle{
\frac
{\Bigl(1 +g^{2}(b, c) g^{2}(b, d)\Bigl)g^{2}(d, c)
-\Bigl( g^{2}(b, c) + g^{2}(b, d) \Bigl)}
{g(b, c) g(b, d)}},
\end{array}
\eeq
which means separation of the $a$ and $b$ variables. Indeed,
the left-hand side of (\ref{ddkp-sep-vars}) is a function of $a$ while the 
right-hand side is a function of $b$. To represent this separation
in a more explicit form, we consider the limit
$c^{-1}, d^{-1} \to 0$. To perform the limit, we rewrite
equation
(\ref{ddkp-sep-vars}) in the
following equivalent form:
\beq\label{ddkp-sep-vars1}
\begin{array}{l}
\displaystyle{
g(a,c)g(a,d) +(g(a,c)g(a,d))^{-1} -g^{-2}(c,d)
\Bigl ( \frac{g(a,c)}{g(a,d)} +\frac{g(a,d)}{g(a,c)}\Bigr )
}
\\ \\ =\,
\displaystyle{
g(b,c)g(b,d) +(g(b,c)g(b,d))^{-1} -g^{-2}(c,d)
\Bigl ( \frac{g(b,c)}{g(b,d)} +\frac{g(b,d)}{g(b,c)}\Bigr )
}.
\end{array}
\eeq
The limit $c^{-1}, d^{-1}$ in (\ref{ddkp-sep-vars1}) is singular.
To resolve the singularity, we need some preparations.
Using the definition (\ref{wp}), we have:
\beq\label{wp1}
\begin{array}{l}
\displaystyle{
\left.
\p_{c^{-1}}\Bigl (\frac{g(c,a)}{g(d,a)}\Bigr )\right |_{c^{-1}, d^{-1}
\to 0} \! =-p(a),
\quad
\left.
\p_{c^{-1}}\Bigl (\frac{g(d,a)}{g(c,a)}\Bigr )\right |_{c^{-1}, d^{-1}
\to 0} \! =p(a),}
\end{array}
\eeq
and, moreover,
\beq\label{wp2}
\begin{array}{ll}
&\displaystyle{ \left. \p^2_{c^{-1}}\Bigl (\frac{g(c,a)}{g(d,a)}+
\frac{g(d,a)}{g(c,a)}\Bigr )\right |_{c^{-1}, d^{-1}
\to 0}}
\\ & \\
=& \displaystyle{\left.
\p_{c^{-1}}\Bigl (-\frac{\p_{c^{-1}}g(c,a)}{w(a)}+
\frac{w(a)\p_{c^{-1}}g(c,a)}{g^2 (c,a)}\Bigr )
\right |_{c^{-1}\to 0}}
\\ & \\
=& \displaystyle{
\left.
\left (\Bigl (\frac{g(a,c)}{w(a)}+\frac{w(a)}{g(a,c)}\Bigr )
\Bigl (\p_{c^{-1}}\log g(c,a)\Bigr )^2 
+
\Bigl (\frac{w(a)}{g(c,a)}-\frac{g(c,a)}{w(a)}\Bigr )
\p^2_{c^{-1}}\log g(c,a)
\right )
\right |_{c^{-1}\to 0}
}
\\ & \\
=& 2\Bigl (\p_{c^{-1}}\log g(c,a)\Bigr )^2 
\\ & \\
=& 2p^2(a).
\end{array}
\eeq
Now we can resolve the singularity by means of equation (\ref{wp2}):
$$
\begin{array}{ll}
& \displaystyle{\left.
g^{-2}(c,d)\Bigl (\frac{g(a,c)}{g(a,d)} + \frac{g(a,d)}{g(a,c)}-2
\Bigr )\right |_{c^{-1}, d^{-1} \to 0}}
\\ & \\
=& \displaystyle{\left. \frac{e^{-2\p_{t_0}^2F}}{(c^{-1}-d^{-1})^2}
\Bigl (\frac{g(a,c)}{g(a,d)} + \frac{g(a,d)}{g(a,c)}-2
\Bigr )\right |_{c^{-1}, d^{-1} \to 0}}
\\ & \\
=& \displaystyle{\left. \frac{1}{2}e^{-2\p_{t_0}^2F}\p_{c^{-1}}^2
\Bigl (\frac{g(a,c)}{g(a,d)} + \frac{g(a,d)}{g(a,c)}
\Bigr )\right |_{c^{-1}, d^{-1} \to 0}}
\\ & \\
=&e^{-2\p_{t_0}^2F}p^2(a).
\end{array}
$$
Therefore, equation (\ref{ddkp-sep-vars1}) yields:
$$
e^{2\p_{t_0}^2F}\Bigl (w^2(a)+w^{-2}(a)\Bigr )-p^2(a)=
e^{2\p_{t_0}^2F}\Bigl (w^2(b)+w^{-2}(b)\Bigr )-p^2(b).
$$
We conclude from it that the expression
$
e^{2\p_{t_0}^2F}\Bigl (w^2(z)+w^{-2}(z)\Bigr )-p^2(z)
$
does not depend on $z$ and is equal to 
a constant\footnote{It is a constant as a function of $z$ but 
may depend on times.} which we denote 
as $-V$. It can be found by considering the limit 
$z\to \infty$.
\square

We see that the functions $w(z)$ and $p(z)$ are constrained 
by equation (\ref{ddkp-curve-ep}) which defines an elliptic curve.
It is the dynamical curve in the case under consideration.
The real benefit of this curve can be obtained as a result of its uniformization.

\begin{proposition}\cite{AZ14}
The elliptic curve (\ref{ddkp-curve-ep}) can be uniformized
by means of elliptic functions:
\beq\label{mkp14a}
w(z) = \frac{\theta_1 (u(z) | \tau)}{\theta_4(u(z) | \tau)},
\quad
p(z) = \gamma \theta_{4}^{2}(0 | \tau)
\frac
{\theta_2(u(z) | \tau)\theta_3(u(z) | \tau)}
{\theta_1(u(z) | \tau)\theta_4(u(z) | \tau)},
\eeq
where $\theta_{i}(u | \tau)$ are Jacobi 
theta-functions (see Appendix B) 
depending on the modular parameter $\tau$, $u(z)$ is some
function of $z$ and $\gamma$ is any $z$-independent constant.
The constants $R$ and $V$ in (\ref{ddkp-curve-ep})
are given by the following formulas:
\beq \label{RV}
R = \gamma \theta_{2}(0| \tau)\theta_{3}(0| \tau),
\qquad
V = \gamma^{2}(\theta_{2}^{4}(0 | \tau) + \theta_{3}^{4}(0 | \tau)) \, .
\eeq
\end{proposition}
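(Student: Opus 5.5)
The plan is to substitute (\ref{mkp14a}) into (\ref{ddkp-curve-ep}) and reduce it to a standard quartic identity for Jacobi theta-functions, then read off $R$ and $V$ by comparing coefficients. Write $\theta_i=\theta_i(u|\tau)$ and $\theta_i^0=\theta_i(0|\tau)$. With $w=\theta_1/\theta_4$ we have
$$
w^2+w^{-2}=\frac{\theta_1^4+\theta_4^4}{\theta_1^2\theta_4^2},\qquad
p^2=\gamma^2(\theta_4^0)^4\,\frac{\theta_2^2\theta_3^2}{\theta_1^2\theta_4^2},
$$
so after multiplying by $\theta_1^2\theta_4^2$ the claimed relation becomes
$$
R^2\bigl(\theta_1^4+\theta_4^4\bigr)=\gamma^2(\theta_4^0)^4\,\theta_2^2\theta_3^2+V\,\theta_1^2\theta_4^2 .
$$
This is a homogeneous quartic relation among theta-functions of the same argument, so it should follow from the classical quadratic relations listed in Appendix B.

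The key step is to express $\theta_2^2$ and $\theta_3^2$ through $\theta_1^2$ and $\theta_4^2$ using
$$
\theta_2^2(\theta_4^0)^2=\theta_4^2(\theta_2^0)^2-\theta_1^2(\theta_3^0)^2,\qquad
\theta_3^2(\theta_4^0)^2=\theta_4^2(\theta_3^0)^2-\theta_1^2(\theta_2^0)^2 .
$$
Multiplying them gives
$$
(\theta_4^0)^4\theta_2^2\theta_3^2=(\theta_2^0\theta_3^0)^2\bigl(\theta_1^4+\theta_4^4\bigr)-\bigl((\theta_2^0)^4+(\theta_3^0)^4\bigr)\theta_1^2\theta_4^2 .
$$
Substituting this into the quartic relation turns it into an identity in $\theta_1^2$ and $\theta_4^2$. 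Matching the coefficient of $\theta_1^4+\theta_4^4$ gives $R^2=\gamma^2(\theta_2^0\theta_3^0)^2$, and matching the coefficient of $\theta_1^2\theta_4^2$ gives $V=\gamma^2\bigl((\theta_2^0)^4+(\theta_3^0)^4\bigr)$. These are exactly (\ref{RV}), after fixing the sign $R=\gamma\theta_2^0\theta_3^0$ by the choice of $\gamma$.

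Since $\theta_1^2/\theta_4^2$ is a nonconstant function of $u$, equality of the polynomial identities is equivalent to the curve equation holding identically in $u$, and hence for $u=u(z)$. No other obstacle arises.

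The main work lies in getting the signs and normalizations of the two quadratic theta relations right, consistent with the conventions of Appendix B. I would check them at $u=0$ and at $u=\pi/2$ (or the corresponding half-period in the paper's normalization), where $\theta_1$ or $\theta_4$ takes known values. Should the appendix's convention differ, the same computation goes through with the appropriate identity, possibly with the roles of $\theta_2$ and $\theta_3$ swapped, and the symmetric final answer (\ref{RV}) is unaffected.

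Finally, I would remark that this provides a uniformization, not just a parametrization of some branch. The curve (\ref{ddkp-curve-ep}) is biquadratic in $(w,p)$ and generically smooth of genus one, and $u\mapsto(w,p)$ has the correct double-periodicity. Moreover, for a given $R^2/V$ a $\tau$ can always be found, because $2\theta_2^2\theta_3^2/(\theta_2^4+\theta_3^4)$ takes all generic values as $\tau$ varies.
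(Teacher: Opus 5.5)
Your proof is correct, but you obtain the key identity by a different route from the paper's.

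In Appendix C the paper proves the quartic identity (\ref{B4}) (equivalently (\ref{identity})) analytically. The left-hand side is an even elliptic function of $u$ whose possible poles at $u=0$ and $u=\frac{\tau}{2}$ have cancelling singular parts, so it is constant. Evaluating at $u=\frac12$ via the half-period shifts shows the constant vanishes. The paper also mentions the reduction $Y=xy/R$, $X=k^{-1/2}x$, $k+k^{-1}=V/R^2$ to Jacobi's normal form.

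You instead get the quartic identity as the product of the two Jacobi relations for $\theta_2^2(u)$ and $\theta_3^2(u)$. These are precisely the theta-function versions of $\mathrm{cn}^2=1-\mathrm{sn}^2$ and $\mathrm{dn}^2=1-k^2\mathrm{sn}^2$, so your argument is the normal-form route written directly in theta-functions. Your relations are right in the paper's conventions: at $u=\frac12$ the first becomes $0=0$ and the second reduces to $\theta_3^4(0)=\theta_2^4(0)+\theta_4^4(0)$.

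Your route has three advantages:
\begin{itemize}
\item it gives a purely algebraic verification, without the local pole expansions the paper leaves to the reader;
\item it explains why the identity has exactly this shape;
\item through the linear independence of $\theta_1^4+\theta_4^4$ and $\theta_1^2\theta_4^2$, it shows that $R^2$ and $V$ are forced rather than merely admissible.
\end{itemize}

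The price is that the quadratic relations must be imported from the classical literature (e.g. Whittaker--Watson) or proved. The usual proof is the same Liouville-type argument, though simpler, since each relation involves only a double pole at $u=\frac{\tau}{2}$. Contrary to what you say, these relations are not in Appendix B. That appendix lists only the quartic identity (\ref{id}) itself, with $\theta_2^4(0)$ misprinted in place of $\theta_4^4(0)$; your product formula gives the correct coefficient.

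Two minor corrections:
\begin{itemize}
\item In the paper's normalization the half-period is $u=\frac12$, not $\pi/2$.
\item The curve (\ref{ddkp-curve-ep}) is not biquadratic in $(w,p)$. Cleared of denominators it is $w^2p^2-R^2w^4+Vw^2-R^2=0$, which is quartic in $w$. Its generic genus one is seen through $Y=wp$, $Y^2=R^2w^4-Vw^2+R^2$.
\end{itemize}
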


\noindent
For the proof see Appendix C.
In this parametrization, the equation of the curve is satisfied
identically due to the identity
\beq\label{identity}
\theta_4^4(0)\frac{\theta_2^2(u)\theta_3^2(u)}{\theta_1^2(u)
\theta_4^2(u)}=\theta_2^2(0) \theta_3^2(0) \left (
\frac{\theta_4^2(u)}{\theta_1^2(u)}+
\frac{\theta_1^2(u)}{\theta_4^2(u)}\right )-
\Bigl (\theta_2^4(0)+\theta_3^4(0)\Bigr ).
\eeq
Hereafter, we do not indicate the dependence on 
the modular parameter $\tau$ explicitly, if this does not lead
to a misunderstanding.

At this stage $\gamma$ is an arbitrary parameter but we will see 
that it can not be put equal to a fixed number because it is a
dynamical variable, as well as the modular parameter $\tau$:
$\gamma =\gamma ({\bf t})$, $\tau =\tau ({\bf t})$.
The function $u(z)$ 
is the generating function of dynamical variables. 
Let us normalize it
by the condition $u(\infty) = 0$ and assume that its
expansion near $\infty$ is of the form
\beq\label{u(z)}
u(z, {\bf t}) = 
\frac{c_{1}({\bf t})}{z} + \frac{c_{2}({\bf t})}{z^{2}} + \dots \, ,
\eeq
where $c_{i}({\bf t})$ are dynamical variables. 

\begin{proposition}
The coefficient 
$c_{1}({\bf t})$ is connected with the variable 
$\gamma = \gamma({\bf t})$ from (\ref{mkp14a}) as follows:
\beq
c_{1}({\bf t}) = \frac{\gamma({\bf t})}{\pi} \, .
\eeq
\end{proposition}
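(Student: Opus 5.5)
The plan is to compare the leading behaviour of both sides of the uniformization formulas (\ref{mkp14a}) as $z\to\infty$, using the normalization $u(\infty)=0$ and the expansion (\ref{u(z)}). Since $u(z)=c_1z^{-1}+O(z^{-2})$ is small near infinity, the theta-functions in (\ref{mkp14a}) can be expanded around $u=0$. The only nontrivial input will be the Jacobi derivative identity $\theta_1'(0)=\pi\,\theta_2(0)\theta_3(0)\theta_4(0)$, which holds in the normalization of Appendix B.

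First I would expand $w(z)$ directly from its definition (\ref{wp}). We have $\nabla(z)\p_{t_0}F=\p_{t_0}^2F+O(z^{-1})$, so
$$
w(z)=z^{-1}e^{\nabla(z)\p_{t_0}F}=R\,z^{-1}\bigl(1+O(z^{-1})\bigr),
$$
with $R=e^{\p_{t_0}^2F}$ as in (\ref{R111}).

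On the uniformized side, $\theta_1$ is odd and $\theta_4(0)\neq 0$, which gives
$$
\frac{\theta_1(u)}{\theta_4(u)}=\frac{\theta_1'(0)}{\theta_4(0)}\,u+O(u^3)
=\frac{\theta_1'(0)}{\theta_4(0)}\,\frac{c_1}{z}+O(z^{-2}).
$$
Equating the coefficients of $z^{-1}$ yields $R=c_1\theta_1'(0)/\theta_4(0)$. Applying the Jacobi identity turns this into $R=\pi c_1\,\theta_2(0)\theta_3(0)$. Comparing with the first formula in (\ref{RV}), $R=\gamma\,\theta_2(0)\theta_3(0)$, and cancelling the nonzero factor $\theta_2(0)\theta_3(0)$ gives $c_1=\gamma/\pi$.

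As a consistency check, I would also run the argument on $p(z)$. By (\ref{wp}), $p(z)=z+O(1)$. From (\ref{mkp14a}), since $\theta_1(u)\approx\theta_1'(0)u$ while $\theta_2,\theta_3,\theta_4$ are regular and nonzero at $u=0$,
$$
p(z)\approx \frac{\gamma\,\theta_4^2(0)\,\theta_2(0)\theta_3(0)}{\theta_1'(0)\,\theta_4(0)\,c_1}\,z
=\frac{\gamma}{\pi c_1}\,z .
$$
Requiring this leading coefficient to equal $1$ gives the same relation, which confirms that (\ref{RV}) and the definitions (\ref{wp}) are compatible.

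The only point needing care is the normalization of the theta-functions in Appendix B. The identity $\theta_1'(0)=\pi\theta_2(0)\theta_3(0)\theta_4(0)$ holds for the convention in which the period is $1$ (that is, $\theta_1(u+1)=-\theta_1(u)$). I would check this convention there, since a period-$\pi$ convention would remove the factor $\pi$ from the result.
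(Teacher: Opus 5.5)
Your proof is correct, and your ``consistency check'' is exactly the paper's proof. The paper expands $p(z)$ from (\ref{mkp14a}) near $u=0$ and uses (\ref{theta1prime}) to obtain $p(z)\simeq \gamma z/(\pi c_1)$. It then imposes the normalization $p(z)=z+O(1)$ from (\ref{wp}).

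Your primary route, through $w(z)$ and the first formula of (\ref{RV}), is different. It is the same computation the paper performs later, in the multi-component setting, to obtain (\ref{E1c}).

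The $p$-route is the more self-contained of the two. It uses only the form of $p$ in (\ref{mkp14a}) and its leading coefficient. The $w$-route relies on the sign in $R=\gamma\,\theta_2(0)\theta_3(0)$, and the uniformization itself does not fix that sign. The curve (\ref{ddkp-curve-ep}) involves only $R^2$, so the identity (\ref{identity}) yields only $R^2=\gamma^2\theta_2^2(0)\theta_3^2(0)$. Replacing $\gamma\to-\gamma$ flips the sign of $p$ and leaves the curve unchanged.

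Your two expansions together are therefore what actually pin down the sign convention in (\ref{RV}). Combined, they give $R=\pi c_1\theta_2(0)\theta_3(0)$ and $\gamma=\pi c_1$. If you keep only one argument, keep the one based on $p$.
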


\noindent
{\it Proof.}
To see this, one should expand 
$p(z)$ given by (\ref{mkp14a}), 
as $z^{-1} \to 0$ and find coefficient at the leading term, which
should be 1 according to (\ref{wp}). In the calculation,
identity (\ref{theta1prime}) from Appendix B is used.
\square

\begin{theorem}\cite{AZ14}
In the elliptic parametrization equations (\ref{eq1}) and
(\ref{eq2}) are equivalent to the single equation
\beq\label{g(a,b)1}
(a^{-1}-b^{-1})e^{\nabla (a)\nabla (b)F}=
\frac{\theta_1(u(a)-u(b))}{\theta_4(u(a)-u(b))}.
\eeq
\end{theorem}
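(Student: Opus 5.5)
We will prove both directions: first that (\ref{g(a,b)1}) implies (\ref{eq1}) and (\ref{eq2}), then that (\ref{eq1}), (\ref{eq2}) together with the uniformization (\ref{mkp14a}) force (\ref{g(a,b)1}). Throughout, $g(a,b)=E(a,b)e^{\nabla(a)\nabla(b)F}$, and we write $u_a\equiv u(a)$. Note that $g(a,b)$ is antisymmetric, and so is $\phi(x)=\theta_1(x)/\theta_4(x)$, because $\theta_1$ is odd and $\theta_4$ is even.

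\emph{Sufficiency.} Assume $g(a,b)=\phi(u_a-u_b)$. Then (\ref{eq1}) and (\ref{eq2}) become identities for the single function $\phi$, valid for arbitrary $u_a,u_b,u_c,u_d$. Such identities are addition theorems for the Jacobi function $\mathrm{sn}$, since $\phi(x)=k^{-1/2}\mathrm{sn}(2Kx/\pi)$ up to normalization. We will prove them by a Liouville argument in one variable, say $u_a$, with the others held fixed.

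Bring each identity to the form $L-R=0$. Using the quasi-periodicity of $\theta_1,\theta_4$ under $u\to u+\pi$ and $u\to u+\pi\tau$, we check that $\phi$ changes sign under $\pi$. Under $\pi\tau$ we have $\phi\to\phi$ up to an inversion $x\mapsto \mathrm{const}/x$ coming from the theta relations. From this we verify that $L-R$ is an elliptic function of $u_a$ with period lattice $2\pi,\pi\tau$.

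Next we check that the residues of $L-R$ at the poles $u_a=u_c+\pi\tau/2$ and $u_a=u_d+\pi\tau/2$ (the zeros of $\theta_4$) cancel. Finally we evaluate $L-R$ at one convenient point, e.g.\ $u_a=u_b$, where both sides visibly vanish. Equivalently, one can reduce (\ref{eq1}) and (\ref{eq2}) to the standard three- and four-term Weierstrass/Fay-type identities for $\theta_1,\theta_4$ listed in Appendix B.

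\emph{Necessity.} Here we use the separation-of-variables mechanism already employed in Theorem \ref{theorem:curve2} and in the proof of (\ref{ddkp-curve-ep}). Send $c^{-1},d^{-1}\to 0$ in (\ref{eq1}) and (\ref{eq2}), keeping the first nontrivial orders, as in (\ref{wp1})--(\ref{wp2}). This gives 2-point relations expressing $g(a,b)$ rationally through $w(a),w(b),p(a),p(b)$ and $R$. Concretely, from (\ref{eq1}) with $c,d\to\infty$ we expect a relation of the type
\[
g(a,b)\bigl(R^{2}w^2(a)w^2(b)-1\bigr)\propto w(a)p(b)-w(b)p(a),
\]
possibly with $g(a,b)$ appearing quadratically, analogous to I$_1$ in the mKP case.

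Substitute the parametrization (\ref{mkp14a}) and (\ref{RV}). The right-hand side then becomes a function of $u_a,u_b$, and using the curve identity (\ref{identity}) together with the $\mathrm{sn}$ addition formula
\[
\mathrm{sn}(x-y)=\frac{\mathrm{sn}\,x\,\mathrm{cn}\,y\,\mathrm{dn}\,y-\mathrm{sn}\,y\,\mathrm{cn}\,x\,\mathrm{dn}\,x}{1-k^2\mathrm{sn}^2x\,\mathrm{sn}^2y}
\]
(written in theta form via $\theta_2\theta_3/(\theta_1\theta_4)$), it collapses to $\phi(u_a-u_b)$. The normalization $u(\infty)=0$ and $c_1=\gamma/\pi$ fix the sign and the leading coefficient.

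\emph{Main obstacle.} The hard step is extracting a clean 2-point relation that is linear in $g(a,b)$. Equation (\ref{eq2}) is quadratic in the products of $g$'s, so the limit may produce a sign ambiguity or a quadratic equation for $g(a,b)$. Our first attempt will be to take the degeneration $d\to\infty$ in (\ref{eq1}) and (\ref{eq2}), then solve the resulting pair linearly for $g(a,b)$, in the same way that (\ref{mmkp16-1a}) and (\ref{mmkp16-2a}) were combined in Theorem \ref{theorem:curve2}. The remaining sign branch is then fixed by the $z\to\infty$ asymptotics: $g(a,b)\sim a^{-1}-b^{-1}$, which matches $\phi'(0)(u_a-u_b)$ with $c_1=\gamma/\pi$.
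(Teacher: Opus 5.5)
The gap is in the direction that makes up the whole of the paper's proof, from (\ref{eq1}), (\ref{eq2}) to (\ref{g(a,b)1}). Your outline is the right one: degenerate $c,d\to\infty$, express $g(a,b)$ through $w$, $p$, $R$, then substitute (\ref{mkp14a}), (\ref{RV}). But the key computation is replaced by a guess that is wrong, and the ``main obstacle'' you identify does not exist. Both (\ref{eq1}) and (\ref{eq2}) are linear in $g(a,b)$; in (\ref{eq2}) it enters only through the prefactor $g(a,b)g(c,d)$. The paper just solves (\ref{eq1}) for it:
$g(a,b)=g(c,d)\bigl(g(a,c)g(a,d)-g(b,c)g(b,d)\bigr)/\bigl(g(a,d)g(b,d)-g(a,c)g(b,c)\bigr)$.
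It then expands to first order in $c^{-1},d^{-1}$ using $g(a,c)\to w(a)$, $\partial_{c^{-1}}\log g(a,c)\to -p(a)$, and $g(c,d)\simeq (c^{-1}-d^{-1})R$ with $R=e^{\partial_{t_0}^2F}$. This gives $\bigl(g(a,d)g(b,d)-g(a,c)g(b,c)\bigr)/g(c,d)\to R^{-1}w(a)w(b)\bigl(p(a)+p(b)\bigr)$, hence
\[
g(a,b)=\frac{R\,\bigl(w^2(a)-w^2(b)\bigr)}{w(a)\,w(b)\,\bigl(p(a)+p(b)\bigr)} .
\]
The same limit of (\ref{eq2}) gives $R\,g(a,b)\bigl(w^2(a)w^2(b)-1\bigr)=w(a)w(b)\bigl(p(a)-p(b)\bigr)$. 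Your expected relation, with $R^2w^2(a)w^2(b)-1$ and $w(a)p(b)-w(b)p(a)$, is neither of these and would not reduce to $\theta_1/\theta_4$. With the correct formula, insert (\ref{mkp14a}), (\ref{RV}) and set $x=u(a)$, $y=u(b)$. Then use $\theta_4^2(0)\,\theta_1(x+y)\theta_1(x-y)=\theta_1^2(x)\theta_4^2(y)-\theta_4^2(x)\theta_1^2(y)$ and $\theta_2(0)\theta_3(0)\,\theta_1(x+y)\theta_4(x-y)=\theta_1(x)\theta_4(x)\theta_2(y)\theta_3(y)+\theta_2(x)\theta_3(x)\theta_1(y)\theta_4(y)$. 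The factor $\theta_1(x+y)$ cancels and $\theta_1(x-y)/\theta_4(x-y)$ results directly. There is no quadratic equation, no sign branch to fix by asymptotics, and no need for the detour through $d\to\infty$ alone.

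Your converse direction is not in the paper's proof of this statement. There it follows from the next theorem: (\ref{eq1}) and (\ref{eq2}) are the $(4,0)$ and $(3,1)$ cases of (\ref{gab}), which (\ref{g(a,b)1}) turns into the residue identity (\ref{id1(ddkp)}). A one-variable Liouville argument is a legitimate alternative, but you need to fix what it rests on. In the normalization of Appendix B (periods $1,\tau$ rather than your $\pi,\pi\tau$), $\phi=\theta_1/\theta_4$ satisfies $\phi(u+1)=-\phi(u)$, $\phi(u+\tau)=\phi(u)$ and $\phi(u+\tau/2)=1/\phi(u)$. So the inversion occurs at the half-period, not the full one, and it is exactly what makes the argument work. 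You omitted the poles at $u_a=u_b+\tau/2$ coming from $g(a,b)$. They are removable because the cofactor of $g(a,b)$ vanishes there: in (\ref{eq1}), $g(a,d)g(b,d)-g(a,c)g(b,c)=1-1$; in (\ref{eq2}), $g(a,c)g(b,c)g(a,d)g(b,d)-1=0$. The same inversion is what makes the residues at $u_c+\tau/2$ and $u_d+\tau/2$ cancel. Finally, Appendix B contains no Fay-type identities to fall back on.
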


\noindent
{\it Proof.}
After the uniformization, equations (\ref{eq1}) and
(\ref{eq2}) become equivalent and we can use any of them.
From (\ref{eq1}) we have:
$$
g(a,b)=g(c,d) \, \frac{g(a,c)g(a,d)-g(b,c)g(b,d)}{g(a,d)g(b,d)
-g(a,c)g(b,c)}.
$$
Letting $c, d\to \infty$, as before, we obtain:
$$
\left.
\vphantom{\frac{a}{b}}
\frac{g(a,d)g(b,d) -g(a,c)g(b,c)}{g(c,d)}\right |_{c^{-1}, d^{-1}\to 0}
=e^{-\p_{t_0}^2F}w(a)w(b)\Bigl (p(a)+p(b)\Bigr ),
$$
hence
\beq\label{g(a,b)}
g(a,b)=e^{\p_{t_0}^2F}w^{-1}(a)w^{-1}(b)\,
\frac{w^2(a)-w^2(b)}{p(a)+p(b)}.
\eeq
Plugging here the elliptic parametrization (\ref{mkp14a}),
we obtain, taking into account (\ref{RV}):
\beq\label{g(a,b)1a}
g(a,b)=(a^{-1}-b^{-1})e^{\nabla (a)\nabla (b)F}=
\frac{\theta_1(u(a)-u(b))}{\theta_4(u(a)-u(b))}.
\eeq
\square

\begin{remark}
The elliptic parametrization of the function $w(a)$ (\ref{mkp14a}) 
is included in (\ref{g(a,b)1}) as the limiting case
$b\to \infty$.
\end{remark}

Remarkably, the substitution (\ref{g(a,b)1}) allows one to solve
the whole (dispersionless) hierarchy. 

\begin{theorem}
The equation (\ref{g(a,b)1}) is equivalent to the
whole dDKP hierarchy defined by (\ref{dkp-gen-disp-H-M})
for all $P^+$, $P^-$ satisfying the parity condition.
\end{theorem}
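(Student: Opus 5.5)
One direction is immediate from what precedes. The 4-point relations (\ref{40c}), (\ref{31c}) are particular cases of (\ref{dkp-gen-disp-H-M}). We have already seen that these relations, after uniformization of the dynamical curve (\ref{ddkp-curve-ep}), lead to (\ref{g(a,b)1}). So the whole hierarchy implies (\ref{g(a,b)1}), and it remains to prove the converse.

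For the converse, as in the proofs for dKP and dmKP, I would substitute (\ref{g(a,b)1}) into (\ref{dkp-gen-disp-H-M}). Write $u_i=u(a_i)$, $v_k=u(b_k)$ and $[u]\equiv \theta_1(u)/\theta_4(u)$. The relation to be checked becomes
$$
\sum_{s=1}^{P^+}\frac{\prod_{k}[u_s-v_k]}{\prod_{i\neq s}[u_s-u_i]}
+\sum_{s=1}^{P^-}\frac{\prod_{k}[v_s-u_k]}{\prod_{i\neq s}[v_s-v_i]}=0 .
$$
This is an identity for the elliptic function
$$
f(x)=\frac{\prod_{k=1}^{P^-}[x-v_k]}{\prod_{i=1}^{P^+}[x-u_i]}
-\frac{\prod_{i=1}^{P^+}[x-u_i]}{\prod_{k=1}^{P^-}[x-v_k]} .
$$
Let me record the properties of $[x]$ that make this work. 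From Appendix B, $[x]$ is odd and has periods $2$ and $\tau$, with $[x+1]=\epsilon[x]^{-1}$ up to a constant. More precisely, $\theta_1(x+\tfrac{\tau}{2})\propto\theta_4(x)$, so $[x+\tfrac{\tau}{2}]=\text{const}\cdot[x]^{-1}$. Its zeros are at $0,1$ modulo the lattice, and its poles are at $\tfrac{\tau}{2}$ and $1+\tfrac{\tau}{2}$.

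The plan is to take the sum of residues of a suitable elliptic function over a fundamental domain. The first product in $f$ has simple poles at $x=u_s$ with residues proportional to the first sum, using $[x]\sim \pi\theta_4^{-2}(0)\theta_2(0)\theta_3(0)\,x$ near $0$. Its other poles lie at the shifted points $u_s+1$ and at $v_k+\tfrac{\tau}{2}$ (modulo the periods). The second product contributes poles at $v_s$, which produce the second sum, together with analogous shifted poles. The cleaner route is to work directly with the single function $h(x)=\prod_k[x-v_k]/\prod_i[x-u_i]$. Its poles are at $u_i$, $u_i+1$, $v_k+\tfrac{\tau}{2}$ and $v_k+1+\tfrac{\tau}{2}$. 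The quasi-symmetry $h(x+\tfrac{\tau}{2})=c\,h(x)^{-1}$, where $c$ is a product of the constants from the shift and equals $\pm1$ times powers depending on $P^\pm$, converts the residues at $v_k+\tfrac{\tau}{2}$ into residues of $h^{-1}$ at $v_k$. These are exactly the terms of the second sum. The shift by $1$ gives $h(x+1)=(-1)^{P^+-P^-}h(x)$, which equals $h(x)$ by the parity condition, so the poles at $u_i+1$ double the first sum rather than cancelling it.

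Two details need care. First, $h$ must be genuinely elliptic. Periodicity under $x\mapsto x+2$ is automatic. Under $x\mapsto x+\tau$ the multiplier is a product of factors of the form $e^{\pm i\pi(\ldots)}$, which cancel only when $P^+-P^-$ is even, so here the parity condition enters. Second, the constant $c$ must have exactly the value needed for the two sums to appear with equal weight and sign; this uses $[x+\tfrac{\tau}{2}]$ together with the normalization $\theta_2(0)\theta_3(0)/\theta_4^2(0)$ coming from $[x]'(0)$. I expect this bookkeeping of constants and signs to be the main obstacle.

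Once these are in place, the vanishing of the total residue of $h$ in the period parallelogram gives $2(\text{first sum})+2c'(\text{second sum})=0$ with $c'=1$. The degenerate cases, where points merge or go to infinity, then follow by continuity.
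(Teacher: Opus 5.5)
Your argument is correct and is essentially the paper's proof: your $h(x)=\prod_k[x-v_k]/\prod_i[x-u_i]$ is exactly the elliptic function $f$ of (\ref{id4(ddkp)}), written before the paper's shift $v_k\to v_k+\tau/2$, and the vanishing of its total residue is the whole argument. The constant bookkeeping you expected to be the main obstacle is trivial, and three of your side claims need correcting:
\begin{itemize}
\item By (\ref{hp}), $\theta_1$ and $\theta_4$ pick up the same prefactor under $x\to x+\tau/2$. Hence $[x+\tau/2]=[x]^{-1}$ and $h(x+\tau/2)=h(x)^{-1}$, so $c=1$ exactly.
\item $[x]$ is exactly $\tau$-periodic, so the $\tau$-shift does not use the parity condition.
\item $[x+1]=-[x]$, not $[x]^{-1}$. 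Parity therefore enters only through the shift by $1$.
\end{itemize}
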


\noindent
{\it Proof.}
In terms
of the function $g(a,b)$ the general equation (\ref{dkp-gen-disp-H-M})
has the form
\beq\label{gab}
\sum_{s = 1}^{P^{+}}
\Bigl (
\prod_{i = 1, \neq s}^{P^{+}} g(a_{s}, a_{i})
\Bigr )^{-1}
\Bigl (
\prod_{k=1}^{P^{-}} g(a_{s}, b_{k}) 
\Bigr ) +
\sum_{s = 1}^{P^{-}}
\Bigl (
\prod_{i = 1, \neq s}^{P^{-}} g(b_{s}, b_{i})
\Bigr )^{-1}`
\Bigl (
\prod_{k = 1}^{P^{+}} g(b_{s}, a_{k}) 
\Bigr ) = 0.
\eeq
The elliptic parametrization (\ref{g(a,b)1}) turns it
into an identity for elliptic functions:
\beq\label{id1(ddkp)}
\sum_{s = 1}^{P^{+}}
\prod_{{\scriptsize \begin{array}{l}i = 1\\ i \neq s \end{array}}}^{P^{+}}
\frac{\theta_4 (u_{i} - u_{s})}{\theta_1 (u_{i}-u_{s})}\,
\prod_{k = 1}^{P^{-}}
\frac{\theta_1 (u_{s} - v_{k})}{\theta_4 (u_{s} - v_{k})}
+ \sum_{s = 1}^{P^{-}}
\prod_{{\scriptsize \begin{array}{l}m = 1\\ m \neq s \end{array}}}^{P^{-}}
\frac{\theta_4 (v_{m} - v_{s})}{\theta_1 (v_{m} - v_{s})}\,
\prod_{l=1}^{P^{-}}
\frac{\theta_1 (v_{s} - u_{l})}{\theta_4 (v_{s} - u_{l})} = 0.
\eeq
Here $u_i =u(a_i)$, $v_k=u(b_k)$ can be regarded as arbitrary
(distinct) complex numbers.
Shifting the $v$-variables as $v_{i} \to v_{i} + \frac{\tau}{2}$,
we rewrite this in the following equivalent form:
\beq\label{id2(ddkp)}
\sum_{s = 1}^{P^{+}}
\prod_{{\scriptsize \begin{array}{l}i = 1\\ i \neq s \end{array}}}^{P^{+}}
\frac{\theta_4 (u_{i} - u_{s})}{\theta_1 (u_{i} - u_{s})}\,
\prod_{k = 1}^{P^{-}}
\frac{\theta_4 (u_{s} - v_{k})}{\theta_1 (u_{s} - v_{k})}
 + \sum_{s = 1}^{P^{-}}
\prod_{{\scriptsize \begin{array}{l}m = 1\\ m \neq s \end{array}}}^{P^{-}}
\frac
{\theta_4 (v_{m} -v_{s})}{\theta_1 (v_{m} - v_{s})}\,
\prod_{l = 1}^{P^{-}}
\frac{\theta_4 (v_{s} - u_{l})}{\theta_1 (v_{s} -u_{l})} = 0.
\eeq
To see that this holds identically for all 
$\{u_{i}\}$, $\{v_{k}\}$, it is enough to notice that the left-hand 
side is proportional to the sum of residues of the following
elliptic function with periods $1, \tau$:
\beq\label{id4(ddkp)}
f(u) = 
\prod_{i = 1}^{P^{+}} \frac{\theta_4(u - u_{i})}{\theta_1(u - u_{i})}
\prod_{j = 1}^{P^{-}} \frac{\theta_4(u - v_{j})}{\theta_1(u - v_{j})},
\eeq
which is zero. (Note that the condition that $P^{+}+P^{-}\in 2\ZZ_{+}$ is 
important for this.)
\square

We have shown that equation (\ref{g(a,b)1}) represents
the whole hierarchy in the elliptic form.
Some words on its meaning are in order.
For brevity, denote the function 
$\theta_1(u)/\theta_4(u)$ by ${\sf sn}(u)$ and the inverse function
by ${\sf arcsn}(u)$\footnote{Strictly speaking, 
the function ${\sf sn}(u)$ defined in this way is not quite what is called
the elliptic sinus function ${\rm sn}(u)$ (see Appendic C).
However, it is very similar to it: it differs 
from it by
a common $u$-independent factor and a re-scaling of the variable.
So, the function ${\sf arcsn} (u)$ is, up to some details,
the elliptic integral
of the first kind.}. 
Then we can write: 
\beq\label{u1}
u(a)=
{\sf arcsn}  \Bigl (a^{-1}e^{\nabla (a)
\p_{t_0}F}\Bigr ).
\eeq
So, equation (\ref{g(a,b)1}) can be written as
\beq\label{t161}
(a^{-1}-b^{-1})e^{\nabla (a)\nabla (b)F}
={\sf sn}  \Bigl ( {\sf arcsn} \, (a^{-1}
e^{\nabla (a)\p_{t_0}F})-
{\sf arcsn} \, (b^{-1}
e^{\nabla (b)\p_{t_0}F})\Bigr ).
\eeq
Therefore, as equation (\ref{t161}) shows, the general 
second order derivatives $\p_{t_m}\p_{t_n}F$ with all
$m,n\geq 0$ in the left-hand side of
(\ref{t161}) are expressed through the
particular derivatives $\p_{t_m}\p_{t_0} F$ with $m\geq 0$
(which are in the right-hand side). Such structure of the
equations is common for dispersionless hierarchies. A similar
pattern we observe for the dispersionless KP and mKP hierarchies
(see equations (\ref{kp12}) and (\ref{mkp13d})). The only difference
is that the explicit expressions for general derivatives
through the particular ones in the
present case are essentially more
complicated. 

At last, we should determine the elliptic modular
parameter $\tau$. 

\begin{proposition}\label{proposition:tau}
The elliptic modular
parameter $\tau =\tau ({\bf t})$ is a function of times
implicitly determined from the equation
\beq \label{RV1}
 \frac{\theta_2^2(0|\tau )}{\theta_3^2(0|\tau )}+
\frac{\theta_3^2(0|\tau )}{\theta_2^2(0|\tau )}
=e^{-2\p_{t_0}^2F}
\Bigl ((\p_{t_0}\p_{t_{1}}F)^{2} +2 \p_{t_{1}}^{2} F -
\p_{t_0} \p_{t_{2}}F\Bigr ).
\eeq
\end{proposition}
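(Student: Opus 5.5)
The plan is to eliminate the auxiliary dynamical variable $\gamma$ between the two relations (\ref{RV}) and then substitute the expressions (\ref{R111}) for $R$ and $V$ in terms of second derivatives of $F$. The curve equation (\ref{ddkp-curve-ep}) was proved with $R=e^{\p_{t_0}^2F}$ and $V=(\p_{t_0}\p_{t_1}F)^2+2\p_{t_1}^2F-\p_{t_0}\p_{t_2}F$. The uniformization proposition asserts that the same curve, written in theta-function form, has $R=\gamma\theta_2(0)\theta_3(0)$ and $V=\gamma^2(\theta_2^4(0)+\theta_3^4(0))$. Both descriptions refer to one curve with the same marked local parameter, so the coefficients must agree.

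One point must be settled first: the coefficients of the theta-parametrized curve are indeed these $R$ and $V$, with no ambiguity. The curve is defined by $R^2(x^2+x^{-2})=y^2+V$. Substituting $x=\theta_1(u)/\theta_4(u)$ and $y=\gamma\theta_4^2(0)\theta_2(u)\theta_3(u)/(\theta_1(u)\theta_4(u))$ and using identity (\ref{identity}) multiplied by $\gamma^2$ gives
$$\gamma^2\theta_2^2(0)\theta_3^2(0)\,(x^2+x^{-2})=y^2+\gamma^2\bigl(\theta_2^4(0)+\theta_3^4(0)\bigr).$$
The functions $x^2+x^{-2}$ and $1$ are linearly independent as functions of $u$, because $x^2$ has a pole where $\theta_4(u)=0$. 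Comparing coefficients against (\ref{ddkp-curve-ep}) therefore forces $R^2=\gamma^2\theta_2^2(0)\theta_3^2(0)$ and $V=\gamma^2(\theta_2^4(0)+\theta_3^4(0))$. This recovers (\ref{RV}) up to the sign of $R$, and the sign is irrelevant because only $R^2$ enters below.

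Next, divide $V$ by $R^2$:
$$\frac{V}{R^2}=\frac{\theta_2^4(0)+\theta_3^4(0)}{\theta_2^2(0)\theta_3^2(0)}=\frac{\theta_2^2(0)}{\theta_3^2(0)}+\frac{\theta_3^2(0)}{\theta_2^2(0)}.$$
Here $\gamma$ cancels completely. This cancellation is the key point: it gives a relation involving $\tau$ alone. Inserting $R^2=e^{2\p_{t_0}^2F}$ and $V$ from (\ref{R111}) on the left-hand side yields exactly (\ref{RV1}).

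The main obstacle is less the algebra than justifying that $\tau$ is genuinely determined by (\ref{RV1}). The right-hand side is a function of $k^2=\theta_2^4(0)/\theta_3^4(0)$, namely $k+k^{-1}$. Given the left-hand side, this determines $k$ up to $k\to k^{-1}$ and hence determines $\tau$ modulo the modular group action, which is the intended meaning of "implicitly determined". I would remark that a local branch $\tau(\mathbf t)$ is selected by continuity, and that the left-hand side equals $2$ exactly in the degenerate limits $\tau\to +i0$ or $\tau\to+i\infty$ ($k\to\pm1$ or $0$ in appropriate normalizations). I would then note that $\gamma$ follows afterwards from $R=\gamma\theta_2(0)\theta_3(0)$, consistently with $c_1=\gamma/\pi$.
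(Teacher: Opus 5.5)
Your argument is correct and is the paper's proof: the paper likewise observes that by (\ref{RV}) the ratio $V/R^2$ is $\gamma$-independent and equals $\theta_2^2(0)/\theta_3^2(0)+\theta_3^2(0)/\theta_2^2(0)$, while by (\ref{R111}) it equals the derivative expression. Your coefficient-comparison justification of (\ref{RV}) is a harmless extra. One side remark is wrong, though. As $\tau\to+i\infty$ one has $\theta_2(0|\tau)\to 0$, so $k\to 0$ and $k+k^{-1}\to\infty$. The value $2$ is reached only as $\tau\to+i0$ ($k\to 1$), which is the large dBKP degeneration; note also that $k\to -1$ would give $-2$, not $2$.
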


\noindent
{\it Proof.}
From (\ref{RV}) we see that the ratio $V/R^2$ depends only on 
$\tau$ and equals
the left-hand side of (\ref{RV1}). From (\ref{R111}) it follows
that the same ratio equals also its right-hand side.
\square

\subsubsection{The $F_1$-function}
\label{section:F1-DKP}

In the next section we will need the next-to-leading term $F_1$
of the $\hbar$-expansion
\beq\label{F001}
F(t_{0}, {\bf t}; \hbar ) =
\hbar^2 \log \tau \Bigl (\hbar^{-1} t_0, \hbar^{-1}{\bf t}  \Bigr )
=F_0 +\hbar F_1 + O(\hbar^2), \quad \hbar \to 0.
\eeq

\begin{proposition}
The function $F_1$ in the expansion (\ref{F001}) satisfies 
the following homogeneous linear equation:
\beq\label{exp4}
\begin{array}{l}
\displaystyle{
\sum_{s=1}^{P^+}
\prod_{{\scriptsize \begin{array}{l}i=1\\ i\neq s \end{array}}}^{P^+}
E^{-1}(a_s, a_i)
\prod_{k=1}^{P^-} E(a_s, b_k)
e^{\nabla (a_s)(S^- -S^+ +\nabla (a_s))F_0}
\nabla (a_s)(S^-\!  -\! S^+\!  +\! \nabla (a_s))F_1
}
\\ \\
\displaystyle{
+\,\, \sum_{s=1}^{P^-}
\prod_{{\scriptsize \begin{array}{l}k=1\\ k\neq s \end{array}}}^{P^-}
E^{-1}(b_s, b_k)
\prod_{i=1}^{P^+} E(b_s, a_i)
e^{\nabla (b_s)(S^+ -S^- +\nabla (b_s))F_0}
\nabla (b_s)(S^+ \! -\! S^- \! +\! \nabla (b_s))F_1  =  0,
}
\end{array}
\eeq
where the operators $S^{\pm}$ are 
$\displaystyle{
S^+ = \sum_{i=1}^{P^+} \nabla (a_i), \;
S^- = \sum_{i=1}^{P^-} \nabla (b_i)}$
(cf. (\ref{Spm})) and 
the function $F_0$ satisfies equation (\ref{dkp-gen-disp-H-M}).
\end{proposition}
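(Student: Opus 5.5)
We follow the proof of Proposition \ref{proposition:F1KP}, with the extra sum over the $b_k$'s treated in the same way. The starting point is the $\hbar$-version of the general relation (\ref{dkp-gen-H-M}). In the re-scaled variables, and by the analog of (\ref{mkp8}), every shifted tau-function becomes
$$
\tau\Bigl(\hbar^{-1}t_0+m,\ \hbar^{-1}{\bf t}+\textstyle\sum [\,\cdot\,]\Bigr)=\exp\Bigl(\hbar^{-2}e^{\hbar X}F\Bigr),
$$
where $X$ is the appropriate combination of $\nabla$-operators. In the first sum the two factors correspond to $X=S^+-\nabla(a_s)$ and $X=S^-+\nabla(a_s)$, after shifting $t_0$ and ${\bf t}$ by a common amount. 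In the second sum they correspond to $X=S^--\nabla(b_s)$ and $X=S^++\nabla(b_s)$. We then divide the whole relation by the $s$-independent factor
$$
\exp\Bigl(\hbar^{-2}\bigl(e^{\hbar S^+}+e^{\hbar S^-}\bigr)F\Bigr).
$$
After this division each term becomes $\exp\bigl[\hbar^{-2}(e^{\hbar(S^+-\nabla(a_s))}+e^{\hbar(S^-+\nabla(a_s))}-e^{\hbar S^+}-e^{\hbar S^-})F\bigr]$, and similarly for the $b$-terms.

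Next we expand the exponents in $\hbar$. Since $e^{\hbar(A-B)}+e^{\hbar(C+B)}-e^{\hbar A}-e^{\hbar C}$ has no $O(\hbar)$ term, the leading order is
$$
\hbar^2\bigl(B^2+B(C-A)\bigr)=\hbar^2\,\nabla(a_s)\bigl(S^--S^++\nabla(a_s)\bigr),
$$
with $A=S^+$, $C=S^-$, $B=\nabla(a_s)$. The $\hbar^3$ term equals $\tfrac{\hbar^3}{2}R_s$, where
$$
R_s=(S^++S^-)\,\nabla(a_s)\bigl(S^--S^++\nabla(a_s)\bigr),
$$
exactly as in the KP computation, since only the cubic terms contribute. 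For the $b$-terms the same formulas hold with $S^+\leftrightarrow S^-$ and $a_s\to b_s$; the operator $R$ is symmetric under this exchange. Substituting $F=F_0+\hbar F_1+O(\hbar^2)$, each exponent equals
$$
\nabla(a_s)(S^--S^++\nabla(a_s))F_0+\hbar\Bigl[\tfrac12 R_sF_0+\nabla(a_s)(S^--S^++\nabla(a_s))F_1\Bigr]+O(\hbar^2),
$$
and analogously for the $b_s$-terms.

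Expanding to first order in $\hbar$, the relation takes the form $\Sigma_0+\hbar(\Sigma_R+\Sigma_1)+O(\hbar^2)=0$. Here $\Sigma_0$ is the left-hand side of (\ref{dkp-gen-disp-H-M}), which vanishes. The term $\Sigma_1$ is exactly the left-hand side of (\ref{exp4}). For $\Sigma_R$, we write each exponent as $\Phi_s=\nabla(a_s)(S^--S^++\nabla(a_s))F_0$. The key observation is that $\tfrac12 R_sF_0=\tfrac12(S^++S^-)\Phi_s$, and $S^++S^-$ is a fixed vector field that does not depend on $s$, the $a_i$, or the $b_k$ as a derivation on functions of the times. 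Hence
$$
\Sigma_R=\tfrac12(S^++S^-)\Sigma_0,
$$
the coefficients $E^{\pm1}$ being independent of the times. This vanishes because (\ref{dkp-gen-disp-H-M}) holds identically in the times. Therefore $\Sigma_1=0$, which is (\ref{exp4}).

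The main obstacle is bookkeeping. One must verify that after the common normalization the $b$-sum produces precisely $\nabla(b_s)(S^+-S^-+\nabla(b_s))$ at both orders. One must also check that the discrete shifts of $t_0$ (by $P^\pm\pm1$) are correctly absorbed into $\nabla$, including the $\partial_{t_0}$ components, and that the normalization factor is indeed common to both sums. We would check this first by writing $n'$-side arguments symmetrically, using $n-n'=P^+-P^-$.
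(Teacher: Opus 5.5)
Your proposal is correct and is essentially the paper's own proof, which just says to expand the $\hbar$-version of (\ref{dkp-gen-H-M}) to first order in $\hbar$ as in Proposition~\ref{proposition:F1KP}. Your computation is that argument written out in full, including the $b$-sum: the common normalization, the order-$\hbar^2$ operator $\nabla(a_s)(S^--S^++\nabla(a_s))$, the order-$\hbar^3$ operator $\frac{1}{2}R_s$, and $\Sigma_R=\frac{1}{2}(S^++S^-)\Sigma_0=0$. One small wording fix: $S^++S^-$ does depend on the $a_i,b_k$ as parameters; what the argument actually uses is that it is the same constant-coefficient derivation in the times for every term of both sums, so it commutes past the time-independent $E$-coefficients.
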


\noindent
{\it Proof.}
Equation (\ref{exp4}) can be obtained by expanding the $\hbar$-version
of equation (\ref{dkp-gen-H-M}) up to the first order in $\hbar$
in the same way as this was done in the 
simpler example of the $\hbar$-KP hierarchy in Section \ref{section:F1}
(see equation (\ref{exp3})). 
\square

In the full analogy with the dKP case considered 
in Section \ref{section:F1}, the function 
\beq\label{F1v}
F_1=\p_{v}F_0(t_0, {\bf t}; v ), 
\eeq
where
$v$ is any continuous parameter of solutions to (\ref{dkp-gen-disp-H-M}),
satisfies equation (\ref{exp4}). Indeed, the
left-hand side (\ref{exp4}) becomes the $v$-derivative of 
(\ref{dkp-gen-disp-H-M}), and hence it is equal to zero.

\subsection{Large BKP and its dispersionless version}

This hierarchy was introduced in \cite{KL98} under the name
``charged BKP hierarchy''. Later, the authors of \cite{OST12,OST16}
suggested to call it ``large BKP hierarchy'' as opposed to 
the ``neutral'' (or ``small'') BKP considered in 
Section \ref{section:smallBKP} (see also \cite{vLO15,book}).
Recently it was rediscovered in 
\cite{KZ22,PZ23} as a subhierarchy of
the 2D Toda lattice and called there the Toda chain of type B (B-Toda).
As such, it can be regarded as a natural integrable discretization of the
BKP hierarchy. For realization in terms of free fermions see
\cite{vLO15}.

\subsubsection{Large BKP}
\label{section:largeBKP}

The set of independent variables is the same as for the 
DKP hierarchy: $n\in \ZZ$ and ${\bf t}=\{t_1, t_2, t_3, \ldots \}$,
and the tau-function is $\tau (n, {\bf t})$.
The bilinear equation for the tau-function has the form
\beq\label{L1}
\begin{array}{ll}
&\displaystyle{
\frac{1}{2\pi i} \oint_{C_{\infty}} \frac{dz}{z^2}\, 
z^{n-n'} e^{\xi ({\bf t}-{\bf t}', z)}
\tau \Bigl ( n-1, {\bf t}-[z^{-1}]\Bigr )
\tau \Bigl ( n'+1, {\bf t}'+[z^{-1}]\Bigr )}
\\ &\\
+& \displaystyle{
\frac{1}{2\pi i} \oint_{C_{\infty}} \frac{dz}{z^2}\, 
z^{n'-n} e^{-\xi ({\bf t}-{\bf t}', z)}
\tau \Bigl ( n+1, {\bf t}+[z^{-1}]\Bigr )
\tau \Bigl ( n'-1, {\bf t}'-[z^{-1}]\Bigr )}
\\ &\\
=& \frac{1}{2} \Bigl (1-(-1)^{n-n'}\Bigr )
\tau (n, {\bf t})\tau (n', {\bf t}'),
\end{array}
\eeq
which is valid for all $n, n'$, ${\bf t}, {\bf t}'$.
Equation (\ref{L1}) has the following obvious symmetry:
\beq\label{sym1}
(n, {\bf t}) \longleftrightarrow (n', {\bf t}').
\eeq 

\begin{remark}
The simplest solution (\ref{simplest}) to the DKP hierarchy
is simultaneously a solution to the large BKP hierarchy.
\end{remark}

If $n$ and $n'$ are of the same parity (both even or both odd),
the right-hand side vanishes and (\ref{L1}) becomes the 
integral bilinear equation (\ref{dkp-bil-rel}) 
for the DKP hierarchy. More
precisely, the full set of equations of the large BKP can
be divided into three groups: 
the ``even'' sector consisting of equations that connect
tau-functions with even $n$'s (the DKP), 
the ``odd'' sector consisting of equations that connect
tau-functions with odd $n$'s (another copy of DKP) and
equations that ``intertwine'' the even and odd sectors
(they connect tau-functions $\tau (n, {\bf t})$,
$\tau (m, {\bf t})$ with $n-m\in 2\ZZ +1$). The latter 
set of equations can be written in another form which is
more suitable for the most general dispersionless limit
discussed below in the next subsection. Namely, let the
tau-function be 
$$
\tau (n, {\bf t}) \quad \mbox{for even $n$ and
$\sigma (n, {\bf t})$ for odd $n$} .
$$ 
In this notation, 
equation (\ref{L1}) for $n$ even, $n'$ odd acquires the form
\beq\label{L1c}
\begin{array}{ll}
&\displaystyle{
\frac{1}{2\pi i} \oint_{C_{\infty}} \frac{dz}{z^2}\, 
z^{n-n'} e^{\xi ({\bf t}-{\bf t}', z)}
\sigma \Bigl ( n-1, {\bf t}-[z^{-1}]\Bigr )
\tau \Bigl ( n'+1, {\bf t}'+[z^{-1}]\Bigr )}
\\ &\\
+& \displaystyle{
\frac{1}{2\pi i} \oint_{C_{\infty}} \frac{dz}{z^2}\, 
z^{n'-n} e^{-\xi ({\bf t}-{\bf t}', z)}
\sigma \Bigl ( n+1, {\bf t}+[z^{-1}]\Bigr )
\tau \Bigl ( n'-1, {\bf t}'-[z^{-1}]\Bigr )}
\\ &\\
=& \frac{1}{2} \Bigl (1-(-1)^{n-n'}\Bigr )
\tau (n, {\bf t})\sigma (n', {\bf t}').
\end{array}
\eeq
The equation 
in the case when $n$ is odd and $n'$ even is equivalent to 
it due to the symmetry (\ref{sym1}).

As before, the substitution 
\beq\label{L2}
\left \{\begin{array}{l}
\displaystyle{
n-n'=P^+ -P^-,}
\\ \\
\displaystyle{
{\bf t}-{\bf t}' =\sum_{i=1}^{P^+}[a_i^{-1}]-
\sum_{k=1}^{P^-}[b_k^{-1}]}
\end{array}\right.
\eeq
allows one to apply residue calculus.
The case $P^+-P^-\in 2\ZZ$ corresponds to DKP and was already
considered. Here we are interested in the case
$P^+-P^-\in 2\ZZ +1$, which leads to 
equations that are specific for the
large BKP. The residue calculus yields the following general
Hirota-Miwa equation:
\beq\label{L3}
\begin{array}{l}
\displaystyle{
\sum_{s=1}^{P^+}
\prod_{{\scriptsize \begin{array}{l}i=1\\ i\neq s \end{array}}}^{P^+}
E^{-1}(a_s, a_i)
\prod_{k=1}^{P^-} E(a_s, b_k)
\tau \Bigl (n+P^+ -1,  {\bf t}+\sum_{i\neq s}^{P^+}[a_i^{-1}]\Bigr )}
\\ 
\displaystyle{\phantom{aaaaaaaaaaaaaaaaaaaaaaaaaaaaaaaaaaaaa}
\times \, \sigma \Bigl (n+P^- +1, {\bf t}+[a_s^{-1}]+\sum_{k=1}^{P^-}
[b_k^{-1}]\Bigr )}
\\ \\
\displaystyle{
+\, \sum_{s=1}^{P^-}
\prod_{{\scriptsize \begin{array}{l}k=1\\ k\neq s \end{array}}}^{P^-}
E^{-1}(b_s, b_k)
\prod_{i=1}^{P^+}E(b_s, a_i)
\tau \Bigl (n+P^+ +1, {\bf t}+[b_s^{-1}] +\sum_{i=1}^{P^+}
[a_i^{-1}], \Bigr )}
\\ 
\displaystyle{\phantom{aaaaaaaaaaaaaaaaaaaaaaaaaaaaaaaaaaaaa}
\times \, 
\sigma \Bigl (n+P^- -1,  {\bf t} +\sum_{k\neq s}^{P^-}[b_k^{-1}]
\Bigr )}
\\ \\
\phantom{}
\displaystyle{
=\sigma \Bigl (n+P^+ , {\bf t}+\sum_{i=1}^{P^+}[a_i^{-1}] \Bigr )
\, \tau \Bigl (n+P^- , {\bf t} +\sum_{k=1}^{P^-}[b_k^{-1}])\Bigr ).}
\end{array}
\eeq
This equation contains $P^+ + P^- +1$ bilinear terms.
Recall that $P=P^++P^-\geq 3$ here is odd. 

The simplest
nontrivial case of (\ref{L3}) is $P=3$ that leads to 4-term relations. 
Taking into account the symmetry (\ref{sym1}), we should consider
two cases: $(P^+ , P^-)=(3,0)$ and $(P^+ , P^-)=(2,1)$, i.e., 
\beq\label{L4}
\left \{\begin{array}{l}
\displaystyle{
n-n'=3,}
\\ \\
\displaystyle{
{\bf t}-{\bf t}' =[a_1^{-1}]+[a_2^{-1}]+[a_3^{-1}]}
\end{array}\right.  \quad  \mbox{and} \quad
\left \{\begin{array}{l}
\displaystyle{n-n'=1,}
\\ \\
{\bf t}-{\bf t}' =[a_1^{-1}]+[a_2^{-1}]-[b_1^{-1}].
\end{array}\right.
\eeq
The corresponding 3-point 4-term equations are:

\beq\label{(3,0)}
\begin{array}{l}
\phantom{a}\,
E^{-1}(a_1, a_2)E^{-1}(a_1, a_3)\tau \Bigl (n+2, {\bf t}+[a_2^{-1}]+
[a_3^{-1}]\Bigr )\tau \Bigl (n+1, {\bf t}+[a_1^{-1}]\Bigr )
\\ \\
+\, E^{-1}(a_2, a_1)E^{-1}(a_2, a_3)\tau \Bigl (n+2, {\bf t}+[a_1^{-1}]+
[a_3^{-1}]\Bigr )\tau \Bigl (n+1, {\bf t}+[a_2^{-1}]\Bigr )
\\ \\
+\, E^{-1}(a_3, a_1)E^{-1}(a_3, a_2)\tau \Bigl (n+2, {\bf t}+[a_1^{-1}]+
[a_2^{-1}]\Bigr )\tau \Bigl (n+1, {\bf t}+[a_3^{-1}]\Bigr )
\\ \\
=\, \tau \Bigl (n+3, {\bf t}+[a_1^{-1}]+
[a_2^{-1}]+ [a_3^{-1}]\Bigr )\tau (n ,{\bf t}),
\end{array}
\eeq

\noindent
and

\beq\label{(2,1)}
\begin{array}{l}
\phantom{a}\,
E^{-1}(a_1, a_2)E(a_1, b_1)\tau \Bigl (n+1, {\bf t}+[a_2^{-1}]\Bigr )
\tau \Bigl (n+2, {\bf t}+[a_1^{-1}]+[b_1^{-1}]\Bigr )
\\ \\
+\, E^{-1}(a_2, a_1)E(a_2, b_1)\tau \Bigl (n+1, {\bf t}+[a_1^{-1}]\Bigr )
\tau \Bigl (n+2, {\bf t}+[a_2^{-1}]+[b_1^{-1}]\Bigr )
\\ \\
+\, 
E(b_1, a_1)E(b_1, a_2)\tau \Bigl (n+3, {\bf t}+[a_1^{-1}]+[a_2^{-1}]
+[b_1^{-1}]\Bigr )
\tau  (n, {\bf t})
\\ \\
=\, \tau \Bigl (n+2, {\bf t}+[a_1^{-1}]+
[a_2^{-1}]\Bigr )\tau (n+1 ,{\bf t}+[b_1^{-1}]),
\end{array}
\eeq

\noindent
where we have returned to the previous notation 
$\sigma (n, {\bf t})=\tau (n, {\bf t})$ for odd $n$.
Remarkably, these two equations are actually the same: this can be
easily seen by the substitution $b_1=a_3$ and reorganizing the terms.

\subsubsection{Dispersionless limit of the large BKP hierarchy}
\label{section:dLargeBKP}

The essential difference between the large BKP hierarchy and the previous cases is that it admits more than one dispersionless versions.
One of them was 
discussed in \cite{Z24a} (where the equivalent B-Toda hierarchy
was dealt with).

As before, the first step is to rescale the times as 
$
n \to t_0 /\hbar$, ${\bf t}\to {\bf t}/\hbar$. However, 
in the present case we should
take into account the possibility that in general
the functions $\tau$ and $\sigma$ may have different limits
as $\hbar \to 0$.
To take care of this, we set
\beq\label{FG}
\tau \Bigl (\hbar^{-1}t_0, \hbar^{-1}{\bf t}  \Bigr )=
\exp \left ( \frac{1}{\hbar^2}\,
F(t_0, {\bf t}; \hbar )\right ), \quad
\sigma \Bigl (\hbar^{-1}t_0, \hbar^{-1}{\bf t}  \Bigr )=
\exp \left ( \frac{1}{\hbar^2}\,
G(t_0, {\bf t}; \hbar )\right ),
\eeq
where $F$ and $G$ a priori are different functions,
and assume that they have $\hbar$-expansions 
of the form 
$$
\begin{array}{l}
\displaystyle{
F(t_0, {\bf t}; \hbar )=F_0(t_0, {\bf t})+
\sum_{k\geq 1}F_k(t_0, {\bf t})\hbar^k,}
\\ \\
\displaystyle{
G(t_0, {\bf t}; \hbar )=G_0(t_0, {\bf t})+
\sum_{k\geq 1}G_k(t_0, {\bf t})\hbar^k},
\end{array}
$$
where all $F_k$ and $G_k$ are smooth functions of their arguments. 

Introducing the operator
$\displaystyle{
\nabla (z)=\p_{t_0} +\sum_{k\geq 1} \frac{z^{-k}}{k}\, \p_{t_k},}
$
we can write the $\hbar$-version of equation (\ref{L3}) in the form
\beq\label{L3b}
\begin{array}{l}
\displaystyle{
\sum_{s=1}^{P^+}
\prod_{{\scriptsize \begin{array}{l}i=1\\ i\neq s \end{array}}}^{P^+}
E^{-1}(a_s, a_i)
\prod_{k=1}^{P^-} E(a_s, b_k)
e^{\hbar \sum_{i\neq s}^{P^+}\nabla (a_i)}e^{\hbar^{-2}F(t_0, {\bf t})}\,
e^{\hbar \sum_{k\neq s}^{P^-}\nabla '(b_k)+\hbar
\nabla ' (a_s)}e^{\hbar^{-2}G(t_0', {\bf t'})}}
\\ \\
\displaystyle{
+\, \sum_{s=1}^{P^-}
\prod_{{\scriptsize \begin{array}{l}k=1\\ k\neq s \end{array}}}^{P^-}
E^{-1}(b_s, b_k)
\prod_{i=1}^{P^+}E(b_s, a_i)
e^{\hbar \sum_{i=1}^{P^+}\nabla (a_i)+\hbar \nabla (b_s)}
e^{\hbar^{-2}F(t_0, {\bf t})} \, 
e^{\hbar \sum_{k\neq s}^{P^-}\nabla '(b_k)}
e^{\hbar^{-2}G(t_0', {\bf t'})}}
\\ \\
\displaystyle{
=e^{\hbar \sum_{i=1}^{P^+}\nabla (a_i)}e^{\hbar^{-2}G(t_0, {\bf t})}\,
e^{\hbar \sum_{k=1}^{P^-}\nabla '(b_k)}e^{\hbar^{-2}F(t_0', {\bf t'})}
},
\end{array}
\eeq
where the operator $\nabla$ acts to the variables $t_0, {\bf t}$ 
and the operator $\nabla '$ acts to the variables $t_0', {\bf t'}$.
Expanding this equation in powers of $\hbar$ as $\hbar \to 0$, 
one can see that the limit exists only if the leading terms coincide,
i.e., $G_0=F_0$. However, in general the functions $F$ and $G$ may
differ in the next order. Taking this into account, we set
\beq\label{FGf}
G_1(t_0, {\bf t})=F_1(t_0, {\bf t})-f(t_0, {\bf t}).
\eeq
Then the $\hbar \to 0$ limit of (\ref{L3b}) reads
\beq\label{dL1}
\begin{array}{l}
\displaystyle{
\sum_{s=1}^{P^+}
\prod_{{\scriptsize \begin{array}{l}i=1\\ i\neq s \end{array}}}^{P^+}
E^{-1}(a_s, a_i)
\prod_{k=1}^{P^-} E(a_s, b_k)
\exp \!
\left (\nabla (a_s) \Bigl (\sum_{k=1}^{P^-}\nabla (b_k)
-  \sum_{i\neq s}^{P^+}\nabla (a_i)\Bigr )F_0 -\nabla (a_s)f\Bigr )\right )
}
\\ \\
\displaystyle{
+\,\, \sum_{s=1}^{P^-}
\prod_{{\scriptsize \begin{array}{l}k=1\\ k\neq s \end{array}}}^{P^-}
E^{-1}(b_s, b_k)
\prod_{i=1}^{P^+} E(b_s, a_i)
\exp \! \left (\nabla (b_s) \Bigl 
(\sum_{i=1}^{P^+}\nabla (a_i)
-  \sum_{k\neq s}^{P^-}\nabla (b_k)\Bigr )F_0 +\nabla (b_s)f\Bigr )\right )
}
\\ \\
\phantom{aaaaa}\displaystyle{
=\exp \! \left (\sum_{k=1}^{P^-}\nabla (b_k)f-
\sum_{i=1}^{P^+}\nabla (a_i)f \right ).
}
\end{array}
\eeq

Let us first consider the case $f=0$ (or $f=\mbox{const}$).
(The case of nonzero $f$ is more complicated 
and will be discussed in the next section in the more general 
context of
multi-component Pfaff hierarchies.) For $f=0$ equation
(\ref{dL1}) simplifies:
\beq\label{dL1a}
\begin{array}{l}
\displaystyle{
\sum_{s=1}^{P^+}
\prod_{{\scriptsize \begin{array}{l}i=1\\ i\neq s \end{array}}}^{P^+}
E^{-1}(a_s, a_i)
\prod_{k=1}^{P^-} E(a_s, b_k)
\exp \!
\left (\nabla (a_s) \Bigl (\sum_{k=1}^{P^-}\nabla (b_k)
-  \sum_{i\neq s}^{P^+}\nabla (a_i)\Bigr )F_0 \Bigr )\right )
}
\\ \\
\displaystyle{
+\,\, \sum_{s=1}^{P^-}
\prod_{{\scriptsize \begin{array}{l}k=1\\ k\neq s \end{array}}}^{P^-}
E^{-1}(b_s, b_k)
\prod_{i=1}^{P^+} E(b_s, a_i)
\exp \! \left (\nabla (b_s) \Bigl 
(\sum_{i=1}^{P^+}\nabla (a_i)
-  \sum_{k\neq s}^{P^-}\nabla (b_k)\Bigr )F_0\Bigr )\right )
\, = \, 1.
}
\end{array}
\eeq
The simplest
nontrivial case is $P=3$ with two 
possibilities: $(P^+, P^-)=(3,0)$ and $(P^+, P^-)=(2,1)$.
Both of them lead to the following two 3-point 4-term relation:
\beq\label{(3,0)d}
\begin{array}{l}
(a^{-1}-b^{-1})e^{\nabla (a)\nabla (b)F} +
(b^{-1}-c^{-1})e^{\nabla (b)\nabla (c)F} 
+ (c^{-1}-a^{-1})e^{\nabla (c)\nabla (a)F} 
\\ \\
\phantom{aaa} + \,\,
(a^{-1}-b^{-1})(b^{-1}-c^{-1})(c^{-1}-a^{-1})
e^{\nabla (a)\nabla (b)F+\nabla (b)\nabla (c)F+
\nabla (c)\nabla (a)F}=0,
\end{array}
\eeq
where we have put $a_1=a$, $b_1=b$, $c_1=c$. It is the
dispersionless version of equation
(\ref{(3,0)}) (and of (\ref{(2,1)}) since they are the same).
The further limit
$c\to \infty$ gives the following 2-point equation:
\beq\label{L6}
(a^{-1}-b^{-1})e^{\nabla (a)\nabla (b)F}\Bigl (
1-(ab)^{-1}e^{\nabla (a)\p_{0}F +\nabla (b)\p_0 F}\Bigr )=
a^{-1}e^{\nabla (a)\p_{0}F}-b^{-1}e^{\nabla (b)\p_{0}F},
\eeq
where $\p_0 \equiv \p_{t_0}$. This equation, being expanded in negative
powers of $a,b$, generates an infinite number of nonlinear partial
differential equations for the function $F$. The simplest 
equation contained in (\ref{L6}) is
\beq\label{L7}
F_{02}-2F_{11}-F_{01}^2 +2e^{2F_{00}}=0,
\eeq
where we denote $F_{mn}\equiv \p_{t_m}\p_{t_n}F$. 

\begin{remark}
Comparing this with the corresponding equation (\ref{L7a})
of the dmKP hierarchy, we see that (\ref{L7}) has an extra term
(the last term in (\ref{L7})). The first three terms are seemingly
the same but 
one should have in mind that they can not be identified because
$F_{\tilde 0n}$ in (\ref{L7a}) and $F_{0n}$ in (\ref{L7}) are 
different functions.
\end{remark}

In terms of the function
\beq\label{L8}
w(z)=z^{-1}e^{\nabla (z)\p_0F}
\eeq
equation (\ref{L6}) can be written in the form
\beq\label{L6a}
(a^{-1}-b^{-1})e^{\nabla (a)\nabla (b)F}=
\frac{w(a)-w(b)}{1-w(a)w(b)}.
\eeq

\begin{proposition}
Equation (\ref{L6a}) is equivalent to the general
Hirota-Miwa equations (\ref{dL1a}) and (\ref{dkp-gen-disp-H-M}). 
\end{proposition}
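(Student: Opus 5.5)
Following the pattern of the earlier equivalence proofs (Proposition \ref{proposition:dkp-eq} for dKP and the dDKP theorem), the plan is to substitute (\ref{L6a}) into the general Hirota-Miwa relations and recognize the result as a residue identity for a rational function. The natural uniformization here is hyperbolic, the degenerate ($\tau\to +i0$) form of the theta-function case. Writing $w(z)=\tanh u(z)$, equation (\ref{L6a}) reads
$$
g(a,b)\equiv (a^{-1}-b^{-1})e^{\nabla (a)\nabla (b)F}=\tanh\bigl(u(a)-u(b)\bigr).
$$
Equivalently, in rational form, I would set $x_i=w(a_i)$ and $y_k=w(b_k)$ and use
$$
g(a,b)=\frac{w(a)-w(b)}{1-w(a)w(b)}.
$$
Both the DKP relation (\ref{dkp-gen-disp-H-M}) and the BKP relation (\ref{dL1a}) are expressed entirely through $g$, namely as sums of ratios of products of $g(a_s,a_i)$ and $g(a_s,b_k)$. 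After the substitution they become identities in the free variables $x_i,y_k$, or equivalently $u_i=u(a_i)$ and $v_k=u(b_k)$.

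For the converse direction the argument is direct. Equation (\ref{(3,0)d}) is the case $(P^+,P^-)=(3,0)$ of (\ref{dL1a}). Its limit $c\to\infty$ gives (\ref{L6}), which is (\ref{L6a}) once one uses $g(a,c)\to w(a)$ as $c\to\infty$.

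For the direct direction I would use the function $f(u)=\prod_{i}\coth(u-u_i)\prod_k\coth(u-v_k)$, which is $i\pi$-periodic, with $P^+$ factors of the first kind and $P^-$ of the second. Its residue at $u=u_s$ is
$$
\prod_{i\neq s}\coth(u_s-u_i)\prod_k\coth(u_s-v_k).
$$
To match the mixed factors $g(a_s,b_k)=\tanh(u_s-v_k)$, I would shift $v_k\to v_k+i\pi/2$, which turns $\tanh$ into $\coth$, exactly as the shift by $\tau/2$ did in (\ref{id2(ddkp)}). The residue sum over one period strip then reproduces the left-hand sides, and the contribution from the two ends ${\rm Re}\,u\to\pm\infty$ remains to be evaluated.

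\begin{itemize}
\item If $P^++P^-$ is even (DKP), $f\to 1$ at both ends, so integrating over the boundary of the period strip gives total residue zero. This is (\ref{dkp-gen-disp-H-M}).
\item If $P^++P^-$ is odd (BKP), $f\to +1$ at $+\infty$ and $f\to -1$ at $-\infty$. The boundary integral is then a nonzero constant, which should produce the right-hand side $1$ of (\ref{dL1a}).
\end{itemize}

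The main obstacle is the bookkeeping in the odd case. The signs coming from the $i\pi/2$ shift, the orientation of the strip, and the normalization of the residues must all combine to give exactly $1$ rather than $-1$ or $2$. As a first check I would use the case $P^+=3$, $P^-=0$ against (\ref{(3,0)d}). If the hyperbolic bookkeeping is awkward, the fallback is to work in the rational variables $x=\tanh u$ with
$$
f(x)=\frac{1}{1-x^2}\prod_i\frac{1-xx_i}{x-x_i}\prod_k\frac{1-xy_k}{x-y_k}.
$$
Its residues at $x=\pm1$ and $\infty$ supply the extra constant term, and only the first derivative of the product at $x=\pm1$ needs computing.
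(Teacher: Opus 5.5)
Your proof is correct and is essentially the paper's. The paper applies the residue theorem to $\frac{1}{1-w^2}\prod_i\frac{1-ww_i}{w-w_i}\prod_k\frac{w-v_k}{1-wv_k}$. This is exactly your shifted strip function $\prod_i\coth(u-u_i)\prod_k\tanh(u-v_k)$ rewritten via $w=\tanh u$ and $du=dw/(1-w^2)$, so the two ends of your strip become the simple poles $w=\pm 1$. The bookkeeping you left open does close: the right and left edges each contribute $i\pi$ when $P^+ + P^-$ is odd, so the residue sum is $\frac{1}{2}\bigl(1-(-1)^{P^+ + P^-}\bigr)=1$.

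In your rational fallback, the $b$-points must enter as $1/w(b_k)$ (the image of the $i\pi/2$ shift), not as $y_k=w(b_k)$. Also, since the poles at $x=\pm 1$ are simple, you only need the values $1$ and $(-1)^{P^+ + P^-}$ of the product there, not derivatives, and there is no residue at $\infty$.
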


\noindent
{\it Proof.}
We should show that the substitution (\ref{L6a}) solves 
the whole set of equations (\ref{dL1a}) and
(\ref{dkp-gen-disp-H-M}). Substituting (\ref{L6a})
into (\ref{dL1a}), we have:
\beq\label{L10}
\begin{array}{l}
\displaystyle{
\sum_{s=1}^{P^+}
\prod_{{\scriptsize \begin{array}{l}i=1\\ i\neq s \end{array}}}^{P^+}
\frac{1-w(a_s)w(a_i)}{w(a_s)-w(a_i)}
\prod_{k=1}^{P^-} \frac{w(a_s)-w(b_k)}{1-w(a_s)w(b_k)}}
\\ \\
\phantom{aaaaaaaa}\displaystyle{
+\sum_{s=1}^{P^-}
\prod_{{\scriptsize \begin{array}{l}k=1\\ k\neq s \end{array}}}^{P^-}
\frac{1-w(b_s)w(b_k)}{w(b_s)-w(b_k)}
\prod_{i=1}^{P^+} 
\frac{w(b_s)-w(a_i)}{1-w(b_s)w(a_i)}}
=1.
\end{array}
\eeq
One can see that for $P^+-P^- \in 2\ZZ +1$ this equality is satisfied
identically for all $a_i$ and $b_k$. Indeed, denoting for brevity
$w(a_i)=w_i$, $w(b_k)=v_k$, we rewrite it as
\beq\label{L10a}
\begin{array}{l}
\displaystyle{
\sum_{s=1}^{P^+}
\prod_{{\scriptsize \begin{array}{l}i=1\\ i\neq s \end{array}}}^{P^+}
\frac{1-w_s w_i}{w_s-w_i}
\prod_{k=1}^{P^-} \frac{w_s-v_k}{1-w_sv_k}
+\sum_{s=1}^{P^-}
\prod_{{\scriptsize \begin{array}{l}k=1\\ k\neq s \end{array}}}^{P^-}
\frac{1-v_sv_k}{v_s-v_k}
\prod_{i=1}^{P^+} 
\frac{v_s-w_i}{1-v_sw_i}}
=1.
\end{array}
\eeq
To prove this identity, consider the function
$$
g(w)=\frac{1}{1-w^2}\, \prod_{i=1}^{P^+} 
\frac{1-ww_i}{w-w_i} \, \prod_{k=1}^{P^-}
\frac{w-v_k}{1-wv_k}
$$
with simple poles at $w=w_i$, $w=v_k^{-1}$ and $w=\pm 1$
(we assume that $w_i, v_k \neq \pm 1$).
Denote the left-hand side of (\ref{L10a}) by $S$. 
The sum of residues of $g(w)$ is 
$$
0=S-\frac{1}{2}\Bigl (1-(-1)^{P^+ +P^-}\Bigr )
$$
(the last terms come from the poles at $w=\pm 1$).
Therefore, for odd $P^{+} +P^-$ we have $S=1$.

The same argument shows that the substitution 
(\ref{L6a}) solves not only equations (\ref{dL1}) (that intertwine the
even and odd sectors of the hierarchy) but also equations 
of the form (\ref{dkp-gen-disp-H-M})
inside each sector. Indeed, they have the same form as
(\ref{dL1}) but with $0$ in the right-hand side instead of $1$, i.e.,
$S=0$ which is just what actually 
holds since in this case $P^+ +P^-\in 2\ZZ$. 
\square

\subsubsection{Large dBKP as a trigonometric 
degeneration of dDKP}
\label{section:modular}

Letting  $b\to \infty$ in (\ref{L6a}), we obtain, in the order $b^{-1}$:
$
a-\nabla (a)\p_{t_1}F =e^{F_{00}} \Bigl (w^{-1}(a)-w(a)\Bigr ),
$
or
\beq\label{L9}
p(z)=R_0 \Bigl (w^{-1}(z)-w(z)\Bigr ), \qquad R_0\equiv e^{F_{00}} .
\eeq
Equation (\ref{L9}) defines a rational curve 
which is the dynamical curve for the dispersionless large BKP
hierarchy.

\begin{proposition}\label{proposition:largeBKP}
The curve defined by the equation (\ref{L9}) is uniformized by
hyperbolic functions as follows:
\beq\label{td1}
w(z)=\tanh u(z), \qquad p(z)=\frac{\gamma}{\sinh u(z) \, \cosh u(z)}, 
\quad R_0=\gamma ,
\eeq
where $u(z)$ is a function of $z$ having 
the expansion near $\infty$ of the form (\ref{u(z)}) and $\gamma$
is a $z$-independent constant.
In this parametrization, equation (\ref{L6a}) acquires the form
\beq\label{L6b}
(a^{-1}-b^{-1}) e^{\nabla (a)\nabla (b)F}=
\tanh \Bigl (u(a)-u(b)\Bigr ).
\eeq
\end{proposition}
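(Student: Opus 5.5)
First, I will check directly that (\ref{td1}) parametrizes the curve (\ref{L9}). With $w=\tanh u$,
$$
w^{-1}-w=\frac{\cosh u}{\sinh u}-\frac{\sinh u}{\cosh u}=\frac{\cosh^2u-\sinh^2u}{\sinh u\cosh u}=\frac{1}{\sinh u\,\cosh u}.
$$
So $p=R_0(w^{-1}-w)$ holds identically once $R_0=\gamma$. Since $\tanh$ is locally invertible near $0$, any function $w(z)$ with $w(\infty)=0$ (which is the case here, because $w(z)=z^{-1}e^{\nabla(z)\p_0F}=O(z^{-1})$) can be written as $w(z)=\tanh u(z)$. Here $u(z)=\mbox{artanh}\,w(z)$ is a series in $z^{-1}$ without constant term, i.e.\ of the form (\ref{u(z)}). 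Then $p(z)$ is forced by (\ref{L9}), and it has the stated hyperbolic form.

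Next, I will normalize $\gamma$ in the same way as in the dDKP case. Expanding $p(z)=\gamma/(\sinh u\cosh u)\simeq \gamma/u\simeq \gamma z/c_1$ and comparing with $p(z)=z+O(1)$ gives $\gamma=c_1$. This step is not strictly needed for the statement, but it is a consistency check.

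The main step is to transform the right-hand side of (\ref{L6a}). Substituting $w(a)=\tanh u(a)$ and $w(b)=\tanh u(b)$ into $\frac{w(a)-w(b)}{1-w(a)w(b)}$, I will use the subtraction formula
$$
\tanh(x-y)=\frac{\tanh x-\tanh y}{1-\tanh x\tanh y}.
$$
This gives exactly $\tanh(u(a)-u(b))$, which is (\ref{L6b}). Because equation (\ref{L6a}) holds for arbitrary $a,b$, the parametrization is consistent. As $b\to\infty$, $u(b)\to 0$, so (\ref{L6b}) reduces to $w(a)=\tanh u(a)$, in agreement with the Remark after (\ref{g(a,b)1}).

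The only point requiring care is ordering. I must make sure that (\ref{L9}) is genuinely a consequence of (\ref{L6a}), namely the order-$b^{-1}$ term as $b\to\infty$, so that introducing $u$ through $w$ alone is legitimate and does not impose an extra constraint. This is already done in the text preceding the proposition. After that, the whole argument reduces to the $\tanh$ addition formula, which I do not expect to be an obstacle.

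Optionally, I would add a remark that this is the $\tau\to +i0$ degeneration of (\ref{mkp14a}). In that limit $\theta_1/\theta_4$, after rescaling, tends to $\tanh$, and this explains the hyperbolic uniformization.
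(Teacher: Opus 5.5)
Your proof is correct and takes the same route as the paper. The paper's proof is simply the direct check that $w=\tanh u$, $p=\gamma/(\sinh u\cosh u)$, $R_0=\gamma$ turn (\ref{L9}) into an identity, with (\ref{L6b}) then following from the $\tanh$ subtraction formula. Your extra remarks are correct supplements that the paper leaves implicit: that $u(z)=\mathrm{artanh}\,w(z)$ exists in the form (\ref{u(z)}) because $w(z)=R_0z^{-1}+O(z^{-2})$, and the consistency check $\gamma=c_1=R_0$.
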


\noindent
This proposition provides the trigonometric parametrization
of the dispersionless large BKP hierarchy.
The proof consists in a direct verification that the substitution
(\ref{td1}) turns equation (\ref{L9}) into identity.

Equations (\ref{td1}), (\ref{L6b}) 
can be regarded as the $\tau \to +i0$ degeneration of
the ones from Section \ref{section:dDKP} (in particular, (\ref{L6b})
is the degeneration of (\ref{g(a,b)1})). To see this, we need
formulas (\ref{mod2d}) from Appendix B 
that connect the theta-functions with
modular parameters $\tau$ and $-1/\tau$. In particular, we have:
$$
\frac{\theta_2(0|\tau )}{\theta_3(0|\tau )}=
\frac{\theta_4(0|-1/\tau )}{\theta_3(0|-1/\tau )} \to 1 \;\;
\mbox{as $\tau \to +i0$},
$$
so the right-hand side of (\ref{RV1}) is equal to 2. This is
indeed the case since $F_{01}^2 +2F_{11}-F_{02}=2e^{2F_{00}}$
by virtue of (\ref{L7}).
For simplicity, assume that $\tau =it$, where $t\in \RR_+$, then
$-1/\tau =i/t \in i \RR_+$. Also, we put $u=tv$.
Then formulas (\ref{mod2d}) imply:
$$
\lim_{\tau \to +i0}\frac{\theta_1(u|\tau )}{\theta_4(u|\tau )}=
i\lim_{t\to +0}\frac{\theta_1(-iv|i/t)}{\theta_2(-iv|i/t)}
=i\frac{\sin (-iv)}{\cos (-iv)}=\tanh v,
$$
so the right-hand side of (\ref{L6b}) is indeed the 
degeneration of the one of (\ref{g(a,b)1}).

\begin{remark}
It is natural to ask what happens with the elliptic curve
(\ref{ddkp-curve-ep}) 
$
R^2(w^2 +w^{-2}) =p^2 +V
$
in this limit. According to (\ref{R111}) and (\ref{L7}),
$$
V=F_{01}^2 +2F_{11}-F_{02}=2e^{2F_{00}}=2R^2,
$$
so the equation of the curve acquires the form
$
R^2 (w-w^{-1})^2 =p^2.
$
This means that the elliptic curve given by equation of degree 4
degenerates and
splits into two rational components defined by equations of degree 2:
\beq\label{split}
R(w -w^{-1})=\pm p,
\eeq
which can be uniformized using hyperbolic functions.
\end{remark}

\section{Multi-component hierarchies of Pfaff type}

\subsection{Multi-component DKP and its dispersionless version}

\subsubsection{Multi-component DKP}
\label{section:multiDKP}

In the $N$-component DKP hierarchy the independent variables are: 
\beq\label{times2a}
\begin{array}{l}
{\bf t}=\{{\bf t}_1, {\bf t}_2, \ldots , {\bf t}_N\}, \qquad
{\bf t}_{\alpha}=\{t_{\alpha , 1}, t_{\alpha , 2}, t_{\alpha , 3}, 
\ldots \, \},
\\ \\
{\bf n}=\{n_1, \ldots , n_N\}, \quad
\quad n_{\alpha} \in \ZZ , \quad \alpha = 1, \ldots , N.
\end{array}
\eeq
The tau-function $\tau({\bf n}, {\bf t})$ 
is defined as the expectation value:
\beq\label{taumdkp}
\tau ({\bf n}, {\bf t}) =
\lbr
{\bf n} \bigl | \, e^{J({\bf t})} g \bigr | \, 0 \rbr,
\eeq
where the Clifford group element $g$ has the following general
form:
\beq\label{g-mdkp}
g=\exp 
\biggl (\sum_{i,j\in \z} \sum_{\alpha , \beta} 
\Bigl(
A_{ij}^{(\alpha \beta)}\psi_{i}^{(\alpha )}\psi_j^{*(\beta )}
+
B_{ij}^{(\alpha \beta)}\psi_{i}^{*(\alpha )}\psi_j^{*(\beta )}
+
C_{ij}^{(\alpha \beta)}\psi_{i}^{(\alpha )}\psi_j^{(\beta )}
\Bigr)
\biggr ).
\eeq
Tau-functions defined in this way are nonzero only if 
the parity condition 
$|{\bf n}|\in 2\ZZ $ holds.
As it was shown in \cite{SZ25a},
the tau-function satisfies the following bilinear equation:

\beq\label{mDKP-bil-rel}
\begin{array}{l}
\displaystyle{
\sum_{\gamma =1}^N\epsilon_{\gamma}({\bf n}\! -\! {\bf n}')
\oint_{C_{\infty}}\! \frac{dz}{z^2} 
z^{n_{\gamma}-n_{\gamma}'}
e^{\xi ({\bf t}_{\gamma}-{\bf t}_{\gamma}', z)}}
\\ \\
\phantom{aaaaaaaaaaaaa}
\displaystyle{
\times \tau ({\bf n}\! -\! {\bf e}_{\gamma}, 
{\bf t}\! -\! [z^{-1}]_{\gamma})\tau ({\bf n}'\! +\! {\bf e}_{\gamma}, 
{\bf t}'\! +\! [z^{-1}]_{\gamma})}
\\ \\
+\, 
\displaystyle{
\sum_{\gamma =1}^N \epsilon_{\gamma}({\bf n}\! -\! {\bf n}')
\oint_{C_{\infty}}\! \frac{dz}{z^2} 
z^{n_{\gamma}'-n_{\gamma}}
e^{-\xi ({\bf t}_{\gamma}-{\bf t}_{\gamma}', z)}}
\\ \\
\phantom{aaaaaaaaaaaaa}
\displaystyle{
\times \tau ({\bf n}\! +\! {\bf e}_{\gamma}, 
{\bf t}\! +\! [z^{-1}]_{\gamma})\tau ({\bf n}'\! -\! {\bf e}_{\gamma}, 
{\bf t}'\! -\! [z^{-1}]_{\gamma})} = 0,
\end{array}
\eeq
with the same definition (\ref{epsilon}) 
of $\epsilon_{\gamma}$. This equation holds for an arbitrary ${\bf n}$ and ${\bf n}'$ such that
$
|{\bf n}|, |{\bf n}'| \in 2\ZZ + 1 .
$
Note the obvious symmetry of equation (\ref{mDKP-bil-rel}):
\beq\label{sym3}
({\bf n}, {\bf t}) \leftrightarrow ({\bf n'}, {\bf t'}).
\eeq
The simplest solution to (\ref{mDKP-bil-rel}) is given by the
following proposition which generalizes Proposition 
\ref{proposition:simplest} (see (\ref{simplest})).

\begin{proposition}\cite{SZ25b}
\label{proposition:simplest11}
The function 
\beq\label{simplest1}
\tau ({\bf n}, {\bf t})=
\exp \left ( \frac{1}{2}\sum_{\gamma =1}^N \sum_{k\geq 1} kt_{\gamma , k}^2
\right )
\eeq
is a solution to equation (\ref{mDKP-bil-rel}).
\end{proposition}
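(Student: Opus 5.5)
The plan is to substitute (\ref{simplest1}) directly into (\ref{mDKP-bil-rel}) and reduce the left-hand side to a sum of residues of an explicit meromorphic function at $z=\pm 1$. Then show that this sum vanishes after the summation over $\gamma$ with the sign factors $\epsilon_\gamma$. Since (\ref{simplest1}) does not depend on ${\bf n}$, all $n$-shifts in the tau-functions are irrelevant; ${\bf n}$ enters only through $z^{\pm(n_\gamma-n'_\gamma)}$ and through $\epsilon_\gamma({\bf n}-{\bf n}')$.

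\emph{Step 1 (shifts of the Gaussian).} Write $\eta({\bf t}_\gamma,z)=\sum_{k\geq 1}t_{\gamma,k}z^{-k}$. Expanding $\frac{k}{2}(t_{\gamma,k}\mp z^{-k}/k)^2$ and using $\sum_{k\geq1}\frac{z^{-2k}}{2k}=-\frac12\log(1-z^{-2})$ gives
\[
\tau({\bf t}\mp[z^{-1}]_\gamma)=\tau({\bf t})\,e^{\mp\eta({\bf t}_\gamma,z)}\,(1-z^{-2})^{-1/2}.
\]
Hence, with $\Delta={\bf t}_\gamma-{\bf t}'_\gamma$ and $d={\bf n}-{\bf n}'$, the first integral for a given $\gamma$ equals $\tau({\bf t})\tau({\bf t}')\oint_{C_\infty} f_\gamma(z)\,dz$. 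Here
\[
f_\gamma(z)=\frac{z^{d_\gamma}\,e^{\xi(\Delta,z)-\eta(\Delta,z)}}{z^2-1}.
\]
The second integral has the same form, but with $z^{-d_\gamma}$ and the exponent $-\xi(\Delta,z)+\eta(\Delta,z)$.

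\emph{Step 2 (inversion $z\mapsto 1/z$).} In the second integral I change variables to $w=1/z$. Then $\xi(\Delta,1/w)=\eta(\Delta,w)$ and $dz/z^2=-dw$, and the big circle goes into a small circle around $0$ with reversed orientation. Tracking the two sign changes and the factor $(1-z^{-2})^{-1}\mapsto(1-w^2)^{-1}$, the second integral should become $-\oint_{C_0}f_\gamma(w)\,dw$ over a small counterclockwise circle. Therefore the $\gamma$-th total is $\oint_{C_\infty}f_\gamma-\oint_{C_0}f_\gamma$. Since $f_\gamma$ is holomorphic in the annulus apart from simple poles at $z=\pm1$, this difference is $2\pi i$ times the sum of those two residues. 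At $z=\pm1$ we have $\xi(\Delta,\pm1)=\eta(\Delta,\pm1)$, so the exponential factor equals $1$. This gives
\[
\mathrm{Res}_{1}f_\gamma=\tfrac12,\qquad \mathrm{Res}_{-1}f_\gamma=-\tfrac12(-1)^{d_\gamma},
\]
and the $\gamma$-th term contributes $\epsilon_\gamma(d)\,\frac{1-(-1)^{d_\gamma}}{2}$ up to the common factor $2\pi i\,\tau({\bf t})\tau({\bf t}')$. This is consistent with the one-component case of Proposition~\ref{proposition:simplest}.

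\emph{Step 3 (sign cancellation).} It remains to show that $\sum_{\gamma:\,d_\gamma\ {\rm odd}}(-1)^{d_{\gamma+1}+\ldots+d_N}=0$. Both $|{\bf n}|$ and $|{\bf n}'|$ are odd, so $|d|$ is even and the number $m$ of indices with odd $d_\gamma$ is even. Listing these indices $\gamma_1<\ldots<\gamma_m$, the parity of $d_{\gamma+1}+\ldots+d_N$ at $\gamma=\gamma_j$ is that of $m-j$, because the even $d$'s do not affect it. Thus the signs alternate, $\pm1$, ending with $+1$ at $\gamma_m$, and with $m$ even they cancel in pairs.

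The main obstacle is Step 2: getting the orientation and sign bookkeeping right under $z\mapsto1/z$, and justifying that the exponentials, which have essential singularities only at $0$ and $\infty$, converge on the annulus so that the contour deformation is legitimate. I would handle the latter by treating $t_{\gamma,k}-t'_{\gamma,k}$ as small (or as formal series), so that everything converges for $|z|$ in a neighbourhood of the unit circle, and then use analytic continuation in the times.
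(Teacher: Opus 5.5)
Your proof is correct and follows the same route as the paper: you substitute the Gaussian, invert $z\mapsto z^{-1}$ in the second integral to obtain $\oint_{C_\infty}-\oint_{C_0}$ of $\frac{z^{d_\gamma}}{z^2-1}e^{\xi(\Delta,z)-\xi(\Delta,z^{-1})}$, and collect the residues at $z=\pm1$, which gives $\pi i\,\epsilon_\gamma({\bf n}-{\bf n}')\bigl(1-(-1)^{d_\gamma}\bigr)$ for each $\gamma$. The only difference is cosmetic and lies in the last step: the paper telescopes via $\epsilon_\gamma({\bf n})\epsilon_\gamma({\bf n}')\bigl(1-(-1)^{n_\gamma-n'_\gamma}\bigr)=\epsilon_\gamma({\bf n})\epsilon_\gamma({\bf n}')-\epsilon_{\gamma-1}({\bf n})\epsilon_{\gamma-1}({\bf n}')$ to get the total $\pi i\bigl(1-(-1)^{|{\bf n}|-|{\bf n}'|}\bigr)$, whereas you reach the same cancellation with an equivalent alternating-sign count.
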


\noindent
{\it Proof.}
Plugging (\ref{simplest1}) into (\ref{mDKP-bil-rel}), we see, extracting 
a common multiplier, that the left-hand side
is proportional to
$$
\begin{array}{lll}
L&=&\displaystyle{\sum_{\gamma =1}^N\epsilon_{\gamma}({\bf n}-{\bf n'})
\oint_{C_{\infty}}\! \frac{dz}{z^2-1} \,
z^{n_{\gamma}-n_{\gamma}'}} e^{\xi ({\bf t}_{\gamma}-{\bf t'}_{\gamma}, z)-
\xi ({\bf t}_{\gamma} -{\bf t'}_{\gamma}, z^{-1})}
\\ && \\
&& \displaystyle{
+\, \sum_{\gamma =1}^N\epsilon_{\gamma}({\bf n}-{\bf n'})
\oint_{C_{\infty}}\! \frac{dz}{z^2-1} \, 
z^{n_{\gamma}'-n_{\gamma}}}e^{-\xi ({\bf t}_{\gamma}-{\bf t'}_{\gamma}, z)+
\xi ({\bf t}_{\gamma} -{\bf t'}_{\gamma}, z^{-1})}.
\end{array}
$$
Changing the integration variable $z\to z^{-1}$ in the second line,
we have:
$$
\begin{array}{lll}
L&=&\displaystyle{\sum_{\gamma =1}^N\epsilon_{\gamma}({\bf n})
\epsilon_{\gamma}({\bf n'})
\Bigl (\oint_{C_{\infty}}-\oint_{C_{0}}\Bigr ) \frac{dz}{z^2-1} \, 
z^{n_{\gamma}-n_{\gamma}'}} e^{\xi ({\bf t}_{\gamma}-{\bf t'}_{\gamma}, z)-
\xi ({\bf t}_{\gamma}-{\bf t'}_{\gamma}, z^{-1})},
\end{array}
$$
where $C_0$ is a small circle around $0$. Therefore, the integral is 
given by residues at the simple poles at the points $z=\pm 1$ lying inside
the annulus bordered by the circles $C_0$ and $C_{\infty}$:
$$
\begin{array}{lll}
L&=&\displaystyle{
\pi i\sum_{\gamma =1}^N\epsilon_{\gamma}({\bf n})
\epsilon_{\gamma}({\bf n'})\Bigl (1-(-1)^{n_{\gamma}-n'_{\gamma}}\Bigr )}
\\ && \\
&&=\, \displaystyle{\pi i\sum_{\gamma =1}^N \Bigl (\epsilon_{\gamma}({\bf n})
\epsilon_{\gamma}({\bf n'})-\epsilon_{\gamma -1}({\bf n})
\epsilon_{\gamma -1}({\bf n'})\Bigr )}
\\ && \\
&& =\, \pi i \, \Bigl (1-(-1)^{|{\bf n}|-|{\bf n'}|}\Bigr )
\end{array}
$$
which is zero if
$|{\bf n}|$ and $|{\bf n'}|$ are of the same parity.
\square

The Miwa substitution 
\beq\label{m1a}
\left \{ \begin{array}{l}
\displaystyle{
{\bf n}-{\bf n}' =-{\bf k}=\sum_{i=1}^{P^+}{\bf e}_{\alpha_i}-
\sum_{k=1}^{P^-}{\bf e}_{\beta_k},}
\\  \\
\displaystyle{
{\bf t}-{\bf t}' =-{\bf T}=\sum_{i=1}^{P^+}[a_i^{-1}]_{\alpha_i}-
\sum_{k=1}^{P^-}[b_k^{-1}]_{\beta_k}}
\end{array} \right.
\eeq
allows one to calculate 
the integral by residue calculus.
The result is the following general Hirota-Miwa equation:
\beq\label{gen-H-M(multi-DKP)}
\begin{array}{l}
\displaystyle{
\sum_{s=1}^{P^+}
\prod_{{\scriptsize \begin{array}{l}i=1\\ i\neq s \end{array}}}^{P^+}
E^{-1}_{\alpha_i \alpha_s}(a_s, a_i)
\prod_{k=1}^{P^-}
E_{\beta_k \alpha_s}(a_s, b_k)}
\\ \\
\phantom{aaaaaaa}\displaystyle{
\times \, 
\tau \Bigl ({\bf n}-{\bf e}_{\alpha_s},  {\bf t}-[a_s^{-1}]_{\alpha_s},\Bigr )
\tau \Bigl ({\bf n}+{\bf k}+{\bf e}_{\alpha_s}, 
{\bf t}+{\bf T}+[a_s^{-1}]_{\alpha_s}
\Bigr )}
\\ \\
\displaystyle{
+\sum_{s=1}^{P^-}
\prod_{{\scriptsize \begin{array}{l}k=1\\ k\neq s \end{array}}}^{P^-}
E^{-1}_{\beta_k \beta_s}(b_s, b_k)
\prod_{i=1}^{P^+}
E_{\alpha_i \beta_s}(b_s, a_i)}
\\ \\
\phantom{aaaaaaa}\displaystyle{
\times \, 
\tau \Bigl ({\bf n}+{\bf e}_{\beta_s}, 
{\bf t}+[b_s^{-1}]_{\beta_s}, \Bigr )
\tau \Bigl ({\bf n}+{\bf k}-{\bf e}_{\beta_s},  
{\bf t}+{\bf T}-[b_s^{-1}]_{\beta_s}
\Bigr )=0,}
\end{array}
\eeq
with the parity condition:
\beq\label{m2}
P^{+} - P^{-} \in 2\ZZ .
\eeq

The case of our main interest is $P^{+} + P^{-} = 4$. There are five possibilities but taking into account the symmetry 
(\ref{sym3}) it is enough to consider the following three:
\beq\label{nd1a}
(P^+, P^-)=(4,0), \; (3,1), \; (2,2).
\eeq
To write the formulas below in compact form the following short-hand notation is useful:
\beq\label{nd6}
{\bf n}^{\alpha}={\bf n} +{\bf e}_{\alpha}, 
\quad
{\bf n}^{\alpha \beta}={\bf n} +{\bf e}_{\alpha}+{\bf e}_{\beta}
\eeq
and so on. Similar notation will be used for shifts of continuous times:
\beq\label{nd6a}
{\bf t}^{[a_{\alpha}]}={\bf t} + [a^{-1}]_{\alpha}, 
\quad
{\bf t}^{[a_{\alpha}b_{\beta}]}={\bf t} + [a^{-1}]_{\alpha}+ [b^{-1}]_{\beta},
\eeq
For the first possibility, 
$(P^ +, P^-)=(4,0)$, we have:
\beq\label{40}
\left \{ \begin{array}{l}
{\bf n}-{\bf n}'={\bf e}_{\alpha}+{\bf e}_{\beta}+
{\bf e}_{\nu}+{\bf e}_{\mu}, 
\\  \\
{\bf t}-{\bf t}'=[a^{-1}]_{\alpha} +[b^{-1}]_{\beta}+ [c^{-1}]_{\nu}
+ [d^{-1}]_{\mu}.
\end{array} \right.
\eeq
In this particular case the general formula (\ref{gen-H-M(multi-DKP)}) gives:
\beq\label{h-m-40(multi-DKP)}
\begin{array}{l}
\phantom{-}E_{\nu \beta}(b,c)E_{\mu \beta}(b,d)E_{\nu \mu}(d,c)
\tau \Bigl ({\bf n}^{\beta \nu \mu}, {\bf t}^{[b_{\beta}c_{\nu}d_{\mu}]}\Bigr )
\tau \Bigl ({\bf n}^{\alpha}, {\bf t}^{[a_{\alpha}]}\Bigr )
\\ \\
-\, E_{\nu \alpha}(a,c)E_{\mu \alpha}(a,d)E_{\nu \mu}(d,c)
\tau \Bigl ({\bf n}^{\alpha \nu \mu}, {\bf t}^{[a_{\alpha}c_{\nu}d_{\mu}]}\Bigr )
\tau \Bigl ({\bf n}^{\beta}, {\bf t}^{[b_{\beta}]}\Bigr )
\\ \\
-\, E_{\beta \alpha}(a,b)E_{\mu \alpha}(a,d)E_{\mu \beta}(b,d)
\tau \Bigl ({\bf n}^{\alpha \beta \mu}, {\bf t}^{[a_{\alpha}b_{\beta}
d_{\mu}]}\Bigr )
\tau \Bigl ({\bf n}^{\nu}, {\bf t}^{[c_{\nu}]}\Bigr )
\\ \\
+\, E_{\beta \alpha}(a,b)E_{\nu \alpha}(a,c)E_{\nu \beta}(b,c)
\tau \Bigl ({\bf n}^{\alpha \beta \nu}, {\bf t}^{[a_{\alpha}b_{\beta}
c_{\nu}]}\Bigr )
\tau \Bigl ({\bf n}^{\mu}, {\bf t}^{[d_{\mu}]}\Bigr )\, =0.
\end{array}
\eeq
For the second possibility, 
$(P^ +, P^-)=(3,1)$, we have:
\beq\label{31}
\left \{ \begin{array}{l}
{\bf n}-{\bf n}'={\bf e}_{\alpha}+{\bf e}_{\beta}+{\bf e}_{\nu}-{\bf e}_{\mu}, 
\\  \\
{\bf t}-{\bf t}'=[a^{-1}]_{\alpha} +[b^{-1}]_{\beta}+ [c^{-1}]_{\nu}-
[d^{-1}]_{\mu}.
\end{array}\right.
\eeq
The corresponding Hirota-Miwa equation is
\beq\label{h-m-31(multi-DKP)}
\begin{array}{l}
\phantom{-}E_{\nu \beta}(b,c)E_{\mu \alpha}(a,d)
\tau \Bigl ({\bf n}^{\beta \nu}, {\bf t}^{[b_{\beta}c_{\nu}]}\Bigr )
\tau \Bigl ({\bf n}^{\alpha \mu}, {\bf t}^{[a_{\alpha}d_{\mu}]}\Bigr )
\\ \\
-\, E_{\mu \beta}(b,d)E_{\nu \alpha}(a,c)
\tau \Bigl ({\bf n}^{\alpha \nu}, {\bf t}^{[a_{\alpha}c_{\nu}]}\Bigr )
\tau \Bigl ({\bf n}^{\beta \mu}, {\bf t}^{[b_{\beta}d_{\mu}]}\Bigr )
\\ \\
+\, E_{\beta \alpha}(a,b)E_{\mu \nu}(c,d)
\tau \Bigl ({\bf n}^{\alpha \beta}, {\bf t}^{[a_{\alpha}b_{\beta}]}\Bigr )
\tau \Bigl ({\bf n}^{\nu \mu}, {\bf t}^{[c_{\nu}d_{\mu}]}\Bigr )
\\ \\
+\, E_{\beta \alpha}(a,b)E_{\nu \alpha}(a,c)E_{\nu \beta}(b,c)
E_{\alpha \mu}(d,a)E_{\beta \mu}(d,b)E_{\nu\mu}(d,c)
\\ \\
\phantom{aaaaaaaaaaaaaaaaaaaaaa}
\times \tau \Bigl ({\bf n}^{\alpha \beta \nu \mu}, 
{\bf t}^{[a_{\alpha}b_{\beta}c_{\nu}d_{\mu}]}\Bigr )
\tau \Bigl ({\bf n}, {\bf t}\Bigr )\, =0.
\end{array}
\eeq
The choice $(P^{+}, P^{-}) = (2,2)$ leads to an 
equation which is equivalent to (\ref{h-m-40(multi-DKP)}).

\subsubsection{Multi-component dDKP:
uni\-for\-mi\-za\-ti\-on via elliptic functions}

Using the notation introduced for the multi-component dmKP case,
we can write the 
dispersionless version of equation (\ref{gen-H-M(multi-DKP)}) 
in the form 
\beq\label{gen-H-M}
\begin{array}{l}
\displaystyle{
\sum_{s = 1}^{P^{+}}
\left(
\prod_{i = 1, \neq s}^{P^{+}} 
E_{\alpha_i \alpha_s}(a_{s}, a_{i})
e^{ \nabla_{\alpha_{i}} (a_{i}) \nabla_{\alpha_{s}} (a_{s}) F}
\right )^{-1}
\left(
\prod_{k=1}^{P^{-}} E_{\beta_k \alpha_s}(a_{s}, b_{k}) 
e^{ \nabla_{\alpha_{s}} (a_{s})  \nabla_{\beta_{k}}(b_{k})F}
\right)}
\\ \\ +\,
\displaystyle{
\sum_{s = 1}^{P^{-}}
\left(
\prod_{i = 1, \neq s}^{P^{-}} E_{\beta_k \beta_s}(b_{s}, b_{i})
e^{ \nabla_{\beta_{i}} (b_{i}) \nabla_{\beta_{s}} (b_{s}) F}
\right )^{-1}
\left(
\prod_{k = 1}^{P^{+}} E_{\alpha_i \beta_s}(b_{s}, a_{k}) 
e^{ \nabla_{\beta_{s}} (b_{s}) \nabla_{\alpha_{k}}(a_{k})F}
\right) = 0.}
\end{array}
\eeq
In particular, the limiting forms of equations (\ref{h-m-40(multi-DKP)}), (\ref{h-m-31(multi-DKP)}) are:

\beq\label{40c1}
\begin{array}{l}
\phantom{-}E_{\nu \beta}(b,c)E_{\mu \beta}(b,d)E_{\nu \mu}(d,c)
e^{\bigl ( \nabla_{\beta}(b) \nabla_{\nu}(c)+ \nabla_{\beta}(b)
\nabla_{\mu}(d)+ \nabla_{\nu}(c)\nabla_{\mu}(d)\bigr )F}
\\ \\
-\, E_{\nu \alpha}(a,c)E_{\mu \alpha}(a,d)E_{\nu \mu}(d,c)
e^{\bigl ( \nabla_{\alpha}(a) \nabla_{\nu}(c)+ \nabla_{\alpha}(a)
\nabla_{\mu}(d)+ \nabla_{\nu}(c)\nabla_{\mu}(d)\bigr )F}
\\ \\
-\, E_{\beta \alpha}(a,b)E_{\mu \alpha}(a,d)E_{\mu \beta}(b,d)
e^{\bigl ( \nabla_{\alpha}(a) \nabla_{\beta}(b)+ \nabla_{\alpha}(a)
\nabla_{\mu}(d)+ \nabla_{\beta}(b)\nabla_{\mu}(d)\bigr )F}
\\ \\
+\, E_{\beta \alpha}(a,b)E_{\nu \alpha}(a,c)E_{\nu \beta}(b,c)
e^{\bigl ( \nabla_{\alpha}(a) \nabla_{\beta}(b )+ \nabla_{\alpha}(a)
\nabla_{\nu}(c)+ \nabla_{\beta}(b)\nabla_{\nu}(c)\bigr )F}
\, =0,
\end{array}
\eeq

\beq\label{31c1}
\begin{array}{l}
\phantom{-}E_{\nu \beta}(b,c)E_{\mu \alpha}(a,d)
e^{\bigl ( \nabla_{\beta}(b) \nabla_{\nu}(c)+ \nabla_{\alpha}(a)
\nabla_{\mu}(d)\bigr )F}
\\ \\
-\, E_{\mu \beta}(b,d)E_{\nu \alpha}(a,c)
e^{\bigl ( \nabla_{\alpha}(a) \nabla_{\nu}(c)+ \nabla_{\beta}(b)
\nabla_{\mu}(d\bigr )F}
\\ \\
+\, E_{\beta \alpha}(a,b)E_{\mu \nu}(c,d)
e^{\bigl ( \nabla_{\alpha}(a) \nabla_{\beta}(b)+ \nabla_{\nu}(c )
\nabla_{\mu}(d)\bigr )F}
\\ \\
+\, E_{\beta \alpha}(a,b)E_{\nu \alpha}(a,c)E_{\nu \beta}(b,c)
E_{\alpha \mu}(d,a)E_{\beta \mu}(d,b)E_{\nu\mu}(d,c)
\\ \\
\times 
e^{\bigl ( \nabla_{\alpha}(a) \nabla_{\beta}(b)+ \nabla_{\alpha}(a )
\nabla_{\nu}(c)+ \nabla_{\alpha}(a) \nabla_{\mu}(d)+
\nabla_{\beta}(b) \nabla_{\nu}(c)+ \nabla_{\beta}(b) \nabla_{\mu}(d)+
\nabla_{\nu}(c) \nabla_{\mu}(d)\bigr )F}
\, =0,
\end{array}
\eeq
where the indices $\{\alpha, \beta, \nu, \mu\}$ correspond 
to $\{a, b, c, d\}$. 

\noindent
Similarly to (\ref{g-def-dkp}), it is convenient to introduce
the $g$-function, but now it is supplied with indices:
\beq\label{g-def}
g_{\alpha \beta}(a,b) = 
\epsilon_{\beta \alpha}
(a^{-1} - b^{-1})^{\delta_{\alpha \beta}} e^{\nabla_{\alpha}(a) 
\nabla_{\beta}(b) F},
\eeq
with the same sign factor $\epsilon_{\beta \alpha}$ as in 
(\ref{E1}).
Equation (\ref{gen-H-M}), (\ref{40c1}), (\ref{31c1}) then read:

\beq\label{gen-H-M1}
\begin{array}{l}
\displaystyle{
\sum_{s = 1}^{P^{+}}
\left(
\prod_{i = 1, \neq s}^{P^{+}} 
g_{\alpha_s \alpha_i}(a_{s}, a_{i})
\right )^{-1}
\left(
\prod_{k=1}^{P^{-}} g_{\alpha_k \alpha_s}(a_{s}, b_{k}) 
\right)}
\\ \\
\displaystyle{
\phantom{aaaaaaaaaaaaaaa} +\,
\sum_{s = 1}^{P^{-}}
\left(
\prod_{i = 1, \neq s}^{P^{-}} g_{\beta_s \beta_k}(b_{s}, b_{i})
\right )^{-1}
\left(
\prod_{k = 1}^{P^{+}} g_{\beta_s \alpha_i}(b_{s}, a_{k}) 
\right) = 0,}
\end{array}
\eeq

\beq \label{g1}
\begin{array}{l}
\displaystyle{
g_{\beta \nu}(b, c)
g_{\beta \mu}(b, d) 
g_{\mu \nu}(d, c) 
-
g_{\alpha \nu}(a, c)
g_{\alpha \mu}(a, d)
g_{\mu \nu}(d, c)
}
\\ \\ 
\phantom{aaaaaaaaaaaaa}\displaystyle{-\,
g_{\alpha \beta}(a, b)
g_{\alpha \mu}(a, d)
g_{\beta \mu}(b, d)
+
g_{\alpha \beta}(a, b) 
g_{\alpha \nu}(a, c)
g_{\beta \nu}(b, c) = 0,
}
\end{array}
\eeq

\beq \label{g2}
\begin{array}{l}
\displaystyle{
g_{\beta \nu}(b, c) g_{\alpha \mu}(a, d)
-
g_{\beta \mu}(b, d) g_{\alpha \nu}(a, c)
+
g_{\alpha \beta}(a, b) g_{\nu \mu}(c, d)
}
\\ \\ 
\phantom{aaaaaaaaaaaaa}\displaystyle{ +\, 
g_{\alpha \beta}(a, b) g_{\alpha \nu}(a, c)
g_{\beta \nu}(b, c) g_{\mu \alpha}(d, a)
g_{\mu \beta}(d, b) g_{\mu \nu}(d, c) = 0.
}
\end{array}
\eeq

\begin{theorem}\cite{SZ24,SZ25b}
The single equation
\beq\label{E11}
g_{\alpha \beta}(a,b)=
E_{\beta \alpha}(a,b) e^{\nabla_{\alpha}(a)\nabla_{\beta}(b)F}
= \frac{\theta_1(u_{\alpha}(a)- u_{\beta}(b))}{\theta_4(u_{\alpha}(a)-
u_{\beta}(b))},
\eeq
where $\theta_1 (u), \, \theta_4(u)$ are Jacobi's theta-functions 
with an elliptic modular parameter $\tau$ (not shown explicitly 
in (\ref{E11})) and
$u_{\alpha}(z)=u_{\alpha}(z; {\bf t})$
are generating functions of dynamical variables 
with the following expansions as 
$z\to \infty$:
\beq\label{u}
u_{\alpha}(z)=\eta_{\alpha}({\bf t}) + 
\sum_{k\geq 1}c^{(\alpha )}_k({\bf t}) z^{-k},
\eeq
is equivalent to the general equation (\ref{gen-H-M}),
i.e., to the
whole multi-component dDKP hierarchy. 
\end{theorem}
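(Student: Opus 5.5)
Both directions will be proved, and the plan follows the one-component dDKP argument.

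\emph{Sufficiency.} Assume (\ref{E11}) and substitute it into (\ref{gen-H-M1}). Write $u_i=u_{\alpha_i}(a_i)$ and $v_k=u_{\beta_k}(b_k)$. Use the oddness of $\theta_1/\theta_4$, which holds because $\theta_1$ is odd and $\theta_4$ is even, and the antisymmetry $g_{\alpha\beta}(a,b)=-g_{\beta\alpha}(b,a)$. Then (\ref{gen-H-M1}) becomes exactly identity (\ref{id1(ddkp)}), with the products running over $P^+$ and $P^-$ points. As in the one-component case, shift $v_k\to v_k+\tau/2$; this exchanges $\theta_1\leftrightarrow\theta_4$ up to exponential factors, and these factors cancel in each term because every term contains the same total number of $v$'s against $u$'s. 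The resulting identity (\ref{id2(ddkp)}) is the statement that the sum of residues of the elliptic function (\ref{id4(ddkp)}) vanishes. Double periodicity of (\ref{id4(ddkp)}) requires the quasi-periodicity factors of $\theta_4/\theta_1$ under $u\to u+\tau$ to cancel, and this uses the parity condition (\ref{m2}) on $P^++P^-$. Component indices enter only through the values $u_i,v_k$, so the multi-component case adds nothing new beyond careful sign bookkeeping with $\epsilon_{\alpha\beta}$.

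\emph{Necessity.} Start from the 4-point relations (\ref{g1}) and (\ref{g2}), which are contained in (\ref{gen-H-M1}). First take all indices equal to $\alpha$. By the one-component theorem of Section~5 (dDKP in the elliptic form (\ref{g(a,b)1})), the functions $w_\alpha(z)=g_{\alpha\alpha}(z,\infty)$ and $p_\alpha(z)$ lie on the curve (\ref{ddkp-curve-ep}), which has its own modulus $\tau_\alpha$, and $g_{\alpha\alpha}(a,b)=\theta_1(u_\alpha(a)-u_\alpha(b))/\theta_4(\cdot)$ with $u_\alpha(\infty)=\eta_\alpha$. Next, in (\ref{g1}) and (\ref{g2}) choose the index patterns $\alpha,\alpha,\beta,\beta$ and $\alpha,\alpha,\alpha,\beta$, and send the $\beta$-points to $\infty$. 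As in the proof of Theorem~\ref{theorem:curve2}, equating expressions for the same $g$ separates variables and produces the biquadratic relation (\ref{int04}) between $w_\alpha(z)$ and $w_{\alpha\beta}(z)=g_{\alpha\beta}(z,\infty)$. The coefficients $R_{\alpha\beta}$ and $V_{\alpha\beta}$ are read off from the $z\to\infty$ expansion. The relation (\ref{int04}) is uniformized by the translate (\ref{uni111}) of the same theta-quotient. This forces the moduli to agree, $\tau_\alpha=\tau_\beta=\tau$, because one uniformizing variable $u_\alpha$ must parametrize both curves. It also fixes $w_{\alpha\beta}$ as the $\eta_\beta$-translate, and (\ref{RV111}) identifies $\eta_\beta$ and the sign $\epsilon_{\beta\alpha}$.

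Finally, the 3-point relations with two $\alpha$-points $a,c$ and one $\beta$-point $b$, with $c\to\infty$, express $g_{\alpha\beta}(a,b)$ rationally through $w_\alpha(a)$, $w_{\alpha\beta}(a)$, $w_\beta(b)$ and $w_{\beta\alpha}(b)$. This is the analogue of (\ref{g(a,b)}) and of I$_4$. Inserting the theta parametrization and applying the addition formula for $\theta_1/\theta_4$ collapses this expression to $\theta_1(u_\alpha(a)-u_\beta(b))/\theta_4(u_\alpha(a)-u_\beta(b))$, which gives (\ref{E11}).

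The main obstacle is this necessity step. Two things must be shown there: that the per-pair moduli $\tau$ coincide, and that the $\eta_\alpha$ can be chosen consistently for all pairs simultaneously, i.e.\ that $\eta_{\alpha\beta}+\eta_{\beta\gamma}=\eta_{\alpha\gamma}$ holds modulo periods. For this I would use the 4-point relation (\ref{g1}) with three distinct indices $\alpha,\beta,\gamma$ sent to infinity: it yields an algebraic relation among $R_{\alpha\beta}$, $R_{\beta\gamma}$ and $R_{\alpha\gamma}$, which must be shown to coincide with the theta addition formula.
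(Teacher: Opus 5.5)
Your sufficiency argument is the paper's proof, and that direction is all the paper's proof covers; the rest is deferred to the cited reference. Substituting (\ref{E11}) turns (\ref{gen-H-M1}) into (\ref{id1(ddkp)}), and after $v_k\to v_k+\tau/2$ this becomes the vanishing of the sum of residues of (\ref{id4(ddkp)}). Two details are misstated, though harmlessly.
\begin{itemize}
\item Under $x\to x+\tau/2$, $\theta_1$ and $\theta_4$ acquire the same factor $ie^{-\pi i(x+\tau/4)}$. So each ratio $\theta_1/\theta_4$ becomes exactly $\theta_4/\theta_1$, and no cancellation between different factors is involved.
\item $\theta_4/\theta_1$ is invariant under $u\to u+\tau$ but changes sign under $u\to u+1$. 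Hence $f(u+1)=(-1)^{P^++P^-}f(u)$, and the parity condition (\ref{m2}) is what gives the period $1$, not the period $\tau$.
\end{itemize}

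Your converse sketch has a genuine gap where the moduli are identified. The claim that they agree ``because one uniformizing variable $u_\alpha$ must parametrize both curves'' is circular. The variable $u_\alpha$ comes from uniformizing the curve (\ref{p1}) of $(w_\alpha,p_\alpha)$. The curve (\ref{E9}) is a second double cover of the $w_\alpha$-line, and its branch points are fixed by $R_{\alpha\beta}^2+R_{\alpha\beta}^{-2}-V_{\alpha\beta}^2/(4R_{\alpha\beta}^2)$, not by $V_\alpha/R_\alpha^2$. Two such covers can share the local parameter $z^{-1}$ at one point without being isomorphic.

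What you need is an equation tying $w_{\alpha\beta}$ to $p_\alpha$, and one is at hand. Take the expression for $g_{\alpha\alpha}(a,b)$ given by (\ref{E6}), with $w_{\alpha\beta}(z)=e^{\nabla_\alpha(z)\partial_\beta F}$, and apply $-\partial_{b^{-1}}\log$ at $b^{-1}=0$. This gives
\[
p_\alpha=\frac{R_\alpha R_{\alpha\beta}}{w_\alpha w_{\alpha\beta}}-\frac{R_\alpha w_\alpha}{R_{\alpha\beta}w_{\alpha\beta}}+\partial_\beta\partial_{t_{\alpha,1}}F .
\]
So $w_\alpha,p_\alpha$ and $w_\alpha,w_{\alpha\beta}$ generate the same quadratic extension of the field of rational functions of $w_\alpha$. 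The two curves therefore have the same branch points and the same $\tau$. This is exactly the nontrivial fact that the $F$-dependent sides of (\ref{E11a}) and (\ref{E11a1}) coincide.

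The marked points also need more than you supply, even for a single pair.
\begin{itemize}
\item Since ${\sf sn}(1-\eta)={\sf sn}(\eta)$, the equality $R_{\beta\alpha}=R_{\alpha\beta}$ fixes the marked point of the $(\beta,\alpha)$-curve only up to $\eta\to1-\eta$. You must show it is $-\eta_{\alpha\beta}$, i.e.\ that $V_{\beta\alpha}=-V_{\alpha\beta}$, which is not visible from (\ref{V}).
\item For triples, the relation you propose from (\ref{g1}) with indices $\alpha,\alpha,\beta,\gamma$ at infinity is not a relation among the $R$'s alone. It also contains $\partial_\beta\partial_{t_{\alpha,1}}F$ and $\partial_\gamma\partial_{t_{\alpha,1}}F$, i.e.\ $V_{\alpha\beta}$ and $V_{\alpha\gamma}$. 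It does reproduce the ${\rm sn}$ subtraction formula for ${\sf sn}(\eta_{\beta\gamma})$, but again only up to $\eta\to1-\eta$, so $V_{\beta\gamma}$ must be matched as well.
\item DKP has no 3-point relations, by (\ref{m2}). Your rational formula for $g_{\alpha\beta}(a,b)$ must instead come from (\ref{g1}) with the four points in components $\alpha,\beta,\alpha,\beta$ and the last two sent to $\infty$.
\end{itemize}
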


\noindent
The idea of the proof is that 
the substitution (\ref{E11}) converts into
identities {\it all} 
dispersionless Hirota-Miwa equations of the general form (\ref{gen-H-M1})
(or (\ref{gen-H-M})).
Indeed, denote for brevity
$$
u_i=u_{\alpha_i}(a_i), \qquad v_k=u_{\beta_k}(b_k), 
\quad i,k=1, \ldots , N,
$$
then (\ref{gen-H-M1}) converts into
the same already proven identity (\ref{id1(ddkp)}), as in the 
one-component case.
For the detailed proof see \cite{SZ25b}.

Similarly to the one-component case,
the meaning of equation (\ref{E11}) 
is that general second order derivatives
of the function $F$ with respect to the independent variables 
are expressed through
some special second order derivatives.
Let us explain this
in more details. 
Denote the function 
$\theta_1(u)/\theta_4(u)$ by ${\sf sn}(u)$, 
and the inverse function
by ${\sf arcsn}(u)$, as in Section \ref{section:dDKP}
(see (\ref{u1}) and the footnote there).
Putting $\beta =\alpha$ and $b=\infty$ in (\ref{E11}), we 
conclude that
\beq\label{u11}
u_{\alpha}(a)-\eta_{\alpha}=
{\sf arcsn} \Bigl (a^{-1}e^{\nabla_{\alpha}(a)
\p_{\alpha}F}\Bigr ),
\eeq
where $\p_{\alpha}\equiv \p_{t_{\alpha , 0}}$.
Thus, equation (\ref{E11}) for $\alpha
\neq \beta$ can be written as
\beq\label{t1611}
\epsilon_{\beta \alpha}e^{\nabla_{\alpha}(a)\nabla_{\beta}(b)F}
={\sf sn} \Bigl ( \eta_{\alpha}-\eta_{\beta}+{\sf arcsn}\, (a^{-1}
e^{\nabla_{\alpha}(a)\p_{\alpha}F})-
{\sf arcsn}\, (b^{-1}
e^{\nabla_{\beta}(b)\p_{\beta}F})\Bigr ).
\eeq
At the same time, since
$$
\eta_{\alpha}-\eta_{\beta} = \left \{
\begin{array}{l}
\displaystyle{\phantom{-}\sum_{\gamma =\alpha}^{\beta -1} 
(\eta_{\gamma} -\eta_{\gamma +1}),} \quad \alpha <\beta ,
\\  \\
\displaystyle{
-\sum_{\gamma =\beta}^{\alpha -1} (\eta_{\gamma}-\eta_{\gamma +1}),}  
\quad \alpha >\beta , 
\end{array} \right.
$$
we can write:
\beq\label{u21}
\eta_{\alpha}-\eta_{\beta}= \left \{
\begin{array}{l}
\displaystyle{
-\sum_{\gamma =\alpha}^{\beta -1} 
{\sf arcsn} (e^{\p_{\gamma}\p_{\gamma +1}F}),}
\quad \alpha <\beta ,
\\  \\
\displaystyle{ \phantom{-}
\sum_{\gamma =\beta}^{\alpha -1} 
{\sf arcsn} (e^{\p_{\gamma}\p_{\gamma +1}F}),}
\quad \alpha >\beta .
\end{array} \right.
\eeq
Therefore, as equation (\ref{t1611}) shows, the general 
second order derivatives of the function $F$ are expressed through the
particular derivatives $\p_{\alpha}^2 F$, $\p_{\alpha}\p_{\alpha +1}F$,
$\p_{t_{\alpha ,k}}\p_{\alpha}F$ for $\alpha =1, \ldots , M$. 
Note that only differences of the $\eta_{\alpha}$'s enter equations
of the hierarchy\footnote{The reason is that the uniformization 
variable $u$ living in the 
fundamental parallelogram is defined only up to an additive 
constant.}.

Putting $b=\infty$ in (\ref{E11}) and denoting $a=z$, we get
\beq\label{E1a1}
\epsilon_{\beta \alpha}z^{-\delta_{\alpha \beta} }
e^{\nabla_{\alpha}(z)\p_{\beta}F}=
\frac{\theta_1(u_{\alpha}(z)-\eta_{\beta})}{\theta_4(u_{\alpha}(z)-\eta_{\beta})}.
\eeq
Let us denote
\beq\label{E51}
R_{\alpha}=e^{\p_{\alpha}^2F}, \qquad
R_{\alpha \beta}=e^{\p_{\alpha} \p_{\beta}F}  =R_{\beta \alpha}\;\; (\mbox{for}\; 
\alpha \neq \beta ).
\eeq
The further limit $z\to \infty$ in (\ref{E1a1}) then yields, 
for $\beta \neq \alpha$:
\beq\label{E1b}
R_{\alpha \beta}=\epsilon_{\beta \alpha}
\frac{\theta_1(\eta_{\alpha \beta})}{\theta_4(\eta_{\alpha \beta})},
\qquad \eta_{\alpha \beta}\equiv \eta_{\alpha}-\eta_{\beta},
\eeq
which is explicitly symmetric in $\alpha$ and $\beta$, as it should be
according to the definition (\ref{E51}).
In the limit $z\to \infty$ in (\ref{E1a1}) for $\beta =\alpha$ both sides
tend to zero. Comparing the leading terms as $z^{-1}\to 0$, we get
the relation
\beq\label{E1c}
R_{\alpha}=\pi c_1^{(\alpha )}\theta_2(0|\tau ) \theta_3(0|\tau ),
\eeq
where $c_1^{(\alpha )}$ is the coefficient at $z^{-1}$ in the expansion
(\ref{u}) (to obtain the right-hand side in this form, one should
use identity (\ref{theta1prime}) from Appendix B).
This relation generalizes (\ref{RV}) to the multi-component case.

It is convenient to introduce the function
\beq\label{S}
S(u)=\log \frac{\theta_1(u)}{\theta_4(u)}.
\eeq
It has the (quasi)periodicity properties $S(u+1)=S(u)+i\pi$,
$S(u+\tau )=S(u)$. Its $u$-derivative, $S'(u)$, 
is already a double-periodic
function
with periods $1$ and $\tau$. Explicitly
it is given by the formula 
\beq\label{S'}
S'(u)=\pi \theta_4^2(0)\frac{\theta_2(u)\, \theta_3 (u)}{\theta_1(u)\, 
\theta_4(u)},
\eeq
which can be proved comparing
the analytic properties of the two sides and using (\ref{theta1prime}).

\begin{remark}
The identity (\ref{id}) from Appendix B can be viewed as a nonlinear
differential equation for the function $S(u)$:
\beq\label{S1}
\left (\frac{S'(u)}{\pi \theta_2(0)\, \theta_3 (0)}\right )^2
=2\cosh (2S(u)) -\frac{\theta_2^2(0)}{\theta_3^2(0)}-
\frac{\theta_3^2(0)}{\theta_2^2(0)}.
\eeq
\end{remark}

\noindent
Let us compare the next-to-leading terms, as $z^{-1}\to 0$, 
in the both
sides of (\ref{E1a1}) for $\alpha \neq \beta$. This gives:
$$
R_{\alpha \beta}\Bigl (1+\frac{c_1^{(\alpha )}}{z} S'(\eta_{\alpha \beta})
\Bigr ) =R_{\alpha \beta} \Bigl (1+\frac{1}{z}\, 
\p_{\beta}\p_{t_{\alpha , 1}}F
\Bigr ) +O(z^{-2}),
$$ 
i.e.,
$
c_1^{(\alpha )}S'(\eta_{\alpha \beta})= \p_{\beta}\p_{t_{\alpha , 1}}F.
$
Using relations (\ref{E1b}), (\ref{E1c}), (\ref{S'}), we arrive at
the equation
\beq\label{E1e}
e^{(\p_{\beta}-\p_{\alpha})\p_{\alpha}F}
\p_{\beta}\p_{t_{\alpha , 1}}F=
\epsilon_{\beta \alpha} \frac{\theta_4^2(0)\, \theta_2(\eta_{\alpha \beta})
\, \theta_3(\eta_{\alpha \beta})}{\theta_2(0)\, 
\theta_3(0)\,\theta_4^2(\eta_{\alpha \beta})}
\eeq
which we will need later.

At the first glance the elliptic ansatz (\ref{E11}) 
seems to work equally well
for any elliptic modular parameter $\tau$,
including the degenerate cases $\tau \to +i0$ or $\tau \to +i\infty$, 
in which
elliptic functions become trigonometric or hyperbolic.
However, a more thorough analysis performed below shows that the internal 
consistency of the elliptic uniformization of the whole hierarchy
imposes strict restrictions on possible values of $\tau$. Moreover,
the elliptic modular parameter is required to be
a dynamical variable depending on the times ${\bf t}$ in 
a prescribed way. Namely, it can be
expressed in terms of mixed second order derivatives of $F$,
like in the one-component case (see
equation (\ref{E11a}) below).

To determine the modular parameter,
we should make explicit the elliptic curve hidden in the 
hierarchy. This can be done by considering degenerate cases of the
4-point relations. Namely, we put $\alpha =\beta =\nu \neq \mu$ and
tend $c,d\to \infty$ in (\ref{40c1}), (\ref{31c1}). Renaming 
$\mu \leftrightarrow \beta$ after this, we see that equation (\ref{40c})
yields
\beq\label{E21}
\begin{array}{l}
-\, b^{-1}e^{(\nabla_{\alpha}(b)\p_{\alpha}+\nabla_{\alpha}(b)
\p_{\beta}+\p_{\alpha}\p_{\beta})F}
+\, a^{-1}e^{(\nabla_{\alpha}(a)\p_{\alpha}+\nabla_{\alpha}(a) \p_{\beta}
+\p_{\alpha}\p_{\beta})F}
\\ \\
\phantom{aaaaaaaaa}
-\, (a^{-1}-b^{-1})e^{(\nabla_{\alpha}(a)\nabla_{\alpha}(b)
+\nabla_{\alpha}(a) \p_{\beta}+\nabla_{\alpha}(b) \p_{\beta})F}
\\ \\
\phantom{aaaaaaaaaaaaaaaa}
+\, (a^{-1}-b^{-1})(ab)^{-1}e^{(\nabla_{\alpha}(a)\nabla_{\alpha}(b)
+\nabla_{\alpha}(a) \p_{\alpha}+\nabla_{\alpha}(b) \p_{\alpha})F}=0.
\end{array}
\eeq
In a similar way, equation (\ref{31c}) yields:
\beq\label{E31}
\begin{array}{l}
b^{-1}e^{(\nabla_{\alpha}(b)\p_{\alpha}+\nabla_{\alpha}(a)
\p_{\beta})F}
- a^{-1}e^{(\nabla_{\alpha}(a)\p_{\alpha}+\nabla_{\alpha}(b) \p_{\beta})F}
+(a^{-1}-b^{-1})e^{(\nabla_{\alpha}(a)\nabla_{\alpha}(b)+\p_{\alpha}
\p_{\beta})F}
\\ \\
\phantom{aaaa}
-\, (a^{-1}-b^{-1})(ab)^{-1}e^{(\nabla_{\alpha}(a)\nabla_{\alpha}(b)
+\nabla_{\alpha}(a)\p_{\alpha}+\nabla_{\alpha}(a) \p_{\beta}
+\nabla_{\alpha}(b)\p_{\alpha} +\nabla_{\alpha}(b)\p_{\beta}
+\p_{\alpha}\p_{\beta})F}=0.
\end{array}
\eeq
These two equations have to be satisfied 
simultaneously, and this requirement 
allows one to recover the dynamical curve and make it explicit, i.e., 
to represent its points as solutions of a polynomial equation in two complex
variables.

As before, we
introduce the following functions:
\beq\label{E41}
\begin{array}{l}
w_{\alpha}(a) = \displaystyle{g_{\alpha \alpha}(a,b)\Bigr |_{b^{-1}
\to 0} =a^{-1}e^{\nabla_{\alpha}(a)\p_{\alpha}F},}
\\  \\
w_{\alpha \beta}(a) = 
\displaystyle{g_{\alpha \beta}(a,b)\Bigr |_{b^{-1}
\to 0} =e^{\nabla_{\alpha}(a)\p_{\alpha}F} 
\quad (\mbox{for $\alpha \neq \beta$} )}
\end{array}
\eeq
(recall that the $g$-function is given by (\ref{g-def})).

\begin{theorem}
For all $\alpha \neq \beta$ 
the functions $w_{\alpha}(z)$, $w_{\alpha \beta}(z)$ are constrained
by the equation
\beq\label{E9}
R_{\alpha \beta}^2 \Bigl (w_{\alpha}^2 w_{\alpha \beta}^2 +1\Bigr )-
\Bigl (w_{\alpha}^2 + w_{\alpha \beta}^2\Bigr )+V_{\alpha \beta}
w_{\alpha}w_{\alpha \beta}=0,
\eeq
where $R_{\alpha \beta}=e^{\p_{\alpha}\p_{\beta}F}$ 
(as in (\ref{E51})) and
\beq\label{V}
V_{\alpha \beta}=2e^{(\p_{\beta}-\p_{\alpha})\p_{\alpha}F}
\p_{\beta}\p_{t_{\alpha , 1}}F.
\eeq
\end{theorem}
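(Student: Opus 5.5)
The plan is to work only with the two degenerate 4-point relations (\ref{E21}) and (\ref{E31}). I will rewrite them in terms of $w_{\alpha}$, $w_{\alpha\beta}$ and the single unknown
$$X(a,b)=(a^{-1}-b^{-1})e^{\nabla_{\alpha}(a)\nabla_{\alpha}(b)F},$$
eliminate $X(a,b)$ between them, and then separate variables exactly as in the proof of Theorem~\ref{theorem:curve2} and of the one-component curve (\ref{ddkp-curve-ep}). I read $w_{\alpha\beta}(a)$ as $e^{\nabla_{\alpha}(a)\p_{\beta}F}$, which is what (\ref{ww2}) gives. I use the shorthands $x_a=w_{\alpha}(a)$, $y_a=w_{\alpha\beta}(a)$, $R=R_{\alpha\beta}$. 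Each exponential in (\ref{E21}), (\ref{E31}) is then a monomial in these, times $R^{\pm1}$ and $X(a,b)$.

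\textbf{Step 1: the two relations as a linear system.} Factoring out common multipliers, (\ref{E21}) becomes
$$R(x_ay_a-x_by_b)=X(a,b)\,(y_ay_b-x_ax_b),$$
and (\ref{E31}) becomes
$$x_by_a-x_ay_b=X(a,b)\,(x_ax_by_ay_b-R^2)\,R^{-1}.$$
So both relations are linear in $X(a,b)$, with coefficients built from $x$, $y$ at the two points. Eliminating $X$ gives one bilinear constraint:
$$R^2(x_ay_a-x_by_b)(x_ax_by_ay_b-R^2)=R^2\,(x_by_a-x_ay_b)(y_ay_b-x_ax_b).$$
I expect to have to check the signs $\epsilon_{\alpha\beta}$ carefully at this point.

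\textbf{Step 2: separation of variables.} The constraint should factor, or be rearranged, into the form $\Phi(x_a,y_a)=\Phi(x_b,y_b)$. The model is the one-component case, where multiplying (\ref{eq1}) by (\ref{eq2}) produced (\ref{ddkp-sep-vars}). I would first try dividing by $x_ax_by_ay_b$ and grouping into symmetric combinations. The target is
$$\Phi(x,y)=\frac{R^2(x^2y^2+1)-(x^2+y^2)}{xy},$$
and I would verify that the difference $\Phi(x_a,y_a)-\Phi(x_b,y_b)$ is a multiple of the constraint above. This algebraic regrouping is the step I expect to be the main obstacle. If the direct factorization fails, I would fall back on the cross-multiplication trick used for (\ref{eq1}), (\ref{eq2}).

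\textbf{Step 3: identifying the constant.} Once separation holds, $\Phi(w_{\alpha}(z),w_{\alpha\beta}(z))=-V_{\alpha\beta}$ is independent of $z$. I compute it by expanding as $z\to\infty$, using $w_{\alpha}(z)=z^{-1}R_{\alpha}(1+O(z^{-1}))$ and $w_{\alpha\beta}(z)=R(1+z^{-1}\p_{\beta}\p_{t_{\alpha,1}}F+O(z^{-2}))$. The leading $O(z)$ terms cancel, since they come from $(R^2-R^2)\cdot R_{\alpha}^{-1}R^{-1}z$. The finite term should come out as
$$-2R_{\alpha}^{-1}R\,\p_{\beta}\p_{t_{\alpha,1}}F=-2e^{(\p_{\beta}-\p_{\alpha})\p_{\alpha}F}\p_{\beta}\p_{t_{\alpha,1}}F.$$
This gives (\ref{V}), and multiplying through by $w_{\alpha}w_{\alpha\beta}$ yields (\ref{E9}). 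The expansion must be carried to subleading order in both $w$'s, watching for cross terms. As a consistency check, I will compare the result with (\ref{E1e}).
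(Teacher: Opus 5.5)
Your route is the paper's own: rewrite (\ref{E21}) and (\ref{E31}) in $w_\alpha$, $w_{\alpha\beta}$, eliminate $X(a,b)$, separate variables, and fix the constant at $z\to\infty$. Your first relation coincides with (\ref{E6}), and your Step~3 asymptotics are correct. Your second relation, however, is wrong, and with it Step~2 cannot go through.

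In (\ref{E31}) both the third and the fourth exponentials contain $+\partial_\alpha\partial_\beta F$. So the fourth term is $-R\,X(a,b)\,x_ax_by_ay_b$, not $-R^{-1}X(a,b)\,x_ax_by_ay_b$. The correct relation is the paper's (\ref{E7}):
\[
x_ay_b-x_by_a=R\,X(a,b)\bigl(1-x_ax_by_ay_b\bigr).
\]
Eliminating $X$ then gives
\[
R^2(x_ay_a-x_by_b)(x_ax_by_ay_b-1)=(x_by_a-x_ay_b)(y_ay_b-x_ax_b).
\]

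With your version, $x_ax_by_ay_b\,[\Phi(x_a,y_a)-\Phi(x_b,y_b)]$ differs from your constraint by $(R^2-1)\,x_ax_by_ay_b\,(x_ay_a-x_by_b)$. The verification you plan in Step~2 therefore fails. Your constraint actually separates with $\Phi'(x,y)=(x^2y^2+R^2-x^2-y^2)/(xy)$, which defines a different curve.

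Worse, $\Phi'$ has the same cancellation of $O(z)$ terms and the same finite limit $-2R_\alpha^{-1}R\,\partial_\beta\partial_{t_{\alpha,1}}F$ at infinity. So neither your Step~3 nor the comparison with (\ref{E1e}) would expose the slip.

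Once the relation is corrected, Step~2 has no obstacle at all. Use the identities
\[
x_ay_a-x_by_b+\frac{1}{x_ay_a}-\frac{1}{x_by_b}=\frac{(x_ay_a-x_by_b)(x_ax_by_ay_b-1)}{x_ax_by_ay_b},\qquad \frac{x_a^2+y_a^2}{x_ay_a}-\frac{x_b^2+y_b^2}{x_by_b}=\frac{(x_by_a-x_ay_b)(y_ay_b-x_ax_b)}{x_ax_by_ay_b}.
\]
With these, the corrected constraint is literally $\Phi(x_a,y_a)=\Phi(x_b,y_b)$, which is the paper's (\ref{E8}). Your Step~3 then yields (\ref{V}) and (\ref{E9}).
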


\noindent
{\it Proof.}
In the notation (\ref{E41}) equations (\ref{E21}), 
(\ref{E31}) acquire the form
\beq\label{E6}
\begin{array}{l}
R_{\alpha \beta}\Bigl (w_{\alpha}(a)w_{\alpha \beta}(a)-
w_{\alpha}(b)w_{\alpha \beta}(b)\Bigr )
\\ \\
\phantom{aaaaaaaaaaa}
=(a^{-1}-b^{-1})e^{\nabla_{\alpha}(a)\nabla_{\alpha}(b)F}
\Bigl (w_{\alpha \beta}(a)w_{\alpha \beta}(b)-
w_{\alpha}(a)w_{\alpha}(b)\Bigr ),
\end{array}
\eeq

\beq\label{E7}
\begin{array}{l}
\phantom{aa}w_{\alpha}(a)w_{\alpha \beta}(b)-
w_{\alpha}(b)w_{\alpha \beta}(a)
\\ \\
\phantom{aaaaaaaaaaa}
=R_{\alpha \beta}(a^{-1}-b^{-1})e^{\nabla_{\alpha}(a)\nabla_{\alpha}(b)F}
\Bigl (1- w_{\alpha}(a)w_{\alpha}(b)w_{\alpha \beta}(a)w_{\alpha \beta}(b)
\Bigr ).
\end{array}
\eeq

\noindent
Excluding $e^{\nabla_{\alpha}(a)\nabla_{\alpha}(b)F}$, 
we obtain a relation for the functions $w_{\alpha}$ and
$w_{\alpha \beta}$. After some simple
transformations it can be represented in the form
$$
\begin{array}{l}
\displaystyle{
\phantom{aa}R_{\alpha \beta}^2 \left (w_{\alpha}(a)w_{\alpha \beta}(a)+
\frac{1}{w_{\alpha}(a)w_{\alpha \beta}(a)}\right ) -
\frac{w_{\alpha}(a)}{w_{\alpha \beta}(a)}-
\frac{w_{\alpha \beta}(a)}{w_{\alpha}(a)}}
\\ \\
=\, \displaystyle{
R_{\alpha \beta}^2 \left (w_{\alpha}(b)w_{\alpha \beta}(b)+
\frac{1}{w_{\alpha}(b)w_{\alpha \beta}(b)}\right ) -
\frac{w_{\alpha}(b)}{w_{\alpha \beta}(b)}-
\frac{w_{\alpha \beta}(b)}{w_{\alpha}(b)}}.
\end{array}
$$
The left-hand side depends only on $a$ while the right-hand side depends
only on $b$, hence both equal to some constant (meaning that
it does not depend on $a$ or $b$ but can depend on 
${\bf t}$) which we denote
as $-V_{\alpha \beta}$:
\beq\label{E8}
R_{\alpha \beta}^2 \left (w_{\alpha}(z)w_{\alpha \beta}(z)+
\frac{1}{w_{\alpha}(z)w_{\alpha \beta}(z)}\right ) -
\frac{w_{\alpha}(z)}{w_{\alpha \beta}(z)}-
\frac{w_{\alpha \beta}(z)}{w_{\alpha}(z)}=-V_{\alpha \beta}.
\eeq
The constant can be found from the $z\to \infty$ limit of 
(\ref{E8}). The result is given by (\ref{V}). 
Equation (\ref{E8}) is the same as
(\ref{E9}).
\square

The left-hand side of (\ref{E9}) is
a polynomial $P(w_{\alpha}, w_{\alpha \beta})$ 
quadratic in each of
the two complex variables, hence the equation 
$P(w_{\alpha}, w_{\alpha \beta})=0$ 
defines an elliptic curve. It is the dynamical curve for the
multi-component hierarchy.
The functions (\ref{E41}) are meromorphic 
functions on this curve, and $z^{-1}$ plays the role of a local parameter
in a neighborhood of $\infty$. The both functions are regular at $\infty$.

It is well known that any algebraic curve defined by the equation 
$P(x,y)=0$
with a bi-quadratic polynomial $P(x,y)$ can be uniformized by
elliptic functions.

\begin{proposition}\cite{SZ24,SZ25b}
The dynamical curve defined by equation (\ref{E9}) is uniformized
as follows:
\beq\label{uni}
w_{\alpha}(z)=
\frac{\theta_1(u_{\alpha}(z)-\eta_{\alpha})}{\theta_4(u_{\alpha}(z)-
\eta_{\alpha})}, \qquad
w_{\alpha \beta}(z)=\epsilon_{\beta \alpha}
\frac{\theta_1(u_{\alpha}(z)-\eta_{\beta})}{\theta_4(u_{\alpha}(z)-
\eta_{\beta})} \quad (\mbox{for $\alpha \neq \beta$}),
\eeq 
with
\beq\label{V1}
R_{\alpha \beta}=\epsilon_{\beta \alpha}
\frac{\theta_1(\eta_{\alpha \beta})}{\theta_4(\eta_{\alpha \beta})},
 \quad
V_{\alpha \beta}=
2\epsilon_{\beta \alpha} \frac{\theta_4^2(0)\, \theta_2(\eta_{\alpha \beta})
\, \theta_3(\eta_{\alpha \beta})}{\theta_2(0)\, 
\theta_3(0)\,\theta_4^2(\eta_{\alpha \beta})}.
\eeq
\end{proposition}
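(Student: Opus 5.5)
The plan is to check that (\ref{E9}) holds identically once (\ref{uni}) and (\ref{V1}) are substituted. I will do this in the equivalent "separated" form (\ref{E8}), using the analytic properties of the function ${\sf sn}(u)=\theta_1(u)/\theta_4(u)$ rather than expanding theta-function identities term by term.

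\textbf{Setup.} Put $v=u_{\alpha}(z)-\eta_{\alpha}$ and $\eta=\eta_{\alpha\beta}$. Then (\ref{uni}) reads
$$
X\equiv w_{\alpha}={\sf sn}(v),\qquad Y\equiv w_{\alpha\beta}=\epsilon_{\beta\alpha}\,{\sf sn}(v+\eta).
$$
Consider, as a function of the complex variable $v$,
$$
h(v)=R_{\alpha\beta}^2\Bigl(XY+\frac{1}{XY}\Bigr)-\frac{X}{Y}-\frac{Y}{X}.
$$
The theorem (in the form (\ref{E8})) is precisely the statement $h(v)\equiv -V_{\alpha\beta}$.

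\textbf{Step 1: $h$ is elliptic.} From the standard shift rules of Appendix B:
\begin{itemize}
\item $\theta_1$ and $\theta_4$ change sign and acquire equal multipliers appropriately under $v\to v+1$, so ${\sf sn}(v+1)=-{\sf sn}(v)$;
\item under $v\to v+\tau/2$, $\theta_1$ and $\theta_4$ are interchanged up to the same exponential factor, so ${\sf sn}(v+\tau/2)=1/{\sf sn}(v)$.
\end{itemize}
Under $v\to v+1$ both $X$ and $Y$ flip sign, so $XY$ and $X/Y$ are invariant. Under $v\to v+\tau/2$ we get $XY\to 1/(XY)$ and $X/Y\to Y/X$, and $h$ is visibly symmetric under both swaps. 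Hence $h$ is elliptic with periods $1$ and $\tau/2$. In a fundamental domain its only possible poles are at the zeros of $X$ and $Y$, namely $v=0$ and $v=-\eta$.

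\textbf{Step 2: the poles cancel.} Near $v=0$, the singular part of $h$ is
$$
\frac{R_{\alpha\beta}^2-Y^2}{XY}.
$$
By (\ref{V1}), $Y(0)=\epsilon_{\beta\alpha}{\sf sn}(\eta)=R_{\alpha\beta}$, so the numerator vanishes at $v=0$ and the simple pole cancels. The point $v=-\eta$ is treated the same way, since $X(-\eta)=-{\sf sn}(\eta)$ and $X^2(-\eta)=R_{\alpha\beta}^2$. Thus $h$ is an elliptic function without poles, hence constant in $v$, and in particular independent of $z$.

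\textbf{Step 3: evaluate the constant at $v=0$.} The terms $XY$ and $X/Y$ vanish at $v=0$. The remaining piece $(R_{\alpha\beta}^2-Y^2)/(XY)$ tends to $-2Y'(0)/X'(0)$. Write $Y'(0)=R_{\alpha\beta}S'(\eta)$ with $S$ from (\ref{S}), and $X'(0)=\pi\theta_2(0)\theta_3(0)$ by (\ref{theta1prime}). Inserting the explicit formula (\ref{S'}) for $S'$ together with $R_{\alpha\beta}$ from (\ref{V1}) gives
$$
h\equiv -2\epsilon_{\beta\alpha}\,\frac{\theta_4^2(0)\,\theta_2(\eta)\,\theta_3(\eta)}{\theta_2(0)\,\theta_3(0)\,\theta_4^2(\eta)}=-V_{\alpha\beta},
$$
which is exactly (\ref{V1}). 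This step is consistent with the dynamical relation (\ref{E1e}) combined with (\ref{V}), which confirms that the constant is the right one.

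\textbf{Main obstacle.} I expect the delicate part to be bookkeeping rather than anything conceptual. Specifically:
\begin{itemize}
\item the precise multipliers under $v\to v+\tau/2$ must be checked to be \emph{equal} for $\theta_1$ and $\theta_4$, so that ${\sf sn}(v+\tau/2)=1/{\sf sn}(v)$ holds with constant $1$;
\item the sign factors $\epsilon_{\beta\alpha}$ must be tracked so that $Y(0)=R_{\alpha\beta}$ exactly, not $-R_{\alpha\beta}$.
\end{itemize}
If the half-period relation produced a stray constant, I would fall back on working with the full periods $1$ and $\tau$, where $h$ is certainly elliptic, and check the pole cancellation at the two additional points $v=\tau/2$ and $v=\tau/2-\eta$ by the same residue argument.
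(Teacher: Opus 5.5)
Your proof is correct, but it takes a different route from the paper's verification in Appendix C (identity (\ref{B8})).

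\textbf{The paper's route.} The paper divides the curve equation by $x^2y^2$. The resulting function is elliptic only for the lattice generated by $1,\tau$, and it has double poles at $u=0$ and $u=-\eta$. The paper checks that the double poles cancel. It then disposes of the remaining simple poles indirectly: an elliptic function with at most two simple poles that vanishes at three points must be zero. The three points are $u=\tau/2$, $u=-\eta+\tau/2$ and $u=(\tau+1)/2$. The last one needs a second ellipticity argument, now in $\eta$: the expression (\ref{B9}) is shown to be constant and is then evaluated at $\eta=0$.

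\textbf{Your route.} You divide by $xy$ instead, i.e.\ you work with the separated form (\ref{E8}). This normalization makes the symmetry $v\to v+\tau/2$ manifest ($x\to 1/x$, $y\to 1/y$), whereas the $x^2y^2$ form is not invariant under it. So $h$ is elliptic for the finer lattice generated by $1$ and $\tau/2$, and its only poles are simple. They cancel using nothing but $R_{\alpha\beta}^2={\sf sn}^2(\eta)$. Hence $h$ is constant for every $\eta$, and $-V_{\alpha\beta}$ appears as the separation constant, $V_{\alpha\beta}=2\epsilon_{\beta\alpha}{\sf sn}'(\eta)/{\sf sn}'(0)$. Your evaluation at $v=0$ is the uniformized counterpart of the $z\to\infty$ limit by which the paper obtains (\ref{V}) and (\ref{E1e}).

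\textbf{Trade-off.} Your argument is shorter and more transparent. The price is that it needs the derivative identity (\ref{S'}), equivalently ${\sf sn}'(u)=\pi\theta_4^2(0)\theta_2(u)\theta_3(u)/\theta_4^2(u)$, together with (\ref{theta1prime}). The paper states (\ref{S'}) only with a sketch of proof. The paper's own argument uses nothing beyond shift rules and point evaluations.

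\textbf{Your two announced obstacles are harmless.}
\begin{itemize}
\item By (\ref{hp}), $\theta_1$ and $\theta_4$ acquire the same multiplier $ie^{-\pi i(v+\tau/4)}$ under $v\to v+\tau/2$. So ${\sf sn}(v+\tau/2)=1/{\sf sn}(v)$ exactly, with no stray constant.
\item The curve involves only $R_{\alpha\beta}^2$, and the limit $-2Y'(0)/X'(0)$ does not involve $Y(0)$. So the sign of $Y(0)$ relative to $R_{\alpha\beta}$ plays no role. The sign of $V_{\alpha\beta}$ is fixed directly by the $\epsilon_{\beta\alpha}$ in $w_{\alpha\beta}$.
\end{itemize}

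One small slip: under $v\to v+1$ only $\theta_1$ changes sign, while $\theta_4$ is invariant. Your conclusion ${\sf sn}(v+1)=-{\sf sn}(v)$ still stands.
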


\noindent
This means that the equation of the curve is satisfied identically if one
substitutes into (\ref{E9}) 
$w_{\alpha}$, $w_{\alpha \beta}$, 
$R_{\alpha \beta}$, $V_{\alpha \beta}$ as they are 
expressed in (\ref{uni}),
(\ref{E1b}), (\ref{V1}) respectively. More details on uniformization
of elliptic curves by theta-functions are contained in Appendix C. 

\begin{remark}
The right-hand side of formula (\ref{V1}) for $V_{\alpha \beta}$ 
implies that 
$V_{\beta \alpha}=-V_{\alpha \beta}$, although this is not seen from
the original definition (\ref{V}). 
\end{remark}

After these preparations, 
we are ready to answer the question how the
modular parameter $\tau$ should be chosen. 

\begin{proposition}\label{proposition:tau1}
The elliptic modular
parameter $\tau =\tau ({\bf t})$ is a function of times
implicitly determined from the equation
\beq\label{E11a}
\frac{\theta_2^2 (0|\tau )}{\theta_3^2 (0|\tau )}+
\frac{\theta_3^2 (0|\tau )}{\theta_2^2 (0|\tau )}=
e^{2\p_{\alpha}\p_{\beta}F}+e^{-2\p_{\alpha}\p_{\beta}F}-
e^{-2\p_{\alpha}^2F}(\p_{\beta}\p_{t_{\alpha ,1}}F)^2 .
\eeq
\end{proposition}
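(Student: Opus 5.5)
The plan mirrors the proof of Proposition \ref{proposition:tau} in the one-component case. We express the same $\tau$-dependent combination in two ways: once through the theta-function parametrization (\ref{E1b}), (\ref{V1}) of $R_{\alpha\beta}$ and $V_{\alpha\beta}$, and once through the definitions (\ref{E51}), (\ref{V}) in terms of second derivatives of $F$. Equating the two gives (\ref{E11a}).

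First, on the $F$-side, we form the combination
$$
R_{\alpha\beta}^2+R_{\alpha\beta}^{-2}-\frac{V_{\alpha\beta}^2}{4R_{\alpha\beta}^2}.
$$
From (\ref{E51}) and (\ref{V}) we have $V_{\alpha\beta}^2/(4R^2_{\alpha\beta})=e^{2(\p_\beta-\p_\alpha)\p_\alpha F-2\p_\alpha\p_\beta F}(\p_\beta\p_{t_{\alpha,1}}F)^2=e^{-2\p_\alpha^2F}(\p_\beta\p_{t_{\alpha,1}}F)^2$. So this combination is exactly the right-hand side of (\ref{E11a}).

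Second, on the theta side, we substitute (\ref{V1}). Writing $\eta=\eta_{\alpha\beta}$, the sign $\epsilon_{\beta\alpha}$ squares away and we get
$$
R^2_{\alpha\beta}+R^{-2}_{\alpha\beta}-\frac{V^2_{\alpha\beta}}{4R^2_{\alpha\beta}}
=\frac{\theta_1^2(\eta)}{\theta_4^2(\eta)}+\frac{\theta_4^2(\eta)}{\theta_1^2(\eta)}
-\frac{\theta_4^4(0)\theta_2^2(\eta)\theta_3^2(\eta)}{\theta_2^2(0)\theta_3^2(0)\theta_1^2(\eta)\theta_4^2(\eta)}.
$$
Dividing the identity (\ref{identity}) by $\theta_2^2(0)\theta_3^2(0)$, evaluated at $u=\eta$, shows that this equals $(\theta_2^4(0)+\theta_3^4(0))/(\theta_2^2(0)\theta_3^2(0))$. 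That is the left-hand side of (\ref{E11a}), and the $\eta$-dependence drops out. Equivalently, one can use the ODE (\ref{S1}) for $S$ at $u=\eta$, together with (\ref{S'}), since $2\cosh 2S(\eta)=R^2_{\alpha\beta}+R^{-2}_{\alpha\beta}$ and $S'(\eta)$ is proportional to $V_{\alpha\beta}/R_{\alpha\beta}$.

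The main point to check is that (\ref{V1}) for $V_{\alpha\beta}$ is consistent with (\ref{V}). This is exactly relation (\ref{E1e}), already derived from the next-to-leading expansion of (\ref{E1a1}), so this step is available. The derivation then reduces to careful bookkeeping of the theta-constant factors in (\ref{identity}).

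Finally, we note that the left-hand side of (\ref{E11a}) depends only on $\tau$. Therefore the right-hand side is independent of the choice of the pair $\alpha\neq\beta$, and the equation implicitly fixes $\tau(\mathbf t)$ as a dynamical variable.
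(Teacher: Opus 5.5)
Your proposal is correct and is essentially the paper's proof. The paper likewise forms $R_{\alpha\beta}^2+R_{\alpha\beta}^{-2}-\bigl(V_{\alpha\beta}/(2R_{\alpha\beta})\bigr)^2$ and notes that $V_{\alpha\beta}/R_{\alpha\beta}=2S'(\eta_{\alpha\beta})/(\pi\theta_2(0)\theta_3(0))$, so that (\ref{S1}) at $u=\eta_{\alpha\beta}$ (equivalent to (\ref{identity})) gives the $\tau$-only left-hand side of (\ref{E11a}), while (\ref{E51}) and (\ref{V}) give the right-hand side; your check that $V^2_{\alpha\beta}/(4R^2_{\alpha\beta})=e^{-2\p_\alpha^2F}(\p_\beta\p_{t_{\alpha,1}}F)^2$ and your remark that (\ref{V1}) agrees with (\ref{V}) through (\ref{E1e}) spell out what the paper leaves implicit.
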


\noindent
{\it Proof.}
Note first of all that
\beq\label{E10}
\frac{V_{\alpha \beta}}{R_{\alpha \beta}}=
\frac{2\, S'(\eta_{\alpha \beta})}{\pi \theta_2(0)\, \theta_3(0)}
\eeq
(see (\ref{E1b}), (\ref{S'}) and (\ref{V1})).
Then identity (\ref{S1}) can be written as
$$
R_{\alpha \beta}^2 +R_{\alpha \beta}^{-2} -
\Bigl ( \frac{V_{\alpha \beta}}{2R_{\alpha \beta}}\Bigr )^2 =
\frac{\theta_2^2 (0)}{\theta_3^2 (0)}+\frac{\theta_3^2 (0)}{\theta_2^2 (0)},
$$
which is (\ref{E11a}).
\square

It remains to explain how the curve (\ref{E9}) 
is related to the curve (\ref{ddkp-curve-ep}) 
from Section \ref{section:dDKP} which is defined by another
polynomial equation. In fact, it is one and the same curve,
although defined in different equivalent ways.
To see this, we introduce, in addition to (\ref{E41}), the 
functions
\beq\label{E41a}
\begin{array}{l}
\displaystyle{ \left.
p_{\alpha}(a) = -\frac{\p_{b^{-1}}
g_{\alpha \alpha}(a,b)}{g(a,b)} \right |_{b^{-1}\to 0}
=a-\nabla_{\alpha}(a)\p_{t_{\alpha , 1}}F},
\\  \\
\displaystyle{\left.
p_{\alpha \beta}(a)=-\frac{ \p_{b^{-1}}
g_{\alpha  \beta}(a,b)}{g(a,b)} \right |_{b^{-1}\to 0}
=-\nabla_{\alpha}(a)\p_{t_{\beta , 1}}F
\quad (\mbox{for $\alpha \neq \beta$} ),}
\end{array}
\eeq
generalizing the definition (\ref{wp}) to the multi-component case.

\begin{proposition}
For all $\alpha \neq \beta$ the functions $p_{\alpha}(z)$, 
$p_{\alpha \beta}(z)$ 
satisfy the polynomial equations
\beq\label{p1}
\begin{array}{l}
R_{\alpha}^2 \Bigl (w_{\alpha}^2(z)+w_{\alpha}^{-2}(z)\Bigr )=p_{\alpha}^2(z)
+V_{\alpha},
\\ \\
R_{\alpha}^2 \Bigl (w_{\beta \alpha}^2(z)+w_{\beta 
\alpha}^{-2}(z)\Bigr )=p_{\beta \alpha}^2(z)
+V_{\alpha} \quad \mbox{for $\alpha \neq \beta$},
\end{array}
\eeq
where
\beq\label{p2}
R_{\alpha}=e^{\p^2_{\alpha}F}, \quad
V_{\alpha}=(\p_{\alpha}\p_{t_{\alpha , 1}}F)^2 +2 \p_{t_{\alpha ,1}}^2 F-
\p_{\alpha}\p_{t_{\alpha , 2}}F.
\eeq
\end{proposition}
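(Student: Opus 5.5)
The plan is to reduce both relations in (\ref{p1}) to the one-component computation that gave (\ref{ddkp-curve-ep}). The key observation is that the separation-of-variables identity (\ref{ddkp-sep-vars}) used only the two 4-point relations (\ref{eq1}), (\ref{eq2}). Their multi-component counterparts (\ref{g1}), (\ref{g2}) have exactly the same form in terms of the indexed functions $g_{\alpha\beta}$. So the same manipulation should go through, provided the index assignment is chosen so that the sign factors $\epsilon$ either enter squared or cancel.

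\emph{First relation.} Put $\alpha=\beta=\nu=\mu$ in (\ref{g1}), (\ref{g2}). Then every $g_{\alpha\alpha}(a,b)=(a^{-1}-b^{-1})e^{\nabla_\alpha(a)\nabla_\alpha(b)F}$, all signs equal $+1$, and (\ref{g1}), (\ref{g2}) coincide literally with (\ref{eq1}), (\ref{eq2}). The only change is the operator: $\nabla(z)$ with $\p_{t_0}$ is replaced by $\nabla_\alpha(z)$ with $\p_\alpha$, and the other components' times enter only as parameters. The proof of Theorem on (\ref{ddkp-curve-ep}) then applies verbatim. It gives $R_\alpha^2(w_\alpha^2+w_\alpha^{-2})-p_\alpha^2=-V_\alpha$, with $R_\alpha=e^{\p_\alpha^2F}$ and $V_\alpha$ given by the $z\to\infty$ limit. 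That limit is the same expansion as in (\ref{R111}), with $t_0,t_1,t_2$ replaced by $t_{\alpha,0},t_{\alpha,1},t_{\alpha,2}$, which yields (\ref{p2}).

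\emph{Second relation.} Keep $b,c,d$ in component $\alpha$ but put $a$ in component $\beta$. That is, in (\ref{g1}), (\ref{g2}) take the first index equal to $\beta$ and the other three equal to $\alpha$. Multiply the two relations, left side by left side and right side by right side, as before. Then regroup exactly as in the passage to (\ref{ddkp-sep-vars1}).

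1. The regrouping should give an equality between an $a$-expression built from $g_{\beta\alpha}(a,c)$, $g_{\beta\alpha}(a,d)$, $g_{\alpha\alpha}(c,d)$ and the same expression in $b$ built from $g_{\alpha\alpha}(b,c)$, $g_{\alpha\alpha}(b,d)$, $g_{\alpha\alpha}(c,d)$.
2. Let $c^{-1},d^{-1}\to0$. On the $b$-side nothing new happens: it tends to $R_\alpha^{-2}$ times the first relation's combination, i.e. to $-R_\alpha^{-2}V_\alpha$.
3. On the $a$-side, $g_{\beta\alpha}(a,c)\to w_{\beta\alpha}(a)$, which is nonzero at $c=\infty$ just like $w(a)$. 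Also $-\p_{c^{-1}}\log g_{\beta\alpha}(a,c)|_{c^{-1}=0}=-\nabla_\beta(a)\p_{t_{\alpha,1}}F=p_{\beta\alpha}(a)$.
4. Hence the computations (\ref{wp1}), (\ref{wp2}) repeat with $w\to w_{\beta\alpha}$, $p\to p_{\beta\alpha}$. The singular factor $g_{\alpha\alpha}^{-2}(c,d)=e^{-2\p_\alpha^2F}(c^{-1}-d^{-1})^{-2}$ is the same as before.
5. The result is $R_\alpha^2(w_{\beta\alpha}^2+w_{\beta\alpha}^{-2})-p_{\beta\alpha}^2=-V_\alpha$, with the \emph{same} constant $V_\alpha$, because the $b$-side is unchanged.

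The main obstacle is the sign bookkeeping in this mixed product. Each $g_{\beta\alpha}$ and $g_{\alpha\beta}$ carries $\epsilon_{\alpha\beta}$ or $\epsilon_{\beta\alpha}$. Moreover, (\ref{g2}) contains both orderings, e.g. $g_{\alpha\beta}(b,a)$ versus $g_{\beta\alpha}(a,b)$, related by $g_{\alpha\beta}(b,a)=-g_{\beta\alpha}(a,b)$ for $\alpha\neq\beta$, since the factor $(a^{-1}-b^{-1})$ is absent there.

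To handle this, I would first rewrite (\ref{g1}), (\ref{g2}) so that $a$ always appears in the first slot with index $\beta$. I would then check that the factors $\epsilon_{\beta\alpha}$ appear only through $g_{\beta\alpha}(a,c)g_{\beta\alpha}(a,d)$, $g_{\beta\alpha}^2$, or ratios $g_{\beta\alpha}(a,c)/g_{\beta\alpha}(a,d)$, where they cancel. That is what the target formula, quadratic in $w_{\beta\alpha}$, requires.

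If a stray overall sign survives in the term $g(a,c)g(a,d)+(g(a,c)g(a,d))^{-1}$, I would instead take $a$ in component $\beta$ and $b$ in component $\beta$ as well, with $c,d$ in $\alpha$. The constant then equals $-V_\alpha$ by its $z\to\infty$ limit, i.e. by comparing leading coefficients using (\ref{E1b}) and (\ref{E1e}).
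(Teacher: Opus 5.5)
Your proposal is correct and follows the paper's own argument. Relations (\ref{g1}), (\ref{g2}) give the indexed separation identity (\ref{ddkp-sep-vars2}) by the same multiplication as in the one-component case. Then let $c^{-1},d^{-1}\to 0$ with $c,d$ in component $\alpha$, $a$ in component $\beta$ and $b$ in component $\alpha$; this yields both relations with one common constant.

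Your sign worry does not arise. The identity $g_{\beta\alpha}(b,a)=-g_{\alpha\beta}(a,b)$ holds for every index pair, including $\alpha=\beta$, so (\ref{g1}), (\ref{g2}) are literally (\ref{eq1}), (\ref{eq2}). The factor $\epsilon_{\alpha\beta}$ then enters only through $g_{\beta\alpha}(a,c)g_{\beta\alpha}(a,d)$, squares and ratios, and the fallback (which would lean on the elliptic formulas) is unnecessary.

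One sign slip needs fixing. The $z\to\infty$ expansion gives $R_{\alpha}^2(w_{\alpha}^2+w_{\alpha}^{-2})-p_{\alpha}^2\to(\partial_{\alpha}\partial_{t_{\alpha,1}}F)^2+2\partial_{t_{\alpha,1}}^2F-\partial_{\alpha}\partial_{t_{\alpha,2}}F=+V_{\alpha}$, as (\ref{p1}) requires. So every $-V_{\alpha}$ in your write-up should read $+V_{\alpha}$; the minus sign comes from the ``denote as $-V$'' remark in the one-component proof.
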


\noindent
{\it Sketch of proof.}
Equations (\ref{p1}) can be obtained from
(\ref{g1}), (\ref{g2}) in the same way as
(\ref{ddkp-curve-ep}) 
was obtained from
(\ref{eq1}), (\ref{eq2}). Namely, from (\ref{g1}), (\ref{g2}) 
it follows that 
\beq\label{ddkp-sep-vars2}
\begin{array}{l}
\displaystyle{
\frac
{\Bigl( 1 + g_{\alpha \nu}^{2}(a, c) g_{\alpha \mu}^{2}(a, d)
\Bigl)g_{\mu \nu}^{2}(d, c)
- \Bigl( g_{\alpha \nu}^{2}(a, c) + g_{\alpha \mu}^{2}(a, d) \Bigr)}
{g_{\alpha \nu}(a, c) g_{\alpha \mu}(a, d)}}
\\ \\ =
\displaystyle{
\frac
{\Bigl(1 +g_{\beta \nu}^{2}(b, c) g_{\beta \mu}^{2}(b, d)
\Bigl)g_{\mu \nu}^{2}(d, c)
-\Bigl( g_{\beta \nu}^{2}(b, c) + g_{\beta \mu}^{2}(b, d) \Bigl)}
{g_{\beta \nu}(b, c) g_{\beta \mu}(b, d)}}.
\end{array}
\eeq
Performing the limit $c^{-1}, d^{-1}\to 0$, we obtain from this
two equations (one for $\beta =\alpha$, 
another for $\beta \neq \alpha$) in which the variables $a$ and $b$
are separated:
$$
\begin{array}{l}
R^2_{\alpha}\Bigl (w^2_{\alpha}(a)+w^{-2}_{\alpha}(a)\Bigr )-
p_{\alpha}^2(a)=
R^2_{\alpha}\Bigl (w^2_{\alpha}(b)+w^{-2}_{\alpha}(b)\Bigr )-
p_{\alpha}^2(b),
\\ \\
R^2_{\alpha}\Bigl (w^2_{\beta \alpha}(a)+w^{-2}_{\beta \alpha}(a)\Bigr )-
p_{\beta \alpha}^2(a)=
R^2_{\alpha}\Bigl (w^2_{\beta \alpha}(b)+w^{-2}_{\beta \alpha}(b)\Bigr )-
p_{\beta \alpha}^2(b).
\end{array}
$$
These equations mean that the left-hand sides do not depend on $a$.
Evaluating their values as $a\to \infty$, we obtain (\ref{p1}), (\ref{p2}).
\square

\begin{remark}
For each $\alpha$,
the first equation in (\ref{p1}) coincides with
(\ref{ddkp-curve-ep}). This could be expected from the very
beginning
because the restriction of the multi-component hierarchy to
each component is equivalent to the one-component hierarchy
considered in Section \ref{section:dDKP}. The second equation
defines the same curve which, moreover, is isomorphic to the 
curve (\ref{E9}).
The functions $p_{\alpha}, p_{\beta \alpha}$
are expressed through theta-functions as follows:
\beq\label{p3}
\begin{array}{lll}
p_{\alpha}(z)& =& \displaystyle{
\frac{\gamma_{\alpha}}{\pi}S'(u_{\alpha}(z)-\eta_{\alpha}),}
\\ && \\
&=&\displaystyle{
\gamma_{\alpha}\theta_4^2(0)\, \frac{\theta_2(u_{\alpha}(z)-\eta_{\alpha})
\theta_3(u_{\alpha}(z)-
\eta_{\alpha})}{\theta_1(u_{\alpha}(z)-\eta_{\alpha})
\theta_4(u_{\alpha}(z)-\eta_{\alpha})}},
\end{array}
\eeq

\beq\label{p3a}
\begin{array}{lll}
p_{\beta \alpha}(z)& =& \displaystyle{
\frac{\gamma_{\alpha}}{\pi}S'(u_{\beta}(z)-\eta_{\alpha}),}
\\ && \\
&=&\displaystyle{
\gamma_{\alpha}\theta_4^2(0)\, \frac{\theta_2(u_{\beta}(z)-\eta_{\alpha})
\theta_3(u_{\beta}(z)-
\eta_{\alpha})}{\theta_1(u_{\beta}(z)-\eta_{\alpha})
\theta_4(u_{\beta}(z)-\eta_{\alpha})}}.
\end{array}
\eeq
where $\gamma_{\alpha}=\gamma_{\alpha}({\bf t})$ is a dynamical variable and
the elliptic function $S'(u)$ is given by (\ref{S'}). These formulas 
should be supplemented by expressions for $R_{\alpha}$ and $V_{\alpha}$:
\beq\label{p4}
R_{\alpha}=\gamma_{\alpha}\theta_2(0)\theta_3(0), \qquad
V_{\alpha}=\gamma_{\alpha}^2 \Bigl (\theta_2^4(0)+\theta_3^4(0)\Bigr ).
\eeq
Substituting all this into equations (\ref{p1}), one can see that they 
are satisfied identically.
\end{remark}

\begin{remark}
Equations (\ref{p4}) give an alternative 
formula to determine the modular
parameter, which is similar to (\ref{RV1}).
From equations (\ref{p4}) we conclude that
\beq\label{E11a1}
\frac{V_{\alpha}}{R^2_{\alpha}}
 =e^{-2\p_{\alpha}^2F}\Bigl ((\p_{\alpha}
 \p_{t_{\alpha , 1}}F)^2 +2 \p_{t_{\alpha ,1}}^2 F-
\p_{\alpha}\p_{t_{\alpha , 2}}F\Bigr )=
\frac{\theta_2^2 (0|\tau )}{\theta_3^2 (0|\tau )}+
\frac{\theta_3^2 (0|\tau )}{\theta_2^2 (0|\tau )},
\eeq
where the right-hand side (the same as in (\ref{E11a})) depends only
on the modular parameter $\tau$ while the left-hand side in general 
depends on all the times according to equations of the hierarchy.
\end{remark}

\subsubsection{The $F_1$-function}
\label{section:F1-DKP2}

Similarly to how it was done in Section \ref{section:F1-DKP} for
the one-component dDKP, one can obtain a linear equation for the
$F_1$-function in the expansion
$F=F_0 +\hbar F_1 +O(\hbar^2)$
in the muti-component case. To this end,
one should expand the $\hbar$-version
of equation
(\ref{gen-H-M(multi-DKP)}) up to the first order in $\hbar$.
Omitting the details,
we present the result.

\begin{proposition}
The function $F_1$ for the $N$-component dDKP hierarchy
satisfies the homogeneous linear equation
\beq\label{exp5}
\begin{array}{l}
\displaystyle{
\sum_{s=1}^{P^+}
\prod_{{\scriptsize \begin{array}{l}i=1\\ i\neq s \end{array}}}^{P^+}
E_{\alpha_i , \alpha_s}^{-1}(a_s, a_i)
\prod_{k=1}^{P^-} E_{\beta_k , \alpha_s}(a_s, b_k)}
\\ \\
\phantom{aaaaaaaaaaa}
\times \,
e^{\nabla_{\alpha_s}(a_s)(S^- -S^+ +\nabla_{\alpha_s}(a_s))F_0}
\nabla_{\alpha_s}(a_s)(S^- -S^+ +\nabla_{\alpha_s}(a_s))F_1
\\ \\
\displaystyle{
+\,\, \sum_{s=1}^{P^-}
\prod_{{\scriptsize \begin{array}{l}k=1\\ k\neq s \end{array}}}^{P^-}
E_{\beta_k , \beta_s}^{-1}(b_s, b_k)
\prod_{i=1}^{P^+} E_{\alpha_i , \beta_s}(b_s, a_i)}
\\ \\
\phantom{aaaaaaaaaaa}
\times \, e^{\nabla_{\beta_s}(b_s)(S^+ -S^- +\nabla_{\beta_s}(b_s))F_0}
\nabla_{\alpha_s}(a_s)(S^- -S^+ +\nabla_{\beta_s}(b_s))F_1
=  0,
\end{array}
\eeq
where the operators $S^{\pm}$ are 
$\displaystyle{
S^+ =\sum_{i=1}^{P^+} \nabla_{\alpha_i}(a_i), \;
S^- =\sum_{k=1}^{P^-} \nabla_{\beta_k}(b_k).}$
\end{proposition}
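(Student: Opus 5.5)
We will follow the route already used for the $\hbar$-KP hierarchy in Section \ref{section:F1} and for the one-component dDKP in Section \ref{section:F1-DKP}. We start from the $\hbar$-version of the general Hirota-Miwa equation (\ref{gen-H-M(multi-DKP)}). Rescale ${\bf n}\to {\bf t}_0/\hbar$ and ${\bf t}\to {\bf t}/\hbar$, and write $\tau=\exp(\hbar^{-2}F)$ with $F=F_0+\hbar F_1+O(\hbar^2)$. The shifts ${\bf n}\pm{\bf e}_\gamma$, ${\bf t}\pm[z^{-1}]_\gamma$ then become $e^{\pm\hbar\nabla_\gamma(z)}$ acting on $F$. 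After the common shift by $\hbar^{-1}$ times the Miwa data is absorbed, each bilinear term takes the form
$$\exp\Bigl(\hbar^{-2}\bigl(e^{\hbar X}+e^{\hbar Y}\bigr)F\Bigr).$$
In the $s$-th term of the first sum, $X=-\nabla_{\alpha_s}(a_s)$ and $Y=S^--S^++\nabla_{\alpha_s}(a_s)$ (up to the overall shift). In the second sum, $X=\nabla_{\beta_s}(b_s)$ and $Y=S^+-S^--\nabla_{\beta_s}(b_s)$. Only the combination $X+Y$ is fixed for all terms, namely $S^--S^+$, and this is what makes the factorization below work.

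Next we expand $\hbar^{-2}(e^{\hbar X}+e^{\hbar Y})(F_0+\hbar F_1)$ to order $\hbar$. The orders $\hbar^{-2}$ and $\hbar^{-1}$ involve only $(X+Y)F_0$ and $\tfrac12(X+Y)^2F_0+\hbar^{-1}(X+Y)F_1$, which are common to all terms and can be divided out as an overall factor. The order-$\hbar^0$ part is $XYF_0$ plus common pieces. This reproduces exactly the exponents $\nabla_{\alpha_s}(a_s)(S^--S^++\nabla_{\alpha_s}(a_s))F_0$ (with the sign from $X=-\nabla$) and their $b$-analogues from (\ref{gen-H-M}). The order-$\hbar$ part contains two pieces. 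The first is $XYF_1$, which is the $F_1$ term in (\ref{exp5}). The second is a cubic term $\tfrac16(X^3+Y^3)F_0$ minus its common part, which equals $\tfrac12(X+Y)XYF_0$ modulo common terms, since $X^3+Y^3=(X+Y)^3-3XY(X+Y)$.

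Collecting orders, the $\hbar^0$ equation is (\ref{gen-H-M}) itself. The order-$\hbar$ equation is a sum of $A_s e^{X_sY_sF_0}$ multiplied by $[X_sY_sF_1+\tfrac12(S^--S^+)X_sY_sF_0]$, where the $A_s$ are the rational prefactors. The second bracket piece equals $\tfrac12(S^--S^+)$ applied to the left-hand side of (\ref{gen-H-M}), because $S^\pm$ do not act on the $a$'s and $b$'s. It therefore vanishes, exactly as the $R_s$ term did in Proposition \ref{proposition:F1KP}. What remains is the homogeneous linear equation (\ref{exp5}). In the second sum the operator should read $\nabla_{\beta_s}(b_s)(S^+-S^-+\nabla_{\beta_s}(b_s))F_1$, matching the exponent there.

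The main obstacle is bookkeeping rather than ideas. First, we must confirm that the sign factors $\epsilon_\gamma$ and the $E_{\alpha\beta}$ prefactors are $\hbar$-independent, so that they pass unchanged to both orders. Second, we must check that the order-$\hbar^{-1}$ and order-$\hbar^{0}$ common parts really factor out uniformly across both sums. We would verify this by writing each term's total exponent as $\hbar^{-2}\bigl(2+\hbar(X+Y)+\tfrac{\hbar^2}{2}(X^2+Y^2)+\tfrac{\hbar^3}{6}(X^3+Y^3)\bigr)(F_0+\hbar F_1)$ and using $X^2+Y^2=(X+Y)^2-2XY$. This needs only $X+Y=S^--S^+$ for every term of both sums.
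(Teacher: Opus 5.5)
Your argument is the paper's own: expand the $\hbar$-version of (\ref{gen-H-M(multi-DKP)}) to first order as in Proposition~\ref{proposition:F1KP}. Then use $X_s+Y_s=S^--S^+$ for every term of both sums to divide out the common part, and discard the cubic $F_0$ piece as $\frac{1}{2}(S^--S^+)$ applied to (\ref{gen-H-M}). You are also right that the printed operator in the second sum should be corrected to $\nabla_{\beta_s}(b_s)(S^+-S^-+\nabla_{\beta_s}(b_s))F_1$.

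There are two sign slips to fix, neither of which affects the argument:
\begin{itemize}
\item In the second sum you need $Y=S^--S^+-\nabla_{\beta_s}(b_s)$, which your own claim $X+Y=S^--S^+$ requires.
\item Since $X^2+Y^2=(X+Y)^2-2XY$, the $s$-dependent parts are $-X_sY_sF_0$, $-X_sY_sF_1$ and $-\frac{1}{2}(X_s+Y_s)X_sY_sF_0$. The sign is uniform across all terms, so nothing else in the proof changes.
\end{itemize}
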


\noindent
Recall that $P^+ + P^-$ is even.
The function $F_0$ in (\ref{exp5}) satisfies equation
(\ref{gen-H-M}).
Like in the previous cases, the function 
\beq\label{F1v1}
F_1=\p_{v}F_0,
\eeq
where
$v$ is any continuous parameter of solutions to (\ref{gen-H-M}),
satisfies equation (\ref{exp5}).

\subsection{Multi-component Pfaff-Toda hierarchy and its 
dispersionless limit}

\subsubsection{Multi-component Pfaff-Toda hierarchy}

In the $N$-component Pfaff-Toda lattice  
hierarchy the independent variables are
$2N$ infinite sets of ``times'',
\beq\label{PT1}
\begin{array}{l}
{\bf t}=\{{\bf t}_1, {\bf t}_2, \ldots , {\bf t}_N\}, \qquad
{\bf t}_{\alpha}=\{t_{\alpha , 1}, t_{\alpha , 2}, t_{\alpha , 3}, 
\ldots \, \},
\\ \\
\bar {\bf t}=\{\bar {\bf t}_1, \bar {\bf t}_2, \ldots , 
\bar {\bf t}_N\}, \qquad
\bar {\bf t}_{\alpha}=\{\bar t_{\alpha , 1}, \bar t_{\alpha , 2}, 
\bar t_{\alpha , 3}, \ldots \, \},
\end{array}
\qquad \alpha = 1, \ldots , N
\eeq
and two finite sets of discrete variables
$$
{\bf n}=\{n_1, \ldots , n_N\}, \quad
\bar {\bf n}=\{\bar n_1, \ldots , \bar n_N\}, \quad n_{\alpha}, \bar n_{\alpha}
\in \ZZ 
$$
such that
\beq\label{PT2}
|{\bf n}|-|\bar {\bf n}|\in 2 \ZZ .
\eeq
The universal
dependent variable is the tau-function
$\tau ({\bf n}, \bar {\bf n}, {\bf t},
\bar {\bf t})$. In the fermionic approach it is defined
as the following expectation value:
\beq\label{PT3}
\tau ({\bf n}, \bar {\bf n}, {\bf t}, \bar {\bf t})=
\lbr {\bf n}\bigl | e^{J({\bf t})} g e^{-\bar J(\bar {\bf t})} 
\bigr |-\bar {\bf n}\rbr ,
\eeq
where $g$ is a Clifford group element of the form (\ref{g-mdkp}).
The representation (\ref{PT3}) has the same form as the one for
the multi-component Toda lattice (see (\ref{T3})) but the restriction
on possible values of ${\bf n}$, $\bar {\bf n}$ is much weaker:
instead of $|{\bf n}|=-|\bar {\bf n}|$ now
$|{\bf n}|$ and $|\bar {\bf n}|$ are only required to be of the 
same parity (otherwise the expectation value (\ref{PT3}) is
identically zero).

The general bilibear equation for the tau-function
obtained in \cite{SZ25a} has the form
\beq\label{main}
\begin{array}{l}
\displaystyle{
\sum_{\gamma =1}^N\epsilon_{\gamma}({\bf n})\epsilon_{\gamma}({\bf n}')
\oint_{C_{\infty}}\frac{dz}{z^2} 
z^{n_{\gamma}-n_{\gamma}'}
e^{\xi ({\bf t}_{\gamma}-{\bf t}_{\gamma}', z)}}
\\ \\
\phantom{aaaaaaaaaaa}\displaystyle{\times \, 
\tau ({\bf n}\! -\! {\bf e}_{\gamma}, \bar {\bf n}, 
{\bf t}\! -\! [z^{-1}]_{\gamma},
\bar {\bf t})\tau ({\bf n}'\! +\! {\bf e}_{\gamma}, \bar {\bf n}', 
{\bf t}'\! +\! [z^{-1}]_{\gamma},
\bar {\bf t}')}
\\ \\
+\, 
\displaystyle{
\sum_{\gamma =1}^N \epsilon_{\gamma}({\bf n})\epsilon_{\gamma}({\bf n}')
\oint_{C_{\infty}}\frac{dz}{z^2} 
z^{n_{\gamma}'-n_{\gamma}}
e^{-\xi ({\bf t}_{\gamma}-{\bf t}_{\gamma}', z)}}
\\ \\
\phantom{aaaaaaaaaaa}\displaystyle{\times \, 
\tau ({\bf n}\! +\! {\bf e}_{\gamma}, \bar {\bf n}, 
{\bf t}\! +\! [z^{-1}]_{\gamma},
\bar {\bf t})\tau ({\bf n}'\! -\! {\bf e}_{\gamma}, \bar {\bf n}', 
{\bf t}'\! -\! [z^{-1}]_{\gamma},
\bar {\bf t}')}
\end{array}
\eeq
$$
\begin{array}{l}
=\, 
\displaystyle{
\sum_{\gamma =1}^N \epsilon_{\gamma}(\bar {\bf n})
\epsilon_{\gamma}(\bar {\bf n}')
\oint_{C_{\infty}}\frac{dz}{z^2} 
z^{\bar n_{\gamma}-\bar n_{\gamma}'}
e^{\xi (\bar {\bf t}_{\gamma}-\bar {\bf t}_{\gamma}', z)}}
\\ \\
\phantom{aaaaaaaaaaa}\displaystyle{\times \, 
\tau ({\bf n}, \bar {\bf n}\! -\! {\bf e}_{\gamma}, 
{\bf t},
\bar {\bf t}\! -\! [z^{-1}]_{\gamma})
\tau ({\bf n}', \bar {\bf n}'\! +\! {\bf e}_{\gamma}, 
{\bf t}',
\bar {\bf t}'\! +\! [z^{-1}]_{\gamma})}
\\ \\
+\, 
\displaystyle{
\sum_{\gamma =1}^N \epsilon_{\gamma}(\bar {\bf n})
\epsilon_{\gamma}(\bar {\bf n}')
\oint_{C_{\infty}}\frac{dz}{z^2} 
z^{\bar n_{\gamma}'-\bar n_{\gamma}}
e^{-\xi (\bar {\bf t}_{\gamma}-\bar {\bf t}_{\gamma}', z)}}
\\ \\
\phantom{aaaaaaaaaaa}\displaystyle{\times \, 
\tau ({\bf n}, \bar {\bf n}\! +\! {\bf e}_{\gamma}, 
{\bf t},
\bar {\bf t}\! +\! [z^{-1}]_{\gamma})
\tau ({\bf n}', \bar {\bf n}'\! -\! {\bf e}_{\gamma}, 
{\bf t}',
\bar {\bf t}'\! -\! [z^{-1}]_{\gamma})}.
\end{array}
$$
It is valid for all ${\bf t}, \bar {\bf t}$, ${\bf t}', \bar {\bf t}'$
and ${\bf n}, \bar {\bf n}$,  ${\bf n}', \bar {\bf n}'$
such that 
\beq\label{par1}
|{\bf n}|-|\bar {\bf n}|\in 2\ZZ +1, \qquad
|{\bf n}'|-|\bar {\bf n}'|\in 2\ZZ +1,
\eeq
otherwise the parity condition is not satisfied 
(because $|{\bf n}\pm {\bf e}_{\gamma}|$ and $|\bar {\bf n}|$
as well as  $|{\bf n'}\pm {\bf e}_{\gamma}|$ and $|\bar {\bf n'}|$
are then of different parities).
At $N=1$ equation (\ref{main}) coincides (after a linear change 
of the discrete variables) with
the equation for the tau-function of the one-component Pfaff-Toda
hierarchy obtained by Takasaki 
in \cite{Takasaki09}. 
Note that equation (\ref{main}) has the following obvious 
symmetries:
\beq\label{sym1b}
({\bf n}, {\bf t}, \bar {\bf n}, \bar {\bf t}),
({\bf n}', {\bf t}', \bar {\bf n}', \bar {\bf t}')
\longleftrightarrow
({\bf n}', {\bf t}', \bar {\bf n}', \bar {\bf t}'),
({\bf n}, {\bf t}, \bar {\bf n}, \bar {\bf t}),
\eeq
and
\beq\label{sym2}
({\bf n}, {\bf t}, {\bf n}', {\bf t}'),
(\bar {\bf n},\bar {\bf t}, \bar {\bf n}', \bar {\bf t}')
\longleftrightarrow
(\bar {\bf n},\bar {\bf t}, \bar {\bf n}', \bar {\bf t}'),
({\bf n}, {\bf t}, {\bf n}', {\bf t}').
\eeq 

For any $N\geq 1$, after setting 
$\bar {\bf n}'=\bar {\bf n}$, $\bar {\bf t}'=\bar {\bf t}$
in (\ref{main}) the bar-variables do not
participate in the equation entering as parameters. 
The right-hand side of (\ref{main}) vanishes identically
and the rest becomes the integral bilinear
equation for the tau-function of the $N$-component 
DKP hierarchy (\ref{mDKP-bil-rel}). In this hierarchy,
the independent variables are ${\bf n}$ and ${\bf t}$, and
the tau-function will be denoted as
$\tau^{\rm DKP}({\bf n}, {\bf t})$. So, DKP can be regarded as
a subhierarchy of the Pfaff-Toda.
On the other hand, the $2N$-component DKP
is equivalent to the $N$-component Pfaff-Toda, in the way 
which is similar to the relation between the $2N$-component KP
and $N$-component Toda  lattice described in Section 
\ref{section:multi-Toda}.

Let us present here the main points of this identification,
following \cite{SZ25b}.
For the case of $M$-component DKP hierarchy the integral
bilinear equation reads:
\beq\label{main1}
\begin{array}{l}
\displaystyle{
\sum_{\gamma =1}^M\epsilon_{\gamma}({\bf n}\! -\! {\bf n}')
\oint_{C_{\infty}}\! \frac{dz}{z^2} 
z^{n_{\gamma}-n_{\gamma}'}
e^{\xi ({\bf t}_{\gamma}-{\bf t}_{\gamma}', z)}}
\\ \\
\phantom{aaaaaaaaaaaaa}
\displaystyle{
\times \tau^{\rm DKP} ({\bf n}\! -\! {\bf e}_{\gamma}, 
{\bf t}\! -\! [z^{-1}]_{\gamma})\tau^{\rm DKP} ({\bf n}'\! +\! {\bf e}_{\gamma}, 
{\bf t}'\! +\! [z^{-1}]_{\gamma})}
\\ \\
+\, 
\displaystyle{
\sum_{\gamma =1}^M \epsilon_{\gamma}({\bf n}\! -\! {\bf n}')
\oint_{C_{\infty}}\! \frac{dz}{z^2} 
z^{n_{\gamma}'-n_{\gamma}}
e^{-\xi ({\bf t}_{\gamma}-{\bf t}_{\gamma}', z)}}
\\ \\
\phantom{aaaaaaaaaaaaa}
\displaystyle{
\times \tau^{\rm DKP} ({\bf n}\! +\! {\bf e}_{\gamma}, 
{\bf t}\! +\! [z^{-1}]_{\gamma})\tau^{\rm DKP} ({\bf n}'\! -\! {\bf e}_{\gamma}, 
{\bf t}'\! -\! [z^{-1}]_{\gamma})}=0
\end{array}
\eeq
(see (\ref{mDKP-bil-rel})).

\begin{proposition}\cite{SZ25b}
Equation 
(\ref{main}) is equivalent to (\ref{main1}) at $M=2N$,
and the tau-functions $\tau^{\rm PT} $ 
and $\tau^{\rm DKP}$ of the Pfaff-Toda and DKP hierarchies
are related as
\beq\label{tautau1}
\tau^{\rm PT} ({\bf n}, \bar {\bf n}, {\bf t}, \bar {\bf t})=
(-1)^{\frac{1}{2}|\bar {\bf n}| (|\bar {\bf n}|-1)}
\tau^{\rm DKP}(\tilde {\bf n}, \tilde {\bf t}).
\eeq
The sets of variables $\tilde {\bf n}, \tilde {\bf t}$ 
are $\tilde {\bf n}=\{n_1, \ldots , n_N, \bar n_1, \ldots , \bar n_N\},
\;
 \tilde {\bf t}=\{{\bf t}_1, \ldots , {\bf t}_N, 
 \bar {\bf t}_1, \ldots , \bar {\bf t}_N\}.$
\end{proposition}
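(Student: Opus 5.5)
We follow the same route as the TL--KP identification in Section \ref{section:multi-Toda}. Start from the $2N$-component DKP equation (\ref{main1}) with $M=2N$. Rename the variables as in (\ref{re}): $n_{N+\mu}=\bar n_\mu$, ${\bf t}_{N+\mu}=\bar{\bf t}_\mu$ for $\mu=1,\ldots,N$. Then split each of the two sums over $\gamma$ into the ranges $\gamma\le N$ and $\gamma>N$. The terms with $\gamma\le N$ reproduce, up to sign factors, the left-hand side of (\ref{main}). The terms with $\gamma>N$ reproduce, up to sign factors, its right-hand side moved to the left. It is convenient to recall that $\epsilon_\gamma({\bf n}-{\bf n}')=\epsilon_\gamma({\bf n})\epsilon_\gamma({\bf n}')$, so (\ref{main1}) already has the product form of the signs used in (\ref{main}).

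We also need to match the domains of validity. On the DKP side we need $|\tilde{\bf n}|=|{\bf n}|+|\bar{\bf n}|$ odd, together with the same condition for the primed variables. On the Pfaff-Toda side the condition (\ref{par1}) is that $|{\bf n}|-|\bar{\bf n}|$ is odd. These two conditions coincide mod $2$. The parity constraint on the tau-functions themselves, $|{\bf n}|-|\bar{\bf n}|\in 2\ZZ$ versus $|\tilde{\bf n}|\in 2\ZZ$, matches for the same reason.

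The main step is the sign bookkeeping. For $\gamma\le N$ we have $\epsilon_\gamma(\tilde{\bf n})=(-1)^{n_{\gamma+1}+\ldots+n_N}(-1)^{|\bar{\bf n}|}$. So the first group carries an extra factor $(-1)^{|\bar{\bf n}|+|\bar{\bf n}'|}$ relative to (\ref{main}). For $\gamma=N+\mu$ we get $\epsilon_{N+\mu}(\tilde{\bf n})=\epsilon_\mu(\bar{\bf n})$, which matches exactly. The tau-functions inside the first group have $\bar{\bf n}$ unshifted. Those in the second group have $\bar{\bf n}$ shifted by $\mp{\bf e}_\mu$ and $\bar{\bf n}'$ shifted by $\pm{\bf e}_\mu$.

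Now substitute $\tau^{\rm DKP}=(-1)^{\frac12|\bar{\bf n}|(|\bar{\bf n}|-1)}\tau^{\rm PT}$ and compute the resulting sign $s(m)=(-1)^{m(m-1)/2}$ for each bilinear term.
\begin{itemize}
\item In the first group the product of signs is $s(|\bar{\bf n}|)s(|\bar{\bf n}'|)$, the same for all terms.
\item In the second group we get $s(|\bar{\bf n}|\mp1)s(|\bar{\bf n}'|\pm1)$.
\end{itemize}
We use $s(m+1)=s(m)(-1)^m$ and $s(m-1)=s(m)(-1)^{m-1}$, and we use that $|\bar{\bf n}|$ and $|\bar{\bf n}'|$ have fixed relative parity (both satisfy (\ref{par1}) with ${\bf n}$-parities tied). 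Then the second-group sign is $s(|\bar{\bf n}|)s(|\bar{\bf n}'|)\cdot(-1)^{|\bar{\bf n}|+|\bar{\bf n}'|+1}$, independent of the upper or lower choice.

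After dividing by the common factor, the first group therefore carries $(-1)^{|\bar{\bf n}|+|\bar{\bf n}'|}$ and the second carries $(-1)^{|\bar{\bf n}|+|\bar{\bf n}'|+1}$. The relative sign between the groups is then $-1$, which is precisely what moving the right-hand side of (\ref{main}) to the left requires. I expect this parity check, in particular confirming the independence from the $\pm$ choice, to be the only delicate point, and I would verify it case by case.

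Finally, every step is an invertible relabeling combined with multiplication by a nonvanishing sign. So solutions of (\ref{main1}) with $M=2N$ correspond bijectively to solutions of (\ref{main}) via (\ref{tautau1}), which gives the equivalence. As a consistency check, the fermionic representations (\ref{PT3}) and (\ref{taumdkp}) are related by the charge relabeling $\bar\psi\leftrightarrow$ ``component $N+\mu$'' together with a particle--hole conjugation on the bar components. This conjugation should produce exactly the same cocycle sign, but it is not needed for the proof.
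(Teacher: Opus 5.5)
Your proposal is correct and follows the paper's own route. Relabel $n_{N+\mu}=\bar n_\mu$, ${\bf t}_{N+\mu}=\bar{\bf t}_\mu$, split each $\gamma$-sum in (\ref{main1}) at $\gamma=N$, and absorb the resulting relative sign $(-1)^{|\bar{\bf n}|+|\bar{\bf n}'|+1}$ with the cocycle $(-1)^{\frac12|\bar{\bf n}|(|\bar{\bf n}|-1)}$, as in Section \ref{section:multi-Toda}; your explicit sign computation fills in what the paper leaves as ``repeat the arguments'', and the relative-parity input you invoke is not actually needed, since $s(m-1)s(m'+1)=s(m+1)s(m'-1)=s(m)s(m')(-1)^{m+m'+1}$ holds for all integers $m,m'$.
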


\noindent
To see this, we re-denote the variables in (\ref{main1}) in 
the same way as this was done in Section \ref{section:multi-Toda}, 
i.e. $n_{N+\mu}=\bar n_{\mu}$, 
${\bf t}_{N+\mu}=\bar {\bf t}_{\mu}$, where 
the index $\mu$ runs from $1$ to $N$. It remains to
divide each sum over $\gamma$ in (\ref{main1}) in two
(one from $1$ to $N$,
the other one from $N+1$ to $2N$) and repeat the arguments from
Section \ref{section:multi-Toda}.

\begin{remark}
The parity condition on the DKP side
is $|{\bf n}|+|\bar {\bf n}|\in 2\ZZ$,
while that on the Pfaff-Toda side is
$|{\bf n}|-|\bar {\bf n}|\in 2\ZZ$ which is the same.
\end{remark}

\subsubsection{The dispersionless limit}

The equivalence established in the previous subsection allows one
to obtain 
the dispersionless version of the $N$-component 
Pfaff-Toda hierarchy
by rearranging the equations for the $2N$-component dDKP.
The details are explained in Section
\ref{section:disp-Toda}. Here we present only the result.
Equation (\ref{E11}) for the $F$-function of the $2N$-component 
dDKP hierarchy, being rewritten in terms of the Pfaff-Toda
variables, have the form of the system
\beq\label{final1}
\left \{
\begin{array}{rcl}
E_{\beta \alpha}(a,b) e^{\nabla_{\alpha}(a)\nabla_{\beta}(b)F}
&=& \displaystyle{ \frac{\theta_1 (u_{\alpha}(a)- 
u_{\beta}(b))}{\theta_4 (u_{\alpha}(a)- 
u_{\beta}(b))},}
\\ && \\
e^{\nabla_{\alpha}(a)\bar \nabla_{\beta}(b)F}& =& \displaystyle{
\frac{\theta_1 (u_{\alpha}(a)+ 
\bar u_{\beta}(b))}{\theta_4 (u_{\alpha}(a)+ 
\bar u_{\beta}(b))},}
\\ && \\
E_{\beta \alpha}(a,b) e^{\bar \nabla_{\alpha}(a)\bar \nabla_{\beta}(b)F}
& =& \displaystyle{\frac{\theta_1 (\bar u_{\alpha}(a)- 
\bar u_{\beta}(b))}{\theta_4 (\bar u_{\alpha}(a)- 
\bar u_{\beta}(b))}},
\end{array} \right.
\eeq
where $F=F_0$ is the $F$-function for the 
dispersionless Pfaff-Toda hierarchy.
The indices $\alpha , \beta$ run from $1$ to $N$.
Note that equations (\ref{final1}) look like an elliptic 
deformation of (\ref{final}): the trigonometric function
$\sin u$ is replaced by its elliptic counterpart
$\theta_1(u)/\theta_4(u)$ which is (up to some technical
details) the ``elliptic sinus'' function.
In particular, at $N=1$ the system (\ref{final1}) has the form
\beq\label{final3}
\left \{
\begin{array}{rcl}
(a^{-1}-b^{-1})e^{\nabla(a)\nabla(b)F}
&=& \displaystyle{ \frac{\theta_1 (u(a)- 
u(b))}{\theta_4 (u(a)- 
u(b))},}
\\ && \\
e^{\nabla(a)\bar \nabla(b)F}& =& \displaystyle{
\frac{\theta_1 (u(a)+ 
\bar u(b))}{\theta_4 (u(a)+ 
\bar u(b))},}
\\ && \\
(a^{-1}-b^{-1})e^{\bar \nabla(a)\bar \nabla (b)F}
& =& \displaystyle{\frac{\theta_1 (\bar u(a)- 
\bar u(b))}{\theta_4 (\bar u(a)- 
\bar u(b))}},
\end{array} \right.
\eeq
which coincides with the result first obtained in \cite{AZ15}.

\subsection{Multi-component large BKP and its dispersionless versions}

The set of independent variables is the same as the one
for the $N$-component KP hierarchy: 
$$
{\bf t}=\{{\bf t}_1 , \ldots , {\bf t}_N \}, \qquad
{\bf t}_{\alpha}=\{t_{\alpha , 1}, t_{\alpha , 2}, t_{\alpha , 3}, 
\ldots \}
$$
and $N$ discrete variables ${\bf n}=\{n_1, \ldots , n_N\}$, 
$n_{\alpha}\in \ZZ$. 
The only difference is that their sum $|{\bf n}|$ can be arbitrary.

\subsubsection{Multi-component large BKP}

The bilinear equation for the tau-function $\tau ({\bf n}, {\bf t})$ 
has the form
\beq\label{L1a}
\begin{array}{ll}
&\displaystyle{ 
\frac{1}{2\pi i} \sum_{\gamma =1}^N \epsilon_{\gamma}
({\bf n}\! - \! {\bf n}')
\oint_{C_{\infty}} \! \frac{dz}{z^2} 
z^{n_{\gamma}-n'_{\gamma}} e^{\xi ({\bf t}_{\gamma}-{\bf t}'_{\gamma}, z)}
\tau \Bigl ( {\bf n}\! -\! {\bf e}_{\gamma}, 
{\bf t}\! -\! [z^{-1}]_{\gamma}\Bigr )
\tau \Bigl ( {\bf n}' \! +\! {\bf e}_{\gamma}, 
{\bf t}' \! +\! [z^{-1}]_{\gamma}\Bigr )}
\\ &\\
+& \displaystyle{
\frac{1}{2\pi i} 
\sum_{\gamma =1}^N \epsilon_{\gamma}({\bf n}\! -\! {\bf n}')
\oint_{C_{\infty}} \! \frac{dz}{z^2} 
z^{n'_{\gamma}-n_{\gamma}} e^{-\xi ({\bf t}_{\gamma}-{\bf t}'_{\gamma}, z)}
\tau \Bigl ({\bf n}\! +\! {\bf e}_{\gamma} , 
{\bf t}\! +\! [z^{-1}]_{\gamma}\Bigr )
\tau \Bigl ( {\bf n}'\! -\! {\bf e}_{\gamma},
 {\bf t}'\! -\! [z^{-1}]_{\gamma}\Bigr )}
\\ &\\
=& \frac{1}{2} \Bigl (1-(-1)^{|{\bf n}-{\bf n}'|}\Bigr )
\tau ({\bf n}, {\bf t})\tau ({\bf n}', {\bf t}'),
\end{array}
\eeq
which is valid for all ${\bf n}, {\bf n}'$, ${\bf t}, {\bf t}'$.
If $|{\bf n}|$ and $|{\bf n}'|$ are of the same parity 
(both even or both odd),
the right-hand side vanishes and (\ref{L1a}) becomes the 
integral bilinear equation for the $N$-component 
DKP hierarchy (\ref{mDKP-bil-rel}). 
Similarly to the one-component case,
the full set of equations can
be divided into three groups: 
the ``even'' sector consisting of equations that connect
tau-functions with even $|{\bf n}|$ (the ``even'' copy of DKP), 
the ``odd'' sector consisting of equations that connect
tau-functions with odd $|{\bf n}|$ (the ``odd'' copy of DKP) and
equations that ``intertwine'' the even and odd sectors
(they connect tau-functions $\tau ({\bf n}, {\bf t})$,
$\tau ({\bf m}, {\bf t})$ with $|{\bf n}-{\bf m}|\in 2\ZZ +1$).
Note the obvious symmetry of (\ref{L1a}):
\beq\label{sym1a}
({\bf n}, {\bf t}) \longleftrightarrow ({\bf n}', {\bf t}').
\eeq 

From the proof of Proposition
\ref{proposition:simplest11} it follows that
the function (\ref{simplest1})
is simultaneously
a solution (the simplest one) to (\ref{L1a}). 
Moreover, the following statement establishes a more general
connection between the
(multi-component) DKP and large BKP hierarchies.

\begin{proposition}
Fix some natural number $M$ such that $1 \leq M \leq N-1$ and
divide the set of independent variables 
$\{ {\bf n}, {\bf t} \}$ into two subsets:
$
\{ {\bf n}, {\bf t} \}=
\{ {\bf n}_{\rm I}, {\bf t}_{\rm I} \}\cup
\{ {\bf n}_{\rm II}, {\bf t}_{\rm II} \}
$,
where
$$
\begin{array}{l}
{\bf n}_{\rm I}=\{n_1, \ldots , n_M\}, \quad
{\bf n}_{\rm II}=\{n_{M+1}, \ldots , n_N\}
\\ \\
{\bf t}_{\rm I}=\{{\bf t}_1, \ldots , {\bf t}_M\},\;\; \quad
{\bf t}_{\rm II}=\{{\bf t}_{M+1}, \ldots , {\bf t}_N\}.
\end{array}
$$
If $\tau ({\bf n}, {\bf t})$ of the form
\beq\label{tautau111}
\tau ({\bf n}, {\bf t})=\tau ({\bf n}_{\rm I}, {\bf t}_{\rm I})
\exp \left ( \frac{1}{2}\sum_{\gamma =M+1}^N 
\sum_{k\geq 1} kt_{\gamma , k}^2 \right )
\eeq
solves the
$N$-component DKP hierarchy (\ref{mDKP-bil-rel}), then
the function $\tau ({\bf n}_{\rm I}, {\bf t}_{\rm I})$ 
satisfies the bilinear equation (\ref{L1a}) of the 
$M$-component large BKP hierarchy.
\end{proposition}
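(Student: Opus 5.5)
Substitute the factorized ansatz (\ref{tautau111}) into the $N$-component DKP bilinear relation (\ref{mDKP-bil-rel}), and specialize the primed and unprimed variables so that the second group of components reproduces the pole structure found in the proof of Proposition \ref{proposition:simplest11}. Concretely, take ${\bf t}'_{\rm II}={\bf t}_{\rm II}$ and choose ${\bf n}_{\rm II}, {\bf n}'_{\rm II}$ so that the total parity condition $|{\bf n}|,|{\bf n}'|\in 2\ZZ+1$ of (\ref{mDKP-bil-rel}) holds, while $|{\bf n}_{\rm I}|-|{\bf n}'_{\rm I}|$ is left arbitrary. The ${\bf n}_{\rm II}$-dependence of $\tau$ is trivial, so these discrete variables enter only through the exponents $z^{\pm(n_\gamma-n'_\gamma)}$ and the sign factors $\epsilon_\gamma$.

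Split each sum over $\gamma$ into $\gamma\le M$ and $\gamma>M$. For $\gamma\le M$ the Gaussian factor of the second group cancels against itself, and we obtain exactly the left-hand side of (\ref{L1a}) for the $M$-component function, up to sign factors. For $\gamma>M$, the first-group factor is unshifted and equals $\tau({\bf n}_{\rm I},{\bf t}_{\rm I})\tau({\bf n}'_{\rm I},{\bf t}'_{\rm I})$. The Gaussian factor, with ${\bf t}_\gamma={\bf t}'_\gamma$, gives after the shift $\pm[z^{-1}]_\gamma$ the factor $e^{\mp\log(1-z^{-2})}$, i.e. the kernel $dz/(z^2-1)$. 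The computation in the proof of Proposition \ref{proposition:simplest11} then applies verbatim: the change $z\to z^{-1}$ combines the two contour integrals into residues at $z=\pm1$, giving $\pi i\,\epsilon_\gamma\epsilon'_\gamma(1-(-1)^{n_\gamma-n'_\gamma})$. Summing over $\gamma=M+1,\dots,N$ telescopes to $\pi i\bigl(\epsilon_{M}({\bf n})\epsilon_M({\bf n}')-\epsilon_N({\bf n})\epsilon_N({\bf n}')\bigr)$. Choosing ${\bf n}_{\rm II}-{\bf n}'_{\rm II}$ to compensate the parity of $|{\bf n}_{\rm I}-{\bf n}'_{\rm I}|$, this equals (after dividing by $2\pi i$) $\pm\frac12\bigl(1-(-1)^{|{\bf n}_{\rm I}-{\bf n}'_{\rm I}|}\bigr)$ times the product of the $\tau_{\rm I}$'s. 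Moving it to the right-hand side produces exactly the right-hand side of (\ref{L1a}).

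The main obstacle is the bookkeeping of the sign factors $\epsilon_\gamma({\bf n}-{\bf n}')$. They are defined through $n_{\gamma+1}+\dots+n_N$, so for $\gamma\le M$ they pick up an extra global sign $(-1)^{|{\bf n}_{\rm II}-{\bf n}'_{\rm II}|}$ relative to the $M$-component ones. The $\gamma>M$ contribution must likewise be matched in sign with the $\gamma\le M$ part. I would first try the simplest choice ${\bf n}'_{\rm II}={\bf n}_{\rm II}-\delta\,{\bf e}_N$, with $\delta\in\{0,1\}$ equal to the parity of $|{\bf n}_{\rm I}-{\bf n}'_{\rm I}|$, and check that the overall signs agree with (\ref{L1a}). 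In the even case the residue term vanishes and the statement reduces to the $M$-component DKP relation. In the odd case the single term $\gamma=N$ contributes, and I would verify that its sign coincides with that of the $\gamma\le M$ terms after extracting the common factor $(-1)^{\delta}$.
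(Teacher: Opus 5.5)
Your argument is the paper's: substitute the ansatz into (\ref{mDKP-bil-rel}), split the $\gamma$-sum, and evaluate the $\gamma>M$ terms by the residue computation of Proposition~\ref{proposition:simplest11}. Then use $|{\bf n}|-|{\bf n}'|\in 2\ZZ$ to convert the resulting parity into that of $|{\bf n}_{\rm I}|-|{\bf n}'_{\rm I}|$. The paper does this for $M=N-1$ with ${\bf t}_N,{\bf t}'_N,n_N,n'_N$ left general. Your specialization ${\bf t}'_{\rm II}={\bf t}_{\rm II}$, ${\bf n}'_{\rm II}={\bf n}_{\rm II}-\delta{\bf e}_N$ is legitimate, since (\ref{L1a}) does not involve the second group.

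The sign is exactly what must match, so two slips need correcting:
\begin{itemize}
\item Both shifted Gaussians produce the same factor $(1-z^{-2})^{-1}$, not $e^{\mp\log(1-z^{-2})}$.
\item The telescoped sum is $\pi i\bigl(1-\epsilon_M({\bf n})\epsilon_M({\bf n}')\bigr)$, not its negative. Taken literally, your sign would put $-\frac12\bigl(1-(-1)^{|{\bf n}_{\rm I}-{\bf n}'_{\rm I}|}\bigr)$ on the right-hand side.
\end{itemize}

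With the corrected sign, your choice gives $(-1)^{\delta}\,2\pi i\,\mathcal{L}+\pi i\bigl(1-(-1)^{\delta}\bigr)\tau({\bf n}_{\rm I},{\bf t}_{\rm I})\tau({\bf n}'_{\rm I},{\bf t}'_{\rm I})=0$. Here $\mathcal{L}$ is the left-hand side of (\ref{L1a}), and $(-1)^{\delta}$ is the extra sign carried by the $\gamma\le M$ terms. Multiplying by $(-1)^{\delta}$ yields exactly (\ref{L1a}).
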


\noindent
{\it Proof.}
For simplicity, we present the proof for $M=N-1$ (the other cases
can be considered in a similar way). Substituting the 
tau-function of the form (\ref{tautau111}) into 
(\ref{mDKP-bil-rel}), we have:
\beq\label{tautau12}
\begin{array}{l}
\displaystyle{
\sum_{\gamma =1}^{N-1}\epsilon_{\gamma}({\bf n}_{\rm I})
\epsilon_{\gamma}({\bf n}'_{\rm I})
\oint_{C_{\infty}}\! \frac{dz}{z^2}
\left ( 
z^{n_{\gamma}-n_{\gamma}'}
e^{\xi ({\bf t}_{\gamma}-{\bf t}_{\gamma}', z)}\vphantom{\frac{A}{B}}
\right. }
\\ \\
\displaystyle{ \phantom{aaaaaaaaaaaaaaaaaaaaaa} \times \,
\tau ({\bf n}_{\rm I}\! -\! {\bf e}_{\gamma}, 
{\bf t}_{\rm I}\! -\! [z^{-1}]_{\gamma})
\tau ({\bf n}_{\rm I}'\! +\! {\bf e}_{\gamma}, 
{\bf t}_{\rm I}'\! +\! [z^{-1}]_{\gamma})  }
\\ \\
\phantom{aaaaaaa} 
\displaystyle{+\,  \left.  \vphantom{\int^a_b}
z^{n_{\gamma}'-n_{\gamma}}
e^{-\xi ({\bf t}_{\gamma}-{\bf t}_{\gamma}', z)}
\tau ({\bf n}_{\rm I}\! +\! {\bf e}_{\gamma}, 
{\bf t}_{\rm I}\! +\! [z^{-1}]_{\gamma})
\tau ({\bf n}_{\rm I}'\! -\! {\bf e}_{\gamma}, 
{\bf t}_{\rm I}'\! -\! [z^{-1}]_{\gamma}\right )}
\\ \\
+\, \displaystyle{(-1)^{n_N +n'_N}
\oint_{C_{\infty}}\! \frac{dz}{z^2-1}\left (
\vphantom{\frac{A}{B}}
z^{n_N -n'_N}e^{\xi ({\bf t}_N -{\bf t}_N', z)-
\xi ({\bf t}_N -{\bf t}_N', z^{-1})}\right. }
\\ \\
\displaystyle{ \phantom{aaaaaaaaaaaaa}+\, \left.
z^{n'_N -n_N}e^{-\xi ({\bf t}_N -{\bf t}_N', z)+
\xi ({\bf t}_N -{\bf t}_N', z^{-1})}\right )
\tau ({\bf n}_{\rm I}, {\bf t}_{\rm I})
\tau ({\bf n}_{\rm I}',
{\bf t}_{\rm I}')=0
}
\end{array}
\eeq
(the sign factor in front of the last term appears because
$\epsilon_{\gamma}({\bf n})=(-1)^{n_N} 
\epsilon_{\gamma}({\bf n}_{\rm I})$).
Calculating the integral in the last term as in the
proof of Proposition
\ref{proposition:simplest11}, we find that this term is
equal to
$$
-\, \pi i \Bigl (1- (-1)^{n_N -n'_N}\Bigr )
\tau ({\bf n}_{\rm I}, {\bf t}_{\rm I})
\tau ({\bf n}_{\rm I}',
{\bf t}_{\rm I}').
$$
It remains to recall that for the DKP hierarchy
$$|{\bf n}|-|{\bf n'}|=
|{\bf n}_{\rm I}|-|{\bf n'}_{\rm I}|+ n_N -n'_N \in 2\ZZ ,
$$
hence the last term can be written in the form
$$
-\, \pi i \Bigl (1- (-1)^{|{\bf n}_{\rm I}|-|{\bf n'}_{\rm I}|}\Bigr )
\tau ({\bf n}_{\rm I}, {\bf t}_{\rm I})
\tau ({\bf n}_{\rm I}',
{\bf t}_{\rm I}'),
$$
and so (\ref{tautau12}) coincides with the bilinear equation
of the $(N-1)$-component large BKP hierarchy
for the function $\tau ({\bf n}_{\rm I}, {\bf t}_{\rm I})$.
\square

Similar to how it was done in the single-component case, 
it is convenient to introduce different notation for the
tau-functions in the even and odd sectors:
$$
\mbox{$\tau ({\bf n}, {\bf t})$ for even $|{\bf n}|$ and
$\sigma ({\bf n}, {\bf t})$ for odd $|{\bf n}|$}.
$$
Then equation (\ref{L1a}) for $|{\bf n}|-|{\bf n'}|\in 2\ZZ +1$
connecting the two sectors acquires the form
\beq\label{L1b}
\begin{array}{ll}
&\displaystyle{ 
\frac{1}{2\pi i} \sum_{\gamma =1}^N \epsilon_{\gamma}
({\bf n}\! - \! {\bf n}')
\oint_{C_{\infty}} \! \frac{dz}{z^2} 
z^{n_{\gamma}-n'_{\gamma}} e^{\xi ({\bf t}_{\gamma}-{\bf t}'_{\gamma}, z)}
\sigma \Bigl ( {\bf n}\! -\! {\bf e}_{\gamma}, 
{\bf t}\! -\! [z^{-1}]_{\gamma}\Bigr )
\tau \Bigl ( {\bf n}' \! +\! {\bf e}_{\gamma}, 
{\bf t}' \! +\! [z^{-1}]_{\gamma}\Bigr )}
\\ &\\
+& \displaystyle{
\frac{1}{2\pi i} 
\sum_{\gamma =1}^N \epsilon_{\gamma}({\bf n}\! -\! {\bf n}')
\oint_{C_{\infty}} \! \frac{dz}{z^2} 
z^{n'_{\gamma}-n_{\gamma}} e^{-\xi ({\bf t}_{\gamma}-{\bf t}'_{\gamma}, z)}
\sigma \Bigl ({\bf n}\! +\! {\bf e}_{\gamma} , 
{\bf t}\! +\! [z^{-1}]_{\gamma}\Bigr )
\tau \Bigl ( {\bf n}'\! -\! {\bf e}_{\gamma},
 {\bf t}'\! -\! [z^{-1}]_{\gamma}\Bigr )}
\\ &\\
=& \frac{1}{2} \Bigl (1-(-1)^{|{\bf n}-{\bf n}'|}\Bigr )
\tau ({\bf n}, {\bf t})\sigma ({\bf n}', {\bf t}'),
\end{array}
\eeq

As before, the substitution 
\beq\label{L2a}
\left \{\begin{array}{l}
\displaystyle{
{\bf n}-{\bf n}'=\sum_{i=1}^{P^+} {\bf e}_{\alpha_i} -
\sum_{k=1}^{P^-} {\bf e}_{\beta_k},}
\\ \\
\displaystyle{
{\bf t}-{\bf t}' =\sum_{i=1}^{P^+}[a_i^{-1}]_{\alpha_i}-
\sum_{k=1}^{P^-}[b_k^{-1}]_{\beta_k}}
\end{array}\right.
\eeq
allows one to apply residue calculus.
The case $P^+-P^-\in 2\ZZ$ corresponds to DKP and was already
considered. Here we are interested in the case
$P^+-P^-\in 2\ZZ +1$. The residue calculus 
yields the following general
Hirota-Miwa equation:
\beq\label{L3a}
\begin{array}{l}
\displaystyle{
\sum_{s=1}^{P^+}
\prod_{{\scriptsize \begin{array}{l}i=1\\ i\neq s \end{array}}}^{P^+}
E_{\alpha_i \alpha_s}^{-1}(a_s, a_i)
\prod_{k=1}^{P^-} E_{\beta_k \alpha_s}(a_s, b_k)
\tau \Bigl ({\bf n}+\sum_{i\neq s}^{P^+}{\bf e}_{\alpha_i}, 
 {\bf t}+\sum_{i\neq s}^{P^+}[a_i^{-1}]_{\alpha_i}\Bigr )}
 \\
 \displaystyle{\phantom{aaaaaaaaaaaaaaaaaaaaaaaaaaa}
\times \, \sigma \Bigl ({\bf n}+{\bf e}_{\alpha_s}+
\sum_{k=1}^{P^-}{\bf e}_{\beta_k}, {\bf t}+[a_s^{-1}]_{\alpha_s}
+\sum_{k=1}^{P^-} [b_k^{-1}]_{\beta_k}\Bigr )}
\end{array}
\eeq
$$
\begin{array}{l}
\displaystyle{
+\, \sum_{s=1}^{P^-}
\prod_{{\scriptsize \begin{array}{l}k=1\\ k\neq s \end{array}}}^{P^-}
E_{\beta_k \beta_s}^{-1}(b_s, b_k)
\prod_{i=1}^{P^+}E_{\alpha_i \beta_s}(b_s, a_i)
\tau \Bigl ( {\bf n}+{\bf e}_{\beta_s}+
\sum_{i=1}^{P^+}{\bf e}_{\alpha_i}, {\bf t}+[b_s^{-1}]_{\alpha_s}+
\sum_{i=1}^{P^+}[a_i^{-1}]_{\alpha_i}, \Bigr )}
\\
\displaystyle{\phantom{aaaaaaaaaaaaaaaaaaaaaaaaaaa}
\times \, \sigma \Bigl (\sum_{k\neq s}^{P^-}{\bf e}_{\beta_k},  {\bf t}+
\sum_{k\neq s}^{P^-} [b_k^{-1}]_{\beta_k}
\Bigr )}
\\ \\
\phantom{aaaaaaaaaaaaa}
=\displaystyle{\sigma \Bigl ({\bf n}+
\sum_{i=1}^{P^+}{\bf e}_{\alpha_i}, 
{\bf t}+\sum_{i=1}^{P^+}[a_i^{-1}]_{\alpha_i}
\Bigr )\, \tau \Bigl ({\bf n}+
\sum_{k=1}^{P^-}{\bf e}_{\beta_k}, {\bf t}+
\sum_{k=1}^{P^-}[b_k^{-1}]_{\beta_k} \Bigr ).}
\end{array}
$$
This equation contains $P^+ + P^- +1$ bilinear terms.

The simplest
nontrivial case of (\ref{L3a}) is $P^+ + P^-=3$ 
leading to 4-term relations. 
Taking into account the symmetry (\ref{sym1a}), we should consider
two cases: $(P^+ , P^-)=(3,0)$ and $(P^+ , P^-)=(2,1)$.
In the first case, 
\beq\label{L4a}
\left \{\begin{array}{l}
\displaystyle{
{\bf n}-{\bf n}'=
{\bf e}_{\alpha_1}+{\bf e}_{\alpha_2}+{\bf e}_{\alpha_3},}
\\ \\
\displaystyle{
{\bf t}-{\bf t}' =[a_1^{-1}]_{\alpha_1}+[a_2^{-1}]_{\alpha_2}
+[a_3^{-1}]_{\alpha_3}},
\end{array}\right.  
\eeq
the corresponding 3-point 4-term equation is
\beq\label{(3,0)m}
\begin{array}{l}
\phantom{a}\,
E_{\alpha_2 \alpha_1}^{-1}(a_1, a_2)
E_{\alpha_3 \alpha_1}^{-1}(a_1, a_3)
\tau \Bigl ({\bf n}\! +\! {\bf e}_{\alpha_2}\! +\! {\bf e}_{\alpha_3},
{\bf t}\! +\! [a_2^{-1}]_{\alpha_2}\! +\! 
[a_3^{-1}]_{\alpha_3}\Bigr )\tau 
\Bigl ({\bf n}+{\bf e}_{\alpha_1}, {\bf t} +[a_1^{-1}]_{\alpha_1}\Bigr )
\\ \\
+\, 
E_{\alpha_1 \alpha_2}^{-1}(a_2, a_1)
E_{\alpha_3 \alpha_2}^{-1}(a_2, a_3)
\tau \Bigl ({\bf n}\! +\! {\bf e}_{\alpha_1}\! +\! {\bf e}_{\alpha_3},
{\bf t}\! +\! [a_1^{-1}]_{\alpha_1} \! +\! 
[a_3^{-1}]_{\alpha_3}\Bigr )\tau 
\Bigl ({\bf n}+{\bf e}_{\alpha_2}, {\bf t} +[a_2^{-1}]_{\alpha_2}\Bigr )
\\ \\
+\, 
E_{\alpha_1 \alpha_3}^{-1}(a_3, a_1)
E_{\alpha_2 \alpha_3}^{-1}(a_3, a_2)
\tau \Bigl ({\bf n}\! +\! {\bf e}_{\alpha_1}\! +\! {\bf e}_{\alpha_2},
{\bf t}\! +\! [a_1^{-1}]_{\alpha_1}\! +\! 
[a_2^{-1}]_{\alpha_2}\Bigr )\tau 
\Bigl ({\bf n}+{\bf e}_{\alpha_3}, {\bf t} +[a_3^{-1}]_{\alpha_3}\Bigr )
\\ \\
=\, \tau \Bigl ({\bf n}\! +\! {\bf e}_{\alpha_1}\! +\! {\bf e}_{\alpha_2}
\! +\! {\bf e}_{\alpha_3}, {\bf t}\! +\! [a_1^{-1}]_{\alpha_1}\! +\! 
[a_2^{-1}]_{\alpha_2} \!+ \! [a_3^{-1}]_{\alpha_3}\Bigr )
\tau (n ,{\bf t}),
\end{array}
\eeq
where we have returned to the original notation $\sigma ({\bf n},
{\bf t})=\tau ({\bf n}, {\bf t})$ for odd $|{\bf n}|$.
Like in the one-component case, the second choice $(P^+, P^-)=(2,1)$
leads to the same equation.
Denoting 
$$
\begin{array}{l}
\tau^{[a_i]}=\tau \Bigl ({\bf n}+{\bf e}_{\alpha_i}, {\bf t}+
[a_i^{-1}]\Bigr ),
\\ \\
\tau^{[a_ia_j]}=\tau \Bigl ({\bf n}+{\bf e}_{\alpha_i} 
+{\bf e}_{\alpha_j}, {\bf t}+
[a_i^{-1}]+[a_j^{-1}]\Bigr ),
\end{array}
$$
we can write equation (\ref{(3,0)m}) in a more explicit form:
\beq\label{(3,0)m1}
\begin{array}{l}
\epsilon_{\alpha_2 \alpha_1}(a_1^{-1}-
a_2^{-1})^{\delta_{\alpha_1 \alpha_2}}
\tau^{[a_1a_2]} \tau^{[a_3]}+
\epsilon_{\alpha_3 \alpha_2}(a_2^{-1} -
a_3^{-1})^{\delta_{\alpha_2 \alpha_3}}
\tau^{[a_2a_3]} \tau^{[a_1]}
\\ \\
\phantom{aaaaaaaaaaaaaaaaaaaaaaaaaaaaaaaaaaa}+
\epsilon_{\alpha_1 \alpha_3}(a_3^{-1} -
a_1^{-1})^{\delta_{\alpha_3 \alpha_1}}
\tau^{[a_3a_1]} \tau^{[a_2]}
\\ \\
+ \,\, \epsilon_{\alpha_2 \alpha_1}\epsilon_{\alpha_3 \alpha_2}
\epsilon_{\alpha_1 \alpha_3}
(a_1^{-1}-a_2^{-1})^{\delta_{\alpha_1 \alpha_2}}
(a_2^{-1} -a_3^{-1})^{\delta_{\alpha_2 \alpha_3}}
(a_3^{-1} -a_1^{-1})^{\delta_{\alpha_3 \alpha_1}}
\tau^{[a_1a_2a_3]} \tau =0.
\end{array}
\eeq

\subsubsection{The dispersionless limit: a general form}

As before, the first step is to rescale all the times dividing 
them by a small parameter $\hbar$.
Like in the one-component case, we should
assume that, generally speaking, the functions $\tau$ and $\sigma$
may behave, as $\hbar \to 0$, in different ways. 
To take this possibility into account, we set
\beq\label{FG1}
\tau \Bigl (\hbar^{-1}{\bf T}\Bigr )=
\exp \left ( \frac{1}{\hbar^2}\,
F({\bf T}, \hbar )\right ), \quad
\sigma \Bigl (\hbar^{-1}{\bf T}\Bigr )=
\exp \left ( \frac{1}{\hbar^2}\,
G({\bf T}; \hbar )\right ),
\eeq
where by ${\bf T}$ we understand the full set of times (including 
discrete and continuous ones),
and assume that the functions $F$, $G$ have $\hbar$-expansions 
of the form 
$$
\begin{array}{l}
\displaystyle{
F({\bf T}; \hbar )=F_0({\bf T})+
\sum_{k\geq 1}F_k({\bf T})\hbar^k,} 
\\ \\
\displaystyle{
G({\bf T}; \hbar )=G_0({\bf T})+
\sum_{k\geq 1}G_k({\bf T})\hbar^k}.
\end{array}
$$

The dispersionless limit can be performed in the same way
as in the one-component case. Namely, 
expanding the $\hbar$-version of equation 
(\ref{L3a}) in powers of $\hbar$ as $\hbar \to 0$, 
one can see that the limit exists only if the leading terms 
of the expansions for $F$ and $G$ coincide,
i.e., $G_0=F_0$. However, in general the functions $F$ and $G$ may
differ in the next order. Taking this into account, we set
\beq\label{FGf1}
G_1({\bf T})=F_1({\bf T})-f({\bf T}).
\eeq
Then the $\hbar \to 0$ limit of (\ref{L3a}) reads
\beq\label{dL1c}
\begin{array}{l}
\displaystyle{
\sum_{s=1}^{P^+}
\prod_{{\scriptsize \begin{array}{l}i=1\\ i\neq s \end{array}}}^{P^+}
E_{\alpha_i , \alpha_s}^{-1}(a_s, a_i)
\prod_{k=1}^{P^-} E_{\beta_k , \alpha_s}(a_s, b_k)}
\\ \\
\displaystyle{\phantom{aaaaaaaaaaaa}
\times \, \exp \!
\left (\nabla_{\alpha_s} (a_s) 
\Bigl (\sum_{k=1}^{P^-}\nabla_{\beta_k} (b_k)
-  \sum_{i\neq s}^{P^+}\nabla_{\alpha_i} (a_i)\Bigr )
F_0 -\nabla_{\alpha_s} (a_s)f\Bigr )\right )
}
\\ \\
\displaystyle{
+\,\, \sum_{s=1}^{P^-}
\prod_{{\scriptsize \begin{array}{l}k=1\\ k\neq s \end{array}}}^{P^-}
E_{\beta_k , \beta_s}^{-1}(b_s, b_k)
\prod_{i=1}^{P^+} E_{\alpha_i , \beta_s}(b_s, a_i)}
\\ \\
\displaystyle{\phantom{aaaaaaaaaaaa}
\times \, \exp \! \left (\nabla_{\beta_s} (b_s) \Bigl 
(\sum_{i=1}^{P^+}\nabla_{\alpha_i} (a_i)
-  \sum_{k\neq s}^{P^-}\nabla_{\beta_k} (b_k)\Bigr )F_0 +
\nabla_{\beta_s} (b_s)f\Bigr )\right )
}
\\ \\
\phantom{aaaaa}\displaystyle{
=\, \exp \! \left (\sum_{k=1}^{P^-}\nabla_{\beta_k} (b_k)f-
\sum_{i=1}^{P^+}\nabla_{\alpha_i} (a_i)f \right ).
}
\end{array}
\eeq
It is important to note that at least two essentially
different dispersionless versions of the hierarchy exist
depending on whether the function $f$ is identically zero
or not.
In the next two subsections they are considered separately.

\subsubsection{The dispersionless limit: version I}

We begin with 
the simpler possibility to put $f=0$ (or $f={\rm const}$).
In this case equation 
(\ref{dL1c}) simplifies:
\beq\label{dL3a}
\begin{array}{l}
\displaystyle{
\sum_{s=1}^{P^+}\left (
\prod_{i\neq s}^{P^+} E_{\alpha_i \alpha_s}^{-1}(a_s, a_i)
e^{-\nabla_{\alpha_s}(a_s)\nabla_{\alpha_i}(a_i)F} \right )
\left (
\prod_{k=1}^{P^-} E_{\beta_k \alpha_s}(a_s, b_k)
e^{\nabla_{\alpha_s}(a_s)\nabla_{\beta_k}(b_k)F} \right )}
\\ \\
\displaystyle{
+\,\, \sum_{s=1}^{P^-}\left (
\prod_{k\neq s}^{P^-} E_{\beta_k \beta_s}^{-1}(b_s, b_k)
e^{-\nabla_{\beta_s}(b_s)\nabla_{\beta_k}(b_k)F} \right )
\left (
\prod_{i=1}^{P^+} E_{\alpha_i \beta_s}(b_s, a_i)
e^{\nabla_{\beta_s}(b_s)\nabla_{\alpha_i}(a_i)F} \right )=1}.
\end{array}
\eeq
Recall that $P^++P^-$ is assumed to be odd here.
In particular, for $(P^+, P^-)=(3,0)$ we obtain the dispersionless
analog of equation (\ref{(3,0)m1}), which at $a_3=\infty$ acquires
the form
\beq\label{dmL1}
\begin{array}{c}
\epsilon_{\gamma \beta}b^{-\delta_{\gamma \beta}}
e^{\nabla_{\beta}(b)\p_{\gamma}F}
+\epsilon_{\alpha \gamma}(-a)^{-\delta_{\gamma \alpha}}
e^{\nabla_{\alpha}(a)\p_{\gamma}F}+
\epsilon_{\beta \alpha}(a^{-1}-b^{-1})^{\delta_{\alpha \beta}}
e^{\nabla_{\alpha}(a)\nabla_{\beta}(b)F}
\\ \\
+\, \epsilon_{\beta \alpha}\epsilon_{\alpha \gamma}
\epsilon_{\gamma \beta}(a^{-1}-b^{-1})^{\delta_{\alpha \beta}}
b^{-\delta_{\gamma \beta}}(-a)^{-\delta_{\gamma \alpha}}
e^{\nabla_{\alpha}(a)\nabla_{\beta}(b)F+\nabla_{\alpha}(a)\p_{\gamma}F
+\nabla_{\beta}(b)\p_{\gamma}F}\, =\, 0,
\end{array}
\eeq
where we have put $a_1=a$, $a_2=b$, $\alpha_1=\alpha$, 
$\alpha_2=\beta$, $\alpha_3=\gamma$.

\begin{theorem}
The equation (\ref{dL3a}) is equivalent to the equation
\beq\label{first2}
E_{\beta \alpha}(a,b)
e^{\nabla_{\alpha}(a)\nabla_{\beta}(b)F}=
\tanh \Bigl (u_{\alpha}(a)-u_{\beta}(b)\Bigr ),
\eeq
where
\beq\label{ualpha}
u_{\alpha}(z)=\eta_{\alpha}({\bf t}) +\sum_{k\geq 1}
c^{(\alpha )}_k({\bf t}) z^{-k}.
\eeq
\end{theorem}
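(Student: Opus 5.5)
Two things need to be shown: sufficiency, i.e. that the ansatz (\ref{first2}) solves every equation (\ref{dL3a}) together with the even-parity equations (\ref{gen-H-M}) of the DKP sectors, and necessity, i.e. that (\ref{dL3a}) forces the existence of functions $u_\alpha(z)$ of the form (\ref{ualpha}) satisfying (\ref{first2}).

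\emph{Sufficiency.} This follows the one-component proof of (\ref{L6a}). Write $g_{\alpha\beta}(a,b)=\tanh(u_{\alpha}(a)-u_\beta(b))$ and put $u_i=u_{\alpha_i}(a_i)$, $v_k=u_{\beta_k}(b_k)$, $w_i=\tanh u_i$, $\omega_k=\tanh v_k$. The addition formula
$$\tanh(x-y)=\frac{\tanh x-\tanh y}{1-\tanh x\tanh y}$$
turns the left-hand side of (\ref{dL3a}) into exactly the left-hand side of (\ref{L10a}), with $w_i$ and $\omega_k$ playing the roles of $w(a_i)$ and $w(b_k)$. That expression equals $1$ for $P^++P^-$ odd and $0$ for $P^++P^-$ even, by the residue computation for
$$g(w)=\frac{1}{1-w^2}\prod_{i=1}^{P^+}\frac{1-ww_i}{w-w_i}\prod_{k=1}^{P^-}\frac{w-\omega_k}{1-w\omega_k}.$$
The sign factors $\epsilon$ and the powers $(a^{-1}-b^{-1})^{\delta}$ are already absorbed into $g_{\alpha\beta}$ in both (\ref{dL3a}) and (\ref{first2}), and $g_{\alpha\beta}(a,b)=-g_{\beta\alpha}(b,a)$ is consistent with the oddness of $\tanh$. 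So the whole hierarchy, both the intertwining and the DKP equations, is satisfied identically.

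\emph{Necessity.} This is the substantive direction. Put $a_3=\infty$ in (\ref{dL3a}), which gives (\ref{dmL1}), and specialize the indices.
\begin{itemize}
\item For $\gamma=\alpha=\beta$, (\ref{dmL1}) is (\ref{L6}) for the component $\alpha$, i.e. $g_{\alpha\alpha}(a,b)=(w_\alpha(a)-w_\alpha(b))/(1-w_\alpha(a)w_\alpha(b))$.
\item Define $u_\alpha(z)-\eta_\alpha=\operatorname{artanh} w_\alpha(z)$. This has the form (\ref{ualpha}), since $w_\alpha(z)=O(z^{-1})$ and $\eta_\alpha$ is an arbitrary constant. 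The equation above then becomes (\ref{first2}) for $\alpha=\beta$.
\item For $\alpha\neq\beta$, take $\gamma=\alpha$ in (\ref{dmL1}). It expresses $g_{\alpha\beta}(a,b)$ through $w_\alpha(a)$ and $w_{\beta\alpha}(b)=\epsilon\,e^{\nabla_\beta(b)\p_\alpha F}$ in the same M\"obius form, $g_{\alpha\beta}(a,b)=(w_\alpha(a)-\tilde w_{\beta\alpha}(b))/(1-w_\alpha(a)\tilde w_{\beta\alpha}(b))$ up to signs. Taking $\gamma=\beta$ instead gives the analogous formula in terms of $w_{\alpha\beta}(a)$ and $w_\beta(b)$.
\item Equate the two expressions for $g_{\alpha\beta}(a,b)$. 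Since $\tanh$-M\"obius maps compose by adding arguments, the equality should separate variables: $\operatorname{artanh} w_{\alpha\beta}(z)-\operatorname{artanh} w_\alpha(z)$ is independent of $z$, and likewise with $\alpha$ and $\beta$ interchanged. Evaluating at $z=\infty$ fixes these constants through $R_{\alpha\beta}$, giving $w_{\alpha\beta}(z)=\epsilon_{\beta\alpha}\tanh(u_\alpha(z)-\eta_\beta)$ with $\tanh(\eta_\alpha-\eta_\beta)=\epsilon_{\beta\alpha}R_{\alpha\beta}$. This is (\ref{uni112}), the hyperbolic uniformization of the split curve (\ref{split3}).
\item Substituting back, $g_{\alpha\beta}(a,b)=\tanh(u_\alpha(a)-u_\beta(b))$.
\end{itemize}

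\emph{Main obstacle.} The difficult step is consistency of the constants $\eta_\alpha$ across all pairs. The relations $\eta_\alpha-\eta_\beta=\operatorname{artanh}(\epsilon_{\beta\alpha}R_{\alpha\beta})$ must be compatible for every triple $\alpha,\beta,\gamma$, since each $\eta_\alpha$ is defined only up to one common additive constant. To verify this, apply (\ref{dmL1}) with three distinct indices and let $a,b\to\infty$. This should produce the additivity $\eta_{\alpha\beta}+\eta_{\beta\gamma}=\eta_{\alpha\gamma}$ via the $\tanh$ addition formula. The bookkeeping of the sign factors $\epsilon$ is the delicate part; the branch of $\operatorname{artanh}$ only needs to be chosen consistently near $z=\infty$.
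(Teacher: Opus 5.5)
Your proposal is correct and follows essentially the paper's route: you specialize (\ref{dmL1}) to the index cases, parametrize the case $\alpha=\beta=\gamma$ by $\tanh$, feed this into (\ref{LIIIa}) and (\ref{LIVa}), and get sufficiency from the $\tanh$ addition formula, which turns (\ref{dL3a}) into the identity (\ref{L10a}). Your separation-of-variables derivation of $w_{\alpha\beta}$ (the $z$-independent quantity is ${\rm arctanh}(\epsilon_{\beta\alpha}w_{\alpha\beta}(z))-{\rm arctanh}\, w_\alpha(z)$) and your compatibility check on the $\eta_\alpha$ are steps the paper leaves implicit, and the compatibility check does close: for distinct $\alpha,\beta,\gamma$ the limit $a,b\to\infty$ of (\ref{dmL1}) gives $x+y+z+xyz=0$ with $x=\epsilon_{\beta\alpha}R_{\alpha\beta}$, $y=\epsilon_{\gamma\beta}R_{\beta\gamma}$, $z=\epsilon_{\alpha\gamma}R_{\alpha\gamma}$, i.e.\ $-z=(x+y)/(1+xy)$, which is exactly the $\tanh$ addition law.
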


\noindent
{\it Proof.}
We should consider particular cases of equation (\ref{dmL1}) 
corresponding to different non-equivalent choices of
$\alpha , \beta , \gamma$. They are:

\begin{itemize}
\item[I)] \; \underline{$\alpha =\beta =\gamma \equiv \alpha$}:
\beq\label{LI}
(a^{-1}-b^{-1})e^{\nabla_{\alpha}(a)\nabla_{\alpha}(b)F} 
\Bigl (1-(ab)^{-1} e^{\nabla_{\alpha}(a)\p_{\alpha}F +
\nabla_{\alpha}(b)\p_{\alpha}F}\Bigr )=
a^{-1}e^{\nabla_{\alpha}(a)\p_{\alpha}F}-
b^{-1}e^{\nabla_{\alpha}(b)\p_{\alpha}F},
\eeq
\item[II)] \; \underline{$\beta =\alpha \neq \gamma$}:
\beq\label{LII}
(a^{-1}-b^{-1})e^{\nabla_{\alpha}(a)\nabla_{\alpha}(b)F} 
\Bigl (1-e^{\nabla_{\alpha}(a)\p_{\gamma}F +
\nabla_{\alpha}(b)\p_{\gamma}F}\Bigr )=
\epsilon_{\gamma \alpha}e^{\nabla_{\alpha}(a)\p_{\gamma}F}-
\epsilon_{\gamma \alpha}e^{\nabla_{\alpha}(b)\p_{\gamma}F},
\eeq
\item[III)] \; \underline{$\alpha \neq \beta =\gamma$}:
\beq\label{LIII}
\epsilon_{\beta \alpha}
e^{\nabla_{\alpha}(a)\nabla_{\beta}(b)F} 
\Bigl (1-\epsilon_{\beta \alpha} b^{-1}
e^{\nabla_{\beta}(b)\p_{\beta}F +
\nabla_{\alpha}(a)\p_{\beta}F}\Bigr )=
\epsilon_{\beta \alpha}e^{\nabla_{\alpha}(a)\p_{\beta}F}-
b^{-1}e^{\nabla_{\beta}(b)\p_{\beta}F},
\eeq
The right-hand side of the last equation is not symmetric under the simultaneous permutation $\alpha \leftrightarrow 
\beta$, $a\leftrightarrow b$.
So, we should add to the list the equation that is obtained 
from (\ref{LIII}) by this permutation:
\item[IV)] \; \underline{$\beta \neq \alpha =\gamma$}:
\beq\label{LIV}
\epsilon_{\beta \alpha}
e^{\nabla_{\alpha}(a)\nabla_{\beta}(b)F} 
\Bigl (1+\epsilon_{\beta \alpha} a^{-1}
e^{\nabla_{\alpha}(a)\p_{\alpha}F +
\nabla_{\beta}(b)\p_{\alpha}F}\Bigr )=
a^{-1}e^{\nabla_{\alpha}(a)\p_{\alpha}F}+
\epsilon_{\beta \alpha}e^{\nabla_{\beta}(b)\p_{\alpha}F}.
\eeq
\end{itemize}
Introducing the functions
\beq\label{ww}
w_{\alpha}(z)=z^{-1}e^{\nabla_{\alpha}(z)\p_{\alpha}F}, \qquad
w_{\alpha \beta}(z)=e^{\nabla_{\alpha}(z)\p_{\beta}F} \quad
(\alpha \neq \beta),
\eeq
we rewrite these equations in the form

\beq\label{LIa}
\hspace{-2.7cm}
(a^{-1}-b^{-1})e^{\nabla_{\alpha}(a)\nabla_{\alpha}(b)F}
\Bigl (1-w_{\alpha}(a)w_{\alpha}(b)\Bigr ) =
w_{\alpha}(a)-w_{\alpha}(b),
\eeq

\beq\label{LIIa}
(a^{-1}-b^{-1})e^{\nabla_{\alpha}(a)\nabla_{\alpha}(b)F}
\Bigl (1-w_{\alpha \beta}(a)w_{\alpha \beta}(b)\Bigr ) =
\epsilon_{\beta \alpha} \Bigl (w_{\alpha \beta}(a)-
w_{\alpha \beta}(b)\Bigr ),
\eeq

\beq\label{LIIIa}
\hspace{-2.7cm}
\epsilon_{\beta \alpha}e^{\nabla_{\alpha}(a)\nabla_{\beta}(b)F}
\Bigl (1-\epsilon_{\beta \alpha}
w_{\beta}(b)w_{\alpha \beta}(a)\Bigr ) =
\epsilon_{\beta \alpha}w_{\alpha \beta}(a)-
w_{\beta}(b),
\eeq

\beq\label{LIVa}
\hspace{-2.7cm}
\epsilon_{\beta \alpha}e^{\nabla_{\alpha}(a)\nabla_{\beta}(b)F}
\Bigl (1+\epsilon_{\beta \alpha}
w_{\alpha}(a)w_{\beta \alpha}(b)\Bigr ) =
w_{\alpha}(a)+
\epsilon_{\beta \alpha}w_{\beta \alpha}(b).
\eeq

\noindent
The first equation,
\beq\label{first}
(a^{-1}-b^{-1})e^{\nabla_{\alpha}(a)\nabla_{\alpha}(b)F}=
\frac{w_{\alpha}(a)-w_{\alpha}(b)}{1-w_{\alpha}(a)w_{\alpha}(b)},
\eeq
has the same form (\ref{L6a}) 
as in the one-component case,
and thus can be pa\-ra\-met\-ri\-zed by hyperbolic functions in the same 
way:
\beq\label{first1}
(a^{-1}-b^{-1})e^{\nabla_{\alpha}(a)\nabla_{\alpha}(b)F}=
\tanh \Bigl (u_{\alpha}(a)-u_{\alpha}(b)\Bigr ),
\eeq
with $u_{\alpha}$ as in (\ref{ualpha}).
In this parametrization,
\beq\label{ww1}
w_{\alpha}(z)=\tanh \Bigl (u_{\alpha}(z)-\eta_{\alpha}\Bigr ),
\qquad
w_{\alpha \beta}(z)=\epsilon_{\beta \alpha}
\tanh \Bigl (u_{\alpha}(z)-\eta_{\beta}\Bigr ).
\eeq
Plugging this into the other equations (\ref{LIIIa}), ({\ref{LIVa}), 
we can see that they are equivalent 
to the single equation (\ref{first2}).

It remains to show that equation (\ref{first2}) is equivalent 
to the whole hierarchy. Indeed, the substitution 
(\ref{first2}) converts
(\ref{dL3a}) into the equality
\beq\label{dL3b}
\begin{array}{l}
\displaystyle{
\phantom{+\,}\sum_{s=1}^{P^+}
\prod_{i\neq s}^{P^+} 
\coth \Bigl (u_{\alpha_s}(a_s)-u_{\alpha_i}(a_i)\Bigr )
\prod_{k=1}^{P^-} 
\tanh \Bigl (u_{\alpha_s}(a_s)-u_{\beta_k}(b_k)\Bigr )}
\\ \\
\displaystyle{
+\,\, \sum_{s=1}^{P^-}
\prod_{k\neq s}^{P^-} 
\coth \Bigl (u_{\beta_s}(b_s)-u_{\beta_k}(b_k)\Bigr )
\prod_{i=1}^{P^+} 
\tanh \Bigl (u_{\beta_s}(b_s)-u_{\alpha_i}(a_i)\Bigr )\, =\, 1.}
\end{array}
\eeq
Putting $u_{\alpha_i}(a_i)=u_i$, $u_{\beta_k}(b_k)=v_k$,
we rewrite it in the form
\beq\label{dL3c}
\begin{array}{l}
\displaystyle{
\sum_{s=1}^{P^+}
\prod_{i\neq s}^{P^+} 
\coth \Bigl (u_s-u_i\Bigr )
\prod_{k=1}^{P^-} 
\tanh \Bigl (u_s-v_k\Bigr )
+\! \sum_{s=1}^{P^-}
\prod_{k\neq s}^{P^-} 
\coth \Bigl (v_s-v_k\Bigr )
\prod_{i=1}^{P^+} 
\tanh \Bigl (v_s-u_i\Bigr ) = 1.}
\end{array}
\eeq
Using the well known formula
$$
\tanh (u-v)=\frac{\tanh u -\tanh v}{1-\tanh u \, \tanh v},
$$
we see that (\ref{dL3c}) is nothing else than the already
proved identity (\ref{L10a}) valid for odd $P^+ + P^-$.
Moreover, from that proof 
it is clear that in general case, when $P^+$ and $P^-$
are allowed to be arbitrary natural numbers, the identity
reads
\beq\label{dL3d}
\begin{array}{l}
\displaystyle{
\phantom{+\,}\sum_{s=1}^{P^+}
\prod_{i\neq s}^{P^+} 
\coth \Bigl (u_{\alpha_s}(a_s)-u_{\alpha_i}(a_i)\Bigr )
\prod_{k=1}^{P^-} 
\tanh \Bigl (u_{\alpha_s}(a_s)-u_{\beta_k}(b_k)\Bigr )}
\\ \\
\displaystyle{
+\,\, \sum_{s=1}^{P^-}
\prod_{k\neq s}^{P^-} 
\coth \Bigl (u_{\beta_s}(b_s)-u_{\beta_k}(b_k)\Bigr )
\prod_{i=1}^{P^+} 
\tanh \Bigl (u_{\beta_s}(b_s)-u_{\alpha_i}(a_i)\Bigr )}
\\ \\
=\, \frac{1}{2} \Bigl (1-(-1)^{P^+ +P^-}\Bigr ).
\end{array}
\eeq
This means that the equation (\ref{first2}) is indeed equivalent
to the whole $N$-component large dBKP hierarchy.
\square

\begin{example}
In the dispersionless limit, the simplest solution (\ref{simplest2})
reads:
\beq\label{simplest3}
F=\frac{1}{2}\sum_{\gamma =1}^N \sum_{k\geq 1}kt^2_{\gamma , k},
\eeq
so all second order derivatives in this example 
do not depend on times.
Namely, we have:
$$
\nabla_{\alpha}(a)\nabla_{\beta}(b)F=-\delta_{\alpha \beta}
\log \Bigl (1-(ab)^{-1}\Bigr ),
\quad w_{\alpha}(a)=a^{-1}, \quad w_{\alpha \beta}=1.
$$
In the trigonometric parametrization this solution can be
written in the
form
$$
u_{\alpha}(z)=\eta_{\alpha} + {\rm arctanh} (z^{-1}),
$$
where $\eta_{\alpha}\to \infty$ in the following way:
$$
\eta_{\alpha}=\alpha M, \quad M\to +\infty , \quad
\alpha =1, \ldots , N.
$$
In this case
$$
\begin{array}{l}
\eta_{\alpha}-\eta_{\beta}=(\alpha -\beta )M\to -\infty \quad
\mbox{for $\alpha <\beta $},
\\ \\
\eta_{\alpha}-\eta_{\beta}=(\alpha -\beta )M\to +\infty \quad
\mbox{for $\alpha >\beta $},
\end{array}
$$
and $\tanh (u_{\alpha}(a)-u_{\beta}(b))=
\tanh (\eta_{\alpha}-\eta_{\beta})\to \epsilon_{\beta \alpha}$
for $\alpha \neq \beta$, as it should be according to (\ref{first2}).
\end{example}

\begin{remark}
Like in Section
\ref{section:modular}, equation (\ref{first2})
can be regarded as the degeneration of (\ref{E11}) as $\tau \to +i0$.
Since in this case
$$
V_{\alpha \beta}=2(R^2_{\alpha \beta}-1),
$$
the elliptic curve (\ref{E9}), 
$
R^2_{\alpha \beta}(w_{\alpha}^2 w_{\alpha \beta}^2 +1)-
(w_{\alpha}^2 +w_{\alpha \beta}^2) +V_{\alpha \beta}w_{\alpha}
w_{\alpha \beta}=0,
$
splits into two rational components
\beq\label{split2}
R_{\alpha \beta} (w_{\alpha}w_{\alpha \beta}+1)=\pm
(w_{\alpha} +w_{\alpha \beta} ),
\eeq
which can be uniformized by hyperbolic functions.
\end{remark}

\subsubsection{The dispersionless limit: version II}

Our aim in this section is to find more general solutions
to equation (\ref{dL1c}) for which the $f$-function is not
identically zero. We will argue that such solutions do exist
and, in contrast to version I
of the limit, are essentially ``elliptic'', i.e., they admit a 
parametrization via functions on a smooth elliptic curve,
like solutions to the dDKP hierarchy. This can be
done using the fact that the $N$-component hierarchy is 
contained in the $(N+1)$-component 
hierarchy (i.e., can be regarded as its subhierarchy).
Namely, we will show that the dispersionless
limit of the $N$-component large BKP hierarchy is essentially 
the same as that of the $(N+1)$-component DKP hierarchy.

To be more precise, consider the $(N+1)$-component DKP hierarchy,
with the $(N+1)$th component being numbered by the index $0$.
Instead of (\ref{m1a}) consider the Miwa substitution of the form
\beq \label{m1ab}
\left \{ 
\begin{array}{l}
\displaystyle{
{\bf n}-{\bf n}' =\sum_{i=1}^{P^+}{\bf e}_{\alpha_i}-
\sum_{k=1}^{P^-}{\bf e}_{\beta_k}-{\bf e}_{\beta_0},}
\\  \\
\displaystyle{
{\bf t}-{\bf t}' =\sum_{i=1}^{P^+}[a_i^{-1}]_{\alpha_i}-
\sum_{k=1}^{P^-}[b_k^{-1}]_{\beta_k}-[b_0^{-1}]_{\beta_0}},
\end{array} \right.
\eeq
where the additional index $\beta_0$ is equal to $0$ and hence 
can not coincide with any one of the indices $\alpha_i$, $\beta_k$
for $i=1, \ldots , N$. Since $P^+ +P^-$ is odd, the numbers
$\tilde P^+ =P^+$, $\tilde P^- =P^- +1$ corresponding to the
substitution (\ref{m1ab})
satisfy the parity
condition: $\tilde P^+ +\tilde P^- \in 2\ZZ$.
The Hirota-Miwa equation (\ref{gen-H-M}) for the extended
hierarchy is then written as
\beq\label{dL1d}
\begin{array}{l}
\displaystyle{
\sum_{s=1}^{P^+}
\prod_{{\scriptsize \begin{array}{l}i=1\\ i\neq s \end{array}}}^{P^+}
E_{\alpha_i , \alpha_s}^{-1}(a_s, a_i)
\prod_{k=1}^{P^-} E_{\beta_k , \alpha_s}(a_s, b_k)}\, 
E_{\beta_0 \alpha_s}(a_s, b_0)
\\ \\
\displaystyle{\phantom{aaaaaaaaaaaa}
\times \, \exp \!
\left (\nabla_{\alpha_s} (a_s) 
\Bigl (\sum_{k=1}^{P^-}\nabla_{\beta_k} (b_k)
-  \sum_{i\neq s}^{P^+}\nabla_{\alpha_i} (a_i)\Bigr )
F_0 \Bigr )+\nabla_{\alpha_s}(a_s)\nabla_{\beta_0}(b_0)F_0\right )
}
\end{array}
\eeq
$$
\begin{array}{l}
\displaystyle{
+\,\, \sum_{s=1}^{P^-}
\prod_{{\scriptsize \begin{array}{l}k=1\\ k\neq s \end{array}}}^{P^-}
E_{\beta_k , \beta_s}^{-1}(b_s, b_k)
\prod_{i=1}^{P^+} E_{\alpha_i , \beta_s}(b_s, a_i)}\,
E^{-1}_{\beta_0 \beta_s}(b_s, b_0)
\\ \\
\displaystyle{\phantom{aaaaaaaaaaaa}
\times \, \exp \! \left (\nabla_{\beta_s} (b_s) \Bigl 
(\sum_{i=1}^{P^+}\nabla_{\alpha_i} (a_i)
-  \sum_{k\neq s}^{P^-}\nabla_{\beta_k} (b_k)\Bigr )F_0 
-\nabla_{\beta_s}(b_s)\nabla_{\beta_0}(b_0)F_0
\Bigr )\right )
}
\end{array}
$$
$$
\begin{array}{l} \displaystyle{
+\,
\prod_{k=1}^{P^-}
E^{-1}_{\beta_k  \beta_0}(b_0, b_k)
\prod_{i=1}^{P^+} E_{\alpha_i  \beta_0}(b_0, a_i)}
\\ \\
\phantom{aaaaaaaaaaaaa}\displaystyle{
\times \, \exp \! \left (\nabla_{\beta_0} (b_0) \Bigl 
(\sum_{i=1}^{P^+}\nabla_{\alpha_i} (a_i)
-  \sum_{k=1}^{P^-}\nabla_{\beta_k} (b_k)\Bigr )F_0 
\Bigr )\right )=0.}
\end{array}
$$
Note that
$$
E_{\beta_0 \alpha_s}(a_s, b_0)=
E_{\beta_0 \beta_s}(b_s, b_0)=1, \quad
E_{\alpha_i \beta_0 }(b_0, a_i)=
E_{\beta_k \beta_0 }(b_0, b_k)=-1,
$$
hence the coefficient in front of the exponential
function in the last term is simply a sign
factor equal to $(-1)^{P^+ + P^-}=-1$. Now, putting
\beq\label{f111}
f=-\nabla_{\beta_0}(b_0)F_0,
\eeq
we see that equations (\ref{dL1d}) and (\ref{dL1c}) become
the same, hence the function $f$ 
solves equation (\ref{dL1c}).
In accordance with (\ref{E11}), in the
elliptic parametrization the function $f$ satisfies the equation
\beq\label{f112}
e^{-\nabla_{\alpha}(a)f}=
\frac{\theta_1(u_{\alpha}(a)-
u_{\beta_0}(b_0))}{\theta_4(u_{\alpha}(a)-u_{\beta_0}(b_0))}.
\eeq
Moreover, the function $f$ defined by equation (\ref{f111}), 
being the first order 
derivative of $F_0$ (i.e., being of the form (\ref{F1v1})), 
satisfies the linear equation (\ref{exp5}) for the $F_1$-function
for the $(N+1)$-component dDKP hierarchy. This perfectly agrees
with the original definition of the function $f$ (\ref{FGf1}):
$f=F_1-G_1$.

All this can be summarized as the following theorem.

\begin{theorem}
Dispersionless version II of the $N$-component large
BKP hierarchy is reduced to the $(N+1)$-component dDKP hierarchy:
the function $f$ defined by (\ref{f111}), (\ref{f112}) is a solution to
equation (\ref{dL1c}).
\end{theorem}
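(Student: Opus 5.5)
The plan is to embed the $N$-component large BKP hierarchy into the $(N+1)$-component dDKP hierarchy through the special Miwa substitution (\ref{m1ab}). The key observation is that the extra $0$th component, attached to a single point $b_0$ with the sign "$-$", turns the odd number of points $P^++P^-$ into the even number $\tilde P^+ + \tilde P^-$. The general dDKP Hirota-Miwa equation (\ref{gen-H-M}) then holds automatically for any solution $F_0$ of the extended hierarchy. We then compare its three groups of terms with (\ref{dL1c}) term by term.

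\emph{Step 1: the sign factors.} Since $\beta_0=0$ differs from all other indices, every $E$-factor involving $b_0$ reduces to $\epsilon_{\cdot\cdot}=\pm1$ with no $(a^{-1}-b^{-1})$ factor. With the ordering convention that puts index $0$ first, these signs are the ones listed after (\ref{dL1d}). As a result, the coefficient of the last term becomes $(-1)^{P^+}(-1)^{P^-}$, which equals $-1$ because $P^++P^-$ is odd.

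\emph{Step 2: identify the exponents.} Set $f=-\nabla_{\beta_0}(b_0)F_0$. The extra exponents $\pm\nabla_{\alpha_s}(a_s)\nabla_{\beta_0}(b_0)F_0$ and $\mp\nabla_{\beta_s}(b_s)\nabla_{\beta_0}(b_0)F_0$ in (\ref{dL1d}) then become exactly $-\nabla_{\alpha_s}(a_s)f$ and $+\nabla_{\beta_s}(b_s)f$, as they appear in (\ref{dL1c}). The last term of (\ref{dL1d}) becomes $-\exp\bigl(\sum_k\nabla_{\beta_k}(b_k)f-\sum_i\nabla_{\alpha_i}(a_i)f\bigr)$. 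Moving it to the right-hand side gives (\ref{dL1c}) verbatim. The functions $F_0$ (restricted to components $1,\dots,N$) and $f$ therefore solve (\ref{dL1c}) for all odd $P^++P^-$. The parameters $b_0$ and the $0$th times enter only as parameters of the reduction.

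\emph{Step 3: the elliptic form and consistency.} Equation (\ref{f112}) follows at once from (\ref{E11}) applied to the pair $(\alpha,0)$ with $b=b_0$. To confirm the interpretation $f=F_1-G_1$, we note that $f$ is a first-order derivative of $F_0$ with respect to parameters (the times $t_{0,k}$ packed in $\nabla_0(b_0)$). By the remark after (\ref{exp5}), it therefore satisfies the linear $F_1$-equation (\ref{exp5}).

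The main obstacle is the bookkeeping of the signs $\epsilon$ and of the orientation of the $E$-factors in Step 1. One has to make sure that the sign convention $\epsilon_{\alpha\beta}=1$ for $\alpha\le\beta$, with $0$ as the smallest index, produces exactly $+1$ and $-1$ in the places claimed. A mismatch there would flip the right-hand side $1\leftrightarrow -1$, and would have to be absorbed by redefining $f$ up to a constant $i\pi$ shift.
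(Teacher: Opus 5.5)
Your proposal is correct and follows the paper's own argument. The Miwa substitution (\ref{m1ab}) with the extra index $0$ makes the number of points even. The $b_0$-factors reduce to $E_{0\alpha}(a,b_0)=1$ and $E_{\alpha 0}(b_0,a)=-1$, so the last term of (\ref{dL1d}) carries the sign $(-1)^{P^++P^-}=-1$. Setting $f=-\nabla_{0}(b_0)F_0$ then turns (\ref{dL1d}) into (\ref{dL1c}), and (\ref{f112}) and the consistency with (\ref{exp5}) follow just as you say.

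The sign bookkeeping you worry about comes out exactly as you state (with $0$ the smallest index), so no fallback is needed. Your proposed fallback would not have worked anyway: $f$ enters (\ref{dL1c}) only through $\nabla$-derivatives, so a constant shift of $f$ leaves (\ref{dL1c}) unchanged.
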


\begin{remark}
Specifying this theorem to the one-component case, we see that
one can construct a general solution to equation (\ref{dL1})
via embedding the one-component hierarchy
into the 2-component one. 
\end{remark}

\section{Conclusion and further problems}

We have considered various integrable hierarchies in their
dispersionless limits and have shown that in all cases there is
an algebraic curve built in the structure of the hierarchy.
The appearance of such a curve turns out to be a universal phenomenon.
Parameters of the curve are dynamical variables, i.e., they
are functions of times, and for this reason it is natural to call
it the dynamical curve. More precisely, 
the curve enters the game along with
a finite set of its marked points, which depend on the times as well.
(In the elliptic parametrization, the marked points are just
the variables $\eta_{\alpha}$, $\alpha =1, \ldots , N$.)
The change of dynamical variables based on uniformization 
of the dynamical curve allows one to represent equations of 
the hierarchy in a simple and nice form. This advantage is 
especially important in the multi-component case. 

What's most amazing about dynamical curves is that 
their genus ${\sf g}$ 
is not necessarily zero 
(what is customary for dispersionless hierarchies 
and could be expected) but also can be equal to 1. 
Dynamical curves of genus 1 emerge in (dispersionless) 
hierarchies of the Pfaff type: DKP, large BKP,
Pfaff-Toda and their multi-component generalizations. The uniformization 
of the (elliptic) dynamical curve by means of elliptic functions
allows one to significantly clarify the structure of hierarchies of the
Pfaff type in all the cases and represent them in a compact nice form.

Dispersionless versions of ``usual'' hierarchies 
(those of the type A) such as KP, modified KP,
Toda lattice and their multi-component generalizations have a
dynamical curve, too, but in all these cases it turns out to be rational
(of genus 0) and allows uniformization by means of elementary functions
(trigonometric or hyperbolic). For one-component hierarchies this does not
give anything new. A significant benefit of the curve 
is shown in the $N$-component hierarchies, especially for
$N\geq 3$. In these cases, 
trigonometric parameterization drastically simplifies and clarifies the 
structure of the equations.

\begin{table}[h]
\begin{center}
\begin{tabular}{|l|c|p{2cm}|p{3cm}|}
\hline
$\displaystyle{\phantom{\frac{A}{B}}}$
& degree & curve & uniformization \\
\hline
$\; \, \mbox{dKP}$ & $\begin{array}{l} \\ 1\end{array}$ 
& sphere, ${\sf g}=0$ & rational \\
\cline{1-1}
$\; \, \mbox{small dBKP}$ & & $\displaystyle{\phantom{\frac{A}{B}}}$ &  \\
\hline
\hline
$\begin{array}{l}
\mbox{$N$-dKP}, \, N\geq 2,\\
\mbox{$N$-dmKP}, \, N\geq 1\end{array}$
 & $\begin{array}{l} \\ 2\end{array}$ 
 & cylinder, ${\sf g}=0$ & trigonometric or hyperbolic\\
\cline{1-1}
$\begin{array}{l}
\mbox{$N$-comp. large dBKP,}\\
\mbox{type I}, \, N\geq 1
\end{array}$, & &  & \\
\hline
\hline
$\; \mbox{$N$-dDKP}, \, N\geq 1$ & $\begin{array}{l} \\ 4\end{array}$ 
& torus, $\;\; {\sf g}=1$ & elliptic \\
\cline{1-1}
$\begin{array}{l}
\mbox{$N$-comp. large dBKP,}\\
\mbox{type II}, \, N\geq 1
\end{array}$  &   &   & \\
\hline
\end{tabular}
\caption{Dynamical curves in various dispersionless hierarchies}
\label{table:curves}
\end{center}
\end{table}

We have also considered the multi-component large BKP hierarchy
and have shown (presumably, for the first time in the literature) that 
it admits two essentially different dispersionless versions. One
of them (version I) leads to rational dynamical curves, which can be
regarded as degenerations of elliptic curves that emerge in the case
of DKP. Namely, the former can be formally obtained from the latter
when the modular parameter $\tau$ tends to zero: $\tau \to +i0$.
(Amusingly, rational curves for the $N$-component KP are formally obtained
in the opposite limit $\tau \to +i\infty$.) The curve 
that emerge in the other dispersionless
version of large BKP (verion II) is elliptic, and the hierarchy
itself is basically equivalent to the $(N+1)$-component dDKP.

The results related to dynamical curves for various hierarchies are
summarized in Table \ref{table:curves}.``Degree'' in 
the second column of the table means 
the total degree of the polynomial that defines the curve.

Passing to problems that deserve further study, we should mention
the following natural question: whether there is some kind of 
hidden continuous parameter
that would control the curve, ``interpolating'' between the two opposite
limits of $\tau$. If it existed, it would mean 
a possibility of some kind of continuous interpolation 
between hierarchies of type A and Pfaff hierarchies.
Here is another related question: can degenerate elliptic curves 
with singularities in general position (double points) be realized
as dynamical curves for any hierarchy?

The question of how the approach and methods we have developed 
can be applied to the CKP hierarchy, including its 
multi-component version suggested in \cite{Z24b}, is also interesting. 
Any direct generalization of our approach to CKP is problematic, since 
the CKP tau-function is characterized not by bilinear equations, but by equations of the fourth degree (see \cite{AHH23,YZF25}).

Among other interesting questions waiting to be answered, 
there are two long-standing problems related to
dispersionless hierarchies of the Pfaff type.
The first one    
is to develop a Lax-Sato type formalism for the dDKP hierarchy
(and other hierarchies of the Pfaff type)
in the elliptic parametrization. The second one is to find
their geometric interpretation in the spirit of the works
\cite{MWWZ00}-\cite{MWZ02}, where it has been shown that 
the dispersionless Toda hierarchy controls 
conformal maps of simply connected plane domains 
with a smooth boundary. It is possible that 
these two problems are actually related to each other.

Lastly, the most challenging problem is 
to find out what a role  (if any) dynamical curves could play
in the theory of dispersionfull hierarchies.

\section*{Appendix A: Free fermions}

\addcontentsline{toc}{section}{Appendix A: Free fermions}
\def\theequation{A\arabic{equation}}
\def\theHequation{\theequation}
\setcounter{equation}{0}

In this appendix, we present some basic facts 
of the theory of free (multi-component) fermions.
(For a more comprehensive treatment 
see \cite{DJKM83,JM83,AZ13}.)

In the multi-component theory,
the fermionic operators are $\psi_{j}^{(\alpha )}$, 
$\psi_{j}^{*(\alpha )}$, where 
$j\in \ZZ$ and 
$\alpha =1, \ldots , N$ numbers different components.
These operators obey the standard anti-com\-mu\-ta\-ti\-on relations
$$
[\psi_{j}^{(\alpha )}, \psi_{k}^{*(\beta )}]_+=\delta_{\alpha \beta}\delta_{jk},
\qquad
[\psi_{j}^{(\alpha )}, \psi_{k}^{(\beta )}]_+=
[\psi_{j}^{*(\alpha )}, \psi_{k}^{*(\beta )}]_+=0.
$$
We also introduce 
free fermionic fields constructed as series in the variable $z\in \CC$:
$$
\psi^{(\alpha )}(z)=\sum_{j\in \z}\psi^{(\alpha )}_j z^j,
\qquad
\psi^{*(\alpha )}(z)=\sum_{j\in \z}\psi^{*(\alpha )}_j z^{-j}.
$$
The fermionic operators carry a charge: by definition, the charge
of $\psi^{(\alpha )}$ is $1$ and the charge
of $\psi^{*(\alpha )}$ is $-1$. The charge of any product
of $\psi$- and $\psi^{*}$-operators is 
product of charges of the multipliers. Linear combinations
of such products in general do not have any definite charge.

The Fock and dual Fock spaces are generated by action 
of creation operators to the vacuum states 
$\left | {\bf 0}\rbr$, $\lbr {\bf 0} \right |$ that satisfy the conditions
$$
\psi_{j}^{(\alpha )}\left | {\bf 0}\rbr =0 \quad (j<0), \qquad
\psi_{j}^{*(\alpha )}\left | {\bf 0}\rbr =0 \quad (j\geq 0),
$$
$$
\lbr {\bf 0}\right | \psi_{j}^{(\alpha )} =0 \quad (j\geq 0), \qquad
\lbr {\bf 0}\right | \psi_{j}^{*(\alpha )} =0 \quad (j< 0),
$$
so $\psi_{j}^{(\alpha )}$ with $j<0$ and $\psi_{j}^{*(\alpha )}$ with
$j\geq 0$ are annihilation operators while 
$\psi_{j}^{(\alpha )}$ with $j\geq 0$ and
$\psi_{j}^{*(\alpha )}$ with
$j<0$ are creation operators. 
Let ${\bf n}=\{ n_1, n_2, \ldots , n_N\}$ be a set
of $N$ integer numbers. The right and left vacuum states 
$\left | {\bf n}\rbr$, $\lbr {\bf n} \right |$
are defined as
$$
\left | {\bf n}\rbr =\Psi_{n_N}^{*(N)}\ldots \Psi_{n_2}^{*(2)}
\Psi_{n_1}^{*(1)}\left | {\bf 0}\rbr , \qquad
\lbr {\bf n} \right |=\lbr {\bf 0} \right |\Psi_{n_1}^{(1)}\Psi_{n_2}^{(2)}\ldots
\Psi_{n_N}^{(N)},
$$
where
$$
\Psi_{n}^{*(\alpha )}=\left \{ \begin{array}{l}
\psi^{(\alpha )}_{n-1}\ldots \psi^{(\alpha )}_{0} \quad \,\,\, (n >0)
\\
1 \phantom{\psi^{(\alpha )}_{n-1}\ldots \psi^{(\alpha )}_{0}}
\quad (n=0)
\\
\psi^{*(\alpha )}_{n}\ldots \psi^{*(\alpha )}_{-1} \quad (n <0),
\end{array}
\right.
$$
$$
\Psi_{n}^{(\alpha )}=\left \{ \begin{array}{l}
\psi^{*(\alpha )}_{0}\ldots \psi^{*(\alpha )}_{n-1} \quad  \, (n >0)
\\
1 \phantom{\psi^{(\alpha )}_{n-1}\ldots \psi^{(\alpha )}_{0}}
\quad (n=0)
\\
\psi^{(\alpha )}_{-1}\ldots \psi^{(\alpha )}_{n} \quad \,\,\,\,\, (n <0).
\end{array}
\right.
$$

The modes of the current operators $J^{(\alpha )}(z)=
\normord \psi^{(\alpha )}(z)\psi^{*(\alpha )}(z)\normord$ have the form
$$
J_{k}^{(\alpha )}=\sum_{j\in \z}\normord 
\psi^{(\alpha )}_{j} \psi^{*(\alpha )}_{j+k}.
\normord 
$$
The normal ordering $\normord (\ldots )\normord$ 
(which is essential only at $k=0$)
is defined by moving the annihilation operators 
to the right and creation operators to the left with 
the minus sign emerging each time
when two fermionic operators are permuted.  
The commutation relations of these operators are
\beq\label{com1}
[J_k^{(\alpha )} , J_l^{(\beta )}]=k\delta_{\alpha \beta}\delta_{k, -l}.
\eeq

Let
\beq\label{times}
\begin{array}{l}
{\bf t}=\{{\bf t}_1, {\bf t}_2, \ldots , {\bf t}_N\}, \qquad
{\bf t}_{\alpha}=\{t_{\alpha , 1}, t_{\alpha , 2}, t_{\alpha , 3}, 
\ldots \, \},
\\ \\
\bar {\bf t}=\{\bar {\bf t}_1, \bar {\bf t}_2, \ldots , 
\bar {\bf t}_N\}, \qquad
\bar {\bf t}_{\alpha}=\{\bar t_{\alpha , 1}, \bar t_{\alpha , 2}, 
\bar t_{\alpha , 3}, \ldots \, \},
\end{array}
\qquad \alpha = 1, \ldots , N
\eeq
be $2N$ infinite sets of the independent time
variables (in general complex numbers). 
We introduce the operators
$$
J({\bf t})=\sum_{\alpha =1}^N \sum_{k\geq 1} t_{\alpha , k}J_k^{(\alpha )},
\qquad
\bar J(\bar {\bf t})=\sum_{\alpha =1}^N 
\sum_{k\geq 1} \bar t_{\alpha , k}J_{-k}^{(\alpha )}.
$$
Their commutation relations with the fermionic fields are as follows:
\beq\label{comm}
\begin{array}{l}
e^{J({\bf t})}\psi^{(\gamma )}(z)=e^{\xi ({\bf t}_{\gamma}, z)}
\psi^{(\gamma )}(z)e^{J({\bf t})}, \quad
e^{J({\bf t})}\psi^{*(\gamma )}(z)=e^{-\xi ({\bf t}_{\gamma}, z)}
\psi^{*(\gamma )}(z)e^{J({\bf t})},
\\ \\
e^{\bar J(\bar {\bf t})}\psi^{(\gamma )}(z)=e^{\xi 
(\bar {\bf t}_{\gamma}, z^{-1})}
\psi^{(\gamma )}(z)e^{\bar J({\bf t})}, \quad
e^{\bar J(\bar {\bf t})}\psi^{*(\gamma )}(z)=
e^{-\xi (\bar {\bf t}_{\gamma}, z^{-1})}
\psi^{*(\gamma )}(z)e^{\bar J(\bar {\bf t})},
\end{array}
\eeq
where
\beq\label{f5}
\xi ({\bf t}_{\gamma}, z)=\sum_{k\geq 1}t_{\gamma , k}z^k.
\eeq

Clifford group elements of the fermionic algebra 
have the general form
\beq\label{int1a}
g=\exp \left ( \sum_{\alpha , \beta}
\sum_{j,k}\Bigl ( A_{jk}^{(\alpha \beta )}
\psi^{(\alpha )}_{j}\psi^{*(\beta )}_{k}+
B_{jk}^{(\alpha \beta )}
\psi^{(\alpha )}_{j}\psi^{(\beta )}_{k}+
C_{jk}^{(\alpha \beta )}
\psi^{*(\alpha )}_{j}\psi^{*(\beta )}_{k}\Bigr )
\right )
\eeq
with some infinite matrices $A_{jk}^{(\alpha \beta )}$, 
$B_{jk}^{(\alpha \beta )}$, $C_{jk}^{(\alpha \beta )}$, i.e.,
it is exponent of a quadratic form in the fermionic
operators $\psi_{j}^{(\alpha )}$, 
$\psi_{j}^{*(\alpha )}$. If $B=C=0$, 
the Clifford group elements
\beq\label{int1b}
g=\exp \left ( \sum_{\alpha , \beta}
\sum_{j,k} A_{jk}^{(\alpha \beta )}
\psi^{(\alpha )}_{j}\psi^{*(\beta )}_{k}
\right )
\eeq
have zero charge. In this case they are called 
{\it neutral}. If the matrices $B$ or $C$ (or both) are nonzero,
the elements $g$ do not have a definite charge. In this case only
its parity is definite: as is seen from (\ref{int1a}), the parity 
is even.

A characteristic property of the Clifford group elements 
of the general form (\ref{int1a}) is the following 
operator bilinear identity:
\beq\label{f2}
\sum_{\gamma =1}^N \sum_{j\in \z}\Bigl (
\psi_{j}^{(\gamma )}g \otimes
\psi_{j}^{*(\gamma )}g +
\psi_{j}^{*(\gamma )}g \otimes
\psi_{j}^{(\gamma )}g \Bigr )
=\sum_{\gamma =1}^N \sum_{j\in \z}\Bigl (
g\psi_{j}^{(\gamma )}\otimes g \psi_{j}^{*(\gamma )}+
g\psi_{j}^{*(\gamma )}\otimes g \psi_{j}^{(\gamma )}\Bigr ).
\eeq
The proof can be found in Appendix A of the paper \cite{SZ24}.
For neutral elements of the form (\ref{int1b}) the
identity simplifies:
\beq\label{f2a}
\sum_{\gamma =1}^N \sum_{j\in \z}
\psi_{j}^{(\gamma )}g \otimes
\psi_{j}^{*(\gamma )}g 
=\sum_{\gamma =1}^N \sum_{j\in \z}
g\psi_{j}^{(\gamma )}\otimes g \psi_{j}^{*(\gamma )}.
\eeq

Tau-functions of integrable hierarchies are realized 
as expectation values of the following general form:
\beq\label{T3a}
\tau ({\bf n}, \bar {\bf n}, {\bf t}, \bar {\bf t})=
\lbr {\bf n}\bigl | e^{J({\bf t})} g e^{-\bar J(\bar {\bf t})} 
\bigr |-\bar {\bf n}\rbr ,
\eeq
where $g$ is a Clifford group element,
with certain conditions on ${\bf n}$ and $\bar {\bf n}$ that
depend on a particular hierarchy. The operator bilinear
identities (\ref{f2}) or (\ref{f2a}) allow one to obtain 
bilinear relations for the tau-functions (\ref{T3a}).

For one-component fermions ($N=1$) the notations are simpler.
The fermionic operators are $\psi_{j}$, 
$\psi_{j}^{*}$, with the
anti-com\-mu\-ta\-ti\-on relations
$$
[\psi_{j}, \psi_{k}^{*}]_+=\delta_{jk},
\qquad
[\psi_{j}, \psi_{k}^{*}]_+=
[\psi_{j}^{*}, \psi_{k}^{*}]_+=0.
$$
The vacuum states 
$\left | 0\rbr$, $\lbr 0 \right |$ satisfy the conditions
$$
\psi_{j}\left | 0\rbr =0 \quad (j<0), \qquad
\psi_{j}^{*}\left | 0\rbr =0 \quad (j\geq 0),
$$
$$
\lbr 0\right | \psi_{j} =0 \quad (j\geq 0), \qquad
\lbr 0\right | \psi_{j}^{*} =0 \quad (j< 0),
$$
The other vacuum states $\left | n\rbr$, $\lbr n \right |$ 
(for any $n\in \ZZ$)
are defined as
$
\left | n\rbr =
\Psi_{n}^{*}\left | 0\rbr , \;\,
\lbr n \right |=\lbr 0 \right |\Psi_{n},
$
where
$$
\Psi_{n}^{*}=\left \{ \begin{array}{l}
\psi_{n-1}\ldots \psi_{0} \quad \,\,\, (n >0)
\\
1 \phantom{\psi_{n-1}\ldots \psi_{0}}
\quad (n=0)
\\
\psi^{*}_{n}\ldots \psi^{*}_{-1} \quad \;\;\,\, (n <0),
\end{array}
\right.
\qquad
\Psi_{n}=\left \{ \begin{array}{l}
\psi^{*}_{0}\ldots \psi^{*}_{n-1} \quad  \, (n >0)
\\
1 \phantom{\psi_{n-1}\ldots \psi_{0}}
\quad (n=0)
\\
\psi_{-1}\ldots \psi_{n} \quad \,\,\,\,\, (n <0).
\end{array}
\right.
$$
The modes of the current operator $J(z)=
\normord \psi(z)\psi^{*}(z)\normord$ have the form
$$
J_{k}=\sum_{j\in \z}\normord 
\psi_{j} \psi^{*}_{j+k}.
\normord 
$$ 
The commutation relations of these operators are
$[J_k , J_l]=k\delta_{k, -l}$.
The operators $J({\bf t}), \, \bar J({\bf t})$ are defines as
$$
J({\bf t})=\sum_{k\geq 1} t_{k}J_k,
\qquad
\bar J(\bar {\bf t})= 
\sum_{k\geq 1} \bar t_{k}J_{-k}.
$$
Their commutation relations with the fermionic fields are as follows:
\beq\label{comm1}
\begin{array}{l}
e^{J({\bf t})}\psi(z)=e^{\xi ({\bf t}, z)}
\psi(z)e^{J({\bf t})}, \quad
e^{J({\bf t})}\psi^{*}(z)=e^{-\xi ({\bf t}, z)}
\psi^{*}(z)e^{J({\bf t})},
\\ \\
e^{\bar J(\bar {\bf t})}\psi(z)=e^{\xi 
(\bar {\bf t}, z^{-1})}
\psi(z)e^{\bar J({\bf t})}, \quad
e^{\bar J(\bar {\bf t})}\psi^{*}(z)=
e^{-\xi (\bar {\bf t}, z^{-1})}
\psi^{*}(z)e^{\bar J(\bar {\bf t})}.
\end{array}
\eeq
Note also that 
$
J({\bf t})\bigl |n\bigr > =\bigl < n\bigr |
\bar J(\bar {\bf t})=0,
$
so $\bigl < n\bigr |e^{\bar J(\bar {\bf t})}=
\bigl < n\bigr |$, $e^{J({\bf t})}\bigl | n\bigr >=
\bigl | n\bigr >$.

The general and neutral Clifford group elements (\ref{int1a}) and
(\ref{int1b}) in the one-component case are:
\beq\label{int1a1}
g=\exp \left ( 
\sum_{j,k}\Bigl ( A_{jk}
\psi_{j}\psi^{*}_{k}+
B_{jk}
\psi_{j}\psi_{k}+
C_{jk}
\psi^{*}_{j}\psi^{*}_{k}\Bigr ),
\right ), \quad
g=\exp \left ( 
\sum_{j,k} A_{jk}
\psi_{j}\psi^{*}_{k}
\right ).
\eeq
The operator bilinear identities for them are
\beq\label{f21}
\sum_{j\in \z}\Bigl (
\psi_{j}g \otimes
\psi_{j}^{*}g +
\psi_{j}^{*}g \otimes
\psi_{j}g \Bigr )
=\sum_{j\in \z}\Bigl (
g\psi_{j}\otimes g \psi_{j}^{*}+
g\psi_{j}^{*}\otimes g \psi_{j}\Bigr )
\eeq
and
\beq\label{f2a1}
\sum_{j\in \z}
\psi_{j}g \otimes
\psi_{j}^{*}g 
= \sum_{j\in \z}
g\psi_{j}\otimes g \psi_{j}^{*}
\eeq
respectively. The tau-function is defined as
\beq\label{T3a1}
\tau (n, \bar n, {\bf t}, \bar {\bf t})=
\lbr n\bigl | e^{J({\bf t})} g e^{-\bar J(\bar {\bf t})} 
\bigr |-\bar n\rbr .
\eeq

At last, it deserves noting that the algebra of $N$-component
fermions is in fact isomorphic to the algebra of one-component
ones. The isomorphism is given by 
$\psi^{(\alpha )}_j=\psi_{Nj +\alpha -1}$,
$\psi^{*(\alpha )}_j=\psi^{*}_{Nj +\alpha -1}$, where
$\alpha =1, \ldots , N$.

\section*{Appendix B: Theta-functions}
\label{appendix:theta}

\addcontentsline{toc}{section}{Appendix B: Theta-functions}
\def\theequation{B\arabic{equation}}
\def\theHequation{\theequation}
\setcounter{equation}{0}

The Jacobi's theta-functions $\theta_a (u)=
\theta_a (u|\tau )$, $a=1,2,3,4$, are defined by the absolutely 
convergent infinite sums as follows:
\beq\label{Bp1}
\begin{array}{l}
\theta _1(u)=-\displaystyle{\sum _{k\in \z}}
\exp \left (
\pi i \tau (k+\frac{1}{2})^2 +2\pi i
(u+\frac{1}{2})(k+\frac{1}{2})\right ),
\\ \\
\theta _2(u)=\displaystyle{\sum _{k\in \z}}
\exp \left (
\pi i \tau (k+\frac{1}{2})^2 +2\pi i
u(k+\frac{1}{2})\right ),
\\ \\
\theta _3(u)=\displaystyle{\sum _{k\in \z}}
\exp \left (
\pi i \tau k^2 +2\pi i u k \right ),
\\ \\
\theta _4(u)=\displaystyle{\sum _{k\in \z}}
\exp \left (
\pi i \tau k^2 +2\pi i
(u+\frac{1}{2})k\right ),
\end{array}
\eeq where $\tau$ is a complex parameter (the modular parameter) 
such that ${\rm Im}\, \tau >0$. The function 
$\theta_1(u)$ is odd, the other three functions are even.
The infinite product representation for the theta-functions reads: 
\beq\label{infprod}
\begin{array}{ll}
\theta_1(u|\tau)&=\displaystyle{2q^{\frac{1}{4}}\sin\pi u
\prod_{n=1}^\infty(1-q^{2n})(1-q^{2n}e^{2\pi i u})(1-q^{2n}e^{-2\pi i u}),}
\\ & \\
\theta_2(u|\tau)&=\displaystyle{2q^{\frac{1}{4}}\cos\pi u
\prod_{n=1}^\infty(1-q^{2n})(1+q^{2n}e^{2\pi i u})(1+q^{2n}
e^{-2\pi i u}),}
\\ & \\
\theta_3(u|\tau)&=\displaystyle{
\prod_{n=1}^\infty(1-q^{2n})(1+q^{2n-1}e^{2\pi i u})(1+q^{2n-1}
e^{-2\pi i u}),}
\\ & \\
\theta_4(u|\tau)&=\displaystyle{
\prod_{n=1}^\infty(1-q^{2n})(1-q^{2n-1}e^{2\pi i u})
(1-q^{2n-1}e^{-2\pi i u})}.
\end{array}
\eeq
where $q=e^{\pi i \tau}$.
In the limit $\tau \to +i\infty$ they are:
$\theta_1(u|\tau )=2q^{\frac{1}{4}}\sin \pi u + O(q^{\frac{9}{4}})$,
$\theta_2(u|\tau )=2q^{\frac{1}{4}}\cos \pi u + O(q^{\frac{9}{4}})$,
$\theta_3(u|\tau )=1+O(q)$, $\theta_4(u|\tau )=1+O(q)$.

The theta-functions satisfy a lot of nontrivial identities.
Here we mention two of them:
\beq\label{id}
\theta_2^4(0)\, 
\frac{\theta_2^2 (u)\, \theta_3^2 (u)}{\theta_1^2 (u)\, \theta_4^2 (u)}=
\theta_2^2 (0)\, \theta_3^2 (0)\left (
\frac{\theta_4^2(u)}{\theta_1^2(u)}+ \frac{\theta_1^2(u)}{\theta_4^2(u)}
\right ) - \Bigl (\theta_2^4(0) +\theta_3^4(0)\Bigr )
\eeq
and
\beq\label{theta1prime}
\theta_1'(0)=\pi \theta_2(0) \theta_3(0) \theta_4(0).
\eeq
They are used in the main text.

Next, we list the transformation properties of the theta functions.

\smallskip

\noindent
Shifts by periods:
\beq\label{p}
\begin{array}{l}
\theta_1(u+1)=-\theta_1(u),\\ \\
\theta_2(u+1)=-\theta_2(u),\\ \\
\theta_3(u+1)=\theta_3(u),\\ \\
\theta_4(u+1)=\theta_4(u).
\end{array}\hspace{2cm}
\begin{array}{l}
\theta_1(u+\tau)=-e^{-\pi i(2u+\tau)}\theta_1(u),\\ \\
\theta_2(u+\tau)=e^{-\pi i(2u+\tau)}\theta_2(u),\\ \\
\theta_3(u+\tau)=e^{-\pi i(2u+\tau)}\theta_3(u),\\ \\
\theta_4(u+\tau)=-e^{-\pi i(2u+\tau)}\theta_4(u).
\end{array}\vspace{0.3cm}
\eeq
Shifts by half-periods:
\beq\label{hp}
\begin{array}{l}
\theta_1(u+{\textstyle\frac{1}{2}})=\theta_2(u),\\ \\
\theta_2(u+{\textstyle\frac{1}{2}})=-\theta_1(u),\\ \\
\theta_3(u+{\textstyle\frac{1}{2}})=\theta_4(u),\\ \\
\theta_4(u+{\textstyle\frac{1}{2}})=\theta_3(u).
\end{array}\hspace{2cm}
\begin{array}{l}
\theta_1(u+{\textstyle\frac{\tau}{2}})=
ie^{-\pi i(u+\tau/4)}\theta_4(u),\\ \\
\theta_2(u+{\textstyle\frac{\tau}{2}})=
e^{-\pi i(u+\tau/4)}\theta_3(u),\\ \\
\theta_3(u+{\textstyle\frac{\tau}{2}})=
e^{-\pi i(u+\tau/4)}\theta_2(u),\\ \\
\theta_4(u+{\textstyle\frac{\tau}{2}})=i
e^{-\pi i(u+\tau/4)}\theta_1(u).
\end{array}\vspace{0.3cm}
\eeq

We also need properties of the tau-functions
under modular transformation $\tau \to -1/\tau$:
\beq\label{mod2d}
\begin{array}{ll}
\theta_1\left (u/\tau|-1/\tau\right)&=-i\sqrt{-i\tau}\,
e^{\pi iu^2/\tau}\theta_1(u|\tau),\\ & \\
\theta_2\left (u/\tau|-1/\tau\right)&=\sqrt{-i\tau}\,
e^{\pi iu^2/\tau}\theta_4(u|\tau), \\ & \\
\theta_3\left (u/\tau|-1/\tau\right)&=\sqrt{-i\tau}\,
e^{\pi iu^2/\tau}\theta_3(u|\tau),\\ & \\
\theta_4\left (u/\tau|-1/\tau\right)&=\sqrt{-i\tau}\,
e^{\pi iu^2/\tau}\theta_2(u|\tau).
\end{array}
\eeq
The branch of the square root here
is such that $\Re \sqrt{-i\tau}>0$.

For a more detailed account of properties of the theta-functions
see \cite{Akhiezer,Takebe-book,KZ15}. 

\section*{Appendix C: Uniformization of algebraic curves}
\label{appendix:uniformization}

\addcontentsline{toc}{section}{Appendix C: Uniformization of  
algebraic curves}
\def\theequation{C\arabic{equation}}
\def\theHequation{\theequation}
\setcounter{equation}{0}

Given a complex algebraic curve $\Gamma$ defined by an equation of the
form
\beq\label{P}
P(x,y)=0, \quad x,y \in \CC ,
\eeq
where $P(x,y)$ is a polynomial, a natural question is how this curve
can be uniformized. The uniformization means that there are two 
functions, $x(u)$ and $y(u)$, of some complex variable $u$
such that: 
\begin{itemize}
\item[a)] They are single-valued in a domain 
${\sf D}\in \CC$, 
\item[b)] The equation $P(x(u),y(u))=0$ is 
satisfied identically for all $u\in {\sf D}$, 
\item[c)] Any solution
to the equation $P(x,y)=0$ is obtained in this way 
for some $u\in {\sf D}$.
\end{itemize}

The simplest example is the curve $x^2 +y^2=1$ which can be uniformized
by trigonometric functions: $x(u)=\sin u$, $y(u)=\cos u$.
More generally,
if the polynomial $P(x,y)$ is quadratic in $x,y$, i.e., of the
general form 
$$
P(x,y)= Ax^2 +By^2 +Cxy +Dx +Ey +V,
$$
equation (\ref{P}) defines a rational curve (of genus 0). In this case
the uniformization can be achieved by trigonometric or hyperbolic 
functions.
For example, consider the rational curve
\beq\label{curve1a}
xy +Ax +By +C=0
\eeq
that has appeared in Section
\ref{section:dmmKP}. Its uniformization is
\beq\label{curve1b}
\begin{array}{l}
x(u)=\gamma_1 \cot (u-\eta_1), \qquad y(u)=\gamma_2\cot (u-\eta_2)
\end{array}
\eeq
together with
\beq\label{curve1c}
A=-\gamma_2 \cot (\eta_1 -\eta_2), \qquad
B=\gamma_1 \cot (\eta_1 -\eta_2),  \qquad
C=\gamma_1 \gamma_2 .
\eeq
Here $\eta_{1,2}$ and $\gamma_{1,2}$ are parameters that
parametrize the constants $A, B, C$. Indeed, the equation defining the 
curve is satisfied identically due to the identity
\beq
\cot (u-\eta_1)\cot (u-\eta_2)+\cot (u-\eta_2)\cot (\eta_1-\eta_2)+
\cot (u-\eta_1)\cot (\eta_2 -\eta_1)+1=0,
\eeq
which can be easily proved.

Smooth curves defined by polynonial
equations of degree higher than 2 can not be uniformized
by elementary functions. For example, uniformization of curves
defined by an equation of the form  $y^2 =Q(x)$, where $Q(x)$ is
a polynomial of degree 3 or 4, requires elliptic functions.
In the case of degree 4 the canonical form of the equation is
\beq\label{uni1}
y^2 =(1-x^2)(1-k^2 x^2),
\eeq
where $k$ is a parameter called elliptic modulus. If $k\neq 0,1$,
the curve is a smooth elliptic curve (a torus).
It can be uniformized by the elliptic functions ${\rm sn}(w)$
(the ``elliptic sinus''), ${\rm cn}(w)$
(the ``elliptic cosinus'') and ${\rm dn}(w)$:
\beq\label{uni2}
x(w)={\rm sn}(w), \quad y(w)={\rm cn}(w)\, {\rm dn}(w)=x'(u).
\eeq
They are expressed through the 
Jacobi theta-functions from Appendix B
in the following way:
\beq\label{uni3}
\begin{array}{l}
\displaystyle{
{\rm sn}(w)=\frac{\theta_3(0)\, 
\theta_1(u)}{\theta_2(0)\, \theta_4(u)}},
\quad
\displaystyle{
{\rm cn}(w)=\frac{\theta_4(0)\, 
\theta_2(u)}{\theta_2(0)\, \theta_4(u)}},
\quad
\displaystyle{
{\rm dn}(w)=\frac{\theta_4(0)\, 
\theta_3(u)}{\theta_3(0)\, \theta_4(u)}},
\end{array}
\eeq
where 
\beq\label{uni5}
u=\frac{w}{\pi \, \theta_3^2(0)}
\eeq
and
the modular parameter $\tau$ of the theta-functions is connected
with the elliptic modulus $k$ by the formula
\beq\label{uni4}
k =\frac{\theta_2^2(0|\tau )}{\theta_3^2(0|\tau )}.
\eeq

If the curve is defined by an 
equation $P(x,y)=0$, where $P(x,y)$ is a bi-quadratic polynomial
in the variables $x,y$, it is in general 
a smooth elliptic curve (of genus 1). 
For its uniformization one needs 
elliptic functions or Jacobi theta-functions.
(See, for example, the last section of Baxter's book \cite{Baxter}.)

Our first example is the curve
\beq\label{B5}
R^2(x^2y^2+1)-(x^2+y^2)+Vxy=0.
\eeq
The rational change of variables $(x,y)\to (X,Y)$, where
$$
Y=Rx^2y +\frac{Vx-2Ry}{2R}, \quad
X=k^{-1/2}x \quad \mbox{with 
$\displaystyle{k+k^{-1}=R^2 +R^{-2} -\frac{V^2}{4R^2}}$}
$$
brings equation (\ref{B5}) to the canonical form (\ref{uni1}),
i.e., $Y^2 =(1-X^2)(1-k^2 X^2)$.
In the original variables the uniformization (\ref{uni2})
acquires the form
\beq\label{B6}
x(u)=\frac{\theta_1(u|\tau )}{\theta_4(u |\tau )}, \qquad
y(u)=\frac{\theta_1(u+\eta |\tau )}{\theta_4(u+\eta |\tau )}.
\eeq
The two constants $R$, $V$ are expressed in terms 
of two parameters $\eta$, $\tau$ as follows:
\beq\label{B7}
R=\frac{\theta_1(\eta |\tau )}{\theta_4(\eta |\tau )}, \qquad
V=2\frac{\theta_4^2(0|\tau )
\theta_2(\eta |\tau )\theta_3(\eta|\tau  )}{\theta_2(0|\tau )
\theta_3(0|\tau )\theta_4^2(\eta |\tau )}.
\eeq
To verify the uniformization formulas 
directly, one should prove the identity
\beq\label{B8}
\begin{array}{c}
\displaystyle{
\frac{\theta_1^2(\eta )}{\theta_4^2(\eta )}\left (
\frac{\theta_4^2(u)\theta_4^2(u+\eta )}{\theta_1^2(u)\theta_1^2(u+\eta )}
+1\right ) -\left (\frac{\theta_4^2(u)}{\theta_1^2(u)}+
\frac{\theta_4^2(u+\eta )}{\theta_1^2(u+\eta )}\right )}
\\ \\
\displaystyle{
+2\, \frac{\theta_4^2(0)\theta_2(\eta )\theta_3(\eta )\theta_4(u)
\theta_4 (u+\eta )}{\theta_2(0)\theta_3(0)\theta_4^2(\eta )
\theta_1(u)\theta_1(u+\eta )}=0}
\end{array}
\eeq
which is equation (\ref{B5}) after the substitutions (\ref{B6}), (\ref{B7}).
(For notational simplicity, we omit the modular 
parameter $\tau$, which is the same for all
theta-functions in (\ref{B8}).)
The left-hand side is an elliptic function of $u$ with possible poles
at $u=0$ and $u=-\eta$ of at most second order. It is easy to see that 
the highest singularities (second order poles) cancel.
Therefore, the left-hand side is an elliptic function of $u$ with
possible simple poles at $u=0$ and $u=-\eta$. Therefore, it is enough
to establish the equality at three distinct points. It is easy to see that
the left-hand side equals 0 at $u=\frac{\tau}{2}$ 
and $u=-\eta +\frac{\tau}{2}$. As the third point we
take $u=\frac{\tau +1}{2}$. At this point, the left-hand side is
\beq\label{B9}
\mbox{L.h.s.}=\frac{\theta_1^2(\eta )}{\theta_4^2(\eta )}
\left (\frac{\theta_2^2(0)\theta_2^2(\eta )}{\theta_3^2(0)\theta_3^2(\eta )}
+1\right ) -\left (\frac{\theta_2^2(0)}{\theta_3^2(0)}+
\frac{\theta_2^2(\eta )}{\theta_3^2(\eta )}\right )+2\,
\frac{\theta_4^2(0)\theta_2^2(\eta )}{\theta_3^2(0)\theta_4^2(\eta )}.
\eeq
It is an even elliptic function of $\eta -\frac{\tau}{2}$ and 
$\eta -\frac{\tau +1}{2}$ with possible second order poles at $\eta =
\frac{\tau}{2}$ and $\eta =\frac{\tau +1}{2}$. The expansion around 
these points shows that the singular terms cancel. Therefore, 
expression (\ref{B9}) 
does not depend on $\eta$. Substituting $\eta =0$, we see that
it is equal to zero. This proves the identity (\ref{B8}).

It is worth noting that equation of the curve in the form
(\ref{B5}) contains two parameters ($R$ and $V$) while in the
canonical equation (\ref{uni1}) there is only one, $k$. 
The explanation of this apparent discrepancy is that 
equation (\ref{B5}) defines not only the curve itself, but 
the curve with a marked point on it. In the elliptic parametrization,
this point is just $\eta$.

The second example is the curve
\beq\label{B1}
y^2 -R^2 (x^2 +x^{-2})+V=0,
\eeq
or, in the polynomial form,
\beq\label{B1a}
x^2y^2 -R^2x^4 +Vx^2 -R^2=0.
\eeq
In this case the rational change of variables that brings
it to the canonical form is
$$
Y=\frac{xy}{R}, \quad
X=k^{-1/2}x \quad \mbox{with 
$\displaystyle{k+k^{-1}=\frac{V}{R^2}}$}.
$$
In terms of the original variables the uniformization 
(\ref{uni2}) reads:
\beq\label{B2}
x(u)=\frac{\theta_1(u)}{\theta_4(u)}, \qquad
y(u)=\gamma \theta_4^2(0)\frac{\theta_2(u)\theta_3(u)}{\theta_1(u)
\theta_4(u)}=\frac{\gamma}{\pi}\, \frac{x'(u)}{x(u)}
\eeq
and
\beq\label{B3}
R=\gamma 
\theta_2(0)\theta_3(0), \qquad V=\gamma^2
\Bigl (\theta_2^4(0)+\theta_3^4(0)\Bigr ),
\eeq
where $\gamma$ is an arbitrary constant. 
To verify validity of these formulas directly, one should
prove the identity
\beq\label{B4}
\theta_4^4(0)\frac{\theta_2^2(u)\theta_3^2(u)}{\theta_1^2(u)
\theta_4^2(u)}-\theta_2^2(0)\theta_3^2(0)\left (
\frac{\theta_4^2(u)}{\theta_1^2(u)}+\frac{\theta_1^2(u)}{\theta_4^2(u)}
\right )+\theta_2^4(0)+\theta_3^4(0)=0.
\eeq
For the proof we note that the left-hand side is an even elliptic
function of $u$ with possible poles at $u=0$ and $u=\frac{\tau}{2}$.
However, the expansion around these points shows that the singular terms
cancel and the function is regular everywhere. This means that it is
a constant. To find the constant one can substitute any value of $u$.
It is convenient to take $u=\frac{1}{2}$. Using the transformation 
properties (\ref{hp}), one finds that the constant is zero.


\section*{Appendix D: $N$-component versus 1-component dmKP}

\addcontentsline{toc}{section}{Appendix D: 
$N$-component versus 1-component dmKP}
\def\theequation{D\arabic{equation}}
\def\theHequation{\theequation}
\setcounter{equation}{0}

In this appendix we show how the general approach 
developed in Section \ref{section:dmmKP} for the $N$-component
dmKP hierarchy can be applied to the case $N=1$.
To do this, we should take into account 
that in the one-component mKP hierarchy
the discrete variable $n$ is frozen to the value
$n=0$ because of the
condition (\ref{mmkp1a}), and only the $m$-variable is alive.

We recall that the dmKP hierarchy is equivalent to the system of two
equations (\ref{mkp13})\footnote{As we have seen in Section \ref{section:dmKP}, these two
equations are actually equivalent.}:
\beq\label{mkp13a}
\left \{
\begin{array}{l}
(a^{-1}-b^{-1})e^{\tilde \nabla (a)\tilde \nabla (b)F}
=\tilde w(a)-\tilde w(b),
\\ \\
(a^{-1}-b^{-1})e^{\tilde \nabla (a)\tilde \nabla (b)F}=
e^{-\tilde \p_0^2 F}
\tilde w(a)\tilde w(b)
\Bigl (\tilde p(b)-\tilde p(a)\Bigr ).
\end{array}
\right.
\eeq
In particular, from the first equation 
it follows that
\beq\label{t15b}
\tilde w(z)=z^{-1}e^{\tilde \nabla (z)\tilde \p_{0} F}.
\eeq
In fact, there are two different ways to obtain the 
(one-component) dmKP hierarchy from the $N$-component one.

\subsubsection*{The dmKP from the $N$-component dmKP 
in the trigonometric pa\-ra\-met\-ri\-za\-ti\-on at $N=1$}

We should identify (\ref{mkp13a}) with the system (\ref{t15}) at $N=1$:
\beq\label{t15a}
\left \{
\begin{array}{l}
(a^{-1}-b^{-1})
e^{\nabla (a)\nabla (b)F}=
\sin \Bigl (u (a)-u(b)\Bigr ),
\\ \\
e^{\nabla (a)\bar \p_{0} F}=\sin \Bigl (u (a)
-\bar \eta \Bigr ) =\bar w(a),
\end{array} \right.
\eeq
where
$\displaystyle{
u(z)=\eta +\sum_{k\geq 1}c_k z^{-k}.}
$
In particular,
the second equation in (\ref{t15a}) implies that
\beq\label{t15c}
e^{\p_{0} \bar \p_{0} F}=\sin (\eta -\bar \eta ) .
\eeq
Recall the connection 
(\ref{mmkp17}) between the derivatives $\p_{\alpha}$,
$\bar \p_{\alpha}$ and $\tilde \p_{\alpha}$, which in the
one-component case is simply
\beq\label{c1}
\tilde \p_{0} =\p_{0} -\bar \p_{0} \qquad (\tilde \p_{0} =\p_{\tilde t_0},
\; \p_{0} =\p_{t_0}, \; \bar \p_{0} =\p_{\bar t_0}).
\eeq
Therefore, $\nabla (z)=\tilde \nabla (z)+\bar \p$.
Using this, we can write:
\beq\label{c2}
\begin{array}{lll}
\tilde w(z)& = &z^{-1}e^{\tilde \nabla (z)\tilde \p_{0} F}
=z^{-1} e^{\nabla (z)\p_{0} F -\nabla (z) 
\bar \p_{0} F -\p_{0} \bar \p_{0} F
+\bar \p^2_{0} F}
\\ && \\
&=& w(z)(\bar w(z))^{-1} \, e^{-\p_{0} \bar \p_{0} F
+\bar \p^2_{0} F}
\\ && \\
&=& \displaystyle{
\frac{\sin (u(z)-\eta )}{\sin (u(z)-\bar \eta )}\, e^{\bar \p_{0} (
\bar \p_{0} -\p_{0})F}}.
\end{array}
\eeq
This equation establishes the relation between
the functions $\tilde w(z)$ and $u(z)$. Let us show 
that this relation means that 
the first equations in (\ref{mkp13a}) and (\ref{t15a}) 
are equivalent.
Indeed, 
\beq\label{c3}
\begin{array}{lll}
\tilde w(a)-\tilde w(a)&=& \displaystyle{
e^{\bar \p_{0} (\bar \p_{0} -\p_{0} )F}\, 
\left (
\frac{\sin (u(a)-\eta )}{\sin (u(a)-\bar \eta )}
-\frac{\sin (u(b)-\eta )}{\sin (u(b)-\bar \eta )} \right )}
\\ && \\
&=& e^{\bar \p_{0} (\bar \p_{0} -\p_{0} )F}\, 
\displaystyle{
\frac{\sin (u(a)-u(b))
\sin (\eta -\bar \eta )}{\sin (u(a)-\bar \eta )\sin (u(b)-\bar \eta )}},
\end{array}
\eeq
while the left-hand side of (\ref{mkp13a}) is
\beq\label{c4}
(a^{-1}-b^{-1})e^{\tilde \nabla (a)\tilde \nabla (b)F}=
(a^{-1}-b^{-1})e^{\bar \p^2_{0} F-\nabla (a)\bar \p_{0} F-
\nabla (b)\bar \p_{0} F +\nabla (a)\nabla (b)F}.
\eeq
Equating the right-hand sides of (\ref{c3}) and (\ref{c4}),
we get:
\beq\label{c5}
(a^{-1}-b^{-1})e^{\nabla (a)\nabla (b)F}=
e^{\nabla (a)\bar \p_{0} F+\nabla (b)\bar \p_{0} F 
-\p_{0} \bar \p_{0} F}\,
\frac{\sin (u(a)-u(b))
\sin (\eta -\bar \eta )}{\sin (u(a)-\bar \eta )
\sin (u(b)-\bar \eta )},
\eeq
which is satisfied identically due to the second equation in (\ref{t15a})
and equation (\ref{t15c}).

One can also show that the curve (\ref{mkp15}) after a simple change
of variables coincides with the curve (\ref{mmkp24}).

\subsubsection*{The 1-component dmKP from the 2-component dmKP}

In the 2-component dmKP there are two sets of continuous times,
${\bf t}_1$ and ${\bf t}_2$ and the former discrete variables
$t_{1,0}$ and $t_{2,0}$ such that $t_{2,0}=-t_{1,0}$.
The ``zeroth'' variable in the one-component dmKP will be
$t_0=t_{1,0}$. The corresponding vector fields are, therefore,
related as
\beq\label{v1}
\p_0 =\p_1-\p_2,
\eeq
where $\p_0=\p_{t_0}$,  $\p_1=\p_{t_{1,0}}$, $\p_2=\p_{t_{2,0}}$.

The 1-component dmKP is equivalent not to the whole 
2-component dmKP but to its ``half'' obtained by freezing
the continuous times ${\bf t}_2$ and identifying ${\bf t}_1={\bf t}$.
Having this in mind, we write the relevant 
equations of the 2-component
dmKP in the trigonometric form as follows:
\beq\label{v2}
\left \{
\begin{array}{l}
(a^{-1}-b^{-1})e^{\nabla_1(a)\nabla_2(b)F} =
\sin \Bigl ((u_1(a)-u_1(b)\Bigr ), 
\\ \\
e^{\nabla_1(a)\p_2 F} =\sin \Bigl (u_1(a)-\eta_2\Bigr ) =w_{12}(a).
\end{array}\right.
\eeq
Putting $b=\infty$ in the first equation, we have:
\beq\label{v4}
a^{-1}e^{\nabla_1(a)\p_1 F} =\sin \Bigl ((u_1(a)-\eta_1 \Bigr )
=w_1(a).
\eeq
In a similar way, from the second equation it follows that
$e^{\p_1 \p_2 F}=\sin (\eta_1 -\eta_2).$

Dividing equation (\ref{v4}) by the second equation in (\ref{v2}),
we get:
\beq\label{v5}
a^{-1}e^{\nabla_1(a)\p_0 F} =
\frac{\sin \Bigl (u_1(a)-\eta_1\Bigr )}{\sin \Bigl (u_1(a)-\eta_2\Bigr )}.
\eeq
We will use the operators\footnote{The latter 
one is going to be the $\nabla$-operator for 
the dmKP.}
$
\nabla_1(z)=\p_1 +D_1(z), \;
\nabla_0(z)=\p_0 +D_1(z)
$
(the notation $D_1$ means here that this operator
contains derivatives with respect to the times ${\bf t}_1$).
Let us rewrite (\ref{v5}) in the form
\beq\label{v6}
a^{-1}\sin (\eta_1 -\eta_2)e^{\nabla_0(a)\p_0 F}=
e^{\p_0^2F} \, 
\frac{\sin \Bigl (u_1(a)-\eta_1\Bigr )}{\sin \Bigl 
(u_1(a)-\eta_2\Bigr )}.
\eeq
Now we can write the following chain of equalities:
\beq\label{v7}
\begin{array}{lll}
(a^{-1}-b^{-1})e^{\nabla_0(a)\nabla_0(b)F}& =&
\displaystyle{
(a^{-1}-b^{-1})\frac{ab w_1(a)w_1(b)}{w_{12}(a)w_{12}(b)}\,
e^{\p_2^2F -\p_1^2 F +D_1(a)D_1(b)F}}
\\ && \\
&=&
\displaystyle{
(a^{-1}-b^{-1})
\, \frac{e^{\p_2^2F} \, e^{\nabla_1(a)\nabla_1(b)F}}{\sin \Bigl (
u_1(a)-\eta_2 \Bigr )\sin \Bigl (
u_1(b)-\eta_2 \Bigr )}}
\\ && \\
&=&
\displaystyle{
\, \frac{e^{\p_2^2F} \,\sin \Bigl (
u_1(a)-u_1(b) \Bigr )}
{\sin \Bigl (u_1(a)-\eta_2 \Bigr )
\sin \Bigl (u_1(b)-\eta_2 \Bigr )}}.
\end{array}
\eeq
Writing the $\sin$-function in the numerator as
$$
\sin \Bigl (u_1(a)-u_1(b) \Bigr )=
\sin \Bigl ((u_1(a)-\eta_2) -(u_1(b)-\eta_2) \Bigr )
$$
and using the identity $\sin (x-y)=\sin x \cos y -\cos y \sin x$,
we arrive at the equation that expresses 
the left-hand side of (\ref{v7}) as a difference of the form
$g(b)-g(a)$ with a function $g(z)$:
\beq\label{v8}
(a^{-1}-b^{-1})e^{\nabla_0(a)\nabla_0(b)F}=
e^{\p_2^2F}\, \Bigl (
\cot (u_1(b)-\eta_2 )-\cot (u_1(a)-\eta_2 )\Bigr ).
\eeq
Using the identity
$$
\cot (u-\eta_2)-\cot (v-\eta_2)=-
\frac{1}{\sin (\eta_1-\eta_2)}\left (
\frac{\sin (u-\eta_1)}{\sin (u-\eta_2)} -
\frac{\sin (v-\eta_1)}{\sin (v-\eta_2)}\right ),
$$
we rewrite equation (\ref{v8}) in the form
\beq\label{v9}
(a^{-1}-b^{-1})e^{\nabla_0(a)\nabla_0(b)F}=w_0(a)-w_0(b),
\eeq
where
\beq\label{v10}
w_0(z)=\frac{e^{\p_2^2F}\, 
\sin \Bigl (u_1(z)-\eta_1\Bigr )}{\sin \Bigl (u_1(z)-\eta_2\Bigr )
\sin \Bigl (\eta_1-\eta_2\Bigr )}.
\eeq
The following simple calculation shows that
the $w_0(z)$ defined in this way does equal
$z^{-1}e^{\nabla_0(z)\p_0 F}$:
$$
w_0(z)=e^{\p_2 (\p_2-\p_1)F} \, \frac{w_1(z)}{w_{12}(z)}=
z^{-1}e^{\p_0^2 F+D_1(z)\p_0F}=
z^{-1}e^{\nabla_0(z)\p_0 F}.
$$
So, we have shown that equations (\ref{v2}) do contain 
the one-component dmKP hierarchy.

\section*{Acknowledgments}
\addcontentsline{toc}{section}{Acknowledgments}

The work of A.S. (Sections 4, 5.2, 5.3, 6.1, 6.2)
and A.Z (Sections 2, 3, 5.1, 6.3, 7) 
was implemented in the framework 
of the Basic Research Program at HSE University (HSE-BR-2025-84).


\end{document}